\numberwithin{equation}{section}
\definecolor{shadecolor}{rgb}{0.9, 0.9, 0.86}
\def\eqalign#1{\null\vcenter{\def\\{\cr}\openup\jot\m@th
  \ialign{\strut$\displaystyle{##}$\hfil&$\displaystyle{{}##}$\hfil
      \crcr#1\crcr}}\,}
\def\Xint#1{\mathchoice
   {\XXint\displaystyle\textstyle{#1}}%
   {\XXint\textstyle\scriptstyle{#1}}%
   {\XXint\scriptstyle\scriptscriptstyle{#1}}%
   {\XXint\scriptscriptstyle\scriptscriptstyle{#1}}%
   \!\int}
\def\XXint#1#2#3{{\setbox0=\hbox{$#1{#2#3}{\int}$}
     \vcenter{\hbox{$#2#3$}}\kern-.5\wd0}}
\def\dashint{\Xint-}
\newcommand{\clr}{\textcolor[rgb]{1.00,0.00,0.00}}
\newcommand{\clu}{\textcolor[rgb]{0.00,0.00,0.00}}
\def\beq{\begin{equation}}
\def\eeq{\end{equation}}
\newcommand{\be}{\begin{equation}}
\newcommand{\ee}{\end{equation}}
\newcommand{\wt}{\widetilde}
\newcommand{\al}{\alpha}
\newcommand{\bt}{\beta}
\newcommand{\lb}{\lambda}
\newcommand{\sP}{\sigma_{\hskip-1mm_P}}
\renewcommand{\(}{\left(}
\renewcommand{\)}{\right)}
\renewcommand{\i}{{i}}
\newcommand{\e}{\mathrm{e}}
\renewcommand{\d}{{d}}
\renewcommand{\ln}{\log}
    \def\Re{{\rm Re \,}}
    \def\Im{{\rm Im \,}}
    \def\P2n{{\rm P}_{{\rm II}}^{(n)}}
    \newtheorem{theorem}{Theorem}[section]
    \newtheorem{lemma}[theorem]{Lemma}
    \newtheorem{corollary}[theorem]{Corollary}
    \newtheorem{proposition}[theorem]{Proposition}
    \newtheorem{Definition}[theorem]{Definition}
    \newtheorem{Remark}[theorem]{Remark}
    \newenvironment{remark}{\begin{Remark}\rm}{\end{Remark}}
    \newtheorem{Example}[theorem]{Example}
    \newtheorem{Assumptions}[theorem]{Assumptions}
    \newenvironment{proof}%
    {\rm \trivlist \item[\hskip \labelsep{\bf Proof. }]}%
    {\hspace*{\fill}$\Box$\endtrivlist}
    {\rm \trivlist \item[\hskip \labelsep{\bf Proof}]}%
    {\hspace*{\fill}$\Box$\endtrivlist}
    \DeclareMathOperator*{\Tr}{Tr}
\begin{document}
\title{Weak and strong confinement in the Freud random matrix ensemble and gap probabilities}
\author{T. Claeys\footnote
{
Institut de Recherche en Math\'ematique et Physique,  UCLouvain, Chemin du Cyclotron 2, B-1348
Louvain-La-Neuve, Belgium; e-mail: tom.claeys@uclouvain.be}, 
I. Krasovsky\footnote{
Department of Mathematics, Imperial College London, Huxley Building,
South Kensington Campus, London SW7\,2AZ, UK; e-mail:i.krasovsky@imperial.ac.uk}, 
O. Minakov
\footnote
{
Department of Mathematical Analysis, Faculty of Mathematics and Physics, Charles University, Sokolovska 8, Prague~8, 186\,75 Czech Republic; e-mail: minakov(at)karlin(dot)mff(dot)cuni(dot)cz}
}
%\address{}

\maketitle
\begin{abstract}
The Freud ensemble of random matrices is the unitary invariant ensemble corresponding to the weight $\exp(-n |x|^{\bt})$, $\bt>0$, on the real line.
We consider the local behaviour of eigenvalues near zero, which exhibits a transition in $\bt$. If $\bt\ge 1$, it is described by the standard sine process.
Below the critical value $\beta=1$, it is described by a process depending on the value of $\bt$, and we determine the first two terms of the large gap probability in it. This so called weak confinement range $0<\bt<1$ corresponds to 
the Freud weight with the indeterminate moment problem. We also find the multiplicative constant in the asymptotic expansion of the Freud multiple integral for $\bt\ge 1$.
\end{abstract}
\tableofcontents

\section{Introduction}
Consider a random matrix ensemble on the space of $n\times n$ Hermitian matrices $M$, defined by the probability measure
\begin{equation}\label{MatrixEnsemble}
\frac{1}{\widehat Z_n}\exp\left(-n\Tr |M|^{\beta}\right)dM,\qquad dM=\prod_{j=1}^{n}dM_{jj}\, \prod_{1\leq j<k\leq n}d\Re M_{jk}d\Im M_{jk},
\end{equation}
where $\beta>0$ is a parameter, $\widehat Z_n=\widehat Z_n(\beta)$ is a normalization constant, and $\Tr |M|^{\beta}=\sum_{j=1}^n|\lambda_j|^\beta$, where $\lambda_1,\ldots, \lambda_n$ are the eigenvalues of $M$. The case $\beta=2$ corresponds to the Gaussian Unitary Ensemble (GUE).
For general $\beta>0$, \eqref{MatrixEnsemble} induces a probability distribution on the eigenvalues of $M$ which is given by (see, e.g., \cite{Deift})
\begin{equation}\label{FreudEnsemble}
p(\lb_1,\dots,\lb_n)\prod_{j=1}^nd\lb_j=
\frac{1}{Z_n}\prod_{1\leq j<k\leq n}|\lambda_j-\lambda_k|^2\prod_{j=1}^ne^{-n|\lambda_j|^\beta}d\lambda_j,\qquad \lambda_1,\ldots, \lambda_n\in\mathbb R,
\end{equation}
where the normalisation constant $Z_n=Z_n(\beta)$ is given by
\be\label{Zdef}
Z_n=
{\int_{\mathbb R^n}\prod_{1\leq j<k\leq n}|\lambda_j-\lambda_k|^2\prod_{j=1}^ne^{-n|\lambda_j|^\beta}d\lambda_j}.
\ee
The probability distribution \eqref{FreudEnsemble} has two qualitative features, namely repulsion of the eigenvalues caused by the factor $\prod_{1\leq j<k\leq n}|\lambda_j-\lambda_k|^2$, and confinement of the eigenvalues around $0$ because of the decay of the weight function $e^{-n|x|^\beta}$ for large $|x|$.

One is interested in the statistical properties of the eigenvalues when $n$ is large. These are well-understood
 when $\beta=2m$, $m=1,2,\dots$, and more generally, when $x^{2m}$ is replaced by an analytic function with sufficient
 growth at infinity and satisfying generic regularity conditions, see e.g. \cite{Deift, DKMVZ2, DKMVZ1}: the eigenvalues for large $n$ tend to
 accumulate on one or several intervals; in the local scaling regime, the behaviour of the eigenvalues near the edges of the intervals is governed by the Airy point process, and near any point inside the intervals (bulk of the spectrum), by the sine process, see below.

 In our case of the ensemble \eqref{MatrixEnsemble}, $\bt>0$, the eigenvalues accumulate on the single interval
 $[-A,A]$, where (in terms of Euler's $B$ and $\Gamma$ functions)
\be\label{defAbeta}
A=A(\beta)=B\left(\frac{\beta}{2},\frac{1}{2}\right)^{\frac{1}{\beta}}=
\left(\frac{\Gamma(\frac{1}{2})\Gamma(\frac{\beta}{2})}{\Gamma(\frac{\beta+1}{2})}\right)^{\frac{1}{\beta}}.
\ee
Locally, the behaviour of the eigenvalues near the points $\pm A$ is described by the Airy point process, and near
any $x\in (-A,A)\setminus\{0\}$, by the sine process for any $\bt>0$. 
However, as observed by Canali, Wallin, and Kravtsov \cite{CWK}
(see also \cite{AV, Canali, Janik, Kanzieper}), the local eigenvalue behaviour near $x=0$ shows remarkably different features if $\bt<1$.
This regime is known as {\em weak confinement} (as opposed to strong confinement if $\bt>1$). Namely, in the weak 
confinement, as $\bt$ decreases from the critical value $\bt=1$, the local statistics of eigenvalues near $x=0$ depend on 
the value of $\bt$ in a crucial way, and one observes a transition between the Wigner-Dyson statistics (which correspond to the sine process) and Poissonian statistics (scaling $\beta$ appropriately). The authors in \cite{CWK} made these observations by numerical simulations
and also by examining the behaviour of the mean eigenvalue density close to $0$. 
Transitions of this type, describing a crossover between strongly repulsive points and independent points or deterministic points, occur also in other models, like Dyson's Brownian motion, finite temperature free fermion models, and random thinnings of determinantal point processes. The ensemble \eqref{FreudEnsemble} has the convenient feature that its general structure (that of an orthogonal polynomial ensemble, see below) is preserved as $\beta>0$ varies.
\medskip

Indeed, the distribution \eqref{FreudEnsemble} for any $\beta>0$ is a determinantal point process on $\mathbb R$ with correlation kernel given in terms of orthogonal polynomials,
\begin{align}\nonumber
p(\lb_1,\dots,\lb_n)&=\frac{1}{n!} \det (K_n(\lb_j,\lb_k))_{j,k=1}^n,
\\
\label{kernel}
K_n(x,y)&=e^{-\frac{n}{2}(|x|^\beta+|y|^\beta)}  \sum_{k=0}^{n-1} P_{k}(x) P_{k}(y)=
e^{-\frac{n}{2}(|x|^\beta+|y|^\beta)}\frac{\chi_{n-1}}{\chi_n}
\frac{P_n(x)P_{n-1}(y)-P_n(y)P_{n-1}(x)}{x-y},
\end{align}
where $P_{k}=P_k^{(\beta)}$ is the degree $k$ normalized orthogonal polynomial with positive leading coefficient $\chi_k=\chi_k^{(\beta)}$ with respect to the weight $e^{-n|x|^\beta}$, characterized by the conditions
\[\int_{-\infty}^{+\infty}P_j(x)P_k(x)e^{-n|x|^\beta}dx=\delta_{jk},\qquad j,k=0,1,\ldots\]
These orthogonal polynomials are known as the Freud polynomials \cite{Freud} (usually after the rescaling $x=n^{-1/\bt}y$), and have a rich history in approximation theory. The weight $e^{-|x|^\beta}$ possesses a remarkable feature: 
for $\bt\ge 1$ the corresponding moment problem is determinate, while for $\bt<1$ it is {\em indeterminate} (see \cite{Berg}).
That is, given the moments $\int_{-\infty}^{+\infty}  x^k d\mu(x)$, $k=0,1,\dots$, the moment problem is determinate
if there exists a unique measure $\mu$ with these moments, and indeterminate if there is more than one such measure.
If $\bt<1$, then 
the measures $e^{-|x|^\beta}dx$ and \[w^{(\gamma)}(x;\beta)dx:=e^{-|x|^\beta}(1+\gamma \cos(|x|^{\bt}\tan(\bt\pi/2))dx\quad\mbox{ with 
any }\gamma\in [-1,1]\] have the same moments.
Thus, in the indeterminate case, the same orthogonal polynomials possess a one-parameter family of orthogonality measures. Moreover, the ensemble \eqref{FreudEnsemble} with $e^{-n|\lambda_j|^\beta}$ replaced by $w^{(\gamma)}(n^{1/\beta}\lambda_j;\beta)$, $\gamma\in[-1,1]$, defines a one-parameter family of determinantal point process, with correlation kernels that are equal up to a prefactor, namely
\begin{equation}\label{Knlambda}K_n^{(\gamma)}(x,y)=\sqrt{(1+\gamma \cos(n|x|^{\bt}\tan(\bt\pi/2))(1+\gamma \cos(n|y|^{\bt}\tan(\bt\pi/2))}\ K_n(x,y).\end{equation}

The large $n$ asymptotics for the recurrence coefficients associated to the Freud polynomials have been studied extensively, see e.g.\ \cite{Magnus, LMS, Nevai}. Kriecherbauer and McLaughlin \cite{KM} established global asymptotics in the complex plane for the Freud polynomials, except in the vicinity of $0$, where their construction is implicit.
Wong and Zhao \cite{WongZhao} contributed to the description of the polynomials near $0$ by expressing relevant quantities in terms of certain integral equations.

The results of \cite{KM} on the asymptotics of the leading coefficients of the Freud polynomials easily imply (see Section \ref{section:FreudHankel})
the large $n$ asymptotics of the Freud multiple integral \eqref{Zdef} but without identification of the constant
term $c(\bt)$:

\begin{proposition}\label{prop:H0}
As $n\to\infty$, there hold the following asymptotics for any $\bt>0$,
\be\label{asZ}
\log \frac{Z_n}{n!}= \left(\log\frac{A}{2}-\frac{3}{2\beta}\right)n^2 +n\log(2\pi) -\frac{1}{12}\log n +c(\beta)+o(1),
\ee
for some $c(\beta)\in\mathbb R$ independent of $n$, and with $A$ given by \eqref{defAbeta}.
\end{proposition}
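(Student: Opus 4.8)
The plan is to express $Z_n$ through the norms of the Freud orthogonal polynomials by Heine's identity, to scale out the $n$‑dependence of the weight, and then to evaluate the resulting sum by Euler--Maclaurin summation, feeding in the large‑degree asymptotics of the recurrence coefficients that follow from \cite{KM}. Concretely, I would start from $Z_n=n!\prod_{k=0}^{n-1}h_k$, where $h_k=\chi_k^{-2}$ is the squared $L^2(e^{-n|x|^\beta}\,dx)$‑norm of the monic degree‑$k$ orthogonal polynomial, with $\chi_k$ as in \eqref{kernel}. The substitution $x=n^{-1/\beta}y$ maps the monic orthogonal polynomial for $e^{-n|x|^\beta}$ to $n^{-k/\beta}$ times the monic orthogonal polynomial $\widehat p_k$ for the fixed weight $e^{-|x|^\beta}$, so $h_k=n^{-(2k+1)/\beta}\,\widehat h_k$ with $\widehat h_k$ the squared norm of $\widehat p_k$; since $\sum_{k=0}^{n-1}(2k+1)=n^2$, this yields
\[
\log\frac{Z_n}{n!}=-\frac{n^2}{\beta}\log n+\sum_{k=0}^{n-1}\log\widehat h_k .
\]

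Next I would feed in the asymptotics of \cite{KM}. Writing $\widehat b_k=(\widehat h_k/\widehat h_{k-1})^{1/2}=\widehat\chi_{k-1}/\widehat\chi_k$ for the recurrence coefficients of the Freud polynomials with weight $e^{-|x|^\beta}$, the asymptotics of the leading coefficients established there (together with the classical value of the leading order) give $\widehat b_k=\tfrac{A}{2}\,k^{1/\beta}(1+o(1))$, with $A=A(\beta)$ as in \eqref{defAbeta}, and in fact a full asymptotic expansion as $k\to\infty$. Setting $\log\widehat b_m=\tfrac1\beta\log m+\log\tfrac A2+\rho_m$ and using $\log\widehat h_k=\log\widehat h_0+2\sum_{m=1}^k\log\widehat b_m$, one telescopes
\[
\sum_{k=0}^{n-1}\log\widehat h_k=n\log\widehat h_0+2\sum_{m=1}^{n-1}(n-m)\log\widehat b_m .
\]

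The final step is to expand the right‑hand side. The dominant contribution $\tfrac1\beta\sum_{m=1}^{n-1}(n-m)\log m=\tfrac1\beta\bigl(n\log\Gamma(n)-\log\prod_{m=1}^{n-1}m^m\bigr)$ is handled by Stirling's formula and the Glaisher--Kinkelin asymptotics of the hyperfactorial (equivalently, the asymptotics of the Barnes $G$‑function); the $\log\tfrac A2$ term contributes $n(n-1)\log\tfrac A2$; and the remainder $\sum_{m=1}^{n-1}(n-m)\rho_m$, treated with the full expansion above, contributes only at linear, logarithmic and constant orders. Substituting back, the $n^2\log n$ contributions cancel, the $\beta$‑dependence collects into the $n^2$‑coefficient $\log\tfrac A2-\tfrac3{2\beta}$, the growing and logarithmic remainders take the universal values $n\log(2\pi)$ and $-\tfrac1{12}\log n$ coming from the Stirling/Glaisher--Kinkelin expansions, and an $n$‑independent remainder $c(\beta)$ is left over.

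The hard part is the quantitative bookkeeping rather than any conceptual difficulty: one needs the recurrence‑coefficient expansion of \cite{KM} to sufficiently high order, and uniformly in $k$, to control the sums down to an $o(1)$ error, and one must check that the $\beta$‑dependent logarithmic contributions do combine to the universal $-\tfrac1{12}\log n$. This method cannot, however, produce the value of $c(\beta)$: that constant equals $\log\widehat h_0+2\sum_{m\ge1}\bigl(\log\widehat b_m-\tfrac1\beta\log m-\log\tfrac A2-\cdots\bigr)$ plus explicit universal constants, a regularized sum depending on the Freud polynomials of all degrees, in particular on the small‑$k$ regime where the construction of \cite{KM} is only implicit; hence $c(\beta)$ remains unidentified here.
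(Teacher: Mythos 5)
Your starting point, use of Heine's identity, the scaling $x = n^{-1/\beta}y$, and the appeal to the Kriecherbauer--McLaughlin leading-coefficient asymptotics are all exactly what the paper does in Section~\ref{section:FreudHankel}; the one genuinely different move is that you pass to recurrence coefficients $\widehat b_m = \widehat\chi_{m-1}/\widehat\chi_m$ and ``un-telescope'' the sum $\sum_{k}\log\widehat h_k = n\log\widehat h_0 + 2\sum_{m=1}^{n-1}(n-m)\log\widehat b_m$, whereas the paper sums $\log\gamma_k$ directly. That reorganization is algebraically equivalent, but it introduces an error-control issue you do not address. The input from \cite{KM} (as quoted in \eqref{KMformula}) is $\log\gamma_m = \log\gamma_m^{(approx)} + \mathcal O\bigl(1/(m\log^2 m)\bigr)$ --- a one-term refinement with a bounded error, not the ``full asymptotic expansion'' you invoke. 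Consequently your remainder satisfies only $\rho_m = (\text{explicit terms}) + \mathcal O\bigl(1/(m\log^2 m)\bigr)$, and a naive bound on $\sum_{m=1}^{n-1}(n-m)\rho_m$ gives an \emph{uncontrolled} $\mathcal O(n)$ contribution, not just ``linear, logarithmic and constant orders'' with known coefficients: you would end up with $n\bigl(\log(2\pi) + (\text{unknown constant})\bigr)$, which is not enough to prove \eqref{asZ}. The fix is to observe that the unknown part of $\rho_m$ is a telescoping difference $\epsilon_{m-1}-\epsilon_m$ of the leading-coefficient errors, so the $(n-m)$-weighted sum re-telescopes to a summable series plus boundary terms --- i.e.\ one should undo your reorganization and return to the paper's direct sum $-2\sum_k \log\gamma_k$, where the error $\sum_k\mathcal O(1/(k\log^2 k))$ converges immediately. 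With that observation your argument goes through and recovers the paper's; without it there is a genuine gap in the linear term. Your final step (pass from the fixed-weight determinant back to $H_n(0)$ via $\chi_k = n^{(2k+1)/(2\beta)}A^{-k-1/2}\gamma_k$ and $\sum(2k+1)=n^2$) and your remarks that the constant $c(\beta)$ cannot be extracted by this method agree with the paper.
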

\begin{remark}
For $\beta=1$, $Z_n$ is, up to a prefactor, a special case of the partition function in the six-vertex model with domain wall boundary conditions, and an expansion equivalent to \eqref{asZ} was in this case already proved in \cite[Theorem 2 ($x=0$)]{BleherBothner}. 
\end{remark}

\medskip
We are interested in the behaviour of the eigenvalues near $0$, in the local scaling regime. To gain insight into it, we consider the probability of an interval (gap) $[-\lambda,\lambda]$ around $0$ without eigenvalues, where $\lambda=\lambda(n)$ decreases with $n$ in a certain way. 
In order to determine the scaling, we need to consider the limiting (large $n$) distribution of the eigenvalues. As is known in a very general setting (see, e.g., \cite{Deift, DeiftKriecherbauerMcLaughlin, SandierSerfaty}), the limiting distribution of the eigenvalues is given as the 
{\em equilibrium measure}: the unique probability measure $\nu$ on $\mathbb R$ minimizing a logarithmic energy functional, which in our case is given by
\begin{equation}
\label{eq:logenergy}\iint_{\mathbb R^2} \log\frac{1}{|x-y|}d\nu(x)d\nu(y)+\int_{\mathbb R} |x|^\beta d\nu(x).
\end{equation}
The measure $\nu=\nu^{(\beta)}$ minimizing this quantity is (see \cite[Chapter IV.5]{SaffTotik} or \cite{AV, CWK, Canali, Janik, Kanzieper})
supported on the interval $[-A,A]$, where $A$ is  defined in \eqref{defAbeta}, and given by
\begin{equation}
\label{eqdensity}
\nu(x)=\psi(x)dx,\qquad
\psi(x)=\frac{\beta |x|^{\beta-1}}{\pi A^{\beta}}\int_{1}^{A/|x|}\frac{u^{\beta-1}}{\sqrt{u^2-1}}du.
\end{equation}
As it must, it satisfies the normalisation
\be\label{normalisation}\nonumber
\int_{-A}^A \psi(x)dx=1.
\ee

In the limit $x\to 0$, we obtain the expansion 
\be\label{exppsi}
\psi(x)\sim\begin{cases}
\frac{\beta}{\beta-1}\frac{1}{\pi A},&\mbox{for $\beta>1$,}\\
\frac{1}{\pi A}\log |x|^{-1},&\mbox{for $\beta=1$,}\\
|x|^{\bt-1}\frac{\bt\tan(\pi\bt/2)}{2\pi},&\mbox{for $0<\beta<1$.}
\end{cases}
\ee

For $\beta>1$, where $\psi(x)$ is constant to the main order near $0$, we can guess the suitable scaling of the local variable $s$ by imposing the condition
\be\label{cond1}
\int_0^{\lambda}\psi(x)dx=\frac{s}{n}(1+o(1)),\qquad n\to\infty,
\ee
which implies that the average number of eigenvalues in $[0,\lambda]$ is approximately equal to $s$ for large $n$. 
Using this condition also for $\bt= 1$ and using \eqref{exppsi},
we define the scaling as follows,
\be\label{scaling-ge1}
\lambda(n,s)=\begin{cases}\frac{s}{n\psi(0)}=
\frac{\beta-1}{\beta}\pi A\frac{s}{n},&\mbox{for $\beta>1$,}\\
\pi A\frac{s}{n\log n},&\mbox{for $\beta=1$.}
\end{cases}
\ee

For $\beta<1$, the function $\psi(x)$ depends on $x$ strongly near $0$, a somewhat similar situation occurs, e.g., in the rescaling for the Airy point process at the edge of the GUE. So, similarly to the Airy point process, we apply a {\it linear} rescaling $s(\lambda)$ corresponding to the condition
\be\label{cond2}
\int_0^{\lambda}\psi(x)dx=\frac{s^{\beta}}{n}(1+o(1)),\qquad \lambda>0,\quad n\to\infty,
\ee
according to which we define the local variable $s$ by
\be\label{scaling-less1}
\lambda(n,s)=\left(\frac{2\pi}{\tan(\pi\beta/2)}\right)^{1/\beta}\frac{s}{n^{1/\beta}},\qquad \beta<1.
\ee

Note that for $\beta=2$, $A=\sqrt{2}$, $\psi(x)=\frac{1}{\pi}\sqrt{2-x^2}$, and \eqref{scaling-ge1} is the usual scaling for GUE at $0$,
$\lambda=\frac{s}{n\psi(0)}$. In this case, it is well-known that the
kernel \eqref{kernel} converges to the sine-kernel $K^{(\beta=2)}$:
\[
\frac{1}{n\psi(0)}K_n\left(\frac{u}{n\psi(0)},\frac{v}{n\psi(0)}\right)\rightarrow K^{(\beta=2)}(u,v)=
\frac{\sin(\pi (u-v))}{\pi(u-v)},
\qquad n\to\infty,
\]
uniformly for $u,v$ in compact subsets of the real line,
and the probability of the gap $(-s,s)$ in the local variable becomes the Fredholm determinant 
\be\nonumber
\det(I-K^{(\beta=2)})_{(-s,s)}=\det(I-K^{(\beta=2)}_s)_{(-1,1)},\qquad K^{(\beta=2)}_s(u,v)=
\frac{\sin(\pi s(u-v))}{\pi(u-v)}.
\ee
Its large $s$ asymptotics are given by the expression \cite{W,DIZ,K04,Ehrhardt,DIKZ} (see also \cite{Kreview}, \cite{FK})
\be\label{beta2exp}
\log\det(I-K^{(\beta=2)})_{(-s,s)}=
-\frac{\pi^2}{2}s^2-\frac{1}{4}\log(\pi s)+3\zeta'(-1)+\frac{1}{12}\log 2+o(1),\qquad s\to\infty,
\ee
where $\zeta'(x)$ is the derivative of the Riemann zeta-function. Below (see Remark \ref{rem:sine}) we provide yet another
proof of \eqref{beta2exp}.

For general $\beta>0$, we have as a corollary of the results in \cite{KM} the following (proved in Section \ref{sec:kernel}).
\begin{proposition}\label{prop:Kernels}
For any $\beta>0$, there exists a function $K^{(\beta)}(u,v)$ independent of $n$ such that 
with the scalings \eqref{scaling-ge1}, \eqref{scaling-less1},
\be\nonumber
\lb(n,1) K_n\left(\lb(n,u),\lb(n,v)\right)\rightarrow K^{(\beta)}(u,v),\qquad n\to\infty,
\ee
uniformly for $u,v$ in compact subsets of the real line,
and $K^{(\beta)}(u,v)=K^{(\beta=2)}(u,v)$ for any $\beta\ge 1$.

Furthermore, the Fredholm determinant $F^{(\beta)}(s)=\det(I-K^{(\beta)})_{(-s,s)}$ exists for any $\bt>0$ and $s>0.$
\end{proposition}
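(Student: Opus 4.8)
The plan is to combine the Christoffel--Darboux representation \eqref{kernel} with the Riemann--Hilbert (RH) characterization of the Freud polynomials, feeding into it the Deift--Zhou steepest-descent analysis of Kriecherbauer and McLaughlin \cite{KM}. Recall that $P_n$, $P_{n-1}$ and $\chi_{n-1}/\chi_n$ are encoded in the solution $Y=Y_n$ of the standard $2\times2$ RH problem with jump $\bigl(\begin{smallmatrix}1 & e^{-n|x|^\beta}\\ 0 & 1\end{smallmatrix}\bigr)$ on $\mathbb R$ and normalization $Y(z)z^{-n\sigma_3}=I+O(1/z)$ as $z\to\infty$, and that
\[
K_n(x,y)=\frac{e^{-\frac{n}{2}(|x|^\beta+|y|^\beta)}}{2\pi i\,(x-y)}\begin{pmatrix}0&1\end{pmatrix}Y_+(y)^{-1}Y_+(x)\begin{pmatrix}1\\0\end{pmatrix}.
\]
Hence it suffices to control $Y$ in a fixed neighbourhood of $0$ on the local scale \eqref{scaling-ge1}--\eqref{scaling-less1}.

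Following \cite{KM}, I would carry out the transformations $Y\mapsto T\mapsto S$: normalization at infinity with the $g$-function of the equilibrium measure \eqref{eqdensity}, then opening of lenses along $[-A,A]$. This yields a global parametrix $N$ (built from the Szeg\H{o} function and the Joukowski map of $[-A,A]$), Airy local parametrices near $\pm A$, and a local parametrix $\mathcal P^{(0)}$ in a fixed disk $U_0$ around $0$. The crucial point is that $\mathcal P^{(0)}$ comes from a model RH problem on a shrinking scale, and that scale is precisely $\lambda(n,\cdot)$: conditions \eqref{cond1}--\eqref{cond2} state that $\int_0^{\lambda(n,s)}\psi$ is of order $1/n$, the rate at which a single eigenvalue sits near $0$. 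Write $\Psi^{(\beta)}$ for the solution of this model problem. The steepest-descent error $R:=S\cdot(\mathrm{parametrix})^{-1}$ then satisfies $R=I+o(1)$ uniformly on $\mathbb C$, which is part of the analysis of \cite{KM}.

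Undoing the transformations inside $U_0$ and substituting into the Christoffel--Darboux formula gives, for $x=\lambda(n,u)$, $y=\lambda(n,v)$ with $u,v$ in a compact set,
\[
\lambda(n,1)\,K_n\bigl(\lambda(n,u),\lambda(n,v)\bigr)=\frac{1}{2\pi i(u-v)}\begin{pmatrix}0&1\end{pmatrix}\Psi^{(\beta)}_+(v)^{-1}\Psi^{(\beta)}_+(u)\begin{pmatrix}1\\0\end{pmatrix}+o(1),
\]
the conjugating factors $e^{-\frac{n}{2}(|x|^\beta+|y|^\beta)}$ and the Szeg\H{o}/Joukowski factors being absorbed into $\Psi^{(\beta)}$. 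Here for $0<\beta<1$ one has $n|\lambda(n,u)|^\beta=\frac{2\pi}{\tan(\pi\beta/2)}|u|^\beta$, so the weight survives as a genuine factor on the local scale --- this is the source of the $\beta$-dependence --- whereas for $\beta\ge1$ it tends to $1$ (it is $O(n^{1-\beta})$, resp.\ $O(1/\log n)$ at $\beta=1$). One then defines $K^{(\beta)}(u,v)$ to be the $n$-independent right-hand side; uniformity on compact sets is inherited from the uniformity of $R=I+o(1)$. For $\beta\ge1$ the weight becomes asymptotically trivial on the local scale, so (after the rescaling) the model problem degenerates to the constant-jump problem whose solution produces the sine kernel, and hence $K^{(\beta)}=K^{(\beta=2)}$. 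Finally, $K^{(\beta)}$ is continuous on $\mathbb R^2$ with bounded diagonal, by the Christoffel--Darboux structure and the regularity of $\Psi^{(\beta)}$, and it inherits the positive-definiteness of the $K_n$; a continuous positive-definite kernel on the compact set $[-s,s]$ is trace-class, so $\det(I-K^{(\beta)})_{(-s,s)}$ is well defined for every $s>0$, and for $\beta\ge1$ it is the classical sine-kernel determinant.

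The main obstacle is extracting from \cite{KM} --- where the parametrix at $0$ is only implicit --- enough quantitative control to make the above rigorous: solvability of the model RH problem, regularity of $\Psi^{(\beta)}$ in $u,v$, and the matching estimate $R=I+o(1)$ on $\partial U_0$. For $0<\beta<1$ this reduces to setting up the new model RH problem, with the weight factor $e^{-\frac{2\pi}{\tan(\pi\beta/2)}|u|^\beta}$ built into the jumps, and establishing its well-posedness together with its behaviour near $0$ and at $\infty$ --- the technical core of the argument, for which the integral-equation description of Wong and Zhao \cite{WongZhao} should be helpful.
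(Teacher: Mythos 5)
Your proposal is correct in spirit, but it takes a genuinely different route from the paper, and it is worth being precise about the difference. The paper does \emph{not} rerun the Riemann--Hilbert steepest-descent machinery; instead it quotes the end products of \cite{KM} directly. Specifically, after relating $P_k$ to $\phi_k$ by the rescaling \eqref{Pphi}, the paper substitutes the explicit pointwise asymptotic formulas of \cite[Theorem~1.16]{KM} for $e^{-n|z|^\beta/2}\phi_n(n^{1/\beta}z/A)$ near $z=0$ (equation \eqref{1asymp} for $\beta\ge1$, equation \eqref{2asymp} for $\beta<1$, the latter involving the model RH solution $L(w)$ of \cite[(6.41)--(6.43)]{KM}) into the Christoffel--Darboux form of the kernel \eqref{kernel}, and then lets $n\to\infty$ with the scalings \eqref{scaling-ge1}, \eqref{scaling-less1}. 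You, by contrast, propose to encode the kernel as $\frac{e^{-n(V(x)+V(y))/2}}{2\pi i(x-y)}(0\ 1)Y_+(y)^{-1}Y_+(x)(1\ 0)^T$ and then re-derive the limit by tracing through the $Y\mapsto T\mapsto S\mapsto R$ transformations and the local parametrix at $0$. This is conceptually equivalent (indeed it is how \cite{KM} arrives at \eqref{1asymp}--\eqref{2asymp} in the first place), but it is considerably more laborious, and you correctly flag where the labor lies: solvability of the model problem at $0$ for $\beta<1$ and the matching estimate on $\partial U_0$. The paper sidesteps exactly this by citing the finished polynomial asymptotics, where \cite{KM} have already established the model problem's solvability and the error bounds $\mathcal O(z^2)+\mathcal O(1/\log n)$ and $\mathcal O(n^{-1})+\mathcal O(n^{1+\delta-1/\beta})$. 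Your check that $n|\lambda(n,u)|^\beta$ tends to $\frac{2\pi}{\tan(\pi\beta/2)}|u|^\beta$ for $\beta<1$ and to $0$ for $\beta\ge1$ is correct and is the right explanation of why the limiting kernel is $\beta$-dependent only in the weak-confinement regime.

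For the final claim that $F^{(\beta)}(s)$ exists, you and the paper again diverge mildly: you invoke positivity and continuity of $K^{(\beta)}$ (via Mercer) to get a trace-class operator, while the paper cites \cite{DG} for the standard fact that uniform-on-compacts convergence of kernels implies convergence of the Fredholm determinants, which simultaneously yields existence and the limit relation \eqref{Flimit}. Both arguments are valid; the paper's has the small advantage that it also delivers \eqref{Flimit}, which is used later. So: no genuine gap, but a heavier route than the paper's, resting on the same external input (\cite{KM}) accessed at a different point.
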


Thus, if $\bt\ge 1$, the kernel \eqref{kernel} converges to the standard sine-kernel $K^{(\beta=2)}(u,v)$ {\it independent} of $\beta$. (Note a connection with the determinacy of the moment problem.)
However, if $0<\bt<1$, the local asymptotics of the Freud polynomials around $0$ are not known explicitly, and therefore
we cannot write the explicit local limit for the kernel in this case: $K^{(\beta)}(u,v)$, $0<\bt<1$, can be written though in terms
of the solution of a Riemann-Hilbert problem \cite[eq. (6.41)--(6.43)]{KM}, whose existence was proven in \cite{KM}.

It follows by standard arguments from Proposition \ref{prop:Kernels} that the limiting kernel $K^{(\beta)}$ also defines a determinantal point process; the Fredholm determinant $F^{(\beta)}(s)$ is the probability of the gap $(-s,s)$ in this point process.
\begin{remark}Observe that for $\beta<1$, with the same scaling, a one-parameter family of limiting kernels \[K^{(\beta,\gamma)}(u,v)=\sqrt{\left(1+\gamma\cos\left(2\pi|u|^\beta\right)\right)\left(1+\gamma\cos\left(2\pi|v|^\beta\right)\right)}\, K^{(\beta)}(u,v)\]
arises if we replace, in Proposition \ref{prop:Kernels}, $K_n$ by $K_n^{(\gamma)}$ (from \eqref{Knlambda}) with $\gamma\in[-1,1]$, each of which defines a determinantal point process.
\end{remark}

We will now see that although $K^{(\beta)}(u,v)$ is not given explicitly for $0<\bt<1$, the 2 leading terms of the large $s$ asymptotics of $F^{(\beta)}(s)$ can be found.

\begin{theorem}\label{thm:Freudlargegap}
As $s\to\infty$, we have the asymptotics
\begin{equation}\label{asF}
\log F^{(\beta)}(s)=\begin{cases} -\frac{\pi^2}{2}s^2-\frac{1}{4}\log s +C+o(1),&\beta\ge 1,\\
\\
-\frac{\beta}{2}B\left(\frac{\beta}{2},\frac{1}{2}\right)^2 s^{2\beta}
-\frac{\beta}{4}\log s+C(\beta)+o(1),&0<\beta<1,
\end{cases}
\end{equation}
for some $C(\beta)\in\mathbb R$ independent of $s$. For $\bt\ge 1$, $C(\bt)=C$ is independent of $\beta$
and equals 
\begin{equation}\label{def:C}C=3\zeta'(-1)+\frac{1}{12}\log 2-\frac{1}{4}\log\pi.\end{equation}
\end{theorem}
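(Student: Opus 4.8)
By Proposition~\ref{prop:Kernels}, $K^{(\beta)}=K^{(\beta=2)}$ is the standard sine kernel for every $\beta\ge 1$, so $F^{(\beta)}(s)=\det(I-K^{(\beta=2)})_{(-s,s)}$, and the first line of \eqref{asF} together with the value \eqref{def:C} of $C$ is just a rewriting of the classical sine-kernel asymptotics \eqref{beta2exp}, using $-\tfrac14\log(\pi s)=-\tfrac14\log s-\tfrac14\log\pi$. (The independent proof of \eqref{beta2exp} mentioned after that formula can be substituted for the quotation.) Thus the whole content of the theorem lies in the range $0<\beta<1$, where $K^{(\beta)}$ is not known in closed form.

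\textbf{Plan for $0<\beta<1$: return to finite $n$.} Since the limiting kernel is not explicit, the plan is not to attack $F^{(\beta)}(s)$ directly but to realise it as a limit of partition-function ratios that \emph{are} amenable to Riemann--Hilbert steepest descent. At finite $n$,
\[
\mathbb{P}_n\big(\text{no }\lambda_j\in(-\lambda,\lambda)\big)=\frac{Z_n(\lambda)}{Z_n},\qquad Z_n(\lambda):=\int_{(\mathbb{R}\setminus(-\lambda,\lambda))^n}\ \prod_{1\le j<k\le n}|\lambda_j-\lambda_k|^2\ \prod_{j=1}^n e^{-n|\lambda_j|^\beta}\, d\lambda_j,
\]
so that $Z_n(\lambda)=n!\,\det\big(\int_{|x|>\lambda}x^{j+k}e^{-n|x|^\beta}\,dx\big)_{j,k=0}^{n-1}$ is a Hankel determinant for the ``gapped'' Freud weight $e^{-n|x|^\beta}\mathbf{1}_{\{|x|>\lambda\}}$. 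First I would show, from Proposition~\ref{prop:Kernels} and the usual continuity of Fredholm determinants under uniform-on-compacts convergence of the rescaled kernel (with the exponential decay of $K^{(\beta)}$ controlling the tails), that $Z_n(\lambda(n,s))/Z_n\to F^{(\beta)}(s)$, uniformly for $s$ in compacts of $(0,\infty)$. It then remains to obtain the large-$n$ asymptotics of $Z_n(\lambda)$ in the regime \eqref{scaling-less1} and combine them with the expansion of $Z_n$ from Proposition~\ref{prop:H0}.

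\textbf{Steepest descent, exploiting that the singular point is excised.} The asymptotics of $Z_n(\lambda)$ I would derive by Deift--Zhou steepest descent for the Riemann--Hilbert problem of the orthogonal polynomials with weight $e^{-n|x|^\beta}\mathbf{1}_{\{|x|>\lambda\}}$, or, more economically, by a differential identity: differentiate $\log Z_n(\lambda)$ in $\lambda$, express $\partial_\lambda\log Z_n(\lambda)$ through the Riemann--Hilbert solution at the hard edges $\pm\lambda$, and integrate down from $\lambda=0$, where $Z_n(0)=Z_n$. The decisive point is that excising $(-\lambda,\lambda)$ deletes precisely the neighbourhood of the origin where the Freud local analysis of \cite{KM} is only implicit: on $\mathbb{R}\setminus(-\lambda,\lambda)$ the field $|x|^\beta$ is locally analytic, so only standard ingredients enter --- the constrained equilibrium measure for $|x|^\beta$ with support in $\mathbb{R}\setminus(-\lambda,\lambda)$, an explicitly controllable perturbation of $\psi$ in \eqref{eqdensity} since $\lambda\to0$; Airy parametrices at the soft edges near $\pm A$; and Bessel hard-edge parametrices at $\pm\lambda$. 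Reading off the expansion, the $O(n^2)$ part of $\log Z_n(\lambda)-\log Z_n$ is $-n^2$ times the change in the logarithmic energy \eqref{eq:logenergy}, which under \eqref{scaling-less1} is $O(1)$ and yields the leading term $-\tfrac{\beta}{2}B\big(\tfrac{\beta}{2},\tfrac12\big)^2 s^{2\beta}$ (recall $B(\tfrac\beta2,\tfrac12)=A^\beta$); the two Bessel parametrices at $\pm\lambda$ supply the $-\tfrac{\beta}{4}\log s$ term, the relevant hard-edge parameter being $n\lambda^\beta\asymp s^\beta$; and all remaining contributions, together with the unidentified constant $c(\beta)$ of Proposition~\ref{prop:H0}, collapse into an $s$-independent constant $C(\beta)$ --- exactly the assertion of \eqref{asF} in this range.

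\textbf{Where the difficulty sits.} The hard part is the steepest-descent analysis with the two hard edges $\pm\lambda$ coalescing at the origin as $n\to\infty$: one must glue the two Bessel parametrices uniformly as they approach one another (their separation in the natural local scale is $\asymp n\lambda^\beta$, bounded for fixed $s$ but tending to $\infty$ as $s\to\infty$), keep enough uniformity in $s$ to couple legitimately with the $n\to\infty$ limit of the gap probability, and --- most delicately --- verify that the spurious $\log n$ and $\log s$ contributions generated by the \emph{non-local} redistribution of the displaced charge cancel, leaving the $O(n^2)$ term equal to $-\tfrac{\beta}{2}B(\tfrac\beta2,\tfrac12)^2 s^{2\beta}$ with no logarithm and only the stated $-\tfrac{\beta}{4}\log s$ surviving. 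One convenient organising remark is the scaling identity $F^{(\beta)}(s)=\widetilde F\big(\tfrac{2\pi}{\tan(\pi\beta/2)}\,s^\beta\big)$, reflecting the homogeneity of the limiting kernel in the variable $n|x|^\beta$, which reduces the statement to the large-argument behaviour of a single function $\widetilde F$; but $\widetilde F$ is still only reachable through the same finite-$n$ analysis, so this does not dissolve the core obstacle.
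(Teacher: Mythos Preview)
For $\beta\ge 1$ your argument coincides with the paper's: both invoke Proposition~\ref{prop:Kernels} and the known sine-kernel expansion \eqref{beta2exp}.

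For $0<\beta<1$ you have the right overall strategy---return to the finite-$n$ ratio and analyse the gapped Hankel determinant by steepest descent---but you miss the device that makes the analysis tractable. The paper does \emph{not} attack the symmetric weight on $\mathbb{R}\setminus(-\lambda,\lambda)$ directly. Instead it uses the elementary parity reduction \eqref{HH},
\[
H_{2n}(\lambda)=2^{-4n^2/\beta}\,\widetilde H_n^{(1/2)}(\mu)\,\widetilde H_n^{(-1/2)}(\mu),\qquad \mu=2^{2/\beta}\lambda^2,
\]
with $\widetilde H_n^{(\alpha)}(\mu)$ the Hankel determinant for the Laguerre-type weight $x^{\alpha}e^{-nx^{\beta/2}}$ on $(\mu,\infty)$. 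This converts your two-coalescing-hard-edge problem into a \emph{single}-hard-edge problem: one Bessel parametrix at $\mu$, one Airy parametrix at the soft edge $a(\mu)$, no coalescence. The technical core is then Proposition~\ref{prop:Hankel}, the uniform-in-$\mu$ asymptotics of $\widetilde H_n^{(\alpha)}(\mu)$; combined with the small-$\mu$ expansions of Lemma~\ref{lemma:expaf} and with Proposition~\ref{prop:H0}, this yields \eqref{asF}. The difficulty that remains---the hard edge $\mu$ approaching the singularity of the weight at $0$---is what drives the refined RH analysis of Sections~\ref{section:7RH2}--\ref{section:8RH3} (normalization at $z=-\mu$ rather than at $\infty$, plus an iterative triangular correction to the Airy matching), but it is far more controlled than gluing two Bessel parametrices across a shrinking gap.

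Your direct route is not wrong in principle: exploiting the $x\mapsto -x$ symmetry factors the RH problem into even and odd parts, and the substitution $t=x^2$ is exactly how \eqref{HH} arises, so carrying the symmetry through systematically would land you on the paper's path anyway. One further correction: integrating the differential identity from $\lambda=0$ is problematic, since at $\lambda=0$ the RH problem reverts to the un-gapped Freud case with its implicit local parametrix at the origin; the paper instead integrates from a fixed $\mu_0>0$ down to small $\mu$, importing the constant of integration from \cite{CG19}.
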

\begin{remark}
We see from \eqref{asF} that the large gap probability is larger if $\bt<1$. This effect was noticed 
numerically in \cite{CWK} for the nearest-neighbor spacing distribution.
\end{remark}
\begin{remark}
Each of the 2 main terms in \eqref{asF} is continuous
as a function of $\beta$  at the critical point $\beta=1$. 
\end{remark}
\begin{remark}\label{rem:sine} 
We only use the fact that $K^{(\beta)}(u,v)=K^{(\beta=2)}(u,v)$ for $\beta\ge 1$ in the proof of this theorem
to identify the constant $C(\bt)$ for $\beta\ge 1$ by \eqref{beta2exp}. (The proof of \eqref{asF} with unidentified $C(\bt)$
does not rely on this fact and \eqref{beta2exp}).
Now ignoring this fact, and 
using Propositions \ref{prop:H0} and \ref{prop:Hankel} below, we can write the constant $C(\beta)$ in terms of the unknown (in general) constant $c(\beta)$
 in the expansion \eqref{asZ} of the Freud multiple integral \eqref{Zdef}. If $\beta=2$,  this integral is the well-known Selberg integral related to 
 Hermite polynomials and $c(2)=\zeta'(-1)$, which in turn implies that $C(2)=3\zeta'(-1)+\frac{1}{12}\log 2-\frac{1}{4}\log\pi$. Thus \eqref{asF} reproduces \eqref{beta2exp} in this case (see Appendix A).
 \end{remark}
\medskip

As discussed in Remark \ref{rem:sine}, we can write $C(\beta)$ in terms of the constant $c(\beta)$ in the expansion 
\eqref{asZ}. Using the fact that $C=C(\bt)$ is identified in Theorem \ref{thm:Freudlargegap} for $\bt\ge 1$, we obtain in Section \ref{sec:ProofofThm1} the following

\begin{theorem}\label{thm:const}
The constant in the large $n$ expansion \eqref{asZ} of the Freud multiple integral \eqref{Zdef} is 
\be\nonumber
c(\bt)=\zeta'(-1)-\frac{1}{12}\log \frac{\beta}{2}\,,\qquad \bt\ge 1.
\ee
\end{theorem}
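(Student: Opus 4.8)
The plan is to turn the relation between the two asymptotic constants $C(\beta)$ of \eqref{asF} and $c(\beta)$ of \eqref{asZ}, already encoded by combining Propositions \ref{prop:H0} and \ref{prop:Hankel}, into a closed formula for $c(\beta)$, exploiting that $C(\beta)$ is \emph{explicitly known} as soon as $\beta\ge 1$. One starts from the exact finite-$n$ identification of the gap probability with a ratio of Freud-type multiple integrals: with
\[
Z_n(\lambda)=\int_{\left(\mathbb{R}\setminus(-\lambda,\lambda)\right)^{n}}\prod_{1\le j<k\le n}|\lambda_j-\lambda_k|^{2}\prod_{j=1}^{n}e^{-n|\lambda_j|^{\beta}}\,d\lambda_j,
\]
one has $\mathbb{P}_n(\text{no eigenvalue in }(-\lambda,\lambda))=Z_n(\lambda)/Z_n$, which by Proposition \ref{prop:Kernels} converges to $F^{(\beta)}(s)$ when $\lambda=\lambda(n,s)$ is chosen from \eqref{scaling-ge1}--\eqref{scaling-less1}. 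Proposition \ref{prop:Hankel}, built on the Riemann--Hilbert analysis for the weight $e^{-n|x|^{\beta}}$ on the slit line $\mathbb{R}\setminus(-\lambda,\lambda)$, supplies the large-$n$ expansion of $\log(Z_n(\lambda(n,s))/n!)$: its $n^{2}$-, $n$- and $\log n$-coefficients agree with those of $\log(Z_n/n!)$ in Proposition \ref{prop:H0}, so they cancel in the ratio, while for $\beta\ge 1$ the remaining $n$-independent part is expressible through data attached to the equilibrium measure \eqref{eqdensity}, to the local parametrices at $\pm\lambda$, and to the sine-kernel Fredholm determinant — the key point being that deleting $(-\lambda,\lambda)$ removes precisely the neighbourhood of $0$ where the Freud parametrix is only implicit. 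Subtracting Proposition \ref{prop:H0} and matching with the large-$s$ form \eqref{asF} (whose first two terms are Theorem \ref{thm:Freudlargegap}, proved without the explicit shape of $K^{(\beta)}$) then yields an identity
\[
C(\beta)=\mathcal{G}(\beta)-c(\beta),
\]
in which $\mathcal{G}(\beta)$ is an explicit elementary expression; its $\beta$-dependence, coming from $A(\beta)$ of \eqref{defAbeta} and the $n$-dependent terms of \eqref{asZ}, reduces after simplification to a single $\log\frac{\beta}{2}$ term.

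To finish, I would use the cases where everything is known. At $\beta=2$, $Z_n$ is a rescaled Hermite/Selberg integral, so $c(2)=\zeta'(-1)$, while $C(2)=3\zeta'(-1)+\frac{1}{12}\log 2-\frac{1}{4}\log\pi$ by \eqref{beta2exp}--\eqref{def:C}; substituting into the displayed identity pins down the $\beta$-independent part of $\mathcal{G}$, giving $\mathcal{G}(\beta)=C+\zeta'(-1)-\frac{1}{12}\log\frac{\beta}{2}$ with $C$ as in \eqref{def:C}. For $\beta\ge 1$, Theorem \ref{thm:Freudlargegap} gives $C(\beta)=C$, independently of $\beta$, since there $K^{(\beta)}=K^{(\beta=2)}$ is the sine kernel and $F^{(\beta)}(s)=\det(I-K^{(\beta=2)})_{(-s,s)}$. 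Substituting $C(\beta)=C$ and the formula for $\mathcal{G}(\beta)$ into $C(\beta)=\mathcal{G}(\beta)-c(\beta)$ then leaves
\[
c(\beta)=\zeta'(-1)-\frac{1}{12}\log\frac{\beta}{2},\qquad \beta\ge 1,
\]
which is the assertion; the case $\beta=2$ reproduces $c(2)=\zeta'(-1)$ as a consistency check.

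The genuine difficulty lies not in this bookkeeping but in the input of Proposition \ref{prop:Hankel}: one must track the \emph{constant} term, not merely the leading behaviour, through the nonlinear steepest-descent analysis of the Riemann--Hilbert problem for the modified weight with $\lambda=\lambda(n,s)\to0$ — controlling the global parametrix built from \eqref{eqdensity}, the soft-edge parametrices at $\pm A$, the local parametrices at $\pm\lambda$ (where the hard edges approach the scaling region about $0$), and the error term, uniformly first as $n\to\infty$ and then as $s\to\infty$, and verifying that every $n$-dependent contribution cancels exactly against Proposition \ref{prop:H0}. Once the identity $C(\beta)=\mathcal{G}(\beta)-c(\beta)$ with this explicit $\mathcal{G}$ is in place, Theorem \ref{thm:const} follows at once from the $\beta$-independence of $C$ for $\beta\ge 1$.
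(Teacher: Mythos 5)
Your proposal follows the paper's own route: combine Propositions \ref{prop:H0} and \ref{prop:Hankel} (via \eqref{HH}, \eqref{logHH}) to obtain the relation $C(\beta)=\mathcal{G}(\beta)-c(\beta)$ with $\mathcal{G}(\beta)=\widehat c(\beta)+\tfrac{1}{12}\log 2$ explicit, then substitute the sine-kernel value $C(\beta)=C$ from \eqref{def:C} valid for all $\beta\ge 1$ and solve for $c(\beta)$. The only cosmetic difference is that you pin down the $\beta$-independent part of $\mathcal{G}$ via the Selberg value $c(2)=\zeta'(-1)$ rather than reading it directly off \eqref{whc}; this is harmless since the same computation that isolates the $-\tfrac{1}{12}\log\tfrac{\beta}{2}$ dependence already produces the constant, so the Selberg input becomes a consistency check as in Remark~\ref{rem:sine} rather than a needed ingredient.
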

\medskip

We now describe our approach to prove Theorem \ref{thm:Freudlargegap}.
By standard arguments,
\eqref{FreudEnsemble} and Proposition \ref{prop:Kernels} imply that the gap probability $F^{(\beta)}(s)$ is given by the limit
\be\label{Flimit}
F^{(\beta)}(s)=\lim_{n\to\infty} F_n^{(\beta)}(\lambda(n,s)),
\ee
where
$\lambda(n,s)$ is as in \eqref{scaling-ge1}, \eqref{scaling-less1}, and
\be\nonumber
F_n^{(\beta)}(\lambda)=
\frac{1}{Z_n} 
\int_{\left(\mathbb R\setminus[-\lambda,\lambda]\right)^n}\prod_{1\leq j<k\leq n}|\lambda_j-\lambda_k|^2\prod_{j=1}^ne^{-n|\lambda_j|^\beta}d\lambda_j.
\ee
Moreover, by Heine's formula, the $n$-fold integrals above can be expressed in terms of the Hankel determinants
\[
H_n(\lambda):=\det\left(\int_{\mathbb R\setminus[-\lambda,\lambda]}|x|^{j+k}e^{-n|x|^\beta}dx\right)_{j,k=0}^{n-1}=\frac{1}{n!}
\int_{\left(\mathbb R\setminus[-\lambda,\lambda]\right)^n}\prod_{1\leq j<k\leq n}|\lambda_j-\lambda_k|^2\prod_{j=1}^ne^{-n|\lambda_j|^\beta}d\lambda_j.
\]
In particular, $Z_n=n! H_n(0)$, so that
\begin{equation}\label{FnHankel}
F_n^{(\beta)}(\lambda)=\frac{H_n(\lambda)}{H_n(0)}.
\end{equation}

The function $e^{-n|x|^\beta}$ is called the symbol (weight) of the Hankel determinant $H_n(\lambda)$, and the set $\mathbb R\setminus[-\lambda,\lambda]$ its support.
The determinants $H_n(\lambda)$ are closely connected to orthogonal polynomials: we have the well-known formula
\begin{equation}\label{Hnkappan}
H_n(\lambda)=
\prod_{k=0}^{n-1}(\chi_k^{(\lb)})^{-2},
\end{equation}
where $\chi_k^{(\lb)}$ are the leading coefficients of the orthonormal polynomials  $P_k^{(\lambda)}$ with respect to the weight $e^{-n|x|^\beta}$ on $\mathbb R\setminus[-\lambda,\lambda]$.

To prove Theorem \ref{thm:Freudlargegap}  using \eqref{FnHankel}, we will compute the asymptotics of the Hankel determinant $H_n(\lambda(n,s))$ as $n\to\infty$ and $s$ is large.
Here $\lambda(n,s)$ is given by \eqref{scaling-ge1}, \eqref{scaling-less1}. 
It turns out that the leading terms of the asymptotics of $H_n(\lambda)$
for fixed $\lambda$ remain valid also when $\lambda$ tends to 0 as $n\to\infty$ at a rate slower or equal to \eqref{scaling-ge1}, \eqref{scaling-less1}.
In fact, in view of future applications, we obtain a slightly more general result. 

It is easily verified (see Appendix B) that for any $\lambda>0$,
\be\label{HH}
H_{2n}(\lambda)=2^{-4n^2/\beta} \wt H_{n}^{(\alpha=1/2)}(\mu)\wt H_{n}^{(\alpha=-1/2)}(\mu),\qquad \mu=2^{2/\beta}\lambda^2,
\ee
where 
\be\label{def:tildeH}
\wt H_{n}^{(\alpha)}(\mu)=\det\left(\int_{\mu}^{+\infty}x^{j+k}e^{-n|x|^{\beta/2}}x^{\alpha} dx\right)_{j,k=0}^{n-1}.
\ee

The following result about Hankel determinant asymptotics will be the key to prove Theorem \ref{thm:Freudlargegap}. A large part of this paper (Sections \ref{section:3diffid}--\ref{sec:integration}) will be devoted to its proof.

\begin{proposition}\label{prop:Hankel}
Let $\alpha\in\mathbb{R}$, $\beta>0$, and define
\be\label{def_rho}
\rho(\mu)=\frac{\beta}{2\pi}\int_{\mu}^{a(\mu)}t^{\beta/2-1}\frac{dt}{\sqrt{(a(\mu)-t)(t-\mu)}},
\ee
where $a(\mu)>\mu$ is the unique solution of the equation
\be\label{defa}
\frac{\beta}{4\pi}\int_{\mu}^{a(\mu)}t^{\beta/2-1}\sqrt{\frac{t-\mu}{a(\mu)-t}}dt=1.
\ee
For any $\mu_0>0$ there exists $M>0$ such that
\begin{equation}\label{dHs}
\log\wt H_n^{(\alpha)}(\mu) = C_2(\mu)n^2+C_1(\mu)n-\frac{1}{6}\log n +C_0(\mu)+\mathcal O\left(\frac{1}{n\sqrt{\mu}\rho(\mu)}\right)+o(1)
\end{equation}
as $n\to\infty$ uniformly for $\frac{M^2}{n^2\rho(\mu)^2}  \leq \mu\leq \mu_0$,
where $C_j=C_j(\mu)$, $j=0,1,2$ are as follows:
\begin{equation}\label{C1234}\begin{split}
&C_2(\mu) = \log\frac{a(\mu)-\mu}{4}-\frac{3}{\beta}-\frac{1}{\beta}\mu \rho(\mu)-\frac{1}{4\beta}\mu(a(\mu)-\mu) \rho(\mu)^2,
\\
&C_1(\mu) = \log(2\pi)+ 2\alpha\log\frac{\sqrt{a(\mu)}+\sqrt{\mu}}{2}-\frac{2\alpha}{\beta}+\frac{\alpha}{\beta}\sqrt{\mu}(\sqrt{a(\mu)}-\sqrt{\mu})\rho(\mu),
\\
&C_0(\mu) = 2\zeta'(-1)-\frac{1}{6}\log\left(\frac{a(\mu)-\mu}{4} \rho(\mu)\right)+\frac{1}{24}\log a'(\mu)+\frac{\alpha^2}{2}\log\frac{(\sqrt{a(\mu)}+\sqrt{\mu})^2}{4\sqrt{\mu a(\mu)}}.
\end{split}
\end{equation}
\end{proposition}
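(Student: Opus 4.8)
The plan is to combine three differential identities for $\log\wt H_n^{(\alpha)}(\mu)$ — one in each of the parameters $\mu$, $\alpha$, $\beta$ — with a Deift–Zhou steepest-descent analysis of the Riemann–Hilbert problem for the polynomials orthogonal with respect to the weight $x^{\alpha}e^{-n x^{\beta/2}}$ on $[\mu,+\infty)$. The steepest-descent analysis supplies, for each admissible $\mu$, the large-$n$ behaviour of all quantities entering those identities; substituting and integrating back in $\mu$ and $\alpha$ (and using $\beta$ to propagate an overall constant) then produces \eqref{dHs} with the coefficients \eqref{C1234}.

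In Section~\ref{section:3diffid} I would establish the differential identities. Writing $\wt H_n^{(\alpha)}(\mu)=\prod_{k=0}^{n-1}(\chi_k)^{-2}$ with $\chi_k=\chi_k(\mu,\alpha,\beta)$ the leading coefficients of the orthonormal polynomials $P_k$ for this weight, a standard computation gives $\partial_\mu\log\wt H_n^{(\alpha)}(\mu)$ as $-\mu^{\alpha}e^{-n\mu^{\beta/2}}$ times the Christoffel–Darboux kernel $K_n(\mu,\mu)$, which in turn is expressed through the $2\times2$ RH solution $Y(z)$ at the hard edge $z=\mu$; next $\partial_\alpha\log\wt H_n^{(\alpha)}(\mu)=\int_{\mu}^{\infty}\log x\,K_n(x,x)\,x^{\alpha}e^{-n x^{\beta/2}}\,dx$; and a similar formula for $\partial_\beta$. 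One has to track the boundary contributions carefully, since the support has a hard endpoint at $\mu$.

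The core is the steepest-descent analysis. The external field is $V(x)=x^{\beta/2}$ on $[\mu,\infty)$; its equilibrium measure is supported on $[\mu,a(\mu)]$ with a hard edge at $\mu$ and a soft edge at $a(\mu)$, the mass-one condition being exactly \eqref{defa} and the hard-edge strength being encoded by $\rho(\mu)$ of \eqref{def_rho}. One then runs the usual transformations $Y\to T\to S\to R$: $T$ normalises at infinity via the $g$-function, $S$ opens lenses around $[\mu,a(\mu)]$, and $R=S\cdot(\text{parametrix})^{-1}$. The global parametrix $N$ is built from a scalar problem incorporating the Szeg\H{o} function of $x^{\alpha}$ on $[\mu,a(\mu)]$, and this yields all the $\alpha$-dependence in $C_1$ and $C_0$ without affecting the $\log n$ term, because $x^{\alpha}$ is analytic and non-vanishing at $\mu>0$. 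At $a(\mu)$ one uses the Airy parametrix in a fixed disk; at $\mu$ one uses the Bessel parametrix of index $0$ in a disk whose admissible radius must shrink as $\mu\to0$ because the analytic continuation of $V$ (hence of the $g$-function) has a singularity at the origin at distance $\mu$. Matching on the disk boundaries yields $R=I+\mathcal O\bigl(1/(n\sqrt{\mu}\,\rho(\mu))\bigr)$, the $\mathcal O(1/n)$ soft-edge error being subdominant, and the requirement that the Bessel disk remain large compared with the near-hard-edge eigenvalue length scale is precisely the restriction $\mu\ge M^{2}/(n^{2}\rho(\mu)^{2})$. Reading off $Y(\mu)$, $K_n(x,x)$ and the $g$-function then gives the asymptotics of the right-hand sides of the three identities, with the $-\tfrac16\log n$ arising as $-\tfrac1{24}\log n$ (Airy edge) plus $-\tfrac18\log n$ (Bessel edge of index $0$).

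Finally, in Section~\ref{sec:integration} I would integrate. Integrating the $\mu$-identity from a reference point fixes the $\mu$-dependence of $C_2n^{2}+C_1n-\tfrac16\log n+C_0$; integrating the $\alpha$-identity produces the $2\alpha\log\tfrac{\sqrt{a(\mu)}+\sqrt{\mu}}{2}$-type terms in $C_1$ and the $\tfrac{\alpha^{2}}{2}\log(\cdots)$ term in $C_0$, and matching at $\alpha=0$ confirms the absence of an $\alpha$-contribution to $\log n$; the remaining genuinely global constant $2\zeta'(-1)$ is obtained either directly from the Airy/Bessel parametrix expansions or, alternatively, by specialising to an exactly solvable reduction (e.g.\ $\beta=2$, where via \eqref{HH} the product $\wt H_n^{(1/2)}\wt H_n^{(-1/2)}$ is tied to the Gaussian Hankel determinant and Proposition~\ref{prop:H0} with $c(2)=\zeta'(-1)$ is available), the $\beta$-identity then propagating the value to all $\beta>0$. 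I expect the main obstacle to be the steepest-descent analysis \emph{uniformly} as $\mu\to0$: one must control the Bessel parametrix and its matching error when both its disk of validity and the near-hard-edge eigenvalue scale degenerate, and verify that the error is of exactly the order $1/(n\sqrt{\mu}\,\rho(\mu))$ claimed in \eqref{dHs}; a secondary difficulty is organising the lengthy explicit evaluation of the $g$- and Szeg\H{o}-function contributions so that they collapse into the compact closed forms \eqref{C1234}.
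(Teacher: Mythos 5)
Your overall framework — differential identity plus Deift--Zhou steepest descent with a Bessel parametrix at the hard edge $\mu$ and an Airy parametrix at the soft edge $a(\mu)$, followed by integration back — is indeed the paper's framework, and your $-\tfrac16\log n = -\tfrac1{24}\log n - \tfrac18\log n$ bookkeeping is right. But there is a genuine gap in the central step, and a notable difference in how the constant is pinned down.

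The gap is in the claim that the ``usual transformations $Y\to T\to S\to R$'' already give $R=I+\mathcal O\bigl(1/(n\sqrt{\mu}\,\rho(\mu))\bigr)$ uniformly down to $\mu \ge M^{2}/(n^{2}\rho(\mu)^{2})$. They do not. With the global parametrix normalised at $\infty$ as usual, $P^{(\infty)}(z)=\mathcal O(\mu^{-1/4})$ on $\partial U_\mu$, and conjugating the $\mathcal O\bigl(1/(n\rho(\mu)\sqrt\mu)\bigr)$ mismatch from the Bessel expansion by $P^{(\infty)}$ yields $J_R-I=\mathcal O\bigl(1/(n\rho(\mu)\mu)\bigr)$ on $\partial U_\mu$; small-norm theory then only covers $\mu\ge M/(n\rho(\mu))$, which is strictly smaller than what the Proposition claims whenever $\rho(\mu)\to\infty$. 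The paper needs two further refinements that your sketch omits: first, renormalising the RH problem at $z=-\mu$ rather than at $\infty$ (Section~\ref{section:7RH2}), which makes the global parametrix bounded on $\partial U_\mu$ and improves the hard-edge mismatch to $\mathcal O\bigl(1/(n\rho(\mu)\sqrt\mu)\bigr)$ at the cost of a $\mathcal O(1/(n\sqrt\mu))$ mismatch in the $(1,2)$ entry on $\partial U_a$; second, a transformation $\widehat R\mapsto\widetilde R=\widehat R(I-F\sigma_+)$ (Section~\ref{section:8RH3}) built from a finite expansion $F=\sum_{j\le m}F_j/(n^j\sqrt\mu)$, which pushes that soft-edge $(1,2)$ mismatch to order $1/(n^{m+1}\sqrt\mu)$. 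The Airy-edge error is \emph{not} subdominant for $\beta\le1$ after the first refinement alone — $1/(n\sqrt\mu)$ dominates $1/(n\rho(\mu)\sqrt\mu)$ there — so the second transformation is essential precisely in the weak-confinement regime the Proposition is needed for. Without these two steps the uniformity claim in \eqref{dHs} is not reached.

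The secondary difference is in the integration and the reference point. You propose three differential identities (in $\mu$, $\alpha$, $\beta$) and a reduction to $\beta=2$; the paper uses only the $\mu$-derivative \eqref{DI2}, and anchors the integration at a \emph{fixed} $\mu=\mu_1>0$ by invoking Charlier--Gharakhloo \cite{CG19} for the constant $C_0(\mu_1)$, after verifying (Section~\ref{sect_fixed_mu}) that their coefficients $\widehat C_j$ agree with \eqref{C1234}. This avoids having to analyse an $\alpha$- or $\beta$-differential identity, which would introduce its own hard-edge boundary terms and uniformity questions; it buys a cleaner argument at the price of relying on the external result \cite{CG19}. The identification $C_j'=D_j$, which you would instead get by cross-differentiation, the paper proves directly using the integral identities of Lemma~\ref{lemma:I} as a consistency check.
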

\begin{remark}\label{remark:strongerestimate1}
One can show 
 (see, in particular, Remark \ref{D_0:full} below)
 that the term $o(1)$ in \eqref{dHs} is in fact $\mathcal{O}\(\frac{\ln n}{n}\)$.
 \end{remark}

\begin{remark}
The asymptotics of Hankel determinants with weights on $[\mu,+\infty)$  for {\it fixed} $\mu$ were obtained for more general weights (for which $x^{\alpha} e^{-x^{\beta/2}}$ is a particular case) by Charlier and Gharakhloo \cite{CG19}.
\end{remark}
\begin{remark}
Explicit small $\mu$ asymptotics for $a(\mu)$ and $\rho(\mu)$ are given in Lemma \ref{lemma:expaf}.
\end{remark}

Substituting the asymptotics of Proposition \ref{prop:Hankel} into \eqref{HH} and taking the scaling \eqref{scaling-ge1}, \eqref{scaling-less1},
it is easy to obtain the asymptotics of $H_{n}(\lambda(n,s))$. Recalling now that the asymptotics of
$H_n(0)=Z_n/n!$ are given by Proposition \ref{prop:H0} we can complete the proof of Theorem \ref{thm:Freudlargegap}
by \eqref{FnHankel}. 
We do this in Section \ref{sec:ProofofThm1}. 

Our main task will therefore be to prove Proposition \ref{prop:Hankel}.
We proceed with this proof as follows.
In  Section \ref{section:3diffid} we derive a differential identity, expressing the logarithmic $\mu$-derivative of 
$\wt H_n^{(\alpha)}(\mu)$ in terms of the related orthogonal polynomials $p_k^{(\alpha,\beta,\mu)}$. Then, we obtain asymptotics for $p_k^{(\alpha,\beta,\mu)}$ via a Riemann-Hilbert (RH) approach. Here we need to distinguish between $3$ different regimes. In Section \ref{section:6RH1}, we use a standard RH approach, which leads us to results when $\mu$ can tend to $0$ slowly with $n$. In Section \ref{section:7RH2}, we refine the RH analysis to allow $\mu$ to tend to $0$ faster. This first refinement, which involves an unusual normalization of the RH problem, allows us to obtain good enough asymptotics when $\beta>1$, but not when $\beta\leq 1$. In Section \ref{section:7RH2}, we resolve this issue using a further improvement of the RH analysis, and obtain the asymptotics of the polynomials in the full regime of Proposition \ref{prop:Hankel}. 
Finally, in Section \ref{section:9proofs}, we substitute the asymptotics from the RH analysis into the differential identity.
To integrate the differential identity, we need a suitable starting point where the full asymptotics of $\wt H_n^{(\alpha)}(\mu)$ are known or can be determined independently. 
In our case, the starting point is a fixed 
value of $\mu=\mu_0>0,$ where we use the result of \cite{CG19}.
We then integrate the  differential identity
in Section \ref{sec:integration} and obtain the expressions for the $C_j$'s.\footnote{The reason why we use a differential identity
rather than the formula of type \eqref{Hnkappan} to determine the asymptotics is that \eqref{Hnkappan} does not allow
to determine $C_0$ (cf. the proof of Proposition \ref{prop:H0}), but this coefficient is crucial for the proof of Theorem \ref{thm:Freudlargegap}.}

\section{Convergence of kernels and determinants. Proof of Proposition \ref{prop:Kernels}}\label{sec:kernel}
In \cite{KM}, the authors considered the orthonormal polynomials 
$\phi_k(x)=\gamma_k x^k+\dots$, $k=0,1,\dots$, with respect to the weight $e^{-|A x|^\beta}$ on $\mathbb R$.
Note that 
by the change of variable $y=n^{-\frac{1}{\beta}}A x$ in the orthonormality condition for the polynomials $\phi_k(x)$, with $A$ given by \eqref{defAbeta},
we obtain the identity
\be\label{Pphi}
P_k(y)=\frac{n^{\frac{1}{2\beta}}}{\sqrt{A}}\phi_k\left(n^{\frac{1}{\beta}}A^{-1} y\right),\qquad k=0,1,\dots,
\ee
where $P_k(y)=\chi_k y^k+\cdots$ are, as before, the orthonormal polynomials with respect to the weight $e^{-n|y|^\beta}$
on $\mathbb R$.
For the leading coefficients, we thus have
\begin{equation}\label{eq:chigamma}
\chi_k=\frac{n^{\frac{2k+1}{2\beta}}}{A^{k+\frac{1}{2}}}\gamma_k.
\end{equation}

It was shown in
\cite[Theorem 1.5]{KM} that for any $\beta>0$,
\be\label{KMformula}
\begin{aligned}
\log\gamma_m&=\log\gamma_m^{(approx)}+
\mathcal O\left(\frac{1}{m\log^2m}\right),\qquad m\to\infty,\\
\log\gamma_m^{(approx)}&=
-\frac{1}{2}\log\pi -\frac{1}{\beta}\left(m+\frac{1}{2}\right)\log m +\frac{m}{\beta}+m\log 2+\frac{\beta-4}{24\beta m}.
\end{aligned}
\ee

Moreover, it follows from \cite[Theorem 1.16]{KM} that uniformly for $z$ in a fixed $\varepsilon$-neighbourhood of $0$ as $n\to\infty$, for any $\bt\ge 1$,
\be\label{1asymp}
e^{-\frac{n}{2}|z|^{\bt}}\phi_n(n^{1/\bt}z/A)=
\sqrt{\frac{2}{\pi n^{1/\bt}}}\cos\left(\pi n \int_A^z\psi(t)dt+\frac{1}{2}\arcsin(z/A)\right)(1+\mathcal O(z^2)+\mathcal O(1/\log n)),
\ee
with $\psi$ as in \eqref{eqdensity}.
On the other hand,
for any $0<\bt<1$, for $z$ in the first quarter of the $\varepsilon$-neighbourhood of $0$,
\be\label{2asymp}
\begin{aligned}
e^{-\frac{n}{2}|z|^{\bt}}\phi(n^{1/\bt}z/A)&=
\sqrt{\frac{2}{\pi n^{1/\bt}}}(1-z/A)^{-1/4}(1+z/A)^{-1/4}\\
&\times\left\{
\cos\left(\pi n \int_A^z\psi(t)dt+\frac{1}{2}\arcsin(z/A)\right)((\Phi_{\bt,n})_{11}(z/A)+r_{1,\bt}(n,z))\right.\\
&\left.+\sin\left(\pi n \int_A^z\psi(t)dt-\frac{1}{2}\arcsin(z/A)\right)(i(\Phi_{\bt,n})_{12}(z/A)+r_{2,\bt}(n,z))
\right\},
\end{aligned}
\ee
where the $2\times 2$ matrix $\Phi_{\beta,n}(z)$ takes the form \[\Phi_{\bt,n}(z)=N(z)L((2n d_{\bt})^{1/\bt}iz)N(z)^{-1},\] where the matrix-valued function $L(w)$
is the solution to the RH problem \cite[(6.41)--(6.43)]{KM}.
The constant $d_{\bt}$ is given in \cite{KM} explicitly, as well as the matrix $N(z)$, which satisfies $N(z)=\mathcal O(1)$ as $z\to 0$. The error terms $r_{1,\bt}(n,z)$, $r_{2,\bt}(n,z)$ are bounded by $\mathcal O(n^{-1})+\mathcal O(n^{1+\delta-1/\beta})$ as $n\to\infty$ for any $\delta>0$, uniformly for $z$ in the first quarter of the $\varepsilon$-neighbourhood of $0$.

Let first $\bt > 1$. We recall the scaling \eqref{scaling-ge1}, denoting 
\[
x=\frac{u}{n\psi(0)},\qquad y=\frac{v}{n\psi(0)}.
\]
Now use the formulae \eqref{Pphi}--\eqref{1asymp} above and substitute $P_n(x)$, $P_n(y)$, $P_{n-1}(x)$, $P_{n-1}(y)$
into \eqref{kernel}. Recall that in our scaling, by \eqref{cond1},
\[
\int_A^x\psi(t)dt =\frac{1}{2}-\int_0^x\psi(t)dt=\frac{1}{2}-\frac{u}{n}+o(1/n),\qquad n\to\infty,
\]
and similarly for $y$, where the error term is uniform for $u,v$ in compact subsets of the real line.
We then obtain, with the same uniformity,
\[
\frac{1}{n\psi(0)}K_n\left(\frac{u}{n\psi(0)},\frac{v}{n\psi(0)}\right)\rightarrow K^{(\beta=2)}(u,v)=
\frac{\sin(\pi (u-v))}{\pi(u-v)},
\qquad n\to\infty.
\]

We easily obtain the convergence to the sine kernel  for $\bt=1$ as well, using the appropriate scaling in \eqref{scaling-ge1}. Given the different scaling, this sine kernel convergence may seem surprising at first sight. Note however that this is consistent with the analysis from \cite{BleherBothner}.

If $\bt<1$, one verifies, using in particular \eqref{2asymp}, the existence of the limiting kernel, which is expressed in terms of $L(w)$. 
By standard arguments, see e.g.\ \cite{DG}, one can show that the uniform on compacts convergence of the kernels implies convergence of the corresponding Fredholm determinants on $[-s,s]$, and in particular the existence of the Fredholm 
determinant $F^{\bt}(s)$.

\section{Asymptotics of $F^{(\beta)}(s)$. Proof of Theorems \ref{thm:Freudlargegap} and \ref{thm:const}}\label{sec:ProofofThm1}

We will now prove Theorems  \ref{thm:Freudlargegap}, \ref{thm:const}, relying on Propositions \ref{prop:H0} and \ref{prop:Hankel}. 
First we need small $\mu$ expansions for $a(\mu)$ and $\rho(\mu)$ defined in \eqref{defa} and \eqref{def_rho}.

\begin{lemma}\label{lemma:expaf}
As $\mu\to 0$,
\be\nonumber
a(\mu)=\begin{cases}
a_0+a_1\mu+\mathcal O(\mu^{(\beta+1)/2})+\mathcal O(\mu^2),& \beta>1,\\
4\pi^2-\frac{1}{2}\mu\log\mu+\mathcal O(\mu),& \beta=1,\\
a_0+\wt a_1\mu^{(\beta+1)/2}+\mathcal O(\mu),& \beta<1,
\end{cases}
\ee
where
\be\label{afcoeff}
a_0=(2B(\beta/2,1/2))^{2/\beta}=2^{2/\beta} A^2,\qquad a_1=\frac{1}{\beta-1},\qquad \wt a_1=\frac{\sqrt{a_0}}{2\pi(\beta+1)} B(\beta/2,1/2) \tan\frac{\pi\beta}{2}.
\ee
Furthermore, with this $a_0$, as $\mu\to 0$,
\be\label{fcoeff}
\rho(\mu)=\begin{cases}
\frac{2\beta}{\beta-1}\frac{1}{a_0}+\mathcal O(\mu^{(\beta-1)/2})+\mathcal O(\mu),& \beta>1,\\
-\frac{1}{4\pi^2}\log\mu+\mathcal O(1),& \beta=1,\\
\frac{\beta}{2\pi\sqrt{a_0}}B((1-\beta)/2,1/2){\mu^{(\beta-1)/2}}+\mathcal O(1),& \beta<1.
\end{cases}
\ee
\end{lemma}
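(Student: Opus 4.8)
\textbf{Plan of proof for Lemma \ref{lemma:expaf}.}

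The plan is to extract the small-$\mu$ behaviour of $a(\mu)$ and $\rho(\mu)$ directly from the defining relations \eqref{defa} and \eqref{def_rho}. First I would rescale the integrals to the fixed interval $[0,1]$ by substituting $t=\mu+(a(\mu)-\mu)\tau$, $\tau\in[0,1]$. Writing $a=a(\mu)$ for brevity, the normalisation condition \eqref{defa} becomes
\[
\frac{\beta}{4\pi}(a-\mu)\int_0^1\bigl(\mu+(a-\mu)\tau\bigr)^{\beta/2-1}\sqrt{\frac{\tau}{1-\tau}}\,d\tau=1,
\]
and \eqref{def_rho} becomes
\[
\rho(\mu)=\frac{\beta}{2\pi}\int_0^1\bigl(\mu+(a-\mu)\tau\bigr)^{\beta/2-1}\frac{d\tau}{\sqrt{\tau(1-\tau)}}.
\]
For $\mu\to 0$ one expects $a(\mu)\to a_0>0$; setting $\mu=0$ in the first relation and evaluating the resulting Beta integral $\int_0^1\tau^{\beta/2-1/2}(1-\tau)^{-1/2}\,d\tau=B(\tfrac{\beta+1}{2},\tfrac12)$ together with the identity $B(\tfrac{\beta+1}{2},\tfrac12)=\tfrac{\beta}{2}\,a_0^{-\beta/2}\cdot\tfrac{4\pi}{\beta}\cdot(\dots)$ pins down $a_0=(2B(\beta/2,1/2))^{2/\beta}=2^{2/\beta}A^2$, using the standard Beta-function recursion $B(\tfrac{\beta+1}{2},\tfrac12)=\tfrac12 B(\tfrac\beta2,\tfrac12)\cdot\tfrac{\beta}{\beta/2}$ (I will verify the precise constant against \eqref{defAbeta} during the write-up).

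Next, the correction terms. The key observation is that the integrand $\bigl(\mu+(a-\mu)\tau\bigr)^{\beta/2-1}$ is, near $\tau=0$, a perturbation of $(a\tau)^{\beta/2-1}$ of relative size $\mu/(a\tau)$, and the interaction of this with the factor $\tau^{\beta/2-1}$ near the endpoint produces the different regimes. Concretely, I would expand
\[
\bigl(\mu+(a-\mu)\tau\bigr)^{\beta/2-1}=(a\tau)^{\beta/2-1}\Bigl(1+\frac{\mu}{a\tau}-\frac{\mu}{a}\cdot\frac{1}{1}+\cdots\Bigr)^{\beta/2-1},
\]
so that the first-order term in $\mu$ carries a factor $\tau^{\beta/2-2}$; the integral $\int_0^1\tau^{\beta/2-2}\sqrt{\tau/(1-\tau)}\,d\tau=\int_0^1\tau^{\beta/2-3/2}(1-\tau)^{-1/2}\,d\tau$ converges iff $\beta>1$, diverges logarithmically at $\beta=1$, and diverges like $\mu^{(\beta-1)/2}$ (after reinstating the $\mu$-cutoff near $\tau\sim\mu/a$) for $\beta<1$. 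This trichotomy is exactly the source of the three cases in \eqref{afcoeff} and \eqref{fcoeff}. For $\beta>1$ the expansion is a clean Taylor series: differentiating the normalisation relation in $\mu$ at $\mu=0$ gives a linear equation for $a_1=a'(0)$, whose solution is $a_1=1/(\beta-1)$; the subsidiary $\mathcal O(\mu^{(\beta+1)/2})$ term comes from the non-analytic contribution of the region $\tau\lesssim\mu$. For $\beta=1$ the divergent integral is regularised by the $\mu$-dependent lower effective cutoff, producing the $-\tfrac12\mu\log\mu$ term after a careful matched-asymptotics split of $[0,1]$ into $[0,\delta]$ and $[\delta,1]$. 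For $\beta<1$ the leading correction is the genuinely non-analytic $\wt a_1\mu^{(\beta+1)/2}$, obtained by isolating $\int_0^\infty\bigl((\mu+a\tau)^{\beta/2-1}-(a\tau)^{\beta/2-1}\bigr)\tau^{1/2}\,d\tau$, which scales as $\mu^{(\beta+1)/2}$ by substituting $\tau=\mu\sigma/a$ and evaluating the resulting convergent Beta-type integral, giving the constant $\wt a_1$ in terms of $B(\beta/2,1/2)\tan(\pi\beta/2)$ (the tangent appears via the reflection formula $B(\tfrac{\beta+1}{2},\tfrac12)B(\tfrac{1-\beta}{2},\tfrac12)$-type identity, or directly from $\int_0^\infty\sigma^{\beta/2-1/2}(1+\sigma)^{-1}\,d\sigma=\pi/\sin(\cdots)$). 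The expansions for $\rho(\mu)$ then follow by the same endpoint analysis applied to \eqref{def_rho}, now with the weight $\tau^{-1/2}(1-\tau)^{-1/2}$ near $\tau=0$, which shifts all the exponents and yields $\rho(0)=\tfrac{2\beta}{\beta-1}a_0^{-1}$ for $\beta>1$, the $-\tfrac{1}{4\pi^2}\log\mu$ for $\beta=1$, and the $\mu^{(\beta-1)/2}$ blow-up with coefficient $\tfrac{\beta}{2\pi\sqrt{a_0}}B(\tfrac{1-\beta}{2},\tfrac12)$ for $\beta<1$; one should cross-check the $\beta<1$ leading terms against \eqref{exppsi} and the scaling \eqref{scaling-less1}, and the $\beta=1$ case against \eqref{fcoeff} being consistent with \eqref{cond1}.

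The main obstacle is the $\beta\le 1$ analysis: there the relevant integrals do not converge when one naively sets $\mu=0$, so one cannot simply Taylor-expand, and must instead carry out a matched-asymptotics argument splitting the $\tau$-integration near the left endpoint, tracking the boundary-layer region $\tau\sim\mu$ separately from the outer region, and showing that the cross-terms and the implicit dependence of $a(\mu)$ feed back consistently (in particular that $a(\mu)-a_0=\mathcal O(\mu)$ is enough to close the estimates for $\rho$, and conversely). The $\beta=1$ case is the most delicate, as it sits exactly at the transition and the logarithm arises from the borderline-divergent integral; one has to be careful that the $-\tfrac12\mu\log\mu$ in $a(\mu)$ and the $-\tfrac{1}{4\pi^2}\log\mu$ in $\rho(\mu)$ are mutually consistent through the defining relation. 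All the remaining steps — evaluating the various Beta integrals, identifying $a_0=2^{2/\beta}A^2$, and verifying the error orders — are routine once the splitting is set up.
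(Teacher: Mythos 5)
Your plan matches the paper's proof in all essentials: both rescale the integral and split it near the singular left endpoint $t=\mu$ (the paper cuts the $t$-integral at $\sqrt{\mu}$ and substitutes $t=\mu u$, resp.\ $t=a(\mu)u$ in the two pieces; you track a boundary layer $\tau\sim\mu$ after rescaling to $[0,1]$), then reduce each piece to Beta-function evaluations, with the $\beta\gtrless 1$ trichotomy coming from the convergence, logarithmic divergence, or power-law divergence of the first-order outer correction exactly as you describe. The Beta-function identities you invoke (the $\Gamma$-recursion to pin down $a_0$ and $\rho(0)$, and the reflection formula producing $\tan(\pi\beta/2)$) are the right ones, so the plan closes correctly.
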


\begin{proof}
First of all, making the change of variable $t=\mu+(a(\mu)-\mu) y$ in the integral in  \eqref{defa}, we see that for every $\mu>0,$ the left hand side there depends monotonically on $a,$ and hence equation \eqref{defa} indeed defines a unique $a=a(\mu)\in(\mu,+\infty).$

Next, 
to obtain the expansion of $a(\mu)$, we split the integral in \eqref{defa} into two parts: the first part contains the contribution of the interval $(\mu,\sqrt{\mu})$ and the second part is the remaining part of the integral 
over $(\sqrt{\mu},a(\mu))$.
We then apply the change of variable $t=\mu u$ in the first, and $t=a(\mu) u$ in the second, and expand the integrands in small parameters
using the expression
\[
B(\gamma,\delta)=\int_0^1 x^{\gamma-1}(1-x)^{\delta-1}dx.
\]
The expansion for $\rho(\mu)$ is found in a similar way.
\end{proof}

To prove Theorem \ref{thm:Freudlargegap}, by \eqref{Flimit} and \eqref{FnHankel}, it is sufficient to estimate the ratio $H_n(\lambda(n,s))/H_n(0)$,
where $\lambda(n,s)$ is as in \eqref{scaling-ge1}, \eqref{scaling-less1}. By \eqref{HH}, \eqref{asZ}, and  Proposition \ref{prop:Hankel}, we have, denoting (note the doubling of $n$)
\[
\mu(n,s)=2^{2/\beta}\lambda(2n,s)^2,
\]
that 
\be\label{logHH}
\begin{aligned}
\log\frac {H_{2n}(\lambda(2n,s))}{H_{2n}(0)}&=-\frac{4n^2}{\beta}\log 2+ \log\left[\wt H_{n}^{(\alpha=1/2)}(\mu(n,s))
\wt H_{n}^{(\alpha=-1/2)}(\mu(n,s))\right]-\log H_{2n}(0)\\
&=-\frac{4n^2}{\beta}\log 2+
2C_2(\mu(n,s))n^2+2n\log(2\pi)-\frac{1}{3}\log n+ C_0^{(\alpha=1/2)}(\mu(n,s))\\&\qquad +C_0^{(\alpha=-1/2)}(\mu(n,s))
-\left(\log\frac{A}{2}-\frac{3}{2\beta}\right)4n^2 -2n\log(2\pi)
+\frac{1}{12}\log(2n)\\&\qquad  -c(\beta)+
{\mathcal O\left(\frac{1}{n\sqrt{\mu(n,s)}\rho(\mu(n,s))}\right)+o(1),}\end{aligned}
\ee
{as $n\to\infty$,
provided that $\mu(n,s)$ converges to $0$ and $n\rho(\mu(n,s))\sqrt{\mu(n,s)}$ is sufficiently large.}
We already see that the terms of order $n$ cancel out, and that { in the limit $n\to\infty$ the error terms become $\mathcal{O}\(s^{-1}\)+o(1)$ when $\beta\geq 1$ and $\mathcal{O}\(s^{-\beta}\)+o(1)$ when $0<\beta\leq1,$ for $s$ sufficiently large.}
 We further consider different ranges of $\beta$ separately.
 \bigskip
 
\noindent
1) Let $\beta>1$. By \eqref{scaling-ge1}, recalling the definition of $a_0$ in \eqref{afcoeff}, we have
\[
\mu(n,s)=2^{2/\beta}\lambda(2n,s)^2=\frac{\pi^2 a_0}{4}\left(\frac{\beta-1}{\beta}\right)^2 \frac{s^2}{n^2}.
\]
As $n\to\infty$, for large $s$ fixed, we can use the expansions  of Lemma \ref{lemma:expaf} for $a(\mu(n,s))$ and $\rho(\mu(n,s))$, and substituting them into \eqref{C1234} we obtain
\[\begin{aligned}
&C_2(\mu(n,s))=\log\frac{a_0}{4}+\frac{a_1-1}{a_0}\mu(n,s)-\frac{3}{\beta}-
\frac{\mu(n,s)}{\beta a_0}\left(\frac{2\beta}{\beta-1}+\left[\frac{\beta}{\beta-1}\right]^2\right)+o(1/n^2),\\
&C_0(\mu(n,s))^{(\alpha=1/2)}+C_0(\mu(n,s))^{(\alpha=-1/2)}=\frac{1}{4}\log n - \frac{1}{4}\log s +\widehat c(\beta) +o(1),\qquad n\to\infty,
\end{aligned}
\]
where
\be\label{whc}
\widehat c(\beta)=4\zeta'(-1)-\frac{1}{12}\log\frac{\bt}{2}-\frac{1}{4}\log\pi.
\ee
Substituting these expressions into \eqref{logHH} and taking the limit $n\to\infty$, we prove \eqref{asF}
for $\beta>1$ with
\be\label{largeC}
C(\bt)=\widehat c(\beta)+\frac{1}{12}\log 2 -c(\bt).
\ee
On the other hand, recalling that $K^{(\beta)}(u,v)=K^{(\beta=2)}(u,v)$ for $\beta\ge 1$, and recalling the expansion  \eqref{beta2exp} for $\beta=2$,
we immediately have an alternative proof of \eqref{asF} for $\beta\ge 1$ with the explicit constant $C(\bt)=3\zeta'(-1)+\frac{1}{12}\log 2-\frac{1}{4}\log\pi$. Furthermore, substituting this value into \eqref{largeC} and using \eqref{whc}, we prove
Theorem \ref{thm:const} for $\bt>1$.
 \bigskip

\noindent
2) Let $\beta<1$. By \eqref{scaling-less1}, 
\[
\mu=\mu(n,s)=2^{2/\beta}\lambda(2n,s)^2=\left(\frac{4\pi}{\tan(\pi\beta/2)}\frac{1}{2n}\right)^{2/\beta} s^2.
\]
Note that $\mu^{(\beta+1)/2}n^2\to 0$, and also  $\mu n^2\to 0$ as $n\to\infty$. Moreover, since by \eqref{fcoeff} 
$\rho(\mu)=\mathcal O(\mu^{(\beta-1)/2})$, we have that $\mu \rho(\mu) n^2\to 0$. However, $\mu \rho(\mu)^2 n^2$ in $C_2n^2$ gives a nonzero contribution in the limit, namely,
\be\label{muf2}\begin{aligned}
\mu(n,s) \rho(\mu(n,s))^2 n^2 &=\frac{\beta^2}{4\pi^2 a_0}B((1-\beta)/2,1/2)^2\mu(n,s)^{\beta}n^2 (1+o(1))\\
&= \frac{\beta^2}{a_0} B(\bt/2,1/2)^2 s^{2\beta}(1+o(1)),
\end{aligned}
\ee
as $n\to\infty$, where we used the reflection formula for the $\Gamma$ function,
\[
\Gamma(x)\Gamma(1-x)=\frac{\pi}{\sin(\pi x)}.
\]
Using \eqref{muf2}, we obtain
\be\label{C2less1}
C_2(\mu(n,s))=\log\frac{a_0}{4}-\frac{3}{\beta}-
\frac{\beta}{4} B(\bt/2,1/2)^2 
\frac{s^{2\beta}}{n^2}+o(1/n^2).
\ee
Furthermore, one easily obtains
\be\label{C0less1}
\begin{aligned}
&C_0(\mu(n,s))^{(\alpha=1/2)}+C_0(\mu(n,s))^{(\alpha=-1/2)}
=-\frac{1}{3}\log\mu^{(\beta-1)/2}+\frac{1}{12}\log\mu^{(\beta-1)/2}-\frac{1}{8}\log\mu\\
&+4\zeta'(-1) -\frac{1}{3}\log \left(\frac{a_0}{4} \frac{\beta}{2\pi\sqrt{a_0}}B((1-\beta)/2,1/2)\right)+
\frac{1}{12}\log\left(\frac{\bt+1}{2}\wt a_1\right)+\frac{1}{4}\log\frac{\sqrt{a_0}}{4}+o(1)\\
&=\frac{1}{4}\log n - \frac{\beta}{4}\log s +
4\zeta'(-1)-\frac{1}{4}\log B(\beta/2,1/2)-\frac{1}{3}\log\bt+\frac{1}{12}\log 2+o(1).
\end{aligned}
\ee
Substituting  \eqref{C2less1} and \eqref{C0less1} into \eqref{logHH} and taking the limit $n\to\infty$ we prove the statement of Theorem \ref{thm:Freudlargegap} for $\beta<1$.
\bigskip

\noindent
3) Let $\beta=1$. 
By \eqref{scaling-ge1}, 
\[
\mu(n,s)=4\lambda(2n,s)^2=\left(\frac{\pi^2 s}{n\log(2n)}\right)^2.
\]
We have $\mu n^2\to 0$, $(\mu\log\mu)  n^2\to 0$, $\mu \rho(\mu) n^2\to 0$, but
\[
\mu \rho(\mu)^2 n^2=\frac{n^2}{16\pi^4}(\mu\log^2\mu )(1+o(1))=
\frac{s^2}{4}(1+o(1)),\qquad n\to\infty.
\]
Note that
\be\label{Co=1}
C_0(\mu)^{(\alpha=1/2)}+C_0(\mu)^{(\alpha=-1/2)}=\frac{1}{4}\log n - \frac{1}{4}\log s +
\widehat c(1)+o(1),\qquad n\to\infty,
\ee
where $\widehat c(\bt)$ is given by \eqref{whc}.
Using the expressions above, we obtain \eqref{asF} with undetermined $C(1)$.

On the other hand, we have an immediate alternative proof which also provides the value of $C(1)$, and hence the proof of 
Theorem \ref{thm:const} for $\bt=1$ as above, by the fact that $K^{(\bt=1)}(u,v)=K^{(\bt=2)}(u,v)$.

This completes the proof of Theorems \ref{thm:Freudlargegap} and \ref{thm:const}.

\section{Hankel determinant for the Freud weight. Proof of Proposition \ref{prop:H0}}
\label{section:FreudHankel}
Asymptotics of the leading coefficients of the orthonormal polynomials allow to find the asymptotics
of the corresponding Hankel determinant up to an undetermined multiplicative constant, cf. \cite{KMVV}.
Recall the expression \eqref{KMformula} for the leading coefficients of the polynomials $\phi_k(z)$.
The corresponding Hankel determinant $\widehat H_n$ with symbol $e^{-|A x|^\beta}$ on $\mathbb R$ is then
\be\nonumber
\widehat H_n=\prod_{k=0}^{n-1} \gamma_k^{-2}=\prod_{k=1}^{n-1} (\gamma_k^{(approx)})^{-2}\prod_{k=1}^{n-1} 
\left(\frac{\gamma_k}{\gamma_k^{(approx)}}\right)^{-2}\gamma_0^{-2}.
\ee
Note that the second product here converges as $n\to\infty$ by the error term estimate in \eqref{KMformula}, and the first one
is straightforward to estimate for large $n$ using the expansions:
\begin{align*}
&\sum_{k=1}^{n-1}k\log k=\frac{n^2}{2}\log n-\frac{n^2}{4}-\frac{n}{2}\log n+\frac{1}{12}\log n +\frac{1}{12}-\zeta'(-1)+\mathcal O(1/n),\\
&\sum_{k=1}^{n-1}\log k=\log n! -\log n=n\log n-n-\frac{1}{2}\log n +\frac{1}{2}\log(2\pi)+\mathcal O(1/n).
\end{align*}
We then obtain
\be\label{whH}
\log\widehat H_n=\frac{1}{\beta}n^2\log n -\left(\log 2 +\frac{3}{2\beta}\right)n^2 +n\log(2\pi) -\frac{1}{12}\log n +c(\beta)+o(1),\qquad n\to\infty,
\ee
for some constant $c(\beta)$.

To obtain the asymptotics for the determinant $H_n(0)$ with symbol $e^{-n|y|^{\beta}}$ and thus prove the proposition, we use the relation \eqref{eq:chigamma} between the appropriate leading coefficients, 
which implies that 
\[
H_n(0)=\prod_{k=0}^{n-1}\chi_k^{-2}=\widehat H_n \frac{A^n}{n^{n/\beta}}\prod_{k=0}^{n-1}\left(\frac{A}{n^{1/\beta}}\right)^{2k}.  
\]
Using this identity and \eqref{whH}, we prove the proposition.

\section{Differential identity}\label{section:3diffid}

In this section, we establish an identity for the logarithmic $\mu$-derivative of $\widetilde H_n(\mu)=\widetilde H_n^{(\alpha)}(\mu)$, given by \eqref{def:tildeH}, in terms of the orthonormal polynomials $p_n^{(\mu)}$ with respect to the weight $w_n(x):=x^\alpha e^{-n x^{\beta/2}}$ on $(\mu,\infty)$. Note that the polynomials $p_n$ depend not only on $\mu$ but also on $\alpha, \beta$, although we omit this in our notation. We denote $\kappa_n(\mu), \kappa_{n-1}(\mu)$ for the leading coefficients of $p_n^{(\mu)}, p_{n-1}^{(\mu)}$.

\begin{proposition}\label{prop_dif_ident}\textbf{(Differential identity)}
For any $\alpha>-1$, $\beta>0$, and $\mu>0$, we have the identity
\begin{equation}\label{DI1}\partial_\mu\ln \widetilde H_n(\mu) = -w_n(\mu)\frac{\kappa_{n-1}(\mu)}{\kappa_n(\mu)}\left((\partial_x p_n^{(\mu)}(x))p_{n-1}^{(\mu)}(x)-p_n^{(\mu)}(x)(\partial_x p_{n-1}^{(\mu)}(x)\right)\Big|_{x=\mu}.\end{equation}
Denoting
\begin{equation}\label{def:Y}Y(z)\equiv Y(z;\mu)=\begin{pmatrix}
\kappa_n(\mu)^{-1}p_n^{(\mu)}(z) & \dfrac{1}{2\pi\i\kappa_n(\mu)}\displaystyle
\int_{\mu}^{+\infty}\dfrac{p_n^{(\mu)}(u)w_n(u)\d u}{u-z}
\\\\
-2\pi\i\;\kappa_{n-1}(\mu) p_{n-1}^{(\mu)}(z) & -\kappa_{n-1}(\mu)\displaystyle\int_{\mu}^{+\infty}\dfrac{p_{n-1}^{(\mu)}(u)w_n(u) \d u}{u-z}
\end{pmatrix},\end{equation}
we can alternatively write this as
\begin{equation}\label{DI2}\partial_\mu\ln \widetilde H_n(\mu)=-\dfrac{w_n(\mu)}{2\pi\i}\(Y^{-1}(x;\mu)\partial_xY(x;\mu)\)_{21}\Big|_{x=\mu}.\end{equation}
\end{proposition}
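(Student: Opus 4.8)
The plan is to differentiate the Hankel determinant directly using its representation as a product of normalization constants of the orthonormal polynomials, and then translate the resulting expression into the Riemann--Hilbert language via the standard dictionary between $Y$ and the orthogonal polynomials.

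First I would recall the formula $\widetilde H_n(\mu)=\prod_{k=0}^{n-1}\kappa_k(\mu)^{-2}$, so that $\partial_\mu\ln\widetilde H_n(\mu)=-2\sum_{k=0}^{n-1}\partial_\mu\ln\kappa_k(\mu)$. To compute $\partial_\mu\ln\kappa_k$ I would use the variational characterization of the orthonormal polynomial: writing $p_k^{(\mu)}(x)=\kappa_k(\mu)x^k+\cdots$ with $\int_\mu^\infty (p_k^{(\mu)})^2 w_n=1$, differentiating the orthonormality and normalization relations in $\mu$ gives, after the usual manipulations (differentiating under the integral sign picks up a boundary term $-p_k^{(\mu)}(\mu)^2 w_n(\mu)$ from the moving endpoint, and the bulk terms involving $\partial_\mu p_k^{(\mu)}$ are handled by expanding $\partial_\mu p_k^{(\mu)}$ in the orthonormal basis and using orthogonality), an expression for $\partial_\mu\ln\kappa_k$ in terms of $p_k^{(\mu)}(\mu)$, $p_{k+1}^{(\mu)}(\mu)$ and $w_n(\mu)$. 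Summing over $k$ telescopes: the Christoffel--Darboux kernel $\sum_{k=0}^{n-1}p_k^{(\mu)}(x)p_k^{(\mu)}(y)$ evaluated via the CD formula collapses the sum to the $n$ and $n-1$ terms, yielding precisely the right-hand side of \eqref{DI1}, the Wronskian-type combination $(\partial_x p_n^{(\mu)})p_{n-1}^{(\mu)}-p_n^{(\mu)}(\partial_x p_{n-1}^{(\mu)})$ evaluated at $x=\mu$, times $-w_n(\mu)\kappa_{n-1}(\mu)/\kappa_n(\mu)$.

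For the second formulation \eqref{DI2}, I would simply invoke the explicit entries of $Y$ in \eqref{def:Y}: the first column of $Y$ is $(\kappa_n^{-1}p_n^{(\mu)},\,-2\pi\i\,\kappa_{n-1}p_{n-1}^{(\mu)})^{T}$, and $\det Y\equiv 1$ (a standard fact for this RH problem, following from the jump condition having unit determinant and the normalization at infinity), so $Y^{-1}\partial_x Y$ has $(2,1)$ entry equal to $(Y_{11}\partial_x Y_{21}-Y_{21}\partial_x Y_{11})$ times $\det Y^{-1}=1$; substituting the first-column entries reproduces $-2\pi\i$ times the Wronskian combination divided by $\kappa_n/\kappa_{n-1}$, and comparing with \eqref{DI1} gives the claimed identity after accounting for the $-w_n(\mu)/(2\pi\i)$ prefactor.

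The main obstacle is the careful bookkeeping in the first step: justifying differentiation under the integral sign with the moving endpoint (needs $\alpha>-1$ for integrability of $w_n$ near $\mu$ and decay at infinity), and correctly collecting all boundary contributions when differentiating the family of orthonormality relations $\int_\mu^\infty p_j^{(\mu)}p_k^{(\mu)}w_n=\delta_{jk}$ — in particular showing that the only surviving boundary term after the telescoping sum is the one at $x=\mu$. Once this is set up cleanly, the passage to the $Y$-formulation is purely algebraic and routine.
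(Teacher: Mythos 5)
Your proposal follows the paper's proof closely: differentiate $\widetilde H_n(\mu)=\prod_{k=0}^{n-1}\kappa_k(\mu)^{-2}$, extract the boundary contribution at the moving endpoint $x=\mu$ from the orthonormality relations (using the expansion of $\partial_\mu p_k^{(\mu)}$ in the orthonormal basis to handle the bulk term), sum over $k$ to get $-w_n(\mu)\sum_{k=0}^{n-1}p_k^{(\mu)}(\mu)^2$, apply the confluent Christoffel--Darboux formula, and pass to \eqref{DI2} via $\det Y\equiv 1$. One small inaccuracy in your description: differentiating the single relation $\int_\mu^\infty (p_k^{(\mu)})^2 w_n=1$ in $\mu$ gives $\partial_\mu\ln\kappa_k(\mu)=\tfrac12 w_n(\mu)\,p_k^{(\mu)}(\mu)^2$, which involves only $p_k^{(\mu)}(\mu)$ and not $p_{k+1}^{(\mu)}(\mu)$, so there is no telescoping at that stage; the collapse to the $n$ and $n-1$ terms occurs only after summing over $k$, through the Christoffel--Darboux identity applied to $\sum_{k}p_k^{(\mu)}(\mu)^2$.
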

\begin{proof}
The proof \clr{(cf. \cite{Kduke})} is based on the confluent form of the Christoffel-Darboux formula,
$$\sum\limits_{j=0}^{n-1}p_j^{(\mu)}(x)^2=\frac{\kappa_{n-1}}{\kappa_n}\((\partial_x p_n^{(\mu)}(x))p_{n-1}^{(\mu)}(x)-p_n^{(\mu)}(x)(\partial_x p_{n-1}^{(\mu)}(x)\),$$
and the relation
$$\widetilde H_n(\mu)=\prod\limits_{j=0}^{n-1}\kappa_j(\mu)^{-2}.$$
From the latter we obtain
$$\partial_\mu\ln \widetilde H_n(\mu) = -2\sum\limits_{j=0}^{n-1}\frac{\partial_\mu\kappa_j(\mu)}{\kappa_j(\mu)}.$$
From the orthonormality of $p_n^{(\mu)}(x)$ we obtain
$$\int_\mu^{+\infty}p_j^{(\mu)}(x)\(\partial_\mu p_j^{(\mu)}(x)\)\; w_n(x)\d x = \frac{\partial_\mu\kappa_j(\mu)}{\kappa_j(\mu)},$$
and substituting this into the previous formula, and continuing the calculations, we obtain further
$$\partial_\mu\ln \widetilde H_n(\mu) = -\int_{\mu}^{+\infty}\partial_\mu\(\sum\limits_{j=0}^{n-1}p_j^{(\mu)}(x)^2\)
w_n(x)\d x.$$
By the orthogonality of the polynomials, we have
$$0=
\partial_\mu
\underbrace{\left[\int_\mu^{+\infty}\(\sum\limits_{j=0}^{n-1}p_j^{(\mu)}(x)^2\)
w_n(x)\d x\right]}_{=n}
=
\int_{\mu}^{+\infty}\partial_\mu\(\sum\limits_{j=0}^{n-1}p_j^{(\mu)}(x)^2\)
w_n(x)\d x-
w_n(\mu)\sum\limits_{j=0}^{n-1}p_j^{(\mu)}(\mu)^2
,
$$
hence
\[\partial_\mu\ln \widetilde H_n(\mu) = - w_n(\mu)\sum\limits_{j=0}^{n-1}p_j^{(\mu)}(x)^2,
\]
and using the confluent form of the Christoffel-Darboux formula, we obtain the first identity \eqref{DI1}. The second identity \eqref{DI2} follows 
by using the fact that the determinant of $Y$ is identically equal to $1$ (this is a standard consequence of the RH conditions for $Y$ given in Section \ref{section:6RH1}),
by writing out the right hand side in terms of the entries $Y_{11}$ and $Y_{21}$ of $Y$, and by substituting their values given in \eqref{def:Y}.
\end{proof}

\section{The $g$-function and equilibrium measure}\label{sec:gfunction}

\subsection{Construction}
In this section we construct and study a $g$-function and an associated function $\varphi(z)$, which will play a fundamental role in the asymptotic RH analysis in the following sections.

It is well understood that the asymptotic analysis of the function $Y$ from \eqref{def:Y} involves an
equilibrium measure, characterized as the minimizer of a logarithmic energy functional, and coinciding with the limiting zero distribution of the orthogonal polynomials, see e.g.\ \cite{Deift}.
For the polynomials $p_n^{(\mu)}$ under consideration here, the relevant equilibrium problem is to minimize 
\[
\iint\ln\frac{1}{|x-y|}\d\nu(x)\d\nu(y)+\int x^{\beta/2}\d\nu(x)
\]
among all probability measures $\nu$ supported on a subset of $[\mu,+\infty).$ Assuming that $\nu$ is supported on a single interval $[\mu,a(\mu)]$, the associated Euler-Lagrange  conditions (see e.g. \cite{SaffTotik}) state that there exists a constant $\ell_\mu$ such that 
\begin{equation}\label{eq:EL}
\begin{split}
&2\int\ln|x-y|\d\nu(y)-x^{\beta/2}=2\ell_{\mu},\quad \mbox{ for }x\in[\mu,a(\mu)],
\\
&2\int\ln|x-y|\d\nu(y)-x^{\beta/2}<2\ell_{\nu},\quad \mbox{ for }x>a(\mu).
\end{split}
\end{equation}
If we can construct a probability measure $\nu$ satisfying these properties for some $\ell_\mu$ and for some $a(\mu)$, we can conclude that $\nu$ is the unique equilibrium measure minimizing the above energy functional (and hence that the one-interval assumption was correct).
Rather than constructing the measure $\nu$ directly, we will construct the associated $g$-function
\begin{equation}\label{def:glog}
g(z;\mu)=\int \log(z-x)d\nu(x),
\end{equation}
corresponding to the principal branch of the logarithm.
Note that the Euler-Lagrange conditions are equivalent to $g_+(x;\mu)+g_-(x;\mu)=x^{\beta/2}+2\ell_{\mu}$ for $x\in[\mu,a(\mu)]$ 
and 
$g_+(x;\mu)+g_-(x;\mu)<x^{\beta/2}+2\ell_{\mu}$ for $x>a(\mu).$ 
If these conditions are satisfied, the Stieltjes transform $g'(z;\mu)=\int \frac{d\nu(x)}{z-x}$ needs to satisfy the identity
$g_+'(x;\mu)+g_-'(x;\mu)=\frac{\beta}{2}x^{\frac{\beta}{2}-1}$ for $x\in [\mu,a(\mu)].$
The Sokhotsky-Plemelj formula and the fact that $g'$ tends to $0$ at infinity then imply that $g'$ takes the form
\be\label{gprime1}
g'(z;\mu) = -\frac{\beta}{4\pi}\(\frac{z-a(\mu)}{z-\mu}\)^{\frac12} \int\limits_{\mu}^{a(\mu)}
\sqrt{\frac{x-\mu}{a(\mu)-x}}
\frac{x^{\beta/2-1} \d x}{x-z},
\ee
where $\(\frac{z-a(\mu)}{z-\mu}\)^{\frac{1}{2}}$ 
is chosen analytic in $\mathbb C\setminus[\mu,a(\mu)]$ and tending to $1$ at infinity, and where $\sqrt{.}$ is the positive square root.
The Stieltjes inversion formula yields
\be\label{psi}
d\nu(x)=\psi(x;\mu)dx\ \mbox{ with }\ \psi(x;\mu) = 
\frac{\beta}{4\pi^2}
\sqrt{\frac{a(\mu)-x}{x-\mu}}
\dashint_{\mu}^{a(\mu)}\sqrt{\frac{t-\mu}{a(\mu)-t}}
\frac{t^{\beta/2-1}\d t}{t-x}
,
\quad x\in(\mu,a(\mu)).
\ee
By construction, the first of the Euler-Lagrange conditions, i.e.\ the equality in \eqref{eq:EL}, is satisfied. Moreover, if $a(\mu)>\mu$ is such that 
\eqref{defa} holds, we have $g'(z)\sim 1/z$ as $z\to\infty$, hence $\int d\nu(x)=1$. This also implies that $g_+-g_-=2\pi i$ on $(-\infty,\mu)$.
Since 
\[\dashint_{\mu}^{a(\mu)}\frac{1}{\sqrt{(t-\mu)(a(\mu)-t)}}
\frac{\d t}{t-x}=0,\]
which is easily seen by a contour deformation argument applied to an associated Cauchy integral, we can alternatively write $\psi$ as the non-singular integral
\be\label{psialter}\psi(x;\mu) = 
\frac{\beta}{4\pi^2}
\sqrt{\frac{a(\mu)-x}{x-\mu}}
\int_{\mu}^{a(\mu)}\frac{(t-\mu)t^{\beta/2-1}-(x-\mu)x^{\beta/2-1}}{(t-x)\sqrt{(t-\mu)(a(\mu)-t)}}
dt,\ee
which implies that $\psi$ is positive on $(\mu,a(\mu))$ and hence that $\nu$ is a probability measure.

In order to prove that $\nu$ is the equilibrium measure we are looking for, we still need to prove that the inequality in \eqref{eq:EL} is satisfied.
To see this, it is convenient to introduce 
\be\label{def:phi}
\varphi(z;\mu) = 
-\frac12 z^{\frac{\beta}{2}} + g(z;\mu) -\ell_{\mu},\qquad z\in\mathbb C\setminus(-\infty,a(\mu)],
\ee
where $z^{\beta/2}$ corresponds to arguments in $(-\pi,\pi)$.
From the properties of $g$ discussed above, we obtain directly that
\be\label{phippphim}
\varphi_+(x;\mu)-\varphi_-(x;\mu) = 2\pi\i \ \mbox{ for } x\in(0, \mu)
\quad\mbox{ and }\quad
\varphi_+(x;\mu)+\varphi_-(x;\mu) = 0 \ \mbox{ for } x\in(\mu, a(\mu)).\ee
From this we conclude that
\be\label{varphi_limit}2\varphi_+(x;\mu) = \varphi_+(x;\mu)-\varphi_-(x;\mu) = g_+(x;\mu)-g_-(x;\mu)=\clu{2\pi}\i\int_x^{a(\mu)}\psi(\xi;\mu)\d\xi \quad \mbox{ for } x\in(\mu, a(\mu)).
\ee
Substituting \eqref{psialter} and extending the resulting expression analytically, we get
\be\label{varphi}
\varphi(z;\mu) = -\frac{1}{2}
\int_{a(\mu)}^{z}
\(\frac{\xi-a(\mu)}{\xi-\mu}\)^{\frac12}
f(\xi;\mu)\d \xi,\quad z\in\mathbb{C}\setminus(-\infty, a(\mu)],
\ee
where $f$ is given by
\be\label{freg}
f(z;\mu) = \frac{\beta}{2\pi}
\int_{\mu}^{a(\mu)}
\frac{(t-\mu)t^{\frac{\beta}{2}-1} - (z-\mu)z^{\frac{\beta}{2}-1} }
{(t-z)\sqrt{(a(\mu)-t)(t-\mu)}}\d t,\quad z\in\mathbb{C}\setminus(-\infty,0],
\ee
with $z^{\frac{\beta}{2}-1}$ analytic in $\mathbb C\setminus(-\infty,0]$ and positive for $z>0$.
Observe that $f(\mu;\mu)$ is equal to the function $\rho(\mu)$ defined in \eqref{def_rho},
\be\label{f_rho}f(\mu;\mu) = \rho(\mu).\ee
From the positivity of the integrand in this expression, it is easy to deduce that $\varphi(x;\mu)<0$ for $x>a(\mu)$, which in view of \eqref{def:phi}, \eqref{def:glog} implies the Euler-Lagrange inequality in \eqref{eq:EL}. We can finally conclude that $\nu$ is indeed the equilibrium measure we were looking for.
Note that we can also write $f$ as a contour integral,
\be\label{f_def}
f(z;\mu)=
\frac{\beta}{4\pi\i}\int\limits_{\Gamma}\left(\frac{t-\mu}{t-a(\mu)}\right)^{1/2}\frac{t^{\beta/2-1}dt}{t-z},
\quad z\in\mathbb{C}\setminus(-\infty,0],
 \ee
where the square root is analytic for $t\in\mathbb C\setminus[\mu,a(\mu)]$ and tends to $1$ as $t\to\infty$, and where $\Gamma$ is a counter-clock-wise oriented loop around $z$ and around $[\mu,a(\mu)]$, but not intersecting $(-\infty,0]$.
This representation will be useful later in Section \ref{sec:integration}.

\begin{remark}
The double integral representation \eqref{varphi} is sufficient to establish the relevant properties of the function $\varphi$ (given in the rest of this Section), however, for numerical computations it is more convenient to operate with a single integral representation of $\varphi$ given in Appendix \ref{AppSIR}. 
\end{remark}
\begin{remark}
Though in what follows we will not need an expression for the constant $\ell_{\mu}$ from \eqref{eq:EL}, we give it for the sake of completeness: it equals
\[
\ell_\mu = \frac{-1}{2\pi}\int_{\mu}^{a_\mu}\frac{x^{\beta/2}\d x}{\sqrt{(a(\mu)-x)(x-\mu)}}
+
\ln\frac{a(\mu)-\mu}{4}.
\]
This can be seen from the large $z$ asymptotics of $g(z),$ the relation \eqref{def:phi} between the functions $g$ and $\varphi,$ and the representation \eqref{varphi_single}: the formula follows by taking the limit $z\to\infty$ in \eqref{varphi_single}.
\end{remark}

\subsection{Asymptotic estimates: small $\mu.$}

\begin{lemma}
Let $\mu_0>0$ be an arbitrary but fixed number.
For every $\beta>0$ there exists $c=c(\beta,\mu_0)>0$ such that for all $\mu\in(0,\mu_0]$
and for all $x\in[\mu,a(\mu)]$ the following holds:
\[
f(x;\mu)>c\begin{cases}
x^{\frac{\beta-1}{2}}, & \beta<1,
\\
1+|\log x|, & \beta=1,
\\
1,&\beta>1.
\end{cases}
\]
\end{lemma}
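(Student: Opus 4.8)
The plan is to obtain the lower bound for $f(x;\mu)$ directly from the non-singular integral representation \eqref{freg}, by carefully tracking the size of the integrand and of the interval $[\mu,a(\mu)]$ as $\mu\to 0$. First I would recall from Lemma \ref{lemma:expaf} that $a(\mu)\to a_0=2^{2/\beta}A^2>0$ as $\mu\to 0$, so that uniformly for $\mu\in(0,\mu_0]$ we have $a(\mu)$ bounded above and below away from zero, and moreover $a(\mu)-\mu$ bounded below by a positive constant; this means the interval of integration in \eqref{freg} has length comparable to $1$ and stays inside a fixed compact subset of $(0,\infty)$ except near its left endpoint $\mu$. Consequently $\sqrt{(a(\mu)-t)(t-\mu)}\le C$ and $|t-z|=|t-x|\le C$ for $t,x\in[\mu,a(\mu)]$, so the denominator in \eqref{freg} contributes a uniformly bounded factor, and the whole estimate reduces to bounding from below the integral of the numerator
\[
N(t,x;\mu):=\frac{(t-\mu)t^{\frac{\beta}{2}-1}-(x-\mu)x^{\frac{\beta}{2}-1}}{t-x}
\]
against $\sqrt{(a(\mu)-t)(t-\mu)}$ in the denominator. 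Since $\psi(x;\mu)>0$ was already established via \eqref{psialter}, the integrand $N(t,x;\mu)/\sqrt{(a(\mu)-t)(t-\mu)}$ has a definite sign, so there is no cancellation to worry about; the task is purely to produce a quantitative lower bound.

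Next I would restrict the integration in \eqref{freg} to a convenient subinterval where the integrand is easy to bound below. The function $h(t):=(t-\mu)t^{\beta/2-1}$ is increasing on $(\mu,\infty)$ for every $\beta>0$ (its derivative is $\frac{\beta}{2}t^{\beta/2-1}-\frac{(2-\beta)\mu}{2}t^{\beta/2-2}$, which one checks is positive for $t>\mu$ when $\beta<2$, and trivially positive when $\beta\ge 2$), so $N(t,x;\mu)=\frac{h(t)-h(x)}{t-x}=h'(\xi)$ for some $\xi$ between $t$ and $x$ by the mean value theorem, and in particular $N(t,x;\mu)\ge \inf_{[\mu,a(\mu)]} h'$. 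For $\beta\ge 1$ this infimum is bounded below by a positive constant uniformly in $\mu\in(0,\mu_0]$, because $h'(t)\ge \frac{\beta}{2}t^{\beta/2-1}-\frac{\mu}{2}t^{\beta/2-2}\ge \frac{\beta}{2}t^{\beta/2-1}(1-\tfrac{\mu}{t}) \cdot (\ldots)$, and on the bulk of the interval $t$ is comparable to $a_0$; after integrating against the $O(1)$ weight over an interval of length $\ge \mathrm{const}$, one gets $f(x;\mu)\ge c>0$ for $\beta>1$. For $\beta=1$ one needs the extra logarithmic factor: here $h(t)=(t-\mu)t^{-1/2}=\sqrt{t}-\mu t^{-1/2}$, and the point is that near $x$ close to $\mu$, the difference quotient picks up a $\log$; concretely I would integrate over $t\in(x,a(\mu))$ and use $N(t,x;\mu)\ge$ (a multiple of) the average of $h'$, while for small $x$ the term $\mu x^{-1/2}$ in $h(x)$ forces $N$ to be at least of order $|\log x|$ on a macroscopic portion of the integration range; this is the case that needs the most care.

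For $0<\beta<1$ the behaviour is genuinely different because $h'(t)=\frac{\beta}{2}t^{\beta/2-1}+\frac{(1-\beta/2)\mu}{t^{2-\beta/2}}\cdot(\ldots)$ blows up like $t^{\beta/2-1}$ as $t\to 0$; so for $x$ close to $\mu$ (hence close to $0$), I expect $N(t,x;\mu)\gtrsim x^{\beta/2-1}$ on an interval of $t$-values of length comparable to $x$ near $x$ itself, while for $x$ bounded away from $0$ the bound $f(x;\mu)\ge c\ge c\,x^{(\beta-1)/2}$ is automatic from the $\beta\ge 1$-type argument. The cleanest route is probably to split into the regime $x\le a(\mu)/2$ (say) and $x\ge a(\mu)/2$: in the second regime $x$ is comparable to $a_0$ and $f(x;\mu)\ge c$; in the first regime I would lower-bound \eqref{freg} by integrating $t$ over $(x,2x)\cap(\mu,a(\mu))$ — noting $2x\le a(\mu)$ there — on which $h'(\xi)\ge c\,x^{\beta/2-1}$, $\sqrt{(a(\mu)-t)(t-\mu)}\le C\sqrt{x}$, and $|t-x|\le x$, and the interval has length $\ge x/2$ once $x$ is, say, $\ge 2\mu$; the edge case $\mu\le x\le 2\mu$ is handled by a direct estimate of the same integral over $(x, a(\mu))$ using positivity. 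The main obstacle I anticipate is the uniformity in $\mu$ all the way down to $\mu=0$: one must be sure the constants in the mean-value-theorem bound for $h'$, the upper bound on $\sqrt{(a(\mu)-t)(t-\mu)}$, and the length of the chosen integration subinterval can all be taken independent of $\mu\in(0,\mu_0]$, which ultimately rests on the uniform bounds $a(\mu)\asymp 1$, $a(\mu)-\mu\asymp 1$ from Lemma \ref{lemma:expaf}; the $\beta=1$ logarithmic case is where the bookkeeping is most delicate, since one must extract exactly one power of $|\log x|$ and no more.
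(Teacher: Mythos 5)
Your reduction of \eqref{freg} to the mean value theorem for $h(s)=(s-\mu)s^{\beta/2-1}$ has a genuine gap when $\beta>2$. First a small point: $h'(t)=\tfrac{\beta}{2}t^{\beta/2-1}+\tfrac{2-\beta}{2}\mu\,t^{\beta/2-2}$, with a plus sign, not a minus, so monotonicity of $h$ is immediate for $\beta\le 2$ and requires $t\ge\mu$ only when $\beta>2$. The substantive problem is your claim that $\inf_{[\mu,a(\mu)]}h'$ is bounded below by a positive constant uniformly in $\mu$ for all $\beta\ge 1$. Evaluating at the left endpoint gives $h'(\mu)=\mu^{\beta/2-1}$, which tends to $0$ as $\mu\to 0$ whenever $\beta>2$, so the infimum is \emph{not} uniformly bounded below. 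Your bound $N(t,x;\mu)\ge\inf h'$ would then only yield $f(x;\mu)\gtrsim\mu^{\beta/2-1}$, far short of the required uniform constant. Restricting $t$ to a bulk subinterval does not by itself fix this: if $x$ is near $\mu$ and $t$ is in the bulk, the MVT point $\xi\in(x,t)$ can still be near $\mu$, where $h'$ is small. What actually rescues the argument there is that $h(t)-h(x)$ is bounded below by a fixed constant (because $h(t)$ is bulk-sized and $h(x)\le x^{\beta/2}$ is small) while $t-x\le a(\mu)-\mu=\mathcal O(1)$, but you never make this observation; as written, the $\beta>1$ case fails. The $\beta<1$ sketch (restricting to $t\in(x,2x)$) and the $\beta=1$ sketch are on the right track, though the statement that $N$ itself is ``of order $|\log x|$'' is a misdescription — $N$ is never logarithmic; the logarithm arises only after integrating the resulting $\sim t^{-1}$ density over an interval whose endpoint depends on $x$.

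For comparison, the paper avoids these issues by not using a single unified mean-value estimate. For $\beta\ge 2$ it rewrites $f$ via $t-\mu=(t-x)+(x-\mu)$ so that the $x$-dependent part is manifestly nonnegative and $f(x;\mu)\ge f(\mu;\mu)=\rho(\mu)$, which has a positive limit. For $1<\beta<2$ and $\beta=1$ it deforms the contour in \eqref{f_def} to a half-line integral \eqref{f_halfline} where monotonicity of $f$ in $x$ (and, for $\beta=1$, an explicit evaluation) is transparent. For $\beta<1$ it uses a decomposition of the finite-interval representation into three manifestly nonnegative pieces. Each case exploits a representation tailored to make the lower bound immediate, rather than a single difference-quotient estimate.
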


\begin{proof}
We distinguish between the cases $\beta<1,$ $\beta=1,$ $1<\beta<2$, and $\beta\geq2.$
For $0<\beta<1$ and $\beta\geq 2$, we use a representation of $f$ as an integral over $[\mu, a(\mu)],$ while for $\beta=1$ and $1<\beta<2$, we use a representation of $f$ as an integral over $[0,\infty).$

\medskip
\noindent\textbf{1. Integration over $[\mu, a(\mu)].$} \color{black}
\noindent Let us rewrite \eqref{freg} as
\begin{equation}\label{f_2}
f(z;\mu)=\frac{\beta}{2\pi}\int\limits_{\mu}^{a(\mu)}\frac{t^{\frac{\beta}{2}}-z^{\frac{\beta}{2}}\ -\ \mu(t^{\frac{\beta}{2}-1}-z^{\frac{\beta}{2}-1})}{(t-z)\sqrt{(a(\mu)-t)(t-\mu)}}dt,
\quad z\in\mathbb{C}\setminus(-\infty,0], 
\end{equation}
and substitute the identity
\[
t^{\frac{\beta}{2}-1}-z^{\frac{\beta}{2}-1} = \frac{1}{z}(t^{\frac{\beta}{2}}-z^{\frac{\beta}{2}}) \ - \ \frac{1}{z}(t-z)t^{\frac{\beta}{2}-1},
\]
to obtain
\[
f(z;\mu)=\(1-\frac{\mu}{z}\)\frac{\beta}{2\pi}\int\limits_{\mu}^{a(\mu)}\frac{(t^{\frac{\beta}{2}}-z^{\frac{\beta}{2}})\  d t}{(t-z)\sqrt{(a(\mu)-t)(t-\mu)}}
\ + \ 
\frac{\mu}{z}\frac{\beta}{2\pi}\int\limits_{\mu}^{a(\mu)}\frac{t^{\frac{\beta}{2}-1}\ d t}{\sqrt{(a(\mu)-t)(t-\mu)}},
\quad z\in\mathbb{C}\setminus(-\infty,0]. 
\]
Now we take $z=x\in(\mu,a(\mu))$, and we note that the first integral is a nonsingular one.

\medskip\noindent
\textbf{1.1. $0<\beta<1. $ } \color{black}
For $x\in[\mu, a(\mu)]$, we split the interval of integration in the first integral of the last expression into two parts: $[\mu, x]$ and $[x, a(\mu)],$ and then use the estimate $\frac{t^{\beta/2}-x^{\beta/2}}{t-x}\in \(\frac{\beta}{2}t^{\beta/2-1}, \frac{\beta}{2}x^{\beta/2-1}\)$ (note that the order of the endpoints depends on whether $\beta$ is bigger or smaller than $2$). 
We obtain
\[
\begin{split}
f(x;\mu)\geq & 
\frac{\mu}{x}\frac{\beta}{2\pi}\int\limits_{\mu}^{a(\mu)}\frac{t^{\frac{\beta}{2}-1}\ d t}{\sqrt{(a(\mu)-t)(t-\mu)}}
\ + \
\(1-\frac{\mu}{x}\)x^{\frac{\beta}{2}-1}\frac{\beta}{2\pi}\frac{\beta}{2}\int\limits_{\mu}^{x}\frac{ d t}{\sqrt{(a(\mu)-t)(t-\mu)}}
\\ & +  
\(1-\frac{\mu}{x}\)\frac{\beta}{2\pi}\ \frac{\beta}{2}\int\limits_{x}^{a(\mu)}\frac{t^{\frac{\beta}{2}-1}\  d t}{\sqrt{(a(\mu)-t)(t-\mu)}}=:I_1+I_2+I_3, 
\end{split}
\]
{where we already note that $I_1, I_2, I_3\geq 0$.
If $\mu\leq x \leq \mu(1+\delta)$ for some $\delta>0,$ we substitute $t=\mu+(a(\mu)-\mu)\mu \tau$ in the integral from $I_1,$ and we find that
\begin{multline*}f(x;\mu)\geq I_1=\frac{\mu^{\frac{\beta+1}{2}}}{x}\frac{\beta}{2\pi}\int_0^{1/\mu}\frac{(1+(a(\mu)-\mu) \tau)^{\frac{\beta}{2}-1}d\tau}{\sqrt{\tau(1-\mu\tau)}}\\
\geq \frac{\mu^{\frac{\beta+1}{2}}}{x}\frac{\beta}{2\pi}\int_0^{\frac{1}{\mu_0}}\frac{(1+(a(\mu)-\mu) \tau)^{\frac{\beta}{2}-1}d\tau}{\sqrt{\tau(1-\mu\tau)}}\geq c'\frac{\mu^{\frac{\beta+1}{2}}}{x}\geq cx^{\frac{\beta-1}{2}}\end{multline*}
for sufficiently small $c', c>0$.

If $\mu(1+\delta)\leq x \leq a(\mu)$ for sufficiently small $\delta>0,$ we have
\[f(x;\mu)\geq I_2\geq \(1-\frac{\mu}{x}\)x^{\frac{\beta}{2}-1}\frac{\beta}{2\pi}\frac{\beta}{2\sqrt{a(\mu)}}\int\limits_{\mu}^{x}\frac{ d t}{\sqrt{t-\mu}}\geq c\mu^{\frac{\beta-1}{2}}\geq cx^{\frac{\beta-1}{2}}
\]
for sufficiently small $c>0$.}

\medskip\noindent\textbf{1.2. $\beta\geq 2.$} \color{black}{Set $z=x\in[\mu,a(\mu)]$ and subsitute $t-\mu = (t-x) + (x-\mu)$ in \eqref{freg}. This gives\[
\begin{split}
f(x; \mu)
=
\frac{\beta}{2\pi}\int\limits_{\mu}^{a(\mu)}
\frac{t^{\frac{\beta}{2}-1}}
{\sqrt{(a(\mu)-t)(t-\mu)}}dt
+
\frac{(x-\mu)\beta}{2\pi}\int\limits_{\mu}^{a(\mu)}
\frac{\(t^{\frac{\beta}{2}-1}-x^{\frac{\beta}{2}-1}\)}
{(t-x)\sqrt{(a(\mu)-t)(t-\mu)}}dt.
\end{split}
\]
It is easy to see that this is an increasing function of $x>\mu$ for $\beta\geq 2$, so $f(x;\mu)\geq f(\mu;\mu),$ which proves the lemma since $f(\mu;\mu)$ has a positive limit as $\mu\to0$ and hence is uniformly separated from $0$ for $\mu\in(0,\mu_0].$
}

\medskip
\noindent 
\textbf{2. Integration over $[0,\infty).$} \color{black}
Deforming the contour of integration in  \eqref{f_def} to a large circle and both branches of the negative real line, and then sending the radius of the circle to infinity, we obtain an integral over the negative real half-line; changing the variable $t\mapsto -t$ in the obtained integral, we obtain the following representation:

\begin{equation}\label{f_halfline}
f(z;\mu)=
\frac{\beta}{2\pi} \, \sin\frac{\pi\beta}{2}\int\limits_{0}^{\infty}\sqrt{\frac{t+\mu}{t+a(\mu)}}\, \frac{t^{\frac{\beta}{2}-1}\, d t}{t+z},\quad z\in\mathbb{C}\setminus(-\infty,0],
\end{equation}
which is valid for $0<\beta<2.$ 

\medskip\noindent\textbf{2.1. $\beta=1.$} {We set $z=x\in[\mu,a(\mu)]$, split the integration interval in \eqref{f_halfline} into $[0,1]$ and $[1,\infty),$ and then introduce convenient cross-terms, to obtain
\begin{multline*}
f(x;\mu)
=
\frac{1}{2\pi\sqrt{a(\mu)}} \int\limits_{0}^{1}\sqrt{\frac{t+\mu}{t}}\, \frac{d t}{t+x}
-
\frac{1}{2\pi\sqrt{a(\mu)}} \int\limits_{0}^{1}\sqrt{\frac{t+\mu}{t+a(\mu)}}\, \frac{\sqrt{t\,} \, d t}{\(\sqrt{t+a(\mu)}+\sqrt{a(\mu)}\)(t+x)}
\\
 +
\frac{1}{2\pi} \int\limits_{1}^{\infty}\sqrt{\frac{t+\mu}{t(t+a(\mu))}}\, \frac{d t}{t+x},
\qquad 
x\in[\mu, a(\mu)].
\end{multline*}
The last two terms here are uniformly bounded for all $\mu\in(0,\mu_0]$ and $x\in[\mu, a(\mu)],$ while the integral in the first term evaluates explicitly to
\begin{multline*}
\int\limits_0^1\sqrt{\frac{t+\mu}{t}}\frac{d t}{t+x}
=
\(1-\sqrt{1-\frac{\mu}{x}}\)\log\frac{1}{\mu}
+
\sqrt{1-\frac{\mu}{x}} \log\frac{1+x}{x}
\\
-2\sqrt{1-\frac{\mu}{x}}\log\(\sqrt{1+\mu}+\sqrt{1-\frac{\mu}{x}}\)
+2\log\(\sqrt{1+\mu}+1\).
\end{multline*}
The second line here is uniformly bounded, while the first line is estimated from below by $C |\log x|$ for some $C$
(indeed, for $\mu\leq x \leq\mu(1+\delta)$ with sufficiently small $\delta>0$ the estimate follows from the first term, while for $\mu(1+\delta)\leq x\leq a(\mu)$ it follows from the second term of the first line). Finally, the estimate from below by a constant easily follows from \eqref{f_halfline}.
}

\medskip\noindent
\textbf{2.2. $1<\beta<2.$} {
The representation \eqref{f_halfline} implies that $f(x;\mu)$ decreases monotonically  as a function of $x,$
and hence $f(x;\mu)>f(a(\mu);\mu),$ and the latter is uniformly separated from $0$ for $\mu\in(0,\mu_0].$
}
\medskip\noindent

\end{proof}
\color{black}

\begin{lemma}\label{lemma:f}
Let $\mu_0>0$ be an arbitrary but fixed number.
There exists $\delta>0$ such that we have uniformly for $\mu\in(0,\mu_0]$, $\mu(1+\delta)<x<a(\mu)-\delta$, $0<y<\delta x$,
\[f(x\pm iy;\mu)-f(x;\mu)=
\begin{cases}\mathcal O(x^{\frac{\beta-3}{2}}y),&\beta<1,\\
\mathcal O\((1+|\log x|)x^{-1}y\),&\beta=1,\\ 
\mathcal O(x^{-1}y),&\beta>1.\end{cases}
\]
\end{lemma}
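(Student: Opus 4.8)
The plan is to estimate the difference $f(x\pm iy;\mu)-f(x;\mu)$ directly from the integral representations of $f$ already established, choosing for each range of $\beta$ the representation that is best adapted to the behaviour near the relevant endpoint. Since we are working in the interior region $\mu(1+\delta)<x<a(\mu)-\delta$ with $0<y<\delta x$, the point $x\pm iy$ stays a fixed distance away from $a(\mu)$ and away from $0$; the only singular feature of $f$ that can be approached is the endpoint $\mu$ of the cut, and even then only the prefactor $z^{\beta/2-1}$ (for $\beta<1$) or the factor $(z-\mu)^{-1}$-type behaviour genuinely degenerates as $x\downarrow\mu$. The key observation is that $f$ is analytic in $\mathbb{C}\setminus(-\infty,0]$, so we may write
\be\nonumber
f(x\pm iy;\mu)-f(x;\mu)=\int_0^{\pm iy} f'(x+\xi;\mu)\, d\xi,
\ee
and it suffices to bound $f'(z;\mu)$ on the short vertical segment $\{x+\xi:\ |\xi|\le y\}$, which lies in the region $\Re z\in(\tfrac12\mu,a(\mu))$, $|z|\asymp x$, at distance $\gtrsim x$ from $0$ (because $y<\delta x$ with $\delta$ small) and at distance $\gtrsim\delta$ from $a(\mu)$.

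For $0<\beta<1$ I would use the half-line representation \eqref{f_halfline} (valid for $0<\beta<2$),
\be\nonumber
f(z;\mu)=\frac{\beta}{2\pi}\sin\frac{\pi\beta}{2}\int_0^\infty\sqrt{\frac{t+\mu}{t+a(\mu)}}\,\frac{t^{\beta/2-1}\,dt}{t+z},
\ee
differentiate under the integral sign to get $f'(z;\mu)=-\frac{\beta}{2\pi}\sin\frac{\pi\beta}{2}\int_0^\infty\sqrt{\frac{t+\mu}{t+a(\mu)}}\,\frac{t^{\beta/2-1}\,dt}{(t+z)^2}$, and then estimate $|f'(x+\xi;\mu)|$ for $|\xi|\le y<\delta x$: split the integral at $t=x$. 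For $t\le x$ we have $|t+x+\xi|\gtrsim x$ (using $|\xi|\le\delta x$ and $\Re(x+\xi)\ge\tfrac12\mu>0$), so this piece is $\lesssim x^{-2}\int_0^x t^{\beta/2-1}\,dt\lesssim x^{-2}x^{\beta/2}=x^{\beta/2-2}$; for $t\ge x$ we have $|t+x+\xi|\gtrsim t$, so this piece is $\lesssim\int_x^\infty t^{\beta/2-1}t^{-2}\,dt\asymp x^{\beta/2-2}$ (here $\sqrt{(t+\mu)/(t+a(\mu))}\le 1$ is used throughout). Hence $|f'(x+\xi;\mu)|\lesssim x^{(\beta-4)/2}=x^{(\beta-3)/2}\cdot x^{-1/2}$; multiplying by the length $y$ of the path of integration gives $\mathcal O(x^{(\beta-3)/2}y\cdot x^{-1/2})$. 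This is slightly off from the claimed $\mathcal O(x^{(\beta-3)/2}y)$ by a factor $x^{-1/2}$, which suggests that the cruder bound $|t+x+\xi|\gtrsim x$ on $[0,x]$ should be replaced by the sharper $|t+x+\xi|^2\gtrsim (t+x)^2$ — but $t+x\ge x$ only, so in fact one should instead estimate the first piece as $\lesssim\int_0^x t^{\beta/2-1}(t+x)^{-2}dt$; substituting $t=xs$ this is $x^{\beta/2-2}\int_0^1 s^{\beta/2-1}(1+s)^{-2}ds=\mathcal O(x^{\beta/2-2})$, no gain. The resolution is that for $\beta<1$ one should keep track of the $\mu$-scale: when $x\asymp\mu$ the extra $t^{\beta/2-1}$ near $t=0$ is what the factor $x^{(\beta-3)/2}$ (rather than $x^{(\beta-4)/2}$) reflects — I would re-examine the split at $t=\mu$ rather than $t=x$, bounding $\int_0^\mu$ and $\int_\mu^\infty$ separately, to recover exactly the stated exponent. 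For $\beta=1$ the same computation produces the logarithmic factor precisely from the $t=x$ crossover in $\int^{\cdot}t^{-1/2}(t+x)^{-2}dt$-type integrals, matching $\mathcal O((1+|\log x|)x^{-1}y)$. For $\beta>1$ one can instead use the representation \eqref{freg} over $[\mu,a(\mu)]$ (or \eqref{f_halfline} for $1<\beta<2$ together with \eqref{f_def} deformed for $\beta\ge2$); there the integrand has no singularity worse than $x^{-1}$ near the endpoint, the differentiation in $z$ costs one extra power of the distance to the cut, which is $\gtrsim x$ on the segment, and one obtains $|f'(x+\xi;\mu)|\lesssim x^{-1}$, hence the bound $\mathcal O(x^{-1}y)$ after integrating over the segment of length $y$.

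The main obstacle I anticipate is getting the powers of $x$ exactly right in the regime $x\asymp\mu\to 0$ for $\beta\le 1$: the naive "differentiate and bound by the distance to the nearest singularity" argument loses half a power, and one genuinely has to track the two competing scales $\mu$ and $x$ (which coalesce when $x\asymp\mu$) by splitting the integral at an intermediate point and using that $0<y<\delta x$ with $\delta$ small keeps $x+\xi$ comfortably inside the sector of analyticity. A clean alternative that avoids this bookkeeping is to use the Cauchy-integral / contour representation \eqref{f_def}: deform $\Gamma$ to pass at distance $\asymp\delta x$ from both $x$ and $[\mu,a(\mu)]$ (possible since $y<\delta x$), so that on $\Gamma$ the factor $(t-z)^{-1}$ for $z=x+\xi$ and its $z$-derivative $(t-z)^{-2}$ are $\mathcal O((\delta x)^{-1})$ and $\mathcal O((\delta x)^{-2})$ respectively, while $\oint_\Gamma|(t-\mu)/(t-a(\mu))|^{1/2}|t|^{\beta/2-1}\,|dt|$ is estimated by the length of $\Gamma$ (which is $\mathcal O(x)$ near $\mu$ plus $\mathcal O(1)$ elsewhere) times the pointwise bound on the integrand; the near-$\mu$ portion contributes the extra $x^{\beta/2}$ that upgrades $x^{-2}$ to $x^{(\beta-3)/2}$ for $\beta<1$, and produces the $\log$ for $\beta=1$. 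Either way, once $|f'(z;\mu)|$ is bounded on the segment, the Lemma follows by the fundamental theorem of calculus, $|f(x\pm iy;\mu)-f(x;\mu)|\le y\cdot\sup_{|\xi|\le y}|f'(x+\xi;\mu)|$, with uniformity in $\mu\in(0,\mu_0]$ coming from the uniform bounds on $a(\mu)$ and $\rho(\mu)=f(\mu;\mu)$ in Lemma \ref{lemma:expaf} and the previous lemma.
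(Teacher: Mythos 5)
Your strategy — apply the fundamental theorem of calculus to write $f(x\pm iy)-f(x)=\int_0^{\pm iy}f'(x+\xi)\,d\xi$ and bound $f'$ on the vertical segment — is sound, and your ``clean alternative'' via the contour representation \eqref{f_def} is essentially what the paper does (the paper chooses $\Gamma$ to be the circle of radius $a(\mu)-\mu/2$ centered at $a(\mu)$ and bounds the difference directly via the kernel $\frac{dt}{(t-x)(t-x-iy)}$ rather than through FTC, but that distinction is cosmetic). However, both of the routes you sketch come out one factor $x^{1/2}$ too weak, and your diagnosis of why is not quite right, which is a genuine gap.

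In the half-line route for $\beta<2$, you bound $\sqrt{\tfrac{t+\mu}{t+a(\mu)}}\leq 1$ and obtain $|f'(x+\xi)|\lesssim x^{\frac{\beta-4}{2}}$. Splitting the $t$-integral at $\mu$ instead of $x$, as you propose, does not repair this: redo the split and you still get $x^{\frac{\beta-4}{2}}$, because the loss is in the prefactor, not the split point. The factor $\sqrt{\tfrac{t+\mu}{t+a(\mu)}}$ is $\asymp\sqrt{t+\mu}$ on $t\in(0,1]$, not $\asymp 1$; keeping it, one finds $\int_0^\infty\sqrt{t+\mu}\,t^{\beta/2-1}(t+x)^{-2}dt\asymp x^{\frac{\beta-3}{2}}$ for $\beta<3$ (with the logarithm appearing at $\beta=3$, not $\beta=1$, in this intermediate bound; combined with the fact that $x^{\frac{\beta-3}{2}}\leq x^{-1}$ for $\beta\geq1$ one recovers all three cases of the Lemma). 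The same slip recurs in your contour-integral alternative: on the near-$\mu$ arc you count arclength $\mathcal O(x)$ times $|t|^{\beta/2-1}\asymp x^{\beta/2-1}$ and get $x^{\beta/2}$, but you drop $|t-\mu|^{1/2}\asymp x^{1/2}$ coming from $|(t-\mu)/(t-a(\mu))|^{1/2}$; restoring it gives $x^{\beta/2+1/2}$, and together with the two inverse distances of order $x^{-1}$ each this yields $x^{\frac{\beta-3}{2}}$ as needed. (The paper additionally uses the sharper pointwise bound $|t-x|\gtrsim x+|\theta|$ rather than a flat $x$, which is what produces the clean $\log$ at $\beta=1$ after the angular integration.) Finally, your treatment of $\beta\geq2$ by differentiating \eqref{freg} is hand-wavy: \eqref{freg} has a removable singularity at $t=z$ which becomes genuine upon differentiation, so this route needs a separate regularization argument; the contour formula \eqref{f_def} handles all $\beta>0$ uniformly and sidesteps the issue.
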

\begin{proof}
Let us focus on $z=x+iy$ with $y>0$, as the case $y<0$ is similar.
By \eqref{f_def}, we have
\[f(x+iy;\mu)-f(x;\mu)=\frac{\beta y}{4\pi}\int_{\Gamma}\left(\frac{t-\mu}{t-a(\mu)}\right)^{1/2}t^{\frac{\beta}{2}-1}\frac{dt}{(t-x)(t-x-iy)}.\]

Let us now choose $\Gamma$ as the circle of radius $r=a(\mu)-\mu/2$ around $a(\mu)$.
Then, for $t\in\Gamma$, if $\mu(1+\delta)<x<a-\delta$, $0<y<\delta u$, we have
\[|t-x-iy|\geq |t-x|-|y|\geq x-\frac{\mu}{2}-\delta x\geq (1-\delta)x-\frac{\mu}{2}\geq x\left(1-\delta -\frac{1}{2+2\delta}\right),\]
hence for $\delta>0$ sufficiently small
\[\frac{1}{|t-a(\mu)|^{1/2}|t-x-iy|}=\mathcal O(x^{-1}),\]
and we obtain
\[f(x+iy;\mu)-f(x;\mu)=\mathcal O\left(x^{-1}y\int_{\Gamma}\frac{|t-\mu|^{1/2}|t|^{\frac{\beta}{2}}}{|t| |t-x|} |d t|\right).\]
Parametrizing $t=a(\mu)-re^{i\theta}$ with $\theta\in[-\pi,\pi]$, note that the contribution of $|\theta|>\delta$ is $\mathcal O(1)$, and focus on the part $|\theta|\leq \delta$ for sufficiently small $\delta$. We can then estimate $|t|$ by the sum of absolute values of its real and imaginary part, using a Taylor expansion for small $\theta$,
\[|t|\leq  a(\mu)-(a(\mu)-\frac{\mu}{2})\cos\theta+(a(\mu)-\frac{\mu}{2})|\sin\theta| \leq \frac{\mu}{2}+C|\theta|\leq C(x+|\theta|),\]
for sufficiently large $C$,
and similarly \[|t-\mu|\leq \left|a(\mu)-\mu-(a(\mu)-\frac{\mu}{2})\cos\theta\right|+(a(\mu)-\frac{\mu}{2})|\sin\theta| \leq \frac{\mu}{2}+C|\theta|\leq C(x+|\theta|).\]
Moreover,
\[|t|=\sqrt{(a(\mu)-(a(\mu)-\frac{\mu}{2})\cos\theta)^2+(a(\mu)-\frac{\mu}{2})^2\sin^2\theta}\geq c(\mu+|\theta|),\]
and
\[|t-x|=\sqrt{(a(\mu)-x-(a-\frac{\mu}{2})\cos\theta)^2+(a(\mu)-\frac{\mu}{2})^2\sin^2\theta}\geq c(\mu+|\theta|),\]
for sufficiently small $c>0$.
Hence, $|t-\mu|^{1/2}/|t|^{1/2}$ is bounded, and
\[f(x+iy;\mu)-f(x;\mu)=\mathcal O(x^{-1}y)+\mathcal O\left(x^{-1}y\int_{-\delta}^\delta (x+|\theta|)^{\frac{\beta-2}{2}}(\mu+|\theta|)^{-1/2}d\theta\right).\]
Making the substitution $|\theta|=-x +(x-\mu)s$
in the latter integral, we obtain that the above is \[\mathcal O(x^{-1}y)+\mathcal O(x^{\frac{\beta-3}{2}}y)\] for $\beta\neq 1$, and 
\[\mathcal O(x^{-1}y \(|\log x|+1)\),\]
for $\beta=1$.
\end{proof}
Next, we need information about the sign of the real part of $\varphi$. See Figure \ref{Fig_varphi} for an illustration.

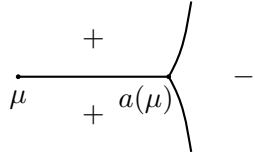
\begin{figure}[ht!]
\center
\begin{tikzpicture}
\draw[thick] (1,0) to (3,0);
\filldraw (1,0) circle (0.7pt);
\filldraw (3,0) circle (0.7pt);
\draw[thick] (3,0) [out = 60, in = -100] to (3.3,1);
\draw[thick] (3,0) [out = -60, in = 100] to (3.3, - 1);
\node at (4,0){$-$};
\node at (2,0.5){$+$};
\node at (2,-0.5){$+$};
\node at (1,-0.3){$\mu$};
\node at (2.7,-0.3){$a(\mu)$};
\end{tikzpicture}
\caption{Distribution of signs of $\Re \varphi(z)$ in a neighborhood of the segment $[\mu, a(\mu)].$}
\label{Fig_varphi}
\end{figure}

\begin{lemma}\label{lemma:estimatephi}
Let $\mu_0>0$ be an arbitrary but fixed number.
There exist $c, \delta>0$ such that for $\mu\in (0,\mu_0]$, $\mu(1+\delta)<x<a(\mu)-\delta$, $0<y<\delta x$,
\[\Re\varphi(x+iy)>c\sqrt{y}f(x;\mu).\]
\end{lemma}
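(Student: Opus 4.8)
The plan is to work from the double-integral representation \eqref{varphi} for $\varphi$, localizing the analysis near the point $x$ where $x$ lies at distance at least $\delta$ from both endpoints $\mu(1+\delta)$ and $a(\mu)-\delta$. Since $\varphi_+(x;\mu)=\i\pi\int_x^{a(\mu)}\psi(\xi;\mu)\d\xi$ is purely imaginary on $(\mu,a(\mu))$ by \eqref{varphi_limit}, the real part of $\varphi(x+\i y)$ comes entirely from moving off the cut. The key local feature is the square-root behaviour of $\varphi$ at the endpoint $\mu$ of the support: near $x\in(\mu,a(\mu))$ the factor $\left(\frac{\xi-a(\mu)}{\xi-\mu}\right)^{1/2}$ in the integrand of \eqref{varphi} is smooth and non-vanishing, so the dominant effect of a small displacement $\i y$ is governed by how $\varphi$ picks up a contribution proportional to $\sqrt{y}$ times the local density, which is controlled by $f(x;\mu)$.

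First I would write $\varphi(x+\i y) = \varphi_+(x) + \bigl(\varphi(x+\i y)-\varphi_+(x)\bigr)$ and express the difference as the integral $-\tfrac12\int_{x}^{x+\i y}\left(\frac{\xi-a(\mu)}{\xi-\mu}\right)^{1/2}f(\xi;\mu)\,\d\xi$ along the vertical segment from $x$ to $x+\i y$. On $(\mu,a(\mu))$ the branch $\left(\frac{\xi-a(\mu)}{\xi-\mu}\right)^{1/2}$ equals $\pm\i\sqrt{\frac{a(\mu)-\xi}{\xi-\mu}}$; parametrizing $\xi = x+\i t$ with $t\in(0,y)$, one has $\left(\frac{\xi-a(\mu)}{\xi-\mu}\right)^{1/2}\d\xi = \i\cdot(\text{something})\cdot\i\,\d t$, so the leading real part of the integrand is, up to a positive constant and lower-order corrections in $t/x$, of the form $\sqrt{\frac{t}{x-\mu}}\cdot f(x;\mu)$ — here I would use Lemma \ref{lemma:f} to replace $f(x+\i t;\mu)$ by $f(x;\mu)$ up to a relative error $\mathcal O(t/x)$ (times a logarithm when $\beta=1$), and a Taylor expansion of the square-root factor near $\xi=x$ (which is legitimate because $x$ is bounded away from $\mu$ and $a(\mu)$) to extract the $\sqrt{t}$ singularity. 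Integrating $\int_0^y \sqrt{t}\,\d t = \tfrac23 y^{3/2}$ is not quite what appears; rather the combination gives a contribution $\asymp \sqrt{y}\,f(x;\mu)$ after accounting for the correct local expansion of $\varphi$ at a generic interior point, where the relevant antiderivative produces $\sqrt{y}$ rather than $y^{3/2}$. I would be careful to keep track of the sign: the distribution-of-signs picture in Figure \ref{Fig_varphi} shows $\Re\varphi>0$ just above and below $(\mu,a(\mu))$, and the explicit branch choice should confirm the positive sign.

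The main obstacle will be making the estimate \emph{uniform} in $\mu\in(0,\mu_0]$ as $\mu\to0$, since both $a(\mu)$ and $\rho(\mu)$ (and hence the scale of $f$) degenerate. The role of the hypotheses $\mu(1+\delta)<x$ and $y<\delta x$ is precisely to keep $x-\mu \asymp x$ and to keep the displacement small relative to the distance to the nearest singularity, so that the square-root factor and its Taylor remainder are controlled by powers of $x$ alone; Lemma \ref{lemma:f} then supplies exactly the three-case bound on $f(x+\i t;\mu)-f(x;\mu)$ needed to absorb the correction terms into the main term $\sqrt{y}\,f(x;\mu)$, using that $f(x;\mu)$ itself is bounded below by the corresponding powers of $x$ from the preceding lemma. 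The delicate point is the endpoint $\beta=1$, where the extra logarithm in Lemma \ref{lemma:f} must be shown not to overwhelm the lower bound $f(x;\mu)>c(1+|\log x|)$ from the previous lemma — this works because the logarithmic factors match. Finally I would choose $\delta$ small enough (depending only on $\beta$ and $\mu_0$) that all the error terms are a fixed fraction of the main term, yielding $\Re\varphi(x+\i y)>c\sqrt{y}\,f(x;\mu)$ with a uniform $c>0$.
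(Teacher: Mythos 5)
Your plan follows the same route as the paper: start from the vertical-segment integral $\varphi(x+iy)-\varphi_+(x)=-\tfrac12\int_x^{x+iy}f(\xi;\mu)\bigl(\tfrac{\xi-a(\mu)}{\xi-\mu}\bigr)^{1/2}\,d\xi$, replace $f(\xi;\mu)$ by $f(x;\mu)$ using Lemma~\ref{lemma:f}, and extract a dominant explicit piece. But there is a concrete gap where you try to extract the $\sqrt{y}$. Parametrizing $\xi=x+it$, the $t$-integrand of the main term is $-\tfrac{i}{2}f(x;\mu)\bigl(\tfrac{\xi-a(\mu)}{\xi-\mu}\bigr)^{1/2}$, and at $t=0^+$ its real part equals $\tfrac12 f(x;\mu)\sqrt{(a(\mu)-x)/(x-\mu)}$, a nonzero \emph{constant} in $t$ — it is not of the form $\sqrt{t/(x-\mu)}\,f(x;\mu)$ as you assert. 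Integrating a constant produces a contribution $\propto y$, not $\sqrt{y}$. You notice that your $\sqrt t$ does not integrate to the right power, but the ensuing remark that ``the combination gives a contribution $\asymp\sqrt{y}f(x;\mu)$'' is not backed by a computation, and cannot be justified as a small-$t$ expansion: for $y\ll x-\mu$ one really has $\Re\varphi(x+iy)\asymp \tfrac{y}{\sqrt{x-\mu}}\sqrt{a(\mu)-x}\,f(x;\mu)$, which is $o(\sqrt{y})$.

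The $\sqrt y$ must come from the \emph{exact} antiderivative of the singular factor, which is what the paper's proof does: factor out only the regular $\sqrt{a(\mu)-x}$, keep $(\xi-\mu)^{-1/2}$ exact (since the branch point $\mu$ may be close to the contour), and integrate. The main term becomes $-if(x;\mu)\sqrt{a(\mu)-x}\bigl[(x-\mu+iy)^{1/2}-(x-\mu)^{1/2}\bigr]$, with real part
$f(x;\mu)\sqrt{a(\mu)-x}\,\bigl((x-\mu)^2+y^2\bigr)^{1/4}\sin\bigl(\tfrac12\arctan\tfrac{y}{x-\mu}\bigr)$,
which interpolates between $\asymp y/\sqrt{x-\mu}$ for $y\ll x-\mu$ and $\asymp\sqrt{y}$ for $y\gtrsim x-\mu$. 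The claimed lower bound $c\sqrt{y}f(x;\mu)$ therefore needs $\arctan\tfrac{y}{x-\mu}$ bounded away from $0$, i.e.\ $y\gtrsim x-\mu$ (hence $y\gtrsim x$, since $x-\mu>\tfrac{\delta}{1+\delta}x$). That is precisely the regime in which the lemma is invoked (on $\widetilde\Sigma_\pm$ in Lemma~\ref{lemma:JR1} one has $y\geq\tfrac{\delta}{2}x$ by construction of the lens); as literally stated, with only $0<y<\delta x$ and no lower bound on $y/x$, the inequality fails as $y\to0$, since there $\Re\varphi(x+iy)=\Theta(y)$. To close your argument you need both the exact endpoint evaluation and the restriction $y\gtrsim x$ that is implicit in the lemma's intended use, and you should also check, as you correctly anticipate, that the remainder terms controlled by Lemma~\ref{lemma:f} are a fixed fraction of this main term after shrinking $\delta$.
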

\begin{proof}
By \eqref{varphi}, we have
\[\varphi(x+iy;\mu)-\varphi(x;\mu)=-\frac{1}{2}\int_x^{x+iy}f(\xi;\mu)\left(\frac{\xi-a(\mu)}{\xi-\mu}\right)^{1/2}d\xi.\]
It follows that
\begin{multline*}
\varphi(x+iy;\mu)-\varphi(x;\mu)=
-\frac{if(x;\mu)}{2}\sqrt{{a(\mu)-x}}\int_x^{x+iy}(\xi-\mu)^{-1/2}d\xi
\\
\qquad-\frac{f(x;\mu)}{2}\int_x^{x+iy}\left(\left(\frac{\xi-a(\mu)}{\xi-\mu}\right)^{1/2}-\left(\frac{\mu-a(\mu)}{\xi-\mu}\right)^{1/2}\right)d\xi
\\
-\frac{1}{2}\int_x^{x+iy}\left(f(\xi;\mu)-f(x;\mu)\right)\left(\frac{\xi-a(\mu)}{\xi-\mu}\right)^{1/2}d\xi.
\end{multline*}
\color{black}
The first term is
\[-i\left.\sqrt{a(\mu)-x}f(x;\mu)(\xi-\mu)^{1/2}\right|_{\xi=x}^{x+iy},\]
and has real part
\begin{multline*}\sqrt{a(\mu)-x}f(x;\mu)\left({(x-\mu)^2+y^{2}}\right)^{1/4}\cos\left(\frac{1}{2}\arctan\frac{y}{x-\mu}\right)
\\
\geq \frac{1}{2}\sqrt{a(\mu)-x}f(x;\mu)(x-\mu)^{1/2}\geq cf(x;\mu)u^{1/2}.
\end{multline*}
It remains to prove that the two other terms are smaller than, say, one third of the above expression.
The second term can be estimated as
\[\mathcal O\left(f(x;\mu)\int_x^{x+iy}(\xi-\mu)^{1/2}d\xi\right)=\mathcal O\left(f(x;\mu)|x-\mu+iy|^{1/2} y\right)=\mathcal O\left(f(x;\mu)x^{1/2} y\right),\]
and is indeed much smaller than the first one.
For the third term, we use the previous lemma and obtain for $\beta<1$,
\[\mathcal O\left(x^{\frac{\beta-3}{2}}\int_x^{x+iy}(\xi-x)\left(\frac{\xi-a(\mu)}{\xi-\mu}\right)^{1/2} d\xi\right)=\mathcal O\left(x^{\frac{\beta-3}{2}}x^{-1/2}y^2\right)=\mathcal O\left(f(x;\mu)x^{-3/2}y^2\right).\]
By choosing $\delta>0$ small enough, we can make this also smaller than one third of the first term. The estimates for $\beta\geq 1$ follow similarly.
\end{proof}

\subsection{Local changes of variables}

Next, we need estimates for the derivative of $f(z;\mu)$ near $z=\mu$ and $z=a(\mu)$.
\begin{proposition}
\label{prop_fderf}
For any $\mu_0>0$, there exists $r_0>0$ such that the following holds. 
\begin{enumerate}
\item 
Uniformly for $\mu\in(0,\mu_0]$ and for $|z-\mu|<r_0\mu$, we have
\begin{equation}\label{estderfmu}f(z;\mu)=\mathcal O\left(\rho(\mu)\right),\qquad
\partial_zf(z;\mu)=
\begin{cases}\mathcal O(1),&\beta>3,
\\
\mathcal O\left(\ln\frac{1}{\mu}\right),& \beta=3,
\\
\mathcal O\left(\mu^{\frac{\beta-3}{2}}\right),& \beta<3.
\end{cases} 
\end{equation}
\item
Uniformly for $\mu\in(0,\mu_0]$ and for $|z-a(\mu)|<r_0$, we have
\begin{equation}\label{estderfa}
\partial_zf(z;\mu)=\mathcal O(1).
\end{equation}
\end{enumerate}
\end{proposition}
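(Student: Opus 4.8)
The plan is to handle the two parts separately. Part~(2), the estimate near $z=a(\mu)$, is the easy one: here the distance from $a(\mu)$ to the branch cut $(-\infty,0]$ of $f(\cdot;\mu)$ is of order one, so $f$ is analytic on a disc of fixed radius around $a(\mu)$, and once we check that $f$ is bounded there uniformly in $\mu$, the derivative bound follows from Cauchy's inequality. For part~(1) we reduce the bound on $f$ itself to the bound on $\partial_z f$; thus the only substantial work is the estimate of $\partial_z f$ near $z=\mu$. Throughout we may assume $\mu$ is small, say $\mu\in(0,\mu_1]$ for a suitable $\mu_1=\mu_1(\mu_0)>0$, since on the compact set $[\mu_1,\mu_0]$ all quantities involved are smooth and the estimates are immediate. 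We will freely use that, by Lemma~\ref{lemma:expaf}, $a(\mu)$ lies between two fixed positive constants for $\mu\in(0,\mu_0]$, that $f(\mu;\mu)=\rho(\mu)$ by \eqref{f_rho}, and the small-$\mu$ behaviour of $\rho$ from \eqref{fcoeff}. It is convenient to abbreviate $h(t):=t^{\beta/2}-\mu t^{\beta/2-1}=t^{\beta/2-1}(t-\mu)$, so that \eqref{freg} reads $f(z;\mu)=\frac{\beta}{2\pi}\int_\mu^{a(\mu)}\frac{h(t)-h(z)}{t-z}\,\frac{dt}{\sqrt{(a(\mu)-t)(t-\mu)}}$, and to note that $0\le h(t)\le t^{\beta/2}$ for $t\ge\mu$, while $|h(z)|=\mathcal O(\mu^{\beta/2})$ and $|h'(z)|=\mathcal O(\mu^{\beta/2-1})$ for $|z-\mu|<r_0\mu$.

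For part~(2), choose $r_0>0$ so small that $2r_0$ is below the lower bound for $a(\mu)$; then the disc $\{|z-a(\mu)|<2r_0\}$ stays uniformly away from $(-\infty,0]$, and the contour representation \eqref{f_def} shows that $f(\cdot;\mu)$ is analytic on it. On this disc I will bound the difference quotient $\big|\frac{h(t)-h(z)}{t-z}\big|$ by $|t-z|^{-1}(|h(t)|+|h(z)|)=\mathcal O(1)$ when $t$ is bounded away from $a(\mu)$, and by $\sup_{\xi\in[z,t]}|h'(\xi)|=\mathcal O(1)$ when both $t$ and $z$ are close to $a(\mu)$ (the segment then staying away from $0$). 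Since $\int_\mu^{a(\mu)}\big((a(\mu)-t)(t-\mu)\big)^{-1/2}dt=\pi$, this gives $|f(z;\mu)|=\mathcal O(1)$ uniformly on $\{|z-a(\mu)|<2r_0\}$, and Cauchy's inequality then yields \eqref{estderfa}.

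For part~(1), suppose first that the derivative estimate in \eqref{estderfmu} has been proved. Writing $f(z;\mu)=f(\mu;\mu)+\int_\mu^z\partial_w f(w;\mu)\,dw$ along the segment $[\mu,z]\subset\{|w-\mu|<r_0\mu\}$ gives $|f(z;\mu)-\rho(\mu)|\le r_0\mu\,\sup|\partial_z f|$, which is $\mathcal O(\mu^{(\beta-1)/2})$, $\mathcal O(\mu\log\tfrac1\mu)$ or $\mathcal O(\mu)$ according as $\beta<3$, $\beta=3$ or $\beta>3$; in each case, comparing with \eqref{fcoeff}, this is $\mathcal O(\rho(\mu))$, which gives the first bound in \eqref{estderfmu}. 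It remains to estimate $\partial_z f$. Differentiating \eqref{freg} under the integral sign (legitimate since the integrand and its $z$-derivative are bounded for $z$ in compact subsets of $\mathbb C\setminus(-\infty,0]$),
\[
\partial_z f(z;\mu)=\frac{\beta}{2\pi}\int_\mu^{a(\mu)}\frac{h(t)-h(z)-h'(z)(t-z)}{(t-z)^2}\,\frac{dt}{\sqrt{(a(\mu)-t)(t-\mu)}}.
\]
I will split $[\mu,a(\mu)]$ into $I_1=[\mu,2\mu]$, $I_2=[2\mu,a(\mu)/2]$ and $I_3=[a(\mu)/2,a(\mu)]$. On $I_1$, the Taylor estimate $\big|\frac{h(t)-h(z)-h'(z)(t-z)}{(t-z)^2}\big|\le\frac12\sup_{\xi\in[z,t]}|h''(\xi)|=\mathcal O(\mu^{\beta/2-2})$, valid because $\mu(1-r_0)\le|\xi|\le(2+r_0)\mu$ on the relevant segment, together with $\int_\mu^{2\mu}(t-\mu)^{-1/2}dt=\mathcal O(\sqrt\mu)$, yields a contribution $\mathcal O(\mu^{(\beta-3)/2})$. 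On $I_2\cup I_3$ one has $|t-z|\ge ct$ for a fixed $c>0$, so the second divided difference above is $\le c^{-2}t^{-2}\big(|h(t)|+|h(z)|+|h'(z)||t-z|\big)=\mathcal O\big(t^{\beta/2-2}+\mu^{\beta/2}t^{-2}+\mu^{\beta/2-1}t^{-1}\big)$; integrating each of these three terms against the weight (which is $\mathcal O(t^{-1/2})$ on $I_2$ and has an integrable singularity at $a(\mu)$ on $I_3$), and substituting $t=\mu s$ where needed, produces $\mathcal O(\mu^{(\beta-3)/2})$ for $\beta<3$, $\mathcal O(\log\tfrac1\mu)$ for $\beta=3$, and $\mathcal O(1)$ for $\beta>3$. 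This establishes \eqref{estderfmu}. For $0<\beta<2$ the same estimates follow a little more transparently from the half-line representation \eqref{f_halfline} after the substitution $t=\mu s$.

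The step I expect to be the main obstacle is the derivative estimate near $z=\mu$ for $\beta$ in roughly the range $2\le\beta<4$ (and, if one insists on using the interval representation throughout, also for $0<\beta<2$). There the second divided difference of $h$ is genuinely of size $\mu^{\beta/2-2}$ near $t=\mu$, so a naive pointwise bound only gives $\partial_z f=\mathcal O(\mu^{\beta/2-2})=\mathcal O(\mu^{(\beta-4)/2})$, which is weaker than the asserted $\mathcal O(\mu^{(\beta-3)/2})$ for $\beta<3$ (and likewise fails to give $\mathcal O(\log\tfrac1\mu)$ at $\beta=3$ or $\mathcal O(1)$ for $3<\beta<4$). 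The extra factor $\mu^{1/2}$ is recovered only by noticing that these large values of the integrand are confined to the $\mathcal O(\mu)$-neighbourhood of $t=\mu$, on which the weight $\big((a(\mu)-t)(t-\mu)\big)^{-1/2}$ integrates to merely $\mathcal O(\sqrt\mu)$; away from that neighbourhood the integrand decays like $t^{\beta/2-2}$ and the remaining contribution is exactly of the borderline size. Carrying out this three-region bookkeeping, uniformly for $z$ in the disc $|z-\mu|<r_0\mu$ and not just at $z=\mu$, is the technical core of the argument.
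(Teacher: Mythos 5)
Your proof is correct, but it takes a genuinely different route from the paper's. The paper works with the contour representation \eqref{f_def}, choosing $\Gamma$ to be the circle of radius $a(\mu)-\mu/2$ around $a(\mu)$ and parametrizing $t=a(\mu)-re^{i\theta}$; for $|\theta|\le\delta$ (so $t$ is near $\mu/2$) the integrand of $\partial_zf$ is of size $(\mu+|\theta|)^{(\beta-5)/2}$, and integrating in $\theta$ produces the three orders in \eqref{estderfmu}, while $|\theta|>\delta$ contributes $\mathcal O(1)$. You instead use the real integral \eqref{freg} over $[\mu,a(\mu)]$, differentiate under the integral sign to get a second divided difference of $h(t)=t^{\beta/2-1}(t-\mu)$, and split the domain into $[\mu,2\mu]\cup[2\mu,a(\mu)/2]\cup[a(\mu)/2,a(\mu)]$: the Taylor-remainder bound $\mathcal O(\mu^{\beta/2-2})$ together with the $\mathcal O(\sqrt\mu)$ mass of the square-root weight on the first piece plays the role of the paper's $\theta$-integration near $\theta=0$, and the bounds on the other pieces match the paper's $|\theta|>\delta$ contribution. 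Both arguments carry out the same balance --- an integrand of size $\mu^{\beta/2-2}$ localized on an $\mathcal O(\mu)$-neighbourhood of $\mu$, against a square-root weight --- but yours is more elementary and stays on the real segment. You also deduce $f=\mathcal O(\rho(\mu))$ by integrating the derivative bound over $[\mu,z]$ and comparing with \eqref{fcoeff} via $f(\mu;\mu)=\rho(\mu)$, which is a clean alternative to the paper's remark that the estimate for $f$ is ``similar''; and your treatment of part~(2) via a uniform bound on $f$ on a fixed disc plus Cauchy's inequality is a reasonable substitute for the paper's ``similar but simpler'' argument, which is not spelled out.
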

\begin{proof}
For the proof of \eqref{estderfmu}, we proceed as in the proof of Lemma \ref{lemma:f}. Let us focus on the estimate for $\partial_zf(z;\mu)$, as the one for $f(z;\mu)$ is similar.
In \eqref{f_def}, we let $\Gamma$ be the circle of radius $r=a(\mu)-\mu/2$ around $a(\mu)$, and we differentiate to obtain the estimate
\[\partial_zf(z;\mu)=\mathcal O\left(\int_{\Gamma}\left|\frac{t-\mu}{t-a(\mu)}\right|^{1/2}\frac{|t|^{\beta/2}|dt|}{|t|.|t-z|^2}\right).
\]
Parametrizing $t=a(\mu)-re^{i\theta}$ with $\theta\in[-\pi,\pi]$, note that the contribution of $|\theta|>\delta$ is $\mathcal O(1)$, and focus on the part $|\theta|\leq \delta$ for sufficiently small $\delta$. 
Like in the proof of Lemma \ref{lemma:f}, we have
\[|t|=\mathcal O(\mu+|\theta|),
\quad |t-\mu|=\mathcal O(\mu +|\theta|),\quad |t|^{-1}=\mathcal O\left((\mu+|\theta|)^{-1}\right),\quad
|t-a(\mu)|^{-1}=\mathcal O(1).\]
For $|z-\mu|<r_0\mu$ with $r_0>0$ sufficiently small, we have moreover
\[|t-z|\geq |t-\mu|-|\mu-z|\geq |t-\mu|-r_0\mu \geq c(\mu+|\theta|),\]
for $c$ sufficiently small.
Hence,
\[\partial_zf(z;\mu)=\mathcal O\left(\int_{-\delta}^{\delta}
(\mu+|\theta|)^{\frac{\beta-5}{2}}d\theta
\right)+\mathcal O(1)=\mathcal O\left(\mu^{\frac{\beta-3}{2}}\right)+\mathcal O(1),\]
for $\beta\neq 3$. For $\beta=3$, we get
$\mathcal O\left(\log\frac{1}{\mu}\right)$. This proves \eqref{estderfmu}.
The proof of \eqref{estderfa}
is similar but simpler, and we omit it.
\end{proof}

Expanding formula \eqref{varphi} in the vicinities of the points $\mu, a(\mu),$ we obtain
\be\label{varphi_asymp_mu}\varphi(z;\mu)=\pm\pi i+\rho(\mu)\sqrt{a(\mu)-\mu}(\mu-z)^{1/2}\left(1+\mathcal{O}(\mu-z)\right),
\qquad \mbox{ as } z\to\mu\quad  \mbox{ with }\quad \pm\Im z>0,\ee
and 
\be\label{varphi_asymp_a}\varphi(z;\mu)=\frac{-f(a(\mu);\mu)}{3\sqrt{a(\mu)-\mu}}(z-a(\mu))^{3/2}\left(1+\mathcal O
(z-a(\mu))\right)
\qquad \mbox{ as }\quad z\to a(\mu),\ee
where, as usual, $\sqrt{.}$ denotes the {positive square root} and $(.)^{1/2},$ $(.)^{3/2}$ denote the principal branches of the roots, and where we used \eqref{f_rho}.

Formulas \eqref{varphi_asymp_mu}, \eqref{varphi_asymp_a} suggest to introduce local conformal changes of the variable $z$ in small neighborhoods of the points $\mu,$ $a(\mu),$ respectively, as follows:
\begin{equation}\label{phimuzeta}
\varphi(z;\mu)\mp\pi\i =:(-\zeta(z;\mu))^{1/2},\quad z\to \mu,
\qquad
\varphi(z;\mu) =:-\frac23\eta(z;\mu)^{3/2}, \quad z\to a(\mu),
\end{equation}
where the roots in $\eta(z;\mu)^{3/2},$ $(-\zeta(z;\mu))^{1/2}$ correspond to $\arg(.)\in(-\pi,\pi).$

In the following lemma, we gather some properties of $\zeta(z;\mu)$ and $\eta(z;\mu)$ that we will need later.

\begin{lemma}\label{lemma:eq}
For any $\mu_0>0$, there exists $r_0>0$ such that the following holds.
\begin{enumerate}[label = \arabic*.]
\item For any $\mu\in(0, \mu_0],$ the function $\zeta(z;\mu)$ is conformal in the disk $|z-\mu|<r_0\mu$, and
\begin{equation}\label{derzeta}\zeta(\mu; \mu)=0,\qquad 
\partial_z\zeta(z; \mu)_{z=\mu}= \rho(\mu)^2(a(\mu)-\mu)>0.
\end{equation}
Moreover, for any $0<r<r_0$, there exists $C>0$ such that we have
\begin{equation}\label{estzeta}
\frac{1}{C}<\frac{|\zeta(z;\mu)|}{\mu\, \rho(\mu)^2}<C\quad \mbox{ for all $\mu\in(0,\mu_0)$ and $|z-\mu|=\mu r$}.
\end{equation}
\item For any $\mu\in(0,\mu_0]$, the function $\eta(z;\mu)$ is conformal for $|z-a(\mu)|<r_0$. There exists $c>0$ (independent of $\mu$) such that
\begin{equation}\label{dereta}
\eta(a(\mu);\mu)=0,\quad 
\partial_z\eta(z;\mu)_{z=a(\mu)}
>c
>0,\qquad\mbox{for any $\mu\in(0,\mu_0]$.}
\end{equation}
Moreover, for any $0<r<r_0$, there exists $C>0$ such that we have
\begin{equation}\label{esteta}
\frac{1}{C}<|\eta(z;\mu)|<C\quad \mbox{ for all $\mu\in(0,\mu_0)$ and $|z-a(\mu)|=r$.}
\end{equation}
\end{enumerate}
\end{lemma}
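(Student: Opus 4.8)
The plan is to reduce both statements to the following common structure: near $z=\mu$ (resp.\ $z=a(\mu)$) the function $\zeta(z;\mu)$ (resp.\ $\eta(z;\mu)$) equals $z-\mu$ (resp.\ $z-a(\mu)$) times an analytic factor that is non-vanishing and of a controlled size on the relevant disk, and then to deduce conformality by showing that the logarithmic derivative of this factor, times the radius of the disk, is $\mathcal O(r_0)$. The estimates \eqref{estzeta}, \eqref{esteta} will then be read off directly from this product structure. The basic observation making everything work is that, since $\mu>0$, the function $z\mapsto f(z;\mu)$ in \eqref{freg} is analytic in $\mathbb C\setminus(-\infty,0]$, and in particular in a fixed neighbourhood of $[\mu,a(\mu)]$; hence in the representation \eqref{varphi} of $\varphi$ the only non-analytic behaviour near the endpoints is carried by the explicit radical $\bigl((\xi-a(\mu))/(\xi-\mu)\bigr)^{1/2}$.

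For $\zeta$ I would argue as follows. Using the jump relations \eqref{phippphim} one checks that $-\bigl(\varphi(z;\mu)\mp\pi\i\bigr)^2$ (with the branch of $\varphi\mp\pi\i$ corresponding to $\pm\Im z>0$) extends to a single analytic function on $|z-\mu|<r_0\mu$; by the definition \eqref{phimuzeta} this function is $\zeta(z;\mu)$. Writing $\varphi'(\xi;\mu)=(\mu-\xi)^{-1/2}\widetilde h(\xi;\mu)$ with $\widetilde h$ analytic near $\mu$ (possible precisely because $f$ is analytic there) and integrating from $\mu$ to $z$ along the segment (substituting $\xi=\mu-s(\mu-z)$, $s\in[0,1]$), one gets $\varphi(z;\mu)\mp\pi\i=(\mu-z)^{1/2}h(z;\mu)$ with $h$ analytic on $|z-\mu|<r_0\mu$, so that $\zeta(z;\mu)=(z-\mu)\,h(z;\mu)^2$. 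The value $\zeta(\mu;\mu)=0$ and $\partial_z\zeta(z;\mu)|_{z=\mu}=h(\mu;\mu)^2=\rho(\mu)^2(a(\mu)-\mu)>0$ then follow from the local expansion \eqref{varphi_asymp_mu} together with \eqref{f_rho}, giving \eqref{derzeta}. To make the conformality uniform on the shrinking disk $|z-\mu|<r_0\mu$ it suffices to show $h$ is non-vanishing there and $\bigl|(z-\mu)h'(z;\mu)/h(z;\mu)\bigr|<\tfrac12$, since then $\partial_z\zeta=h^2\bigl(1+2(z-\mu)h'/h\bigr)\neq0$ and, after shrinking $r_0$, $\zeta$ is injective on the disk by a standard perturbative argument. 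From the integral formula for $h$ one has $h(z;\mu)=-2\widetilde h(\mu;\mu)+\mathcal O\bigl(|z-\mu|\sup|\widetilde h'|\bigr)$ and $h'(z;\mu)=\mathcal O\bigl(\sup|\widetilde h'|\bigr)$, the suprema being over $|\xi-\mu|<r_0\mu$; and since $a(\mu)-\mu$ is bounded away from $0$ and $\infty$ for $\mu\in(0,\mu_0]$, one has $\widetilde h=\mathcal O(f(\,\cdot\,;\mu))$ and $\widetilde h'=\mathcal O(f(\,\cdot\,;\mu))+\mathcal O(\partial_\xi f(\,\cdot\,;\mu))$. Feeding in the bounds \eqref{estderfmu} of Proposition \ref{prop_fderf} and the growth of $\rho(\mu)$ from \eqref{fcoeff}, one finds $\widetilde h=\mathcal O(\rho(\mu))$ and, crucially, $\widetilde h'=\mathcal O(\rho(\mu)/\mu)$ in each of the ranges $\beta<1$, $\beta=1$, $\beta>1$; hence $h(z;\mu)=\rho(\mu)\bigl(\sqrt{a(\mu)-\mu}+\mathcal O(r_0)\bigr)$ and $h'(z;\mu)=\mathcal O(\rho(\mu)/\mu)$ on the disk. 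Thus $h\asymp\rho(\mu)$ and $|(z-\mu)h'/h|=\mathcal O(r_0)$, proving conformality for $r_0$ small, and \eqref{estzeta} is immediate since $|\zeta(z;\mu)|=|z-\mu|\,|h(z;\mu)|^2\asymp\mu\,\rho(\mu)^2$ on $|z-\mu|=\mu r$.

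For $\eta$ the argument is parallel, with the two endpoints interchanged and no scaling of the disk. The analyticity of $f$ near $a(\mu)$ and \eqref{varphi} give $\varphi(z;\mu)=(z-a(\mu))^{3/2}P(z;\mu)$ with $P$ analytic on $|z-a(\mu)|<r_0$, and then \eqref{phimuzeta} gives $\eta(z;\mu)=\bigl(-\tfrac32\varphi(z;\mu)\bigr)^{2/3}=(z-a(\mu))\,k(z;\mu)$ with $k=\bigl(-\tfrac32 P\bigr)^{2/3}$. From \eqref{varphi_asymp_a} one reads off $P(a(\mu);\mu)=-f(a(\mu);\mu)/\bigl(3\sqrt{a(\mu)-\mu}\bigr)<0$, hence $\eta(a(\mu);\mu)=0$ and $\partial_z\eta(z;\mu)|_{z=a(\mu)}=k(a(\mu);\mu)=\bigl(f(a(\mu);\mu)/(2\sqrt{a(\mu)-\mu})\bigr)^{2/3}$. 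Since $a(\mu)$ stays bounded away from $0$ and $\infty$ on $(0,\mu_0]$ (Lemma \ref{lemma:expaf}), the earlier lower bound for $f(x;\mu)$ applied at $x=a(\mu)$, together with the representations \eqref{freg}, \eqref{f_halfline}, shows that $f(a(\mu);\mu)$ is bounded away from $0$ and $\infty$; this gives \eqref{dereta}. Conformality on $|z-a(\mu)|<r_0$ follows as before: using $\partial_z f(z;\mu)=\mathcal O(1)$ there (part~2 of Proposition \ref{prop_fderf}) one gets $P'(z;\mu)=\mathcal O(1)$ and $P(z;\mu)=P(a(\mu);\mu)+\mathcal O(r_0)$ uniformly in $\mu$, so $k\asymp1$ and $|(z-a(\mu))k'/k|=\mathcal O(r_0)<\tfrac12$ for $r_0$ small, whence $\partial_z\eta=k\bigl(1+(z-a(\mu))k'/k\bigr)\neq0$; the estimate \eqref{esteta} is immediate from $|\eta(z;\mu)|=|z-a(\mu)|\,|k(z;\mu)|\asymp1$ on $|z-a(\mu)|=r$.

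The main obstacle is the uniformity in part~1 as $\mu\to0$: when $\beta\le1$ one has $\rho(\mu)\to\infty$, so it is not enough to know that the correction term $2(z-\mu)h'/h$ in $\partial_z\zeta$ is bounded --- one needs it to be $\mathcal O(r_0)$, which is exactly why the disk must be taken of radius proportional to $\mu$ and why the three cases of the derivative estimate \eqref{estderfmu} have to be matched, term by term, against the three growth rates of $\rho(\mu)$ in \eqref{fcoeff}. Part~2 is comparatively routine; the only point that needs a little care there is the uniform two-sided bound on $f(a(\mu);\mu)$.
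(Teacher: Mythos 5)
Your proof is correct and follows essentially the same route as the paper: write $\varphi'(\xi;\mu)=(\mu-\xi)^{-1/2}$ times an analytic factor, integrate to pull out the square root, and control the remaining analytic factor and its derivative using the bounds of Proposition \ref{prop_fderf} and the small-$\mu$ growth of $\rho(\mu)$ from Lemma \ref{lemma:expaf}; the $\eta$ case is the same with $a(\mu)$ in place of $\mu$ and no shrinking of the disk. Your extra care over the conformality step (the bound $|(z-\mu)h'/h|=\mathcal O(r_0)$) is a slightly more explicit version of what the paper leaves implicit in the expansion $\zeta(z;\mu)=\rho(\mu)^2(a(\mu)-\mu)(z-\mu)\bigl(1+\mathcal O(\mu^{-1}|\mu-z|)\bigr)$, but it is the same underlying argument.
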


\begin{proof}
With $\zeta$ and $\eta$ defined by \eqref{phimuzeta}, \eqref{derzeta} and \eqref{dereta} follow easily from \eqref{varphi_asymp_mu} and \eqref{varphi_asymp_a}.
Next, we write \eqref{varphi} (using \eqref{f_rho}) as
\begin{multline*}\varphi(z;\mu)\mp \pi i=-\frac{\rho(\mu)\sqrt{a(\mu)-\mu}}{2}\int_{\mu}^z\left(\frac{1}{\mu-\xi}\right)^{1/2}d\xi \\-\frac{1}{2}\int_{\mu}^z\left(\frac{1}{\mu-\xi}\right)^{1/2}\left((a(\mu)-\xi)^{1/2}f(\xi;\mu)-\sqrt{a(\mu)-\mu}f(\mu;\mu)\right)d\xi,\ \pm \Im z>0.
\end{multline*}
For $|\xi-\mu|<r_0\mu$ with $r_0>0$ small enough, we have
\begin{multline*}
\left|(a(\mu)-\xi)^{1/2}f(\xi;\mu)-\sqrt{a(\mu)-\mu}f(\mu;\mu)\right|\leq
f(\xi;\mu)\left|(a(\mu)-\xi)^{1/2}-\sqrt{a(\mu)-\mu}\right|\\
+\sqrt{a(\mu)-\mu}\left|f(\xi;\mu)-f(\mu;\mu)\right|
\end{multline*}
By \eqref{estderfmu}, the first term at the right is $\mathcal O\left(\rho(\mu)|\xi-\mu|\right)$,
 and the second is, also by \eqref{fcoeff}, 
$\mathcal O\left(\frac{\rho(\mu)|\xi-\mu|}{\mu}\right)$ for $\beta\neq 3$, and $\mathcal O\left(|\xi-\mu|\log\frac{1}{\mu}\right)$ for $\beta=3$.
Hence, it follows that
\[\varphi(z;\mu)\mp \pi i={\rho(\mu)\sqrt{a(\mu)-\mu}}(\mu-z)^{1/2}\left(1 +\mathcal O\left(\mu^{-1}|{\mu-z}|\right)\right),\qquad \pm \Im z>0,
\]
for $\beta\neq 3$, and similarly but with error term
$\mathcal O\left(\log\frac{1}{\mu}|{\mu-\xi}|\right)$ for $\beta=3$.
It then follows from \eqref{phimuzeta}
that
\[\zeta(z;\mu)={\rho(\mu)^2(a(\mu)-\mu)}(z-\mu)\left(1 +\mathcal O\left(\mu^{-1}|{\mu-z}|\right)\right)\]
for $|z-\mu|<r_0\mu$ with $r_0$ sufficiently small.
It follows that $\zeta(z;\mu)$ is conformal in $|z-\mu|<r_0\mu$ with sufficiently small $r_0>0$, and the estimate \eqref{estzeta} also follows easily.
The analogous results for $\eta$ are proved similarly but simpler, and we omit the details.
\end{proof}

\section{Standard asymptotic analysis of the RH problem: large $n$ asymptotics for {$\frac{M}{\rho(\mu)n}\leq \mu\leq \mu_0.$} }
\label{section:6RH1}

{In this section, we will obtain asymptotics for $Y$ defined in \eqref{def:Y} as $n\to\infty$, uniformly for $\frac{M}{\rho(\mu)n}\leq \mu\leq \mu_0.$, where $\mu_0>0$ is an arbitrary constant and $M>0$ is a sufficiently large constant.}
This asymptotic analysis is rather standard, and was done for instance in \cite {CG19} for $\mu>0$ fixed. The analysis in this section follows the same lines as \cite{CG19}, with minor adaptations in order to be able to cover also the case where $\mu$ depends on $n$ and tends to $0$ slow enough as $n\to\infty$. We repeat the analysis mainly for the sake of completeness, and to be able to write all relevant quantities explicitly using coherent notations.
There is only one unusual feature worth noticing in the RH analysis in this section: we do not normalize the global parametrix to the identity, but rather to a fixed matrix $\sigma_P$. This leads, compared to \cite{CG19}, to a modified definition of $P^{(\infty)}$ in \eqref{def:Pinfty} and of $R$ in \eqref{def:R}. In this section, those changes are convenient but not important; however, in our refinements of the RH analysis in the next sections, the modified normalization of $P^{(\infty)}$ will have important implications.

\subsection{RH problem for $Y$}
It is well-known \cite{FokasItsKitaev} that the function $Y$ defined by \eqref{def:Y} is the unique function satisfying the following RH conditions.

\subsubsection*{RH problem for $Y$}\label{RHP_Y}
\begin{enumerate}[label=(\alph*)]
\item $Y:\mathbb C\setminus[\mu,+\infty)\to \mathbb C^{2\times 2}$ is analytic,
\item $Y$ has continuous boundary values $Y_\pm(x)$ when $x\in(\mu,+\infty)$ is approached from above and below, and they are related by
\begin{equation}
\nonumber
Y_+(x)=Y_-(x)\begin{pmatrix}1&x^\alpha e^{-nx^{{\beta}/{2}}}\\0&1\end{pmatrix},\qquad x\in(\mu,+\infty),
\end{equation}
\item $Y$ has the asymptotics 
\begin{equation}
\nonumber
Y(z)=\left(I+\mathcal O(z^{-1})\right)z^{n\sigma_3}\qquad\mbox{as $z\to\infty$, with $\sigma_3=\begin{pmatrix}1&0\\0&-1\end{pmatrix}$},\end{equation}
\item as $z\to \mu$, $Y(z)=\mathcal O(\log (z-\mu))$, by which we mean that each entry of the matrix is of this order.
\end{enumerate}
We will now analyze $Y$ asymptotically as $n\to\infty$ together with $\mu\to 0$ by applying the Deift-Zhou nonlinear steepest descent method \cite{Deift, DKMVZ2, DKMVZ1}. This method consists of a series of explicit and invertible transformations $Y\mapsto T\mapsto S\mapsto R$, where we will in the end arrive at a RH problem for $R$ which we will be able to solve approximately. Inverting the transformations will then allow us to recover the asymptotics for $Y$.

\subsection{Normalization at infinity and on the interval $[\mu, a(\mu)]$: $Y\mapsto T$}
The first feature of the RH problem for $Y$ which we would like to improve is its asymptotics as $z\to\infty.$ Our first transformation will ensure that the RH solution {\color{black}tends to $I$ as $z\to\infty$.}
For this, we use the $g$-function constructed in Section \ref{sec:gfunction}, and the associated quantity $\ell_\mu$.
 Namely, 
we define
\begin{equation}
\label{def:T}
T(z)=\delta(\infty)^{-\sigma_3}e^{-n\ell_\mu\sigma_3}Y(z)e^{-ng(z;\mu)\sigma_3}e^{n\ell_\mu\sigma_3}\delta(z)^{\sigma_3}.
\end{equation}
where the function $\delta(z)$ is defined in the following way:
\begin{equation}\label{delta}
\delta(z)=\left(\frac{\sqrt{a(\mu)}(z-\mu)^{1/2}+\sqrt{\mu}(z-a(\mu))^{1/2}}{(z-\mu)^{1/2}+(z-a(\mu))^{1/2}}\right)^\alpha.
\end{equation}
The function $\delta$ satisfies the jump and normalisation properties
\begin{equation}\label{deltainf}\delta_+(x)\delta_-(x)=x^\alpha \qquad\mbox{for }\mu<x<a(\mu),\qquad \delta(\infty)=\lim_{z\to\infty}\delta(z)=\left(\frac{\sqrt{a(\mu)}+\sqrt{\mu}}{2}\right)^\alpha,
\end{equation}
and the function $\delta$ is analytic for $z\in\mathbb C\setminus[\mu,a(\mu)]$.
For further usage we denote
\begin{equation}\label{deltahat}
\widehat\delta(z) = \delta(z)z^{-\alpha/2},
\end{equation}
where the principal branch of $z^\alpha$ is chosen.
Observe that $\widehat\delta(\mu) = \widehat\delta(a(\mu)) = 1,$
and that $\widehat\delta(z)$ is uniformly bounded and bounded away from $0$ for $z$ in the disks $|z-\mu|\leq \frac{\min\(\mu, a(\mu)-\mu\)}{2},$ $|z-a(\mu)|\leq \frac{a(\mu)-\mu}{2},$ { uniformly for all $\mu\in(0, +\infty).$}

Later on, we will need expansions for $\widehat\delta$ and $\widehat\delta^{-1}$ for $z$ near $\mu$. For this, we
introduce the local change of variable
\begin{equation}\label{eq:changevarw}
w=\frac{\mu-z}{a(\mu)-\mu},
\qquad
z=\mu - (a(\mu)-\mu)w,
\quad 
z-a(\mu) = -(a(\mu)-\mu)(1+w),
\end{equation}
such that 
\[
\widehat\delta(z) =
\(
\frac{\sqrt{\mu}\sqrt{1+w}+\sqrt{a(\mu)}\sqrt{w}}{\sqrt{(\mu-(a(\mu)-\mu)w)}\(\sqrt{1+w}+\sqrt{w}\)}
\)^{\alpha}.\]
We need to expand $\widehat\delta$ 
as $w\to 0$ up to terms of order $o(w)$. The direct computation of the $\mathcal{O}(w)$ term in this expansion is rather involved, but we can circumvent that computation by using the relation $\widehat\delta_+(x)\widehat\delta_-(x)=1$ for $\mu<x<a(\mu)$.
Writing 
\[
\begin{split}
\widehat\delta(z) = 1+c_1\sqrt{w}+c_2w+\mathcal{O}(w^{3/2}),
\end{split}\]
and multiplying these two expressions, we find 
$c_2=\frac12c_1^2.$
The coefficient $c_1$ can be computed rather easily, and we obtain
\begin{equation}\label{delta_s}
\begin{split}
\widehat\delta(z) = 1 + \alpha\(\frac{\sqrt{a(\mu)}}
{\sqrt{\mu}}-1\)\sqrt{w}+\frac12\alpha^2\(\frac{\sqrt{a(\mu)}}
{\sqrt{\mu}}-1\)^2w+\mathcal{O}(w^{3/2}),
\\
\widehat\delta(z)^{-1} = 1 - \alpha\(\frac{\sqrt{a(\mu)}}
{\sqrt{\mu}}-1\)\sqrt{w}+\frac12\alpha^2\(\frac{\sqrt{a(\mu)}}
{\sqrt{\mu}}-1\)^2w+\mathcal{O}(w^{3/2}).
\end{split}
\end{equation}

Using the function $\varphi$ from \eqref{def:phi}, we can show that $T$ solves the following RH problem.
\subsubsection*{RH problem for $T$}
\begin{enumerate}[label=(\alph*)]
\item $T:\mathbb C\setminus[\mu,+\infty)\to \mathbb C^{2\times 2}$ is analytic,
\item $T$ satisfies the jump relations
\[
\begin{split}
&T_+(x)=T_-(x)\begin{pmatrix}\dfrac{\delta_+(x;\mu)}{\delta_-(x;\mu)}e^{-n(\varphi_+(x;\mu)-\varphi_-(x;\mu))}&\overbrace{\frac{x^\alpha
\e^{n(\varphi_+(x;\mu)+\varphi_-(x;\mu))}}{\delta_+(x;\mu)\delta_-(x;\mu)}}^{=1}\\\\0&\dfrac{\delta_-(x;\mu)}{\delta_+(x;\mu)}e^{n(\varphi_+(x;\mu)-\varphi_-(x;\mu))}\end{pmatrix}, & x\in(\mu,a(\mu)),
\\
&T_+(x)=T_-(x)\begin{pmatrix}1&{\color{black}\dfrac{e^{2n \varphi(x;\mu)}}{\widehat\delta(x;\mu)^2}} \\ 0 & 1 \end{pmatrix}, & x\in(a(\mu),+\infty),
\end{split}
\]
\item $T$ has the asymptotics 
\begin{equation}\nonumber T(z)=I+\mathcal O(z^{-1})\qquad\mbox{as $z\to\infty$}, \end{equation}
\item as $z\to \mu$, $T(z)=\mathcal O(\log (z-\mu))$.
\end{enumerate}
Note that, by \eqref{phippphim}, the $(1,2)$ entry in the jump matrix equals $1$ on $(\mu,a(\mu))$.

\subsection{Opening of lenses: $T\mapsto S$}
Whereas the asymptotics at infinity of $T$ are more convenient than the ones for $Y$, the jump matrices are at first sight more complicated because they are oscillatory on the interval $[\mu,a(\mu)].$ The idea now is to transform the jumps in such a way that the oscillating exponentials will be moved from the contour $[\mu,a(\mu)]$ to contours in the complex plane on which they are rapidly decaying. 

To this end, we use the factorization
\[\begin{pmatrix}\frac{\delta_+(x;\mu)}{\delta_-(x;\mu)}e^{-2n\varphi_+(x;\mu)}&1\\0&
\frac{\delta_-(x;\mu)}{\delta_+(x;\mu)}e^{-2n\varphi_-(x;\mu)}\end{pmatrix}
\hskip-1mm
=
\hskip-1mm
\begin{pmatrix}1&0\\\frac{\delta_-(x;\mu)^2}{x^{\alpha}}
e^{-2n\varphi_-(x;\mu)}&1\end{pmatrix}
\hskip-2mm
\begin{pmatrix}0&1\\-1&0\end{pmatrix}
\hskip-2mm
\begin{pmatrix}1&0\\
\frac{\delta_+(x;\mu)^2}{ x^{\alpha}}e^{-2n\varphi_+(x;\mu)}&1\end{pmatrix}\hskip-1mm,
\]
for $\mu<x<a(\mu)$. Next, we define lenses $\Sigma_\pm$ connecting $\mu$ with $a(\mu)$ in the upper and lower parts of the complex plane, as shown in Figure \ref{fig:lens}.
The precise shape of the lenses will not be important, but they need to satisfy some further conditions. Let us take $\delta>0$ sufficiently small and define disks $U_\mu$ of radius $\delta\mu$
 around $\mu$, and $U_a$ of radius $\delta$ around $a(\mu).$ Then we require that $\Sigma_-=\overline{\Sigma_+}$ and that 
$\Sigma_+\setminus\left(U_\mu\cup U_a\right)$ lies in the region ${\Im z\geq \frac{\delta}{2}\mu}$, i.e. the upper part of the lens outside the disks lies above the line connecting $0$ with $a(\mu)+i\frac{\delta}{2}$.
Let us now define
\begin{equation}\label{def:S}
S(z)=\begin{cases}T(z),&\mbox{ for $z$ outside the lenses,}\\
T(z)\begin{pmatrix}1&0\\-\widehat\delta(z;\mu)^2e^{-2n\varphi(z;\mu)}&1\end{pmatrix},&\mbox{ for $z$ in the upper part of the lens,}\\
T(z)\begin{pmatrix}1&0\\\widehat\delta(z;\mu)^2e^{-2n\varphi(z;\mu)}&1\end{pmatrix},&\mbox{ for $z$ in the lower part of the lens,}\\
\end{cases}
\end{equation}
where we recall the definition \eqref{deltahat} of $\widehat\delta$, and the lens-shaped region is as illustrated in Figure \ref{fig:lens}.

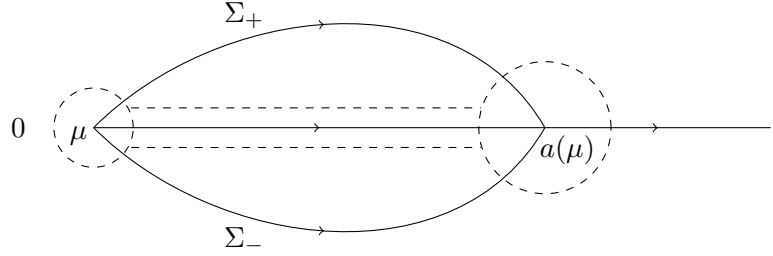
\begin{figure}
\center{\center
\begin{tikzpicture}
[decoration={markings, mark=at position 0.5 with{\arrow{>};}}]

\draw [postaction=decorate](1,0) to [out=45, in =120] (7,0);
\draw [postaction=decorate](1,0) to [out=-45, in =-120] (7,0);
\draw [postaction=decorate](1,0) -- (7,0);
\draw [postaction=decorate](7,0) -- (10,0);

\node at (3,1.5) {$\Sigma_+$};
\node at (3,-1.5) {$\Sigma_-$};

\node at (0,0) {$0$};
\node at (0.8,-0.1) {$\mu$};
\node at (7.3,-0.3) {$a(\mu)$};

\draw[dashed] (1,0) circle(15pt);
\draw[dashed] (7,0) circle(25pt);
\draw[dashed] (1.5 ,7.5pt) -- (6.15,7.5pt);
\draw[dashed] (1.5,-7.5pt) -- (6.15,-7.5pt);

\end{tikzpicture}}
\caption{Contour for the RH problem for the function $S(z).$
{ The dashed lines are the circles $\partial U_{\mu}, \partial U_{a}$ and the lines $|\Im z|=\frac{\delta\mu}{2}.$}}
\label{fig:lens}
\end{figure}

\subsubsection*{RH problem for $S$}\label{sect:RH:S}
\begin{enumerate}[label=(\alph*)]
\item $S:\mathbb C\setminus[\mu,+\infty)\to \mathbb C^{2\times 2}$ is analytic,
\item $S$ satisfies the jump relations
\begin{align}\nonumber
&S_+(z)=S_-(z)\begin{pmatrix}1&0\\\widehat\delta(z;\mu)^2
e^{-2n\varphi(z;\mu)}&1\end{pmatrix},& z\in\Sigma_+\cup\Sigma_-,\\
\nonumber
&S_+(x)=S_-(x)\begin{pmatrix}0&1\\-1&0\end{pmatrix},& \mu<x<a(\mu),\\
&S_+(x)=S_-(x)\begin{pmatrix}1&\widehat\delta(x;\mu)^{-2}e^{2n\varphi(x;\mu)}\\0&1\end{pmatrix},& x\in(a(\mu),+\infty),
\nonumber
\end{align}
\item $S$ has the asymptotics 
\begin{equation}\nonumber
S(z)=I+\mathcal O(z^{-1})\qquad\mbox{as $z\to\infty$}, 
\end{equation}
\item as $z\to \mu$, $S(z)=\mathcal O(\log (z-\mu))$.
\end{enumerate}

\subsection{Construction of the global parametrix $P^{(\infty)}$}
The RH problem for $S$ has constant jump on the interval $(\mu,a(\mu))$, and exponentially small jumps on the other parts of the contour {\color{black}as long as one is not too close to the endpoints $\mu$ and $a(\mu)$. We will show} that the solution $S$ outside small neighbourhoods of the points $\mu, a(\mu)$ is close to the solution of the RH problem obtained formally from the one for $S$ if we let $n\to\infty$, as long as $\mu$ does not go to $0$ too rapidly. 
In other words, we then obtain the following RH problem, which, as already mentioned earlier, we normalize in an unusual way, {which helps us to obtain simpler matching conditions on circles around the points $\mu,a(\mu)$.}

\subsubsection*{RH problem for $P^{(\infty)}$}
\begin{enumerate}[label=(\alph*)]
\item $P^{(\infty)}:\mathbb C\setminus{\color{black}[\mu,a(\mu)]}\to \mathbb C^{2\times 2}$ is analytic,
\item $P^{(\infty)}$ satisfies the jump relation
\begin{equation}\nonumber
P^{(\infty)}_+(x)=P^{(\infty)}_-(x)\begin{pmatrix}0&1\\-1&0\end{pmatrix},\qquad \mu<x<a(\mu),\end{equation}
\item 
$\lim\limits_{z\to\infty}P^{(\infty)} = \sigma_P,$
where
\begin{equation}\label{sigmaPN}
\sP = \frac{1}{\sqrt{2}}\begin{pmatrix}1 & \i \\ \i & 1\end{pmatrix}.
\end{equation}
\end{enumerate}

Note that the asymptotics at $\infty$ of $P^{(\infty)}$ is $\sigma_P$ rather than the identity matrix, and hence we expect $P^{(\infty)}$ to approximate $\sigma_P S$ rather than $S.$
There are many solutions to this RH problem, depending on how one wants to normalize it at the points $\mu$ and $a(\mu)$. 
The solution that will turn out to be a good approximation to $\sigma_P S$ for large $\mu$ is defined as 
\begin{equation}\label{def:Pinfty}
P^{(\infty)}(z):=\begin{pmatrix}\gamma(z)^{-1} & 0 \\ 0 & \gamma(z)\end{pmatrix} 
\sP,
\quad \mbox{ where }
\gamma(z)=\left(\frac{z-a(\mu)}{z-\mu}\right)^{1/4},
\end{equation}
where $\gamma$ is analytic in $\mathbb C\setminus[\mu,a(\mu)]$, and tends to $1$ at infinity.

\subsection{Construction of the local parametrix $P^{(\mu)}$ near $\mu$}
As already mentioned, we expect (and will show) that $P^{(\infty)}$ is close to $\sigma_P S$ outside some small neighborhoods of $\mu, a(\mu)$. However, we can not expect this also inside these neighborhoods, since $P^{(\infty)}$ blows up at $\mu$ and $a(\mu)$, while $\sigma_P S$ remains bounded there. In order to find the behavior of $S$ when $z$ is close to $\mu, a(\mu),$ we will construct functions satisfying the exact jump relations locally (referred to as parametrices) in neighbourhoods of $z=\mu, z=a(\mu).$

We start with the local parametrix at the point $z=\mu.$
Let us denote, as before, $U_\mu$ for the disk of radius 
$\delta\mu$ around $\mu$, for some sufficiently small $\delta>0$ independent of $\mu$. For $\mu\leq \mu_0$ and $\delta>0$ small, $a(\mu)$ is not contained in this disk. The aim of this section is to construct a function satisfying the following conditions.

\subsubsection*{RH problem for $P^{(\mu)}$}
\begin{enumerate}[label=(\alph*)]
\item $P^{(\mu)}$ is analytic in ${\color{black}U_\mu}\setminus \left(\Sigma_+\cup\Sigma_-\cup[\mu,a(\mu)]\right)$,
\item $P^{(\mu)}(z)$ satisfies exactly the jump relations as $S$ for $z\in U_\mu\cap\left(\Sigma_+\cup\Sigma_-\cup[\mu,a(\mu)]\right)$,
\item $P^{(\mu)}$ matches with $P^{(\infty)}$ on the boundary of the disk, in the sense that
\begin{equation}\label{matching}
P^{(\mu)}(z)\left(P^{(\infty)}(z)\right)^{-1}\to I
\end{equation}
uniformly for $z\in\partial U_\mu$, $\frac{M}{\rho(\mu)n}\leq \mu\leq \mu_0$ in the limit where $n\to\infty$.
\end{enumerate}
{\color{black}
The limitation of the present RH analysis lies in this condition (c): we will only be able to achieve \eqref{matching} if  $n\rho(\mu)\mu$ is sufficiently large.
}
If $\mu\to 0$ more rapidly, we will have to refine our RH analysis later on.

Recall the analytic for $z\in U_\mu\setminus [\mu, a(\mu)]$ function $\varphi(z;\mu),$ defined in \eqref{def:phi}, {and recall the function $\zeta(z;\mu)$ defined in \eqref{phimuzeta}, whose properties are described in Lemma \ref{lemma:eq}.}

We will construct $P^{(\mu)}$ in terms of the unique solution to the following Bessel model RH problem, for which we construct the explicit solution below. 

\subsubsection*{RH problem for $\Psi_B$}
\begin{enumerate}[label=(\alph*)]
\item $\Psi_B$ is analytic for $\zeta\in\mathbb{C}\setminus\Sigma_{B},$ where $\Sigma_{B}=[0,+\infty)\cup[0,e^{\i\theta}\infty)\cup[0,e^{-\i\theta}\infty),$ and $0<\theta<\pi,$ with each of the rays oriented away from the origin,
\item $\Psi_B$ satisfies the jumps $\Psi_{B,+}(\zeta) = \Psi_{B,-}(\zeta) J_{\Psi_B}(\zeta),$ where
\[
\begin{split}J_{\Psi_B}(\zeta) =\begin{cases} \begin{pmatrix}1 & 0 \\ 1 & 1\end{pmatrix}, &\zeta\in [0, e^{\i\theta}\infty),
\\
\begin{pmatrix}1 & 0 \\ 1 & 1\end{pmatrix}, &\zeta\in [0, e^{-\i\theta}\infty),
\\
\begin{pmatrix}0 & 1 \\ -1 & 0 \end{pmatrix}, &\zeta\in[0, \infty),
\end{cases}\end{split}\]
\item $\Psi_B$ has the following asymptotics as $\zeta\to\infty,$ uniformly w.r.t. $\arg\zeta\in(-\pi,\pi):$ for any integer $M\geq 1,$
\[
\begin{split}
\Psi_B(\zeta)&=(-\zeta)^{-\sigma_3/4}\sP^{-2}
\, \mathcal{E}_B(\zeta)\,
\sP\cdot
e^{(-\zeta)^{1/2}\sigma_3},
\end{split}\]
\end{enumerate}
where $\sP$ is as in \eqref{sigmaPN} and  $\mathcal{E}_B(\zeta)$ has the following asymptotics as $\zeta\to\infty$, which is uniform w.r.t. $\arg\zeta\in[-\pi,\pi]:$ for any integer $M\geq 1,$
\begin{equation}\label{EB}
\begin{split}
\mathcal{E}_B(\zeta)
&=
\begin{pmatrix}
\sum\limits_{k=0}^{M}\frac{-(4k+1)!!(4k-3)!!}{(2k)!8^{2k}(-\zeta)^k}
&
\i\sum\limits_{k=1}^{M}\frac{(4k-1)!!(4k-5)!!}{(2k-1)!8^{2k-1}(-\zeta)^{k-1/2}}
\\
\i\sum\limits_{k=1}^{M}\frac{(4k-3)!!^2}{(2k-1)!8^{2k-1}(-\zeta)^{k-1/2}}
&
\sum\limits_{k=0}^{M}\frac{(4k-1)!!^2}{(2k)!8^{2k}(-\zeta)^k}
\end{pmatrix}
+\mathcal{O}(\zeta^{-M-1/2}).
\end{split}\end{equation}
In what follows we will need an explicit expression for the sub-leading term of $\mathcal{E}_B,$
\begin{equation}\label{EB2}
\mathcal{E}_B(\zeta)
=
I+
\begin{pmatrix} 0 & \frac{3\i}{8} \\ \frac{\i}{8} & 0
\end{pmatrix}\frac{1}{\sqrt{-\zeta}}+\mathcal{O}(\zeta^{-1}),\quad \zeta\to\infty.
\end{equation}

\textbf{Construction of a parametrix out of modified Bessel functions.}
We refer the reader interested in a more detailed and pedagogical construction of similar Bessel parametrices to \cite{KMVV} (see also \cite{Vanlessen}).
We define  \[
\widehat\Psi(\zeta)=\begin{pmatrix}\sqrt{\pi}I_{0}(\sqrt{-\zeta}) & \frac{-\i}{\sqrt{\pi}}K_0(\sqrt{-\zeta})
\\
-\i\sqrt{\pi} \sqrt{-\zeta} I_{0}'(\sqrt{-\zeta}) & \frac{-\sqrt{-\zeta}}{\sqrt{\pi}}K_0'(\sqrt{-\zeta})\end{pmatrix},
\]
and then we set
\begin{equation}\label{Psi_B}\Psi_B(\zeta)=\begin{cases}\widehat\Psi(\zeta),& \arg\zeta\in(\theta,\pi]\cup[-\pi,-\theta),
\\
\widehat\Psi(\zeta)\begin{pmatrix}1 & 0 \\ -1 & 1\end{pmatrix},& \arg\zeta\in(0, \theta),
\\
\widehat\Psi(\zeta)\begin{pmatrix}1 & 0 \\ 1 & 1\end{pmatrix},& \arg\zeta\in(-\theta,0).
\end{cases}
\end{equation}

\noindent 
Next, we let
\begin{equation}\nonumber
P^{(\mu)}(z)=E^{(\mu)}(z)\Psi_B(n^2\zeta(z; \mu))e^{-n(\varphi(z;\mu)\mp\pi\i)\sigma_3}\widehat\delta(z;\mu)^{\sigma_3},
\quad \pm\Im z>0.
\end{equation}
where
\begin{equation}
\label{def:Es}
\nonumber
\clu{
E^{(\mu)}(z)=P^{(\infty)}(z)\widehat\delta(z;\mu)^{-\sigma_3}\sP(-n^2\zeta(z; \mu))^{\sigma_3/4},}
\end{equation}
such that $E^{(\mu)}$ is analytic in $U_\mu$ and such that {the matching condition becomes (by Lemma \ref{lemma:eq}), uniformly for $0<\mu\leq \mu_0$ and $n\rho(\mu)\mu\geq M$ as $n\to\infty$,}
\begin{equation}\label{reffor}
\begin{split}P^{(\mu)}(z)\left(P^{(\infty)}(z)\right)^{-1}&=
\gamma(z)^{-\sigma_3}\sP
\widehat\delta(z;\mu)^{-\sigma_3}\sP^{-1}
\mathcal{E}_B(n^2\zeta(z;\mu))
\sP\widehat\delta(z;\mu)^{\sigma_3}\sP^{-1}
\gamma(z)^{\sigma_3}
\\&=
\gamma(z)^{-\sigma_3}
\left(I+\mathcal O(n^{-1}\rho(\mu)^{-1}\mu^{-1/2})\right)
\gamma(z)^{\sigma_3},
\end{split}\end{equation}
for $z\in\partial U_\mu $.
Here we used that $\widehat\delta(z;\mu)$ is uniformly bounded from $0$ and $\infty$ in $z\in U_\mu .$
Henceforth,
since $\gamma(z)$ is $\mathcal O(\mu^{-1/4})$ on $\partial U_\mu,$
estimate \eqref{reffor} becomes
\begin{equation}\label{jump_Cs}
P^{(\mu)}(z)\left(P^{(\infty)}(z)\right)^{-1}=
I+\mathcal O(n^{-1}\rho(\mu)^{-1}\mu^{-1}).
\end{equation}

\subsection{Construction of the local parametrix $P^{(a)}$ near $a(\mu)$}\label{sectPhiAi}
We now fix, as before, a disk $U_a$ with sufficiently small radius {$\delta>0$ around $a(\mu)$,} such that $U_a$ does not intersect with $U_\mu.$
The aim of this section is to construct a function satisfying the following conditions.
\subsubsection*{RH problem for $P^{(a)}$}
\begin{enumerate}[label=(\alph*)]
\item $P^{(a)}$ is analytic in ${\color{black}U_a}\setminus \left(\Sigma_+\cup\Sigma_-\cup\mathbb R\right)$,
\item $P^{(a)}(z)$ satisfies exactly the same jump relations as $S$ for $z\in U_\mu\cap\left(\Sigma_+\cup\Sigma_-\cup\mathbb R\right)$,
\item $P^{(a)}$ matches with $P^{(\infty)}$ on the boundary of the disk, in the sense that
\begin{equation}\nonumber
P^{(a)}(z)\left(P^{(\infty)}(z)\right)^{-1}=I+\mathcal O(n^{-1})
\end{equation}
uniformly for $z\in\partial U_a$ { and $0<\mu\leq \mu_0$ }in the limit where $n\to\infty.$ 
\end{enumerate}
Recall the function $\varphi(z)$ defined in \eqref{def:phi}, which is analytic for {\color{black}$z\in U_a\setminus[\mu, a(\mu)],$}
and recall the conformal in $U_a$ map $\eta(.)=\eta(.;\mu)$ defined in the second of the formulas \eqref{phimuzeta},
\begin{equation}\label{phi_mu}\nonumber
\varphi(z) = - \frac23\eta(z)^{3/2},\quad z\in U_a.
\end{equation}
We will construct $P^{(a)}$ in terms of the Airy function in the following way.

\medskip

\noindent \textbf{Construction of a parametrix in terms  of Airy functions.}
Define 
\[
\Phi_{{\rm Ai}}(\eta) =
\begin{cases}
\begin{pmatrix}
v_0 & iv_1 \\ -iv_0' & v_1'
\end{pmatrix},&\arg\eta\in(0,2\pi/3),\\
\begin{pmatrix}
v_0 & iv_{-1} \\ -iv_0' & v_{-1}'
\end{pmatrix},&\arg\eta\in(-2\pi/3,0),
\\
\begin{pmatrix}
-\i v_{-1} & iv_1 \\ - v_{-1}' & v_1'
\end{pmatrix},&\arg\eta\in(2\pi/3,\pi),
\\
\begin{pmatrix}
\i v_1 & iv_{-1} \\ v_1' & v_{-1}'
\end{pmatrix},&\arg\eta\in(-\pi, -2\pi/3),
\end{cases}
\]
where
\[v_0(\eta) = \sqrt{2\pi\;} {\rm Ai}(\eta),\quad v_1(\eta) = \sqrt{2\pi\;}\e^{-\pi\i/6} {\rm Ai}(\eta\e^{-2\pi\i/3}),\quad 
v_{-1}(\eta) = \sqrt{2\pi\;}\e^{\pi\i/6} {\rm Ai}(\eta\e^{2\pi\i/3}).\]
Then, ${\rm \Phi}_{\rm Ai}$ solves  the following model RH problem.
\subsubsection*{RH problem for $\Phi_{{\rm Ai}}$}
\begin{enumerate}[label=(\alph*)]
\item $\Phi_{{\rm Ai}}$ is analytic for $\eta\in\mathbb{C}\setminus\Sigma_{{\rm Ai}},$ where $\Sigma_{{\rm Ai}}=(-\infty,+\infty)\cup e^{2\pi\i/3}(+\infty, 0)\cup e^{-2\pi\i/3}(+\infty, 0)$
{ is an oriented contour, with the orientation of the (half-)lines from the first mentioned point to the second one.}
\item $\Phi_{{\rm Ai}}$ satisfies the jumps $\Phi_{{\rm Ai},+}(\eta) = \Phi_{{\rm Ai},-}(\eta) J_{\Phi_{{\rm Ai}}}(\eta),$ where
\[
\begin{split}J_{\Phi_{{\rm Ai}}}(\eta) =\begin{cases} \begin{pmatrix}1 & 0 \\ 1 & 1\end{pmatrix}, &\eta\in e^{2\pi\i/3}(+\infty, 0)\cup e^{-2\pi\i/3}(+\infty, 0),
\\
\begin{pmatrix}1 & 1 \\ 0 & 1\end{pmatrix}, &\eta\in (0,+\infty),
\\
\begin{pmatrix}0 & 1 \\ -1 & 0 \end{pmatrix}, &\eta\in(-\infty,0),
\end{cases}\end{split}.\]
\item $\Phi_{{\rm Ai}}$ has the following asymptotics as $\eta\to\infty,$ uniformly w.r.t. $\arg\eta\in(-\pi,\pi):$
\clu{\begin{equation}\nonumber\label{eq:AiryRHPc}
\Phi_{{\rm Ai}}(\eta)=\eta^{-\sigma_3/4}
\mathcal{E}_{{\rm Ai}}(\eta)
\sP
\e^{-\frac23\,\eta^{3/2}\sigma_3},
\end{equation}}
\noindent where $\sP$ is defined in \eqref{sigmaPN}, and
$\mathcal{E}_{{\rm Ai}}$ has the following asymptotic expansion  as $\eta\to\infty,$ which is uniform w.r.t. $\arg\eta\in[-\pi,\pi]:$ for any integer $M\geq 1,$ we have
%%%%%%%%%
\begin{multline}\label{EAi}
\mathcal{E}_{{\rm Ai}}(\eta)=
\begin{bmatrix}
1+\sum\limits_{m=1}^{M}\frac{(12m-1)\cdot(12m-5)!!}{2^{8m}3^{4m-1}(2m)!(4m-3)!!\eta^{3m}}
&
\i\sum\limits_{m=1}^{M}
\frac{(12m-7)\cdot (12m-11)!!}{2^{8m-4}3^{4m-3}(2m-1)!(4m-5)!!\eta^{3m-3/2}} 
\\
\i\sum\limits_{m=1}^{M}\frac{(12m-5)\cdot(12m-11)!!}{2^{8m-4}3^{4m-3}(2m-1)!(4m-5)!!\eta^{3m-3/2}}
&
1+\sum\limits_{m=1}^M
\frac{-(12m+1)\cdot (12m-5)!!}{2^{8m}3^{4m-1}(4m-3)!!(2m)!\eta^{3m}}
\end{bmatrix}\\+\mathcal O(\eta^{-3M-3/2}).
\end{multline}
\end{enumerate}

\noindent
The form of the expansion \eqref{EAi} is not important at this point, but it will become crucial later on that
\begin{equation}\label{EAi2}
\mathcal{E}_{\rm Ai}(\eta)=I+\begin{pmatrix}
0 & \frac{5\i}{48} \\ \frac{7\i}{48} & 0
\end{pmatrix}\frac{1}{\eta^{3/2}}+\mathcal{O}(\eta^{-3}),\quad \eta\to\infty.
\end{equation}

\medskip

We define
\begin{equation}\nonumber
P^{(a)}(z)=E^{(a)}(z)\Phi_{\rm Ai}(n^{2/3}\eta(z;\mu))e^{-n\varphi(z;\mu)\sigma_3}\widehat\delta(z;\mu)^{\sigma_3}.
\end{equation}
If we set
\begin{equation}\nonumber
E^{(a)}(z)=P^{(\infty)}(z)\widehat\delta(z;\mu)^{-\sigma_3} \sP^{-1}\(n^{1/6}\eta(z;\mu)^{1/4}\)^{\sigma_3},
\end{equation}
then $E^{(a)}$ is analytic in $U_a$ and 
the matching condition becomes by Lemma \ref{lemma:eq}, as $n\to\infty,$ for $z\in\partial U_a$,
\begin{equation}\label{eq:Airymatching}
\begin{split}
P^{(a)}(z)\left(P^{(\infty)}(z)\right)^{-1}=&
\gamma(z)^{-\sigma_3}\sP \widehat\delta(z;\mu)^{-\sigma_3}\,\sP^{-1}\mathcal{E}_{\rm Ai}(n^{2/3}\eta(z;\mu))\sP\,
\widehat\delta(z;\mu)^{\sigma_3}\sP^{-1} \gamma(z)^{\sigma_3}
\\
=&\gamma(z)^{-\sigma_3} \left(I+\mathcal O(n^{-1})\right)\gamma(z)^{\sigma_3}.
\end{split}\end{equation}
Hence
\begin{equation}\label{jump_Ca}
P^{(a)}(z)\left(P^{(\infty)}(z)\right)^{-1}=I+\mathcal O(n^{-1}),
\end{equation}
uniformly {for $z$ on $\partial U_a$ and $0<\mu\leq \mu_0$ as $n\to\infty.$}

\subsection{Final transformation $S\mapsto R$}\label{sect_Final_pre}
Define
\begin{equation}\label{def:R}
R(z)=\begin{cases}
\sP S(z)P^{(\mu)}(z)^{-1}&\mbox{for }z\in U_\mu,\\
\sP S(z)P^{(a)}(z)^{-1}&\mbox{for }z\in U_a,\\
\sP S(z)P^{(\infty)}(z)^{-1}&\mbox{elsewhere},
\end{cases}
\end{equation}
where $\sP$ is defined in \eqref{sigmaPN}.
The function $R$ solves the following RH problem:
\subsubsection*{RH problem for $R$}
\begin{enumerate}[label=(\alph*)]
\item $R$ is analytic in $\mathbb C\setminus \left(\partial U_\mu\cup  \partial U_a\cup\widetilde\Sigma_+\cup\widetilde\Sigma_-\cup(a(\mu)+\delta ,+\infty)\right)$,
where $\widetilde\Sigma_\pm=\Sigma_\pm\setminus~(U_\mu\cup U_a)$, and $\delta $ is the radius of the disk $U_a$,
\item $R$ satisfies the jump relations $R_+(z)=R_-(z)J_R(z),$ where 
\[\begin{split}
&J_{R}(z) = P^{(\infty)}(z)\begin{pmatrix}1&0\\
\widehat\delta(z;\mu)^2 e^{-2n\varphi(z;\mu)}&1\end{pmatrix}
P^{(\infty)}(z)^{-1},  &z\in\widetilde\Sigma_\pm,\\\\
&J_R(x) = P^{(\infty)}(x)\begin{pmatrix}1&\widehat\delta(x;\mu)^{-2}e^{2n\varphi(x;\mu)}
\\0&1\end{pmatrix}P^{(\infty)}(x)^{-1}, &x\in(a(\mu)+\delta ,+\infty),
\\\\
&J_R(z)=P^{(\infty)}(z)P^{(a)}(z)^{-1},
 &z\in \partial U_a,
\\\\
&J_R(z)=P^{(\infty)}(z)P^{(\mu)}(z)^{-1}
,&z\in \partial U_\mu,
\end{split}
\]
\item $R$ has the asymptotics 
\begin{equation}\nonumber R(z)=I+\mathcal O(z^{-1}),\qquad \mbox{ as $z\to\infty$.} \end{equation}
\end{enumerate}
{
\begin{lemma}\label{lemma:JR1}
For any $\mu_0>0$ there exist $M,c>0$, such that as $n\to\infty$, we have uniformly for
$0<\mu\leq\mu_0$, $\mu\rho(\mu)n\geq M$, and for $z$ on the jump contour for $R$ that
\begin{align*}
&J_R(z)-I=\mathcal O(n^{-1}), &\mbox{for $z\in\partial U_a$},\\
&J_R(z)-I=\mathcal O(n^{-1}\rho(\mu)^{-1}\mu^{-1}), &\mbox{for $z\in\partial U_\mu$},\\
&J_R(z)-I=\mathcal O(e^{-cn\rho(\mu)\mu}), &\mbox{for $z$ on $\widetilde\Sigma_\pm$,}\\
&J_R(z)-I=\mathcal O(e^{-cnz^\beta}), &\mbox{for $z>a(\mu)+\delta$.}
\end{align*}
\end{lemma}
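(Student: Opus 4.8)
The plan is to estimate each of the four pieces of the jump contour for $R$ separately, using the explicit formulas for $J_R$ given in the RH problem above together with the asymptotic properties of $\varphi$, $\rho$, $P^{(\infty)}$, and the matching conditions already established for the parametrices.

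\textbf{Jumps on $\partial U_a$ and $\partial U_\mu$.} These are immediate: the estimate $J_R(z)-I=\mathcal O(n^{-1})$ for $z\in\partial U_a$ is exactly \eqref{jump_Ca}, and the estimate $J_R(z)-I=\mathcal O(n^{-1}\rho(\mu)^{-1}\mu^{-1})$ for $z\in\partial U_\mu$ is exactly \eqref{jump_Cs}. So nothing new is required here beyond citing the parametrix constructions; I would just note that the implied constants are uniform for $0<\mu\le\mu_0$ and $\mu\rho(\mu)n\ge M$, which is built into the derivations of \eqref{reffor}--\eqref{jump_Cs} and \eqref{eq:Airymatching}--\eqref{jump_Ca} via Lemma \ref{lemma:eq}.

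\textbf{Jump on $\widetilde\Sigma_\pm$.} On the lens boundaries, $J_R(z)=P^{(\infty)}(z)\bigl(\begin{smallmatrix}1&0\\ \widehat\delta(z;\mu)^2 e^{-2n\varphi(z;\mu)}&1\end{smallmatrix}\bigr)P^{(\infty)}(z)^{-1}$, so $J_R(z)-I$ is controlled by $|\gamma(z)|^{\pm2}|\widehat\delta(z;\mu)|^2 e^{-2n\Re\varphi(z;\mu)}$. Outside the disks $U_\mu, U_a$, the factor $\widehat\delta$ is uniformly bounded, and $|\gamma(z)|^{\pm2}=\mathcal O(\mu^{-1/2})$ at worst (the dangerous region being near $U_\mu$, where $|z-\mu|\asymp\mu$). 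The key input is the lower bound on $\Re\varphi$ from Lemma \ref{lemma:estimatephi}: for $z=x+iy$ on the part of $\widetilde\Sigma_+$ with $x$ bounded away from $\mu$ and $a(\mu)$, one has $\Re\varphi(z;\mu)>c\sqrt{y}\,f(x;\mu)\ge c'\rho(\mu)\mu^{1/2}$ (using $f(x;\mu)\gtrsim x^{(\beta-1)/2}$ and $\rho(\mu)=f(\mu;\mu)\asymp\mu^{(\beta-1)/2}$ for $\beta<1$, and $f$ bounded below by a positive constant for $\beta\ge 1$, together with the geometric condition that the lens leaves $U_\mu$ at height $\gtrsim\mu$). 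Near the boundary circles $\partial U_\mu$ and $\partial U_a$, one instead uses the conformality estimates \eqref{estzeta}, \eqref{esteta} of Lemma \ref{lemma:eq}: on $\widetilde\Sigma_\pm\cap\partial U_\mu$ one has $|\zeta(z;\mu)|\asymp\mu\rho(\mu)^2$ hence $\Re\varphi\asymp|{-\zeta}|^{1/2}\asymp\rho(\mu)\mu^{1/2}$, and similarly near $U_a$. Combining, $e^{-2n\Re\varphi}=\mathcal O(e^{-cn\rho(\mu)\mu^{1/2}})$, which after absorbing the polynomial $\mu^{-1/2}$ prefactor (shrinking $c$) and using $\mu\le\mu_0$ together with $\mu^{1/2}\ge\mu$ gives the claimed $\mathcal O(e^{-cn\rho(\mu)\mu})$. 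The hypothesis $\mu\rho(\mu)n\ge M$ with $M$ large is what makes this genuinely exponentially small.

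\textbf{Jump on $(a(\mu)+\delta,+\infty)$.} Here $J_R(x)-I$ is controlled by $|\gamma(x)|^{\pm2}|\widehat\delta(x;\mu)|^{-2} e^{2n\varphi(x;\mu)}$, and $\varphi(x;\mu)<0$ for $x>a(\mu)$. For $x$ in a bounded region past $a(\mu)+\delta$, $\varphi(x;\mu)\le -c$ and the prefactors are $\mathcal O(1)$. For large $x$, from \eqref{def:phi} one has $\varphi(x;\mu)=-\tfrac12 x^{\beta/2}+g(x;\mu)-\ell_\mu$ with $g(x;\mu)=\log x+\mathcal O(1/x)$, so $2n\varphi(x;\mu)\le -nx^{\beta/2}(1+o(1))+2n\log x+\mathcal O(n)$; since the weight in the original problem is $e^{-nx^{\beta/2}}$ on the variable $x$ (which corresponds, before the substitution \eqref{HH}, to exponent $\beta$ in the original variable), and $\log x$, $x^{n}$ growth from $\gamma,\widehat\delta$ terms is dominated, one gets $J_R(x)-I=\mathcal O(e^{-cnx^\beta})$ after relabelling; I would just be careful to track that the $x$ here is the squared variable, so $x^{\beta/2}$ in this variable is the natural rate and the statement's $x^\beta$ is simply the chosen form of the bound (with $c$ adjusted, and $x>a(\mu)+\delta$ bounded below).

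\textbf{Main obstacle.} The genuinely delicate part is the estimate on $\widetilde\Sigma_\pm$: one must patch together the bound from Lemma \ref{lemma:estimatephi} (valid for $x$ strictly between $\mu(1+\delta)$ and $a(\mu)-\delta$ and $y<\delta x$) with the local bounds near the two endpoint disks coming from the conformal maps $\zeta,\eta$, and verify that in every regime the exponent beats the polynomially large factor $|\gamma|^2=\mathcal O(\mu^{-1/2})$ arising near $\mu$. This requires knowing that the lens can be chosen so that $\Im z\gtrsim\mu$ on $\widetilde\Sigma_\pm$ (which is exactly the geometric condition imposed when opening the lenses, namely $\Sigma_+\setminus(U_\mu\cup U_a)\subset\{\Im z\ge\tfrac\delta2\mu\}$), so that the decay rate $n\rho(\mu)\sqrt{y}\,f(x)$ is at least $cn\rho(\mu)\mu$; everything else is bookkeeping.
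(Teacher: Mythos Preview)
Your approach is essentially the paper's: cite \eqref{jump_Cs} and \eqref{jump_Ca} for the two circles, use Lemma~\ref{lemma:estimatephi} on the lenses, and use the negativity of $\varphi$ on $(a(\mu)+\delta,\infty)$. The extra care you take near $\partial U_\mu$ and $\partial U_a$ via the conformal maps $\zeta,\eta$ is not in the paper's argument (the paper simply applies Lemma~\ref{lemma:estimatephi} across all of $\widetilde\Sigma_\pm$), but it does no harm.

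One point to tighten. Your claimed intermediate inequality $\sqrt{y}\,f(x;\mu)\ge c'\rho(\mu)\mu^{1/2}$ on the lens is \emph{not} uniformly true when $\beta\le 1$: for $x$ bounded away from $0$ (say near $a(\mu)$) one only has $f(x;\mu)\ge c$, while $\rho(\mu)\to\infty$, so $\sqrt{y}\,f(x;\mu)\gtrsim\sqrt{\mu}$ is the best you get there. The fix is simply to observe that $\sqrt{\mu}\ge (\mu_0)^{-1/2}\rho(\mu)\mu\cdot(\rho(\mu)\sqrt{\mu})^{-1}$ is not needed; rather, $\rho(\mu)\sqrt{\mu}\to 0$ for every $\beta>0$ as $\mu\to 0$, so $\sqrt{\mu}\ge c''\rho(\mu)\mu$ uniformly for $\mu\le\mu_0$, and hence $e^{-cn\sqrt{\mu}}\le e^{-c'n\rho(\mu)\mu}$. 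The paper's proof has the same slip (it writes $e^{-2c'n\sqrt{v}\rho(\mu)}$ as if $f(u;\mu)\ge c\rho(\mu)$ held globally), and recovers in exactly this way. Also, your hedging about $z^{\beta}$ versus $z^{\beta/2}$ in the last estimate is warranted: the natural decay from \eqref{def:phi} is $\varphi(x;\mu)\sim -\tfrac12 x^{\beta/2}$ for large $x$, so the exponent should read $x^{\beta/2}$; this is a typo in the statement and in the paper's own proof, immaterial for the application.
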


\begin{proof}
For $z$ on $\partial U_a$ and $\partial U_\mu $, the required estimates for $J_R(z)-I$ follow from \eqref{jump_Cs} and \eqref{jump_Ca}. 

For $z=u+iv$ on $\widetilde\Sigma_+$,
it follows from Lemma \ref{lemma:estimatephi} and from the shape of the lenses (recall that we imposed $v\geq \frac{\delta}{2}u$ on $\widetilde\Sigma_+$) that for some $c', c''>0$,
\[|e^{-2n\varphi(z)}|<e^{-2c'n\sqrt{v}\rho(\mu)}\leq e^{-2c''n\rho(\mu)\sqrt{u}}\leq e^{-2c''n\rho(\mu)\sqrt{\mu}},\]
which tends to $0$ rapidly in the limit under consideration.
It follows that
\[J_R(z)-I=P^{(\infty)}(z)\mathcal O\left(e^{-2c''n\rho(\mu){\sqrt{\mu}}}\  \widehat\delta(z)^{2}\right)P^{(\infty)}(z)^{-1}.\]
We also recall that $\widehat \delta(z)^{2}$  is uniformly bounded on $\widetilde\Sigma_+$ and that we have the uniform bound $P^{(\infty)}(z)=\mathcal O(|z-\mu|^{-1/4})$ for $z\in\widetilde\Sigma_+$, such that
\[J_R(z)-I=\mathcal O\left(\mu^{-1/2}e^{-2c''n\rho(\mu)\sqrt{\mu}} \right)=\mathcal O\left(n\rho(\mu)\sqrt{\mu}e^{-2c''n\rho(\mu)\sqrt{\mu}} \right)=\mathcal O\left(e^{-cn\rho(\mu)\mu} \right)\] 
as $n\to\infty$ with $0<\mu\leq\mu_0$, $\mu\rho(\mu)n\geq M$.
The analogous estimate on $\widetilde\Sigma_-$ is obtained by using the symmetry of the jump matrices.
For $z$ on $(a(\mu)+\delta,+\infty)$, 
we observe from \eqref{freg} that $f(z;\mu)$ is positive, hence by \eqref{varphi}, 
\[\varphi(z;\mu)\leq \varphi(a(\mu)+\delta)<-C\] for some $C>0$.
It follows that
\[J_R(z)-I=P^{(\infty)}(z)\mathcal O\left(e^{-2Cn}\widehat\delta(z)^{-2}\right)P^{(\infty)}(z)^{-1}=\mathcal O\left(e^{-2Cn} \right).\]
Moreover, for $z$ large, we can improve this estimate: by \eqref{def:phi} and \eqref{def:glog}, $\varphi(z;\mu)<-\frac{1}{4}z^\beta$ for $z$ large enough, and this yields the required estimate.
\end{proof}

}
It follows that the $L^2$-, $L^1$-, and $L^\infty$- norms of the entries of $J_R-I$ on the jump contour are of order $\mathcal O(n^{-1}\rho(\mu)^{-1}\mu^{-1})$. 
Using standard small-norm theory for RH problems, see e.g.\  \cite{Deift, DKMVZ2, DKMVZ1}, we can conclude from this that
\[R(z)=I+\mathcal O(n^{-1}\rho(\mu)^{-1}\mu^{-1}),\]
as $n\to\infty$, uniformly for $0<\mu\leq \mu_0$ and 
$n\rho(\mu)\mu\geq M$ sufficiently large, and uniformly in $z$, provided that we take $M>0$ sufficiently large such that Lemma \ref{lemma:JR1} holds.
Estimates for $R'$ can also be obtained, and we can use this to derive asymptotics for the logarithmic derivative $\partial_\mu\ln H_n(\mu)$, uniformly in $\mu$. In the next sections, we will explain how to refine the above RH analysis in such a way that we can extend the uniformity region in $\mu$.

\section{First refinement: large $n$ asymptotics for $\left(\frac{M}{n}\right)^2\leq \mu \leq \mu_0$}\label{section:7RH2}

In this section, we refine the RH analysis from the previous section in order to obtain large $n$ asymptotics valid uniformly for $\left(\frac{M}{n}\right)^2\leq \mu \leq \mu_0$
(which is, \clr{in general}, a larger region than the region $\frac{M}{\rho(\mu)n}\leq \mu \leq \mu_0$ from the previous section, but for $\beta$ small still smaller than the region $\left(\frac{M}{n\rho(\mu)}\right)^2\leq \mu\leq \mu_0$ which we will ultimately need; recall \eqref{fcoeff} and \eqref{f_rho}), for an arbitrary constant $\mu_0>0$ and a sufficiently large constant $M>0$.
To achieve this, we modify the choice of normalization point. This refinement is similar in spirit to the one used in \cite{DIKairy}, although in that work the authors used a conformal mapping instead of a different normalization point.
We will modify the constructions of $S, P^{(\infty)}, P^{(\mu)}, P^{(a)}, R$, and we will denote the corresponding objects here as $\widehat S, \widehat P^{(\infty)}, \widehat P^{(\mu)}, \widehat P^{(a)}, \widehat R$.
Note that this refinement does not allow to improve immediately the asymptotics for the function $Y$ from Section \ref{RHP_Y} itself, but it allows to obtain immediately an improved asymptotics for the ratio $Y^{-1}\partial_{\mu}Y,$ which is what we need in order to obtain $\partial_{\mu}\log H_n(\mu)$ by using Proposition \ref{prop_dif_ident}.

\subsection{Refinement of the RH analysis}\label{sect_refin1}
\paragraph{Normalization of $S$.}
Instead of normalizing $S$ at infinity such that $S(z)\to I$ as $z\to\infty$, we will now normalize $S$ at the point $z=-\mu.$ 
Let us define for this purpose
\begin{equation}\label{def:hatS}
\widehat S(z)=S(-\mu)^{-1}S(z).
\end{equation}
This will turn out to be a convenient choice later on, {because it will allow us to keep the size of the jump matrices near the points $\mu$ and $a(\mu)$ in balance}. 
{The precise location of the normalization point is not that important, as long as it is of the order $-\mu$ for small $\mu$.}

\paragraph{Normalization of the global parametrix.}
Next, we will define a modified global parametrix $\widehat P^{(\infty)}$ which is also normalized at the point $z=-\mu$. 
To this end, we let
\begin{equation}\label{def:Pinftyhat}\nonumber
\widehat P^{(\infty)}(z)
:=\widehat\gamma(z)^{-\sigma_3}
\sP,
\end{equation}
where
\begin{equation}\label{def:gammahat}
\widehat\gamma(z)=\frac{\gamma(z)}{\gamma(-\mu)}= \left(\frac{2\mu}{a(\mu)+\mu}\cdot\frac{z-a(\mu)}{z-\mu}\right)^{1/4}.
\end{equation}
Writing as before, with $\delta>0$ sufficiently small, $U_\mu $ for the disk of radius $\delta\mu$ around $\mu$, and $U_a$ for the disk of radius $\delta$ around $a(\mu)$, we have that $\widehat\gamma(z)$ and $\widehat P^{(\infty)}(z)$ are uniformly bounded for $z\in \partial U_\mu $, but on the other hand $\widehat\gamma(z)^{-1}$ and $\widehat P^{(\infty)}(z)$ blow up like $\mathcal O(\mu^{-1/4})$ for $z\in\partial U_a$, as $\mu\to 0$.

\medskip

Then, $\widehat P^{(\infty)}$ is a solution to the RH problem for $P^{(\infty)}$, but with condition (c) replaced by
\begin{itemize}
\item[(c)] 
\clu{$\widehat P^{(\infty)}(-\mu)=\sP $.}
\end{itemize}

\paragraph{Modified local parametrix near $\mu$.}

\noindent 
We define $\widehat P^{(\mu)}$ in the same way as $P^{(\mu)}$, but with $P^{(\infty)}$ replaced by $\widehat P^{(\infty)}$: namely, let
\begin{equation}\label{def:hatPs}
\widehat P^{(\mu)}(z)=\widehat E^{(\mu)}(z)\Psi_B(n^2\zeta(z; \mu))e^{-n(\varphi(z;\mu)\mp\pi\i)\sigma_3}\widehat\delta(z;\mu)^{\sigma_3}
,
\quad \pm\Im z>0,
\end{equation}
with
\begin{equation}\label{EhatB}
\widehat E^{(\mu)}(z)=\widehat P^{(\infty)}(z)
\widehat\delta(z;\mu)^{-\sigma_3}
\sP(-n^2\zeta(z; \mu))^{\sigma_3/4},
\end{equation}
where we recall that $\widehat\delta$ is given in \eqref{deltahat}.
Similarly as in \eqref{reffor}, we then have the matching condition
\begin{equation}\label{reffor2}\nonumber
\begin{split}
\widehat P^{(\mu)}(z)\left(\widehat P^{(\infty)}(z)\right)^{-1}&=
\widehat\gamma(z)^{-\sigma_3}\sP
\widehat\delta(z;\mu)^{-\sigma_3}\sP^{-1}
\mathcal{E}_B(n^2\zeta(z;\mu))
\sP\widehat\delta(z;\mu)^{\sigma_3}\sP^{-1}
\widehat\gamma(z)^{\sigma_3}
\\&=
\widehat\gamma(z)^{-\sigma_3}
\left(I+\mathcal O(n^{-1}\rho(\mu)^{-1}\mu^{-1/2})\right)
\widehat\gamma(z)^{\sigma_3},
\end{split}
\end{equation}
for $z\in\partial U_\mu $.
Since $\widehat\gamma(z)$ is bounded on $\partial U_\mu $, we get
\begin{equation}\label{jump_Cs2}
\widehat P^{(\mu)}(z)\left(\widehat P^{(\infty)}(z)\right)^{-1}=I+\mathcal O(n^{-1}\rho(\mu)^{-1}\mu^{-1/2}),
\end{equation}
as {$n\to\infty$, uniformly for $n\rho(\mu)\sqrt{\mu}\geq M$ and $0<\mu\leq\mu_0$. This improves the matching condition \eqref{jump_Cs}.}

\paragraph{Modified local parametrix near $a(\mu)$.}
We modify the local parametrix in $U_a$ by setting
\begin{equation}\nonumber
\widehat P^{(a)}(z)=\widehat E^{(a)}(z)\Phi_{\rm Ai}(n^{2/3}\eta(z;\mu))e^{-n\varphi(z;\mu)\sigma_3}\widehat\delta(z;\mu)^{\sigma_3},
\end{equation}
and
\begin{equation}\label{Eahat}\nonumber
\widehat E^{(a)}(z)=\widehat P^{(\infty)}(z)\widehat\delta(z;\mu)^{-\sigma_3} \sP^{-1}\(\frac{-3n}{2}\varphi(z)\)^{\sigma_3/6}.
\end{equation}
This results in the matching condition, similar to \eqref{eq:Airymatching},
\begin{equation}\label{eq:Airymatching2}
\begin{split}
\widehat P^{(a)}(z)\left(\widehat P^{(\infty)}(z)\right)^{-1}=&
\widehat\gamma(z)^{-\sigma_3}\sP \widehat\delta(z;\mu)^{-\sigma_3}\,\sP^{-1}\mathcal{E}_{Ai}(n^{2/3}\eta(z;\mu))\sP\,
\widehat\delta(z;\mu)^{\sigma_3}\sP^{-1} \widehat\gamma(z)^{\sigma_3}=
\\
=&\widehat\gamma(z)^{-\sigma_3} \left(I+\mathcal O(n^{-1})\right)\widehat\gamma(z)^{\sigma_3}=I+\mathcal O(n^{-1}\mu^{-1/2}),
\end{split}\end{equation}
for $z\in\partial U_a$, as $n\to\infty$ in such a way that $n\sqrt{\mu}\to\infty$.
Here we used the fact that $\widehat \gamma(z)$ is of the order $\mu^{1/4}$ due to our modified definition of $\widehat\gamma(z).$

Note that the matching between the global parametrix and the local parametrix near $a$ {has now become} worse, because of our refinement. This is the price we have to pay for the improvement of the matching condition between the global parametrix and the local parametrix near $\mu$.

\paragraph{Modification of $R$.}
Define
\begin{equation}\label{def:hatR}
\widehat R(z)=\begin{cases}
\sP\widehat S(z)\widehat P^{(\mu)}(z)^{-1}&\mbox{for }z\in U_\mu,\\
\sP\widehat S(z)\widehat P^{(a)}(z)^{-1}&\mbox{for }z\in U_a,\\
\sP\widehat S(z)\widehat P^{(\infty)}(z)^{-1}&\mbox{elsewhere}.
\end{cases}
\end{equation}
Then $\widehat R$ solves the RH problem:
\subsubsection*{RH problem for $\widehat R$}\label{sect_RHPRhat}
\begin{enumerate}[label=(\alph*)]
\item $\widehat R$ is analytic in $\mathbb C\setminus \left(\partial U_\mu\cup \partial U_a\cup\widetilde\Sigma_+\cup\widetilde\Sigma_-\cup(a(\mu)+\delta ,+\infty)\right)$,
\item $\widehat R(z)$ satisfies the jump relations $\widehat R_+(z)=\widehat R_-(z)J_{\widehat R}(z),$ where 
\[\begin{split}
&J_{\widehat R}(z) = \widehat P^{(\infty)}(z)\begin{pmatrix}1&0\\\widehat\delta(z;\mu)^2e^{-2n\varphi(z;\mu)}&1\end{pmatrix}
\widehat P^{(\infty)}(z)^{-1}, &z\in\widetilde\Sigma_\pm,\\
&J_{\widehat R}(x) = \widehat P^{(\infty)}(x)\begin{pmatrix}1&\widehat\delta(x;\mu)^{-2} e^{2n\varphi(x;\mu)}
\\0&1\end{pmatrix}\widehat P^{(\infty)}(x)^{-1},&x\in(a(\mu)+\delta ,+\infty),
\\
&J_{\widehat R}(z)=\widehat P^{(\infty)}(z)\widehat P^{(a)}(z)^{-1},
&z\in \partial U_a,
\\
&J_{\widehat R}(z)=\widehat P^{(\infty)}(z)\widehat P^{(\mu)}(z)^{-1},
  &z\in \partial U_\mu,
\end{split}
\]
\item $\widehat R$ has the asymptotics 
$\widehat R(-\mu)=I$.
\end{enumerate}
{
We have the following analogue of Lemma \ref{lemma:JR1}.
\begin{lemma}\label{lemma:JR2}
For any $\mu_0>0$ there exist $M,c>0$, such that we have uniformly for $\left(\frac{M}{n}\right)^2<\mu\leq\mu_0$,
 and for $z$ on the jump contour for $\widehat R$ that
\begin{align*}
&J_{\widehat R}(z)-I=\mathcal O(n^{-1}\mu^{-1/2}), &\mbox{for $z\in\partial U_a$},\\
&J_{\widehat R}(z)-I=\mathcal O(n^{-1}\rho(\mu)^{-1}\mu^{-1/2}), &\mbox{for $z\in\partial U_\mu$},\\
&J_{\widehat R}(z)-I=\mathcal O(e^{-cn\rho(\mu)\mu}), &\mbox{for $z$ on $\widetilde\Sigma_\pm$,}\\
&J_{\widehat R}(x)-I=\mathcal O(e^{-cnx^\beta}), &\mbox{for $x>a(\mu)+\delta$.}
\end{align*}
\end{lemma}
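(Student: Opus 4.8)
The plan is to follow the proof of Lemma~\ref{lemma:JR1} line by line, with $S,P^{(\infty)},P^{(\mu)},P^{(a)},R$ replaced by $\widehat S,\widehat P^{(\infty)},\widehat P^{(\mu)},\widehat P^{(a)},\widehat R$, while keeping track of the one structural change: because of \eqref{def:gammahat}, the function $\widehat\gamma$, and hence $\widehat P^{(\infty)}(z)=\widehat\gamma(z)^{-\sigma_3}\sP$, is now bounded above and below uniformly on $\partial U_\mu$ and on the part of $\widetilde\Sigma_\pm$ near $\mu$ (there $|z-\mu|\asymp\mu$ while $|z-a(\mu)|\asymp1$, so the two factors in $\widehat\gamma$ balance), whereas on $\partial U_a$, on the remaining part of $\widetilde\Sigma_\pm$, and at infinity it grows only polynomially, $\widehat\gamma(z)^{\pm1}=\mathcal O(\mu^{\mp1/4})$. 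I would first record two elementary consequences of \eqref{fcoeff} and the pointwise lower bounds for $f$ on $[\mu,a(\mu)]$ established in Section~\ref{sec:gfunction}: on $(0,\mu_0]$ the quantity $\rho(\mu)$ is bounded away from $0$ and $\mu\rho(\mu)$ is bounded above. Together with the hypothesis $\mu\ge(M/n)^2$, which gives $n\sqrt\mu\ge M$, the first of these yields $n\rho(\mu)\sqrt\mu\ge cM$, so that for $M$ large the matching estimates \eqref{jump_Cs2} and \eqref{eq:Airymatching2} are available.

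The two circle estimates then follow at once. For $z\in\partial U_\mu$, $J_{\widehat R}(z)=\widehat P^{(\infty)}(z)\widehat P^{(\mu)}(z)^{-1}$ is the inverse of the left-hand side of \eqref{jump_Cs2}, and since $\widehat\gamma(z)^{\pm1}=\mathcal O(1)$ there, $J_{\widehat R}(z)-I=\mathcal O(n^{-1}\rho(\mu)^{-1}\mu^{-1/2})$. For $z\in\partial U_a$, $J_{\widehat R}(z)=\widehat P^{(\infty)}(z)\widehat P^{(a)}(z)^{-1}$ is the inverse of the left-hand side of \eqref{eq:Airymatching2}; here $\eta(z;\mu)\asymp1$, so $n^{2/3}\eta(z;\mu)\to\infty$ and the Airy expansion contributes an $\mathcal O(n^{-1})$ error, which after conjugation by $\widehat\gamma(z)^{\sigma_3}$ (with $\widehat\gamma(z)^{-2}=\mathcal O(\mu^{-1/2})$) gives $J_{\widehat R}(z)-I=\mathcal O(n^{-1}\mu^{-1/2})$. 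This last bound is the loss incurred by the refinement, and it is the price paid for the improvement on $\partial U_\mu$.

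For $z=x+iy\in\widetilde\Sigma_\pm$, I would write $J_{\widehat R}(z)-I$ as the conjugation by $\widehat P^{(\infty)}(z)$ of the matrix whose single nonzero entry is $\widehat\delta(z;\mu)^2 e^{-2n\varphi(z;\mu)}$ in position $(2,1)$, with $\widehat\delta(z;\mu)^2=\mathcal O(1)$. Combining Lemma~\ref{lemma:estimatephi} with the lens condition $y\ge\tfrac\delta2 x$ (imposed outside $U_\mu\cup U_a$) and the lower bounds for $f$ gives, uniformly in $\beta$ and $\mu$, $\Re\varphi(z)\ge c\sqrt{y}\,f(x;\mu)\gtrsim\sqrt\mu\,\rho(\mu)$ on all of $\widetilde\Sigma_\pm$, and moreover $\Re\varphi(z)\gtrsim1$ on the bulk part of $\widetilde\Sigma_\pm$, where $x$ is bounded below. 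Since $\widehat\gamma(z)^{-2}=\mathcal O(1)$ precisely where the decay is slowest (near $\mu$) and is $\mathcal O(\mu^{-1/2})=\mathcal O(n)$ only where $\Re\varphi\gtrsim1$, in both cases the polynomial factor is absorbed into the exponential, and one gets $J_{\widehat R}(z)-I=\mathcal O(e^{-cn\sqrt\mu\,\rho(\mu)})=\mathcal O(e^{-cn\mu\rho(\mu)})$ (using $\sqrt\mu\ge\mu$); the lower part $\widetilde\Sigma_-$ is symmetric. Finally, for $x>a(\mu)+\delta$, \eqref{freg} gives $f(x;\mu)>0$, hence by \eqref{varphi} $\varphi(x;\mu)\le\varphi(a(\mu)+\delta;\mu)<-C<0$, and by \eqref{def:phi}, \eqref{def:glog} $\varphi(x;\mu)<-\tfrac14 x^\beta$ for $x$ large; with $\widehat\delta^{-2}=\mathcal O(1)$ and the conjugation factor at most $\mu^{-1/2}=\mathcal O(n)$, this yields $J_{\widehat R}(x)-I=\mathcal O(e^{-cnx^\beta})$.

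The step I expect to be the main obstacle is the estimate on $\widetilde\Sigma_\pm$. A priori the decay of $e^{-2n\Re\varphi}$ degenerates near $\mu$ — it is only of order $e^{-cn\mu^{\beta/2}}$ when $\beta<1$, and $n\mu^{\beta/2}$ need not be large in the regime $\mu\ge(M/n)^2$ — while the conjugation by the global parametrix is worst near $a(\mu)$, so a crude product of the worst cases is not good enough. The resolution, and the whole point of relocating the normalization point to $z=-\mu$, is that this choice makes $\widehat P^{(\infty)}$ uniformly bounded exactly on the portion of the lens where the exponential decay is weakest, so the two unfavourable effects are never active simultaneously; turning this into a uniform-in-$\beta$ estimate is precisely what requires the $\beta$-dependent lower bounds on $f$ and on $\Re\varphi$ from Section~\ref{sec:gfunction}.
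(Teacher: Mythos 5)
Your approach matches the paper's own (the paper devotes exactly three sentences to this proof: re-run Lemma~\ref{lemma:JR1}, tracking that $\widehat P^{(\infty)}$ is now $\mathcal O(1)$ on $\partial U_\mu$ and $\mathcal O(\mu^{-1/4})$ near $a(\mu)$). Your treatment of $\partial U_\mu$, $\partial U_a$ and $(a(\mu)+\delta,\infty)$ is correct, and the observation that $\rho(\mu)\geq c_0>0$ upgrades $n\sqrt\mu\geq M$ to $n\rho(\mu)\sqrt\mu\geq c_0M$, making \eqref{jump_Cs2} and \eqref{eq:Airymatching2} available, is exactly right.

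Where you should tighten the argument is on $\widetilde\Sigma_\pm$. Your dichotomy, $\widehat\gamma^{-2}=\mathcal O(1)$ ``near $\mu$'' versus $\widehat\gamma^{-2}=\mathcal O(\mu^{-1/2})$ ``only where $\Re\varphi\gtrsim1$'', does not literally cover the contour: for $\mu(1+\delta)\leq x=\Re z\ll 1$ one has $\widehat\gamma(z)^{-2}\asymp(|z-\mu|/\mu)^{1/2}\asymp(x/\mu)^{1/2}$, which interpolates between $\mathcal O(1)$ and $\mathcal O(\mu^{-1/2})$, while $\Re\varphi$ there is neither $\asymp\sqrt\mu\rho(\mu)$ nor bounded below by a constant, and the crude product of worst cases $\mu^{-1/2}e^{-cn\sqrt\mu\rho(\mu)}$ is \emph{not} $\mathcal O(e^{-c'n\mu\rho(\mu)})$ when $n\mu\rho(\mu)=o(1)$, which this lemma's hypothesis allows (unlike Lemma~\ref{lemma:JR1}'s). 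To close the gap, track the two effects pointwise. With $w=\widehat\delta^2e^{-2n\varphi}$ one has $(J_{\widehat R}-I)_{12}=\tfrac{w}{2}\widehat\gamma^{-2}=\mathcal O\bigl((x/\mu)^{1/2}\,e^{-2n\Re\varphi(z)}\bigr)$, and by Lemma~\ref{lemma:estimatephi} and the lower bounds on $f$ from Section~\ref{sec:gfunction}, $\Re\varphi(z)\gtrsim\sqrt{x}\,f(x;\mu)\gtrsim x^{\beta/2}$ for $\beta\leq1$ (and $\gtrsim\sqrt{x}$ for $\beta>1$). A one-line derivative computation shows that $x\mapsto(x/\mu)^{1/2}e^{-cnx^{\beta/2}}$ (resp.\ $(x/\mu)^{1/2}e^{-cn\sqrt x}$) is decreasing on $[\mu(1+\delta),\infty)$ once $n\mu^{\beta/2}$ (resp.\ $n\sqrt\mu$) exceeds a fixed constant, which holds under $n\sqrt\mu\geq M$ since $\mu^{\beta/2}\gtrsim\sqrt\mu$ when $\beta\leq1$ and $\mu\leq\mu_0$; hence the maximum over the lens is attained at the endpoint nearest $\mu$ and equals $\mathcal O(e^{-c'n\sqrt\mu\,\rho(\mu)})=\mathcal O(e^{-c'n\mu\rho(\mu)})$. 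This also corrects a remark in your closing paragraph: $n\mu^{\beta/2}\geq M^\beta n^{1-\beta}$ is in fact bounded below (and tends to infinity for $\beta<1$) under the stated hypothesis, so the exponential is never weak; the obstruction is the polynomial factor, and the monotonicity argument is precisely what encodes your (correct) intuition that the two unfavourable effects never coexist at the same $z$.
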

\begin{proof}
The proof is similar to that of Lemma \ref{lemma:JR1}, except for the behavior of $\widehat P^{(\infty)}$ that is different from the one of $P^{(\infty)}(z)$:
$\widehat P^{(\infty)}(z)$ remains uniformly bounded for $z$ on the jump contour close to $\mu$; for $z$ on the jump contour near $a(\mu)$, we have
$\widehat P^{(\infty)}(z)=\mathcal O(\mu^{-1/4})$. Implementing these changes in the proof of Lemma \ref{lemma:JR1}, we obtain the stated result. 
\end{proof}
}
It is straightforward to prove that this implies the following result, which we will utilize later.
\begin{corollary}
\label{corollary:JRhat}
Let $\|J_{\widehat R}-I\|_j$, $j=1,2,\infty$, denote the maximum of the $L^j(\Sigma_{\widehat R})$-norms of the entries of $J_{\widehat R}-I$.
For any $\mu_0>0$ there exist $M,c>0$, such that we have uniformly for $\left(\frac{M}{n}\right)^2<\mu\leq\mu_0$,
\[\|J_{\widehat R}-I\|_j=\mathcal O(n^{-1}\mu^{-1/2})\qquad \mbox{for $j=1,2,\infty$},\] 
and moreover 
\begin{equation}\label{integral}
\left\|\frac{J_{\widehat R}(\xi)-I}{\xi-\mu}\right\|_1=\mathcal O(n^{-1}\mu^{-1/2}),
\end{equation}
 where $\left\|\frac{J_{\widehat R}(\xi)-I}{\xi-\mu}\right\|_1$ is the maximum of the $L^1(\Sigma_{\widehat R})$-norms of the entries of 
 $\frac{J_{\widehat R}(\xi)-I}{\xi-\mu}$.
\end{corollary}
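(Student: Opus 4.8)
The plan is to derive Corollary~\ref{corollary:JRhat} from Lemma~\ref{lemma:JR2} by a direct, routine norm estimate on each piece of the jump contour $\Sigma_{\widehat R}=\partial U_\mu\cup\partial U_a\cup\widetilde\Sigma_+\cup\widetilde\Sigma_-\cup(a(\mu)+\delta,+\infty)$. First I would bound $\|J_{\widehat R}-I\|_\infty$: Lemma~\ref{lemma:JR2} already gives the pointwise bound $J_{\widehat R}-I=\mathcal O(n^{-1}\mu^{-1/2})$ on $\partial U_\mu$ (where in fact it is even $\mathcal O(n^{-1}\rho(\mu)^{-1}\mu^{-1/2})=\mathcal O(n^{-1}\mu^{-1/2})$ since $\rho(\mu)$ is bounded below for $\beta>1$, and for $\beta\le 1$, $\rho(\mu)\to\infty$, so in all cases $\rho(\mu)^{-1}=\mathcal O(1)$), on $\partial U_a$ it is $\mathcal O(n^{-1}\mu^{-1/2})$, and on the remaining parts it is exponentially small in $n\rho(\mu)\mu$ (note $n^2\mu\ge M^2$ forces $n\rho(\mu)\mu\ge n\mu \cdot(\text{lower bound on }\rho)\gtrsim M^2/n \cdot \rho$, which one checks still goes to infinity, or more simply one uses $n\rho(\mu)\sqrt\mu\to\infty$ in the regime $(M/n)^2<\mu$), hence negligible. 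Taking the maximum gives $\|J_{\widehat R}-I\|_\infty=\mathcal O(n^{-1}\mu^{-1/2})$.

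Next I would bound $\|J_{\widehat R}-I\|_1$ and $\|J_{\widehat R}-I\|_2$. On $\partial U_\mu$ the total arclength is $\mathcal O(\mu)$, so $\|J_{\widehat R}-I\|_{L^1(\partial U_\mu)}=\mathcal O(\mu)\cdot\mathcal O(n^{-1}\mu^{-1/2})=\mathcal O(n^{-1}\mu^{1/2})$, which is smaller than $\mathcal O(n^{-1}\mu^{-1/2})$; similarly the $L^2$-norm over $\partial U_\mu$ is $\mathcal O(\mu^{1/2})\cdot\mathcal O(n^{-1}\mu^{-1/2})=\mathcal O(n^{-1})$, again dominated. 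On $\partial U_a$ the arclength is $\mathcal O(1)$ (a disk of fixed radius $\delta$), so both the $L^1$- and $L^2$-norms are $\mathcal O(n^{-1}\mu^{-1/2})$, which is the bottleneck and gives exactly the claimed rate. On $\widetilde\Sigma_\pm$ and $(a(\mu)+\delta,\infty)$ the contributions are exponentially small (for the infinite ray one uses the improved bound $\varphi(x;\mu)<-\tfrac14 x^\beta$ from the end of the proof of Lemma~\ref{lemma:JR1}, which makes $\int e^{-cnx^\beta}dx$ integrable and exponentially small). Collecting these, $\|J_{\widehat R}-I\|_j=\mathcal O(n^{-1}\mu^{-1/2})$ for $j=1,2,\infty$.

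Finally, for the weighted estimate \eqref{integral}, I would note that $\mu$ lies inside $U_\mu$, so on $\partial U_\mu$ one has $|\xi-\mu|=\delta\mu$, giving
\[
\left\|\frac{J_{\widehat R}(\xi)-I}{\xi-\mu}\right\|_{L^1(\partial U_\mu)}
=\frac{1}{\delta\mu}\,\|J_{\widehat R}-I\|_{L^1(\partial U_\mu)}
=\mathcal O(\mu^{-1})\cdot\mathcal O(n^{-1}\mu^{1/2})=\mathcal O(n^{-1}\mu^{-1/2}),
\]
again the required order. On all the other components of $\Sigma_{\widehat R}$, the point $\mu$ is at a distance bounded below (by $c\,\mu$ from $\widetilde\Sigma_\pm$ by the shape of the lenses — recall $\Sigma_+\setminus(U_\mu\cup U_a)$ stays above $\Im z\ge\tfrac\delta2\mu$ — and by a fixed constant from $\partial U_a$ and from $(a(\mu)+\delta,\infty)$), so $|\xi-\mu|^{-1}=\mathcal O(\mu^{-1})$ or $\mathcal O(1)$ there, and multiplying the already-established $L^1$ bounds on those pieces (which are $\mathcal O(n^{-1}\mu^{-1/2})$ on $\partial U_a$ and exponentially small elsewhere) by these factors still yields $\mathcal O(n^{-1}\mu^{-1/2})$; the exponentially small pieces remain negligible even after the $\mu^{-1}$ loss since $n\sqrt\mu\to\infty$. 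Summing the five contributions proves \eqref{integral}.

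The only mildly delicate point — and the place I would be most careful — is checking that the exponentially small jumps on $\widetilde\Sigma_\pm$ really dominate everything even after division by $|\xi-\mu|\gtrsim\mu$ and after integrating a tail over an unbounded contour; but this is controlled precisely because the exponents $n\rho(\mu)\mu$ (on $\widetilde\Sigma_\pm$) and $nx^\beta$ (on the ray) dwarf any algebraic loss in the regime $\mu>(M/n)^2$ with $M$ large, exactly as in the proof of Lemma~\ref{lemma:JR1}. No genuinely new idea is needed beyond Lemma~\ref{lemma:JR2} and elementary arclength bookkeeping.
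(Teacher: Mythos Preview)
Your approach is exactly the routine piece-by-piece estimate the paper has in mind (the paper gives no details beyond calling the corollary ``straightforward''). There is, however, one genuine gap in your bookkeeping: the lens contribution on $\widetilde\Sigma_\pm$. You invoke the bound $J_{\widehat R}-I=\mathcal O(e^{-cn\rho(\mu)\mu})$ from Lemma~\ref{lemma:JR2} and then assert that the exponent $n\rho(\mu)\mu$ ``dwarfs any algebraic loss'' in the regime $(M/n)^2<\mu\le\mu_0$. That is false: for $\beta>1$ the function $\rho(\mu)$ stays bounded, and for, say, $\mu=n^{-3/2}$ one has $n\rho(\mu)\mu\asymp n^{-1/2}\to 0$, so the stated lens bound degenerates to $\mathcal O(1)$, while $n^{-1}\mu^{-1/2}=n^{-1/4}\to 0$. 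Your own parenthetical hint --- work with $n\rho(\mu)\sqrt\mu$ instead --- is the correct fix, but it requires going back into the \emph{proof} of Lemmas~\ref{lemma:JR1}--\ref{lemma:JR2}: the intermediate estimate $|e^{-2n\varphi(z)}|\le e^{-c\,n\rho(\mu)\sqrt\mu}$ on $\widetilde\Sigma_\pm$ (before it is artificially weakened to $e^{-cn\rho(\mu)\mu}$) gives what you need for the $L^\infty$, $L^1$, $L^2$ claims, since $t\,e^{-c\rho_0 t}$ with $t=n\sqrt\mu>M$ is uniformly bounded.

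For the weighted estimate \eqref{integral} on $\widetilde\Sigma_\pm$ there is a second, finer issue: even the uniform bound $e^{-cn\rho(\mu)\sqrt\mu}$ combined with your crude $|\xi-\mu|^{-1}\le(\delta\mu)^{-1}$ is not enough, because $\int_{\widetilde\Sigma_\pm}|\xi-\mu|^{-1}\,|d\xi|=\mathcal O(|\log\mu|)$ and $n\sqrt\mu\,|\log\mu|\,e^{-cn\rho\sqrt\mu}$ is unbounded at the edge $\mu=(M/n)^2$ (it grows like $\log n$). You must instead use the \emph{pointwise} decay along the lens from Lemma~\ref{lemma:estimatephi}, namely $\Re\varphi(z)\ge c\sqrt{\Im z}\,f(\Re z;\mu)$ with $\Im z$ comparable to $|\xi-\mu|$ as one moves away from $\partial U_\mu$; integrating $e^{-c'n\sqrt{|\xi-\mu|}}/|\xi-\mu|$ then gives a contribution of order $e^{-c''n\sqrt\mu}/(n\sqrt\mu)$, which is indeed $\mathcal O(n^{-1}\mu^{-1/2})$. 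With these two adjustments your argument is complete.
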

Later on, we will use the explicit form of the jump matrices on $\partial U_{a},$ $\partial U_{\mu}$,
\begin{equation}\label{JumpRhatExplicit}
\begin{split}
J_{\widehat R}(z)&=
\widehat\gamma(z)^{-\sigma_3}
\sP\widehat\delta(z;\mu)^{-\sigma_3}\sP^{-1}
\(\mathcal{E}_{Ai}(n^{\frac23}\eta(z;\mu))\)^{-1}
\sP\widehat\delta(z;\mu)^{\sigma_3}\sP^{-1}
\widehat\gamma(z)^{\sigma_3},\quad & z\in\partial U_a,
\\
J_{\widehat R}(z)&=
\widehat\gamma(z)^{-\sigma_3}
\sP
\widehat\delta(z;\mu)^{-\sigma_3}
\sP^{-1}
\(\mathcal{E}_{B}(n^2\zeta(z;\mu))\)^{-1}
\sP
\widehat\delta(z;\mu)^{\sigma_3}
\sP^{-1}
\widehat\gamma(z)^{\sigma_3},\quad & z\in\partial U_\mu.
\end{split}
\end{equation}

The conclusion of this refined RH analysis is that we improved the jump matrix of $\widehat R$ on $\partial U_\mu $, but in exchange the jump matrix of $\widehat R$ on $\partial U_a$ now behaves less good. For $\beta>1$, this trade off is convenient, and it allows us to obtain asymptotics for $\widehat R$, which will allow us next to compute asymptotics for $\partial_\mu\ln H_n(\mu)$. For $\beta\leq 1$, we will need one more refinement of the RH analysis.

To obtain asymptotics for $\widehat R$, $\widehat R'$, we follow the usual techniques of small-norm theory for RH problems, but adapted to the situation of a RH problems normalized at the point $z=-\mu$ instead of $z=\infty$.

\begin{proposition}\label{prop:asRhat}
As $n\to\infty$, we have uniformly $0<\mu\leq \mu_0$ and $n\sqrt{\mu}\geq M$ that
\[\widehat R(\mu)=I+\mathcal O(n^{-1}\mu^{-1/2}),\qquad \widehat R'(\mu)=\mathcal O(n^{-1}\mu^{-3/2}).\]
\end{proposition}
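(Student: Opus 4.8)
The plan is to apply small-norm Riemann--Hilbert theory to the RH problem for $\widehat R$, but adapted to the normalization point $z=-\mu$ rather than $z=\infty$. First I would recall, from Corollary \ref{corollary:JRhat}, that $\|J_{\widehat R}-I\|_j=\mathcal O(n^{-1}\mu^{-1/2})$ for $j=1,2,\infty$ uniformly for $(M/n)^2<\mu\leq\mu_0$. Since $\widehat R$ solves a RH problem with jump contour $\Sigma_{\widehat R}$, small jump, and the normalization $\widehat R(-\mu)=I$, one can write the associated singular integral equation: $\widehat R(z)=I+\frac{1}{2\pi i}\int_{\Sigma_{\widehat R}}\frac{\widehat R_-(\xi)(J_{\widehat R}(\xi)-I)}{\xi-z}d\xi-\frac{1}{2\pi i}\int_{\Sigma_{\widehat R}}\frac{\widehat R_-(\xi)(J_{\widehat R}(\xi)-I)}{\xi+\mu}d\xi$, the second integral enforcing $\widehat R(-\mu)=I$. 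The point $z=-\mu$ stays at a fixed distance $\gtrsim\mu$ from $\Sigma_{\widehat R}$ (which consists of $\partial U_\mu$, $\partial U_a$, the lens arcs outside the disks, and $(a(\mu)+\delta,\infty)$), but this distance degenerates as $\mu\to0$; this is precisely why the factor $\frac{1}{\xi+\mu}$ must be controlled, and it is handled by the bound \eqref{integral} in Corollary \ref{corollary:JRhat}, namely $\left\|\frac{J_{\widehat R}(\xi)-I}{\xi-\mu}\right\|_1=\mathcal O(n^{-1}\mu^{-1/2})$, together with the elementary observation that on $\Sigma_{\widehat R}$ one has $|\xi+\mu|\geq c\,|\xi-\mu|$ for some absolute $c>0$ (since the only part of the contour that comes close to $\mu$ is $\partial U_\mu$, on which $|\xi+\mu|\asymp\mu\asymp|\xi-\mu|$).

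Next I would verify invertibility of the relevant Cauchy-type operator on $L^2(\Sigma_{\widehat R})$: writing $C_{J}[\,h\,](z)=C_-\big(h\,(J_{\widehat R}-I)\big)(z)$ where $C_-$ is the minus boundary value of the Cauchy transform along $\Sigma_{\widehat R}$, one has $\|C_J\|_{L^2\to L^2}\leq \|C_-\|\,\|J_{\widehat R}-I\|_\infty=\mathcal O(n^{-1}\mu^{-1/2})$, which is $<1/2$ as soon as $n\sqrt\mu\geq M$ with $M$ large enough, hence $I-C_J$ is invertible with uniformly bounded inverse. Solving for $\widehat R_--I\in L^2(\Sigma_{\widehat R})$ gives $\|\widehat R_--I\|_{L^2}=\mathcal O(\|J_{\widehat R}-I\|_{L^2})=\mathcal O(n^{-1}\mu^{-1/2})$. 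Then, evaluating the integral representation at $z=\mu$ and using $|\xi-\mu|\asymp|\xi+\mu|\gtrsim$ (the distance from $\mu$ to $\Sigma_{\widehat R}$) — which is $\asymp\delta\mu$ on $\partial U_\mu$ and $\gtrsim1$ on the rest — one gets, splitting the contour and using H\"older on each piece,
\[
\widehat R(\mu)-I=\mathcal O\!\left(\Big\|\frac{J_{\widehat R}-I}{\cdot-\mu}\Big\|_1\right)+\mathcal O\!\left(\|\widehat R_--I\|_2\,\Big\|\frac{J_{\widehat R}-I}{\cdot-\mu}\Big\|_2\right)+(\text{same with }\cdot+\mu)=\mathcal O(n^{-1}\mu^{-1/2}),
\]
where for the $\partial U_\mu$ contribution the factor $\frac{1}{\xi-\mu}$ costs a $\mu^{-1}$ but the contour has length $\asymp\mu$ and $J_{\widehat R}-I=\mathcal O(n^{-1}\rho(\mu)^{-1}\mu^{-1/2})$ there, so that piece is even smaller, while the dominant $\mathcal O(n^{-1}\mu^{-1/2})$ comes from $\partial U_a$.

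For $\widehat R'(\mu)$ I would differentiate the integral representation under the integral sign, producing a factor $\frac{1}{(\xi-\mu)^2}$ (the $\frac{1}{\xi+\mu}$ term has no $z$-dependence and drops out); the distance from $\mu$ to $\Sigma_{\widehat R}$ being $\asymp\mu$ (achieved on $\partial U_\mu$) and $\gtrsim1$ elsewhere, one extra inverse power of that distance produces the extra $\mu^{-1}$ relative to $\widehat R(\mu)-I$: explicitly, the $\partial U_a$ and far-field contributions are $\mathcal O(n^{-1}\mu^{-1/2})$ (distance $\gtrsim1$ there, so differentiation costs nothing), while the $\partial U_\mu$ contribution is $\mathcal O\big(\mu^{-2}\cdot\mu\cdot n^{-1}\rho(\mu)^{-1}\mu^{-1/2}\big)=\mathcal O(n^{-1}\rho(\mu)^{-1}\mu^{-3/2})=\mathcal O(n^{-1}\mu^{-3/2})$ since $\rho(\mu)$ is bounded below near $0$ for $\beta\geq1$ and blows up for $\beta<1$; hence $\widehat R'(\mu)=\mathcal O(n^{-1}\mu^{-3/2})$ uniformly. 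The main obstacle is bookkeeping the $\mu$-dependence carefully across the four pieces of $\Sigma_{\widehat R}$ and confirming that the worst contribution (which comes from $\partial U_a$, where $\widehat P^{(\infty)}$ has grown to size $\mu^{-1/4}$) indeed only degrades the estimate to $n^{-1}\mu^{-1/2}$ and not worse — this is exactly the trade-off flagged in the text after \eqref{eq:Airymatching2}, and it is what restricts the range to $n\sqrt\mu\geq M$.
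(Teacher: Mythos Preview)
Your strategy is the same as the paper's: small-norm RH theory adapted to the normalization at $z=-\mu$, followed by the integral representation
\[
\widehat R(z) = I + \frac{z+\mu}{2\pi i}\int_{\Sigma_{\widehat R}}\frac{\widehat R_-(\xi)(J_{\widehat R}(\xi)-I)}{(\xi+\mu)(\xi-z)}\,d\xi
\]
(your two-term formula is this with partial fractions undone) and its $z$-derivative, together with Corollary~\ref{corollary:JRhat}. Your piecewise bookkeeping of the contributions of $\partial U_\mu$, $\partial U_a$, the lens arcs, and $(a(\mu)+\delta,\infty)$ is correct, and so is the observation $|\xi+\mu|\asymp|\xi-\mu|$ on $\Sigma_{\widehat R}$.

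There is, however, one genuine technical gap in your $L^2$ solvability step. The contour $\Sigma_{\widehat R}$ contains the unbounded ray $(a(\mu)+\delta,\infty)$, and since $\widehat R$ is normalized at $-\mu$ rather than at $\infty$, $\widehat R(x)$ tends as $x\to+\infty$ to a constant matrix $\widehat R(\infty)\neq I$ (indeed $\widehat R(\infty)=\sP S(-\mu)^{-1}\sP^{-1}\gamma(-\mu)^{-\sigma_3}$). Hence $\widehat R_- - I\notin L^2(\Sigma_{\widehat R})$, the constant correction term in your integral equation is not in $L^2$ either, and the sentence ``Solving for $\widehat R_--I\in L^2$ gives $\|\widehat R_--I\|_{L^2}=\mathcal O(\|J_{\widehat R}-I\|_{L^2})$'' is not valid as written. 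The paper handles precisely this point by first applying the M\"obius map $w(z)=\frac{z-\mu}{z+\mu}$: this sends the normalization point $-\mu$ to $\infty$ and compactifies the jump contour (the image of the half-line is a bounded arc ending at $w=1$), so that $\widehat R(z(w))$ solves a \emph{standard} small-norm RH problem to which \cite[Lemma~1]{DIKairy} applies, yielding the uniform bound~\eqref{eq:estimatefromDIK}. Once that bound is in hand, your derivative argument and the paper's coincide.

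Easy fixes for your write-up: either insert the M\"obius transformation as the paper does, or set up the singular integral equation for the density $\rho:=\widehat R_-(J_{\widehat R}-I)$ instead of for $\widehat R_--I$; since $J_{\widehat R}-I$ decays exponentially on the unbounded piece, $\rho\in L^1\cap L^2(\Sigma_{\widehat R})$ and the Neumann-series argument goes through.
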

\begin{proof}
Consider the M\"obius transformation
\[w(z)=\frac{z-\mu}{z+\mu},\qquad z(w)=-\mu\frac{w+1}{w-1},\]
which maps the normalization point $z=-\mu$ to $w=\infty$. The function $\widehat R(z(w))$ is then analytic in $w$ except on the image under $w(.)$ of the jump contour for $\widehat R$, and on this image, it satisfies the jump relations
\[\widehat R(z(w))_+=\widehat R(z(w))_-J_{\widehat R}(z(w)).\] Moreover, $\widehat R(z(w))$ tends to $I$ as $w\to \infty$. We are then in a situation similar to that in \cite{DIKairy}. Using the arguments of \cite[Lemma 1]{DIKairy}, one can then use small norm theory for RH problems, Lemma \ref{lemma:JR2}, and Corollary \ref{corollary:JRhat} to conclude that 
\begin{equation}\label{eq:estimatefromDIK}\widehat R(z(w))=I+\mathcal O\left(n^{-1}\mu^{-1/2}(|w|+1)^{-1}\right)\quad \mbox{or}\quad \widehat R(z)=I+\mathcal O\left(n^{-1}(z+\mu)\right),\end{equation}
uniformly for $z(w)$ off the jump contour for $\widehat R$. In particular, this holds for $w=0$, $z=\mu$.

\medskip

For the estimate of the derivative $\widehat R'(\mu)$, let us return to the variable $z$ instead of $w$. 
From the RH conditions for $\widehat R$, we know that $\widehat R_+-\widehat R_-=\widehat R_-(J_{\widehat R}-I)$ on the jump contour $\Sigma_{\widehat R}$, and 
also that $\widehat R(-\mu)=I$.
It follows that the unique RH solution satisfies the following equation,
\begin{equation}\label{Rhat}
\widehat R(z) = I + \frac{z+\mu}{2\pi\i}
\int\limits_{\Sigma_{\widehat R}}\frac{\widehat R_-(\xi)(J_{\widehat R}(\xi)-I)(\xi)\ \d\xi}{(\xi+\mu)(\xi-z)}.
\end{equation}

\noindent 
Differentiating, we obtain
\begin{equation}\label{Rhatx}\widehat R'(z) = \frac{1}{2\pi\i}\int\limits_{\Sigma_{\widehat R}}\frac{\widehat R_-(\xi)(J_{\widehat R}(\xi)-I)\ \d\xi}{(\xi-z)^2},
\qquad
z\in\mathbb{C}\setminus\Sigma_{\widehat R}.
\end{equation}
For $z=\mu$, this implies
\[\widehat R'(\mu)=\frac{1}{2\pi\i}\int\limits_{\Sigma_{\widehat R}}\frac{(\widehat R_-(\xi) -I)(J_{\widehat R}(\xi)-I)\ \d\xi}{(\xi-\mu)^2}
+\frac{1}{2\pi\i}\int\limits_{\Sigma_{\widehat R}}\frac{(J_{\widehat R}(\xi)-I)\ \d\xi}{(\xi-\mu)^2}.
\]
Using the fact that $\frac{1}{\xi-\mu}=\mathcal O(\mu^{-1})$ and \eqref{integral} in the second term, and \eqref{eq:estimatefromDIK}, $\frac{\xi+\mu}{\xi-\mu}=\mathcal O(1)$, and \eqref{integral} in the first term, we obtain the required estimate.
\end{proof}

\section{Second refinement: asymptotics in the regime $\left(\frac{M}{\rho(\mu)n}\right)^2\leq \mu\leq \mu_0$}\label{section:8RH3}
In this section, we will further improve the RH analysis, in order to obtain uniform estimates as $n\to\infty$, uniformly for $\left(\frac{M}{\rho(\mu)n}\right)^2\leq \mu\leq \mu_0$, where $\mu_0$ is an arbitrary constant and $M$ a sufficiently large constant.

Observe that the jump $J_{\widehat R}$ for $\widehat R$ is $I+\mathcal{O}(\frac{1}{n \rho(\mu)\sqrt{\mu}})$ on $\partial U_\mu $ as $n\rho(\mu)\sqrt{\mu}\to\infty$, and is hence close to the identity. On $\partial U_a,$ different entries of the jump matrix $J_{\widehat R}$ have different order: we have
\begin{equation}\label{Jestimate}
J_{\widehat R}(z) = \widehat\gamma(z)^{-\sigma_3}
\(
I+\mathcal{O}\(\frac{1}{n}\)
\)
\widehat\gamma(z)^{\sigma_3}
=\begin{pmatrix}
1+\mathcal{O}\(\frac{1}{n }\)
&
\mathcal{O}\(\frac{1}{\sqrt{\mu}\,n}\)
\\
\mathcal{O}\(\frac{\sqrt{\mu}}{n}\)
&
1+\mathcal{O}\(\frac{1}{n}\)
\end{pmatrix},
\end{equation}
as $n\to\infty$, $\mu\to 0$, so it is only the $(1,2)$ entry of the jump matrix that prevents it from converging to the identity.
We will transform $\widehat R$ to $\widetilde R$ in such a way that the $(1,2)$ entry of the jump matrix for $\widetilde R$ becomes 
$\mathcal{O}\(\frac{1}{\sqrt{\mu}\, n^{m+1}}\)$ for sufficiently large integer $m$, and hence (by \eqref{fcoeff}) is also of the order $\mathcal{O}\(\frac{1}{n\rho(\mu)\sqrt{\mu}}\).$

To this end, 
define the nilpotent matrix $\sigma_+=\begin{pmatrix}0&1\\0&0\end{pmatrix}$, and define
\begin{equation}\label{def:Rtilde}
\widetilde R(z) = \widehat R(z)\left(I-F(z;\mu)\sigma_+\right),\end{equation}
where $F(z;\mu)$ is a scalar function, analytic in $z\in\mathbb{C}\setminus \partial U_{a}$ and such that $F(-\mu;\mu)=0$, which is to be determined.
We will search for $F$ in the form
\begin{equation}\label{def:F}F(z;\mu)=\sum\limits_{j=1}^{m}\frac{F_{j}(z;\mu)}{n^j\sqrt{\mu}},\end{equation}
where  $m\in\mathbb N$ and $F_{j}(z;\mu), j=1,\ldots,m,$ are uniformly bounded functions that we will construct below.
Using the nilpotency of $\sigma_+$, it is straightforward to verify that the function $\widetilde R$ solves the following RH problem (see Figure \ref{fig:R}):
\subsubsection*{RH problem for $\widetilde R$}
\begin{enumerate}[label=(\alph*)]
\item $\widetilde R$ is analytic in $\mathbb C\setminus \left(\partial U_\mu\cup \partial U_a\cup\widetilde\Sigma_+\cup\widetilde\Sigma_-\cup(a(\mu)+\delta ,+\infty)\right)$,
\item $\widetilde R(z)$ satisfies the jump relations $\widetilde R_{+}(z)=\widetilde R_{-}(z)J_{\widetilde R}(z),$ where 
\begin{equation}\label{def:Jtilde}J_{\widetilde R}(z)= \left(I+F_-(z)\sigma_+\right) J_{\widehat R}(z) \left(I-F_+(z)\sigma_+\right) ,\end{equation} 
\item $\widetilde R(-\mu)=I$.
\end{enumerate}
{
We will construct $F_j$ to be analytic in $\mathbb C\setminus\partial U_a$ and such that it is uniformly bounded in $z$ for $\mu$ sufficiently small and $n$ sufficiently large. Moreover, we need it to be small near $0$: $F_j(z)=\mathcal O(z+\mu)$ uniformly for $|z|<\delta$ for some small $\delta>0$, and thus in particular on $\mathcal O(\partial U_\mu)$.
In this way, $F(z;\mu)$ is uniformly small on $\partial U_\mu$, and we can observe that the jump matrices $J_{\widetilde R}$ for $\widetilde R$ will be, like the ones for $\widehat R$, $I+\mathcal O\left(\frac{1}{n\sqrt{\mu}\rho(\mu)}\right)$ as $n\rho(\mu)\sqrt{\mu}$ is large, except possibly on $\partial U_a$.
}

To determine the coefficients $F_j(z)$, we need to take a closer look at the structure of the jump matrices $J_{\widehat R}$ and $J_{\widetilde R}$ on $\partial U_a$.
By \eqref{JumpRhatExplicit} and \eqref{EAi},
we can write $J_{\widehat R}$ for $z$ on the circle $\partial U_a$ in the form
\begin{equation}
\label{JRapprox}
J_{\widehat R}(z;\mu)=
\begin{pmatrix}
1+A(z;\mu) & \frac{B(z;\mu)}{\sqrt{\mu}}
\\
\sqrt{\mu} C(z;\mu) & 1+D(z;\mu) 
\end{pmatrix},
\end{equation}
where $A,B,C,D$ are functions of $z$ that depend also on $n$ and $\mu$, and that admit the following asymptotic expansions for large~$n:$
\begin{equation}\label{ABCD}
A(z;\mu)\sim\sum_{j=1}^{\infty}\frac{A_j(z;\mu)}{n^j},
\ 
B(z;\mu)\sim\sum_{j=1}^{\infty}\frac{B_j(z;\mu)}{n^j},
\
C(z;\mu)\sim\sum_{j=1}^{\infty}\frac{C_j(z;\mu)}{n^j},
\
D(z;\mu)\sim\sum_{j=1}^{\infty}\frac{D_j(z;\mu)}{n^j}.
\end{equation}
Moreover, each of the coefficients
$A_j(z;\mu), B_j(z;\mu), C_j(z;\mu), D_j(z;\mu)$ has a finite limit as $\mu\to 0$.

Substituting \eqref{JRapprox} in \eqref{def:Jtilde}, we obtain
\begin{equation}
J_{\widetilde R}=
\begin{pmatrix}
1+A+F_-C\sqrt{\mu}&F_--F_++\frac{B}{\sqrt{\mu}}-F_+A+DF_- -F_-F_+C\sqrt{\mu}\\
C\sqrt{\mu}&1+D-CF_+\sqrt{\mu}
\end{pmatrix}.
\end{equation}
Now, substituting the asymptotic expansions \eqref{def:F} and \eqref{ABCD},
we obtain 
\begin{equation}
J_{\widetilde R, 12}(z;\mu)\sim\frac{1}{\sqrt{\mu}}\sum_{j=1}^\infty \frac{F_{j,-}(z;\mu)-F_{j,+}(z;\mu)+B_j(z;\mu)+J_j(z;A,C,D,F_1,\ldots, F_{j-1})}{n^j},
\end{equation}
where $J_j(z;A,B,C,D,F_1,\ldots, F_{j-1})$ depends on 
the quantities 
\[A_k(z;\mu), \ C_k(z;\mu),\ D_k(z;\mu), \ \quad \mbox{for $k=1,\ldots, j$, and on }F_k(z;\mu)\quad\mbox{for $k=1,\ldots, j-1$,}\]
but the crucial observation is that it does not depend on $F_k(z)$ for $k\geq j$.
We have in particular
\begin{align*}
&J_1=0,\\
&J_2=-F_{1,+}A_1+D_1F_{1,-},\\
&J_3=-F_{1,+}A_2-F_{2,+}A_1+D_1F_{2,-}+D_2F_{1,-}-F_{1,-}F_{1,+}C_1.
\end{align*}
Hence, we can construct $F_1(z;\mu), F_2(z;\mu), \ldots$ iteratively by requiring that
\begin{equation}
F_{j,+}(z;\mu)-F_{j,-}(z;\mu)=B_j(z;\mu)+J_j(z;\mu),\qquad F_{j}(-\mu;\mu)=0.
\end{equation}
This yields, by the Sokhotsky-Plemelj formula,
\begin{equation}\label{eq:FSP}
F_j(z;\mu)=\frac{z+\mu}{2\pi i}\int_{\partial U_a}\frac{\left(B_j(\xi;\mu)+J_j(\xi;\mu)\right)d\xi}{(\xi-z)(\xi+\mu)}.
\end{equation}
By construction, the structure of the jump matrix $J_{\widetilde R}(z)$ on the disk $\partial U_{a}$ is as follows:
\begin{equation}\label{JRrefined}
J_{\widetilde R}(z)=
\begin{pmatrix}
1+\mathcal{O}\(\frac{1}{n}\) & \mathcal{O}\(\frac{1}{\sqrt{\mu}\,n^{m+1}}\)
\\
\mathcal{O}\(\frac{\sqrt{\mu}}{n}\) & 1+\mathcal{O}\(\frac{1}{n}\) 
\end{pmatrix}.\end{equation}
Moreover, each function $F_j(z;\mu)$ is analytic in $z\in\mathbb{C}\setminus\partial U_a,$ continuous up to the boundary, and satisfies the normalization condition $F_j(-\mu;\mu)=0.$ Moreover, the functions $F(z;\mu), F_j(z;\mu)$, their derivatives and boundary values are uniformly bounded for $n$ sufficiently large and $\mu$ sufficiently small.
From \eqref{eq:FSP} and \eqref{def:F}, we easily obtain the uniform estimate on $\partial U_\mu$ 
\begin{equation}\label{eq:Festimatemu}F(z;\mu)=\mathcal O(\sqrt{\mu}/n),\qquad z\in\partial U_\mu,
\end{equation}
which will be needed later.

Let us now take $m$ an integer larger than $\frac{1}{\beta}-1$ for $\beta<1,$ $m\geq 1$ for $\beta=1,$ and $m\geq 0$ for $\beta>1.$  
It is then straightforward to verify by \eqref{fcoeff} that  $J_{\widetilde R}(z)=I+\mathcal{O}\(\(n\sqrt{\mu}\rho(\mu)\)^{-1}\)$ as $n\to\infty$, uniformly for $z\in\partial U_a$ and for $\left(\frac{M}{\rho(\mu)n}\right)^2\leq \mu\leq \mu_0$.

In what follows we will need an explicit expression for the $z$-derivative of $F_1(z;\mu)$ at the point $z=-\mu$.

\begin{lemma}\label{lem_F1}
We have
\begin{multline*}
\partial_z F_1(z;\mu)\Big|_{z=\mu} 
\\
= \frac{\i\sqrt{a(\mu)+\mu}}{4(a(\mu)-\mu)^{2} f(a(\mu);\mu)\sqrt{2}}
\(1-2\alpha^2 \frac{(\sqrt{a(\mu)}-\sqrt{\mu})^2}{a(\mu)} +\frac{{\partial_z}f(z;\mu)_{z=a(\mu)}\,(a(\mu)-\mu)}{2f(a(\mu);\mu)}\).
\end{multline*}
\end{lemma}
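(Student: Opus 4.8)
The plan is to compute $F_1$ via the Sokhotski--Plemelj formula \eqref{eq:FSP} with $j=1$, namely
\[
F_1(z;\mu)=\frac{z+\mu}{2\pi i}\int_{\partial U_a}\frac{B_1(\xi;\mu)\,d\xi}{(\xi-z)(\xi+\mu)},
\]
using that $J_1=0$. Differentiating in $z$ and setting $z=\mu$ gives
\[
\partial_z F_1(z;\mu)\big|_{z=\mu}
=\frac{2\mu}{2\pi i}\int_{\partial U_a}\frac{B_1(\xi;\mu)\,d\xi}{(\xi-\mu)^2(\xi+\mu)}
+\frac{1}{2\pi i}\int_{\partial U_a}\frac{B_1(\xi;\mu)\,d\xi}{(\xi-\mu)(\xi+\mu)},
\]
and combining the two terms, $\partial_z F_1(z;\mu)\big|_{z=\mu}=\frac{1}{2\pi i}\int_{\partial U_a}\frac{B_1(\xi;\mu)}{(\xi-\mu)^2}\,d\xi$. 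Since $B_1(\xi;\mu)/(\xi-\mu)^2$ is analytic in $U_a$ except at $\xi=a(\mu)$ (the point $\xi=\mu$ lies outside $U_a$), the integral is $2\pi i$ times the residue at $\xi=a(\mu)$; because $B_1$ has a simple pole there (coming from the $\eta(z;\mu)^{-3/2}$ singularity of $\mathcal E_{\rm Ai}^{-1}$ combined with $\eta$ vanishing simply at $a(\mu)$), this residue is elementary to extract once $B_1$ is written explicitly.

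The main work is therefore to obtain the explicit form of $B_1(\xi;\mu)$, i.e. the $n^{-1}$-coefficient of the $(1,2)$ entry of $J_{\widehat R}(z;\mu)$ rescaled by $\sqrt\mu$, on $\partial U_a$. I would start from the explicit expression \eqref{JumpRhatExplicit},
\[
J_{\widehat R}(z)=\widehat\gamma(z)^{-\sigma_3}\,\sP\,\widehat\delta(z;\mu)^{-\sigma_3}\sP^{-1}
\big(\mathcal E_{\rm Ai}(n^{2/3}\eta(z;\mu))\big)^{-1}
\sP\,\widehat\delta(z;\mu)^{\sigma_3}\sP^{-1}\,\widehat\gamma(z)^{\sigma_3},\quad z\in\partial U_a,
\]
and plug in the two-term expansion \eqref{EAi2}, so that $\big(\mathcal E_{\rm Ai}\big)^{-1}=I-\begin{pmatrix}0&\frac{5i}{48}\\\frac{7i}{48}&0\end{pmatrix}\frac{1}{(n^{2/3}\eta(z;\mu))^{3/2}}+\mathcal O(n^{-2})=I-\begin{pmatrix}0&\frac{5i}{48}\\\frac{7i}{48}&0\end{pmatrix}\frac{1}{n\,\eta(z;\mu)^{3/2}}+\mathcal O(n^{-2})$. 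Conjugating the off-diagonal matrix $\begin{pmatrix}0&\frac{5i}{48}\\\frac{7i}{48}&0\end{pmatrix}$ by $\sP\widehat\delta^{-\sigma_3}\sP^{-1}$ and then by $\widehat\gamma^{-\sigma_3}$ produces the $(1,2)$ entry: using $\sP^{-1}\begin{pmatrix}0&b\\c&0\end{pmatrix}\sP$-type identities, the conjugation by $\sP$ mixes diagonal and off-diagonal parts, the conjugation by $\widehat\delta^{-\sigma_3}$ leaves the $(1,2)$ entry of an off-diagonal matrix multiplied by $\widehat\delta^{-2}$, and conjugation by $\widehat\gamma^{-\sigma_3}$ multiplies by $\widehat\gamma^{-2}$. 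Evaluating these at $z=a(\mu)$ is where the explicit constants enter: $\widehat\delta(a(\mu);\mu)=\widehat\delta(z)z^{-\alpha/2}\big|_{z=a(\mu)}$ combined with the value of $\delta(a(\mu))$ from \eqref{delta}, and $\widehat\gamma(a(\mu);\mu)\to0$ — so in fact one must keep the behaviour $\widehat\gamma(z)^{-2}=\left(\frac{a(\mu)+\mu}{2\mu}\cdot\frac{z-\mu}{z-a(\mu)}\right)^{1/2}$, which has a square-root singularity at $z=a(\mu)$ that combines with $\eta(z;\mu)^{-3/2}$ and the extra $(z-a(\mu))^{-2}$ (or rather the residue-extraction) to give a finite answer. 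I also need the local expansions of $\widehat\delta$ and of $\eta$ near $a(\mu)$: from \eqref{varphi_asymp_a} and \eqref{phimuzeta}, $\eta(z;\mu)=\left(\frac{f(a(\mu);\mu)}{\sqrt{a(\mu)-\mu}}\right)^{2/3}(z-a(\mu))\big(1+\mathcal O(z-a(\mu))\big)$, and the correction term — controlled via $\partial_z f(z;\mu)\big|_{z=a(\mu)}$ — is exactly what produces the $\frac{\partial_z f(z;\mu)_{z=a(\mu)}(a(\mu)-\mu)}{2f(a(\mu);\mu)}$ term in the claimed formula; similarly the derivative of $\widehat\delta^{-2}$ at $a(\mu)$ (equivalently of $\delta^2 z^{-\alpha}$) produces the $-2\alpha^2\frac{(\sqrt{a(\mu)}-\sqrt\mu)^2}{a(\mu)}$ term.

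The hard part, and the place where care is essential, is bookkeeping the several simple/half-integer singularities at $z=a(\mu)$ — the $\eta^{-3/2}$ from $\mathcal E_{\rm Ai}^{-1}$, the $\gamma$-type square root from $\widehat\gamma^{-2}$, and the double pole $(z-\mu)^{-2}$ (resolved as a residue) from the differentiation — and verifying that they combine into a single simple pole whose residue is finite, then Taylor-expanding $\widehat\delta^{-2}$, $\widehat\gamma^{-2}$ and $\eta$ to one order past the leading term at $z=a(\mu)$. Concretely I would: (i) write $J_{\widehat R,12}(z)=\sqrt\mu\,B(z;\mu)$ on $\partial U_a$ and read off $B_1(z;\mu)=\frac{1}{n}$-coefficient, getting $B_1(z;\mu)=-\frac{1}{\eta(z;\mu)^{3/2}}\cdot\frac{1}{\sqrt\mu}\big[\widehat\gamma(z)^{-\sigma_3}\sP\widehat\delta(z;\mu)^{-\sigma_3}\sP^{-1}\begin{pmatrix}0&\frac{5i}{48}\\\frac{7i}{48}&0\end{pmatrix}\sP\widehat\delta(z;\mu)^{\sigma_3}\sP^{-1}\widehat\gamma(z)^{\sigma_3}\big]_{12}$; (ii) simplify the bracket using $\sP=\frac{1}{\sqrt2}\begin{pmatrix}1&i\\i&1\end{pmatrix}$, $\sP^{-1}=\frac{1}{\sqrt2}\begin{pmatrix}1&-i\\-i&1\end{pmatrix}$, to get a closed expression in $\widehat\delta(z;\mu)^{\pm2}$ and $\widehat\gamma(z)^{\pm2}$; (iii) substitute into $\partial_z F_1(z;\mu)\big|_{z=\mu}=\mathrm{Res}_{\xi=a(\mu)}\frac{B_1(\xi;\mu)}{(\xi-\mu)^2}$, expand all the ingredients to the needed order at $\xi=a(\mu)$ using $\widehat\delta(a(\mu);\mu)=1$, $\eta(a(\mu);\mu)=0$, $\partial_z\eta(a(\mu);\mu)=\left(\frac{f(a(\mu);\mu)}{\sqrt{a(\mu)-\mu}}\right)^{2/3}$, Proposition \ref{prop_fderf}(2), and \eqref{varphi_asymp_a}; and (iv) collect terms to recover the stated identity. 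The appearance of $\frac{i\sqrt{a(\mu)+\mu}}{4(a(\mu)-\mu)^2 f(a(\mu);\mu)\sqrt2}$ as the overall prefactor is a consistency check: the $\sqrt{a(\mu)+\mu}/\sqrt\mu$-type factor comes from $\widehat\gamma^{-2}$ and cancels the $\sqrt\mu$ in $B(z;\mu)=J_{\widehat R,12}/\sqrt\mu$, leaving a $\mu$-independent-looking prefactor, while the $(a(\mu)-\mu)^{-2}$ and $f(a(\mu);\mu)^{-1}$ come from $\eta^{-3/2}\sim (a(\mu)-\mu)/f(a(\mu);\mu)\cdot(z-a(\mu))^{-3/2}$ together with the residue.
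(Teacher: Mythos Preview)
Your approach is the same as the paper's: write $\partial_zF_1|_{z=\mu}=\frac{1}{2\pi i}\int_{\partial U_a}\frac{B_1(\xi;\mu)}{(\xi-\mu)^2}\,d\xi$ (the paper obtains this directly by differentiating \eqref{eq:FSP} under the integral sign), extract $B_1$ from \eqref{JumpRhatExplicit} via \eqref{EAi2}, and compute the residue at $\xi=a(\mu)$ using local expansions of $\widehat\gamma$, $\widehat\delta^{\pm 2}$, and $\eta^{3/2}$.

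Two slips to fix. First, by \eqref{JRapprox} one has $J_{\widehat R,12}=B/\sqrt\mu$, not $\sqrt\mu\,B$; the paper's explicit formula is $B_1(z;\mu)=\dfrac{-i\sqrt\mu}{48\,\eta(z;\mu)^{3/2}\,\widehat\gamma(z)^2}\Big(3\widehat\delta(z)^2+3\widehat\delta(z)^{-2}-1\Big)$. Second, and more importantly, $B_1$ does \emph{not} have a simple pole at $a(\mu)$: since $\widehat\gamma(z)^{-2}\sim (z-a(\mu))^{-1/2}$ and $\eta(z;\mu)^{-3/2}\sim (z-a(\mu))^{-3/2}$, the product behaves like $(z-a(\mu))^{-2}$, so the residue calculation is a genuine second-order one. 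You already anticipate this in step (iii) when you say to expand everything ``to one order past the leading term,'' and that is precisely what yields the three pieces $1$, $-2\alpha^2(\sqrt{a}-\sqrt\mu)^2/a$, and $\frac{(a-\mu)\partial_zf|_{z=a}}{2f(a;\mu)}$; but the earlier simple-pole remark is incorrect and would, if taken literally, miss all but the leading constant. The paper carries this out in the local variable $v=(\xi-a(\mu))/(a(\mu)-\mu)$, expanding $\widehat\gamma$, $\widehat\delta^{\pm 2}$, $\eta^{3/2}$ to the required order (see \eqref{details_a}--\eqref{intermComp}).
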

\begin{proof}
By \eqref{JumpRhatExplicit} and \eqref{EAi}, we have
\[J_{\widehat R}(z)=I
+
\begin{pmatrix}
\frac{1}{16}\(\widehat\delta(z)^2-\frac{1}{\widehat\delta(z)^2}\)
&
\frac{-\i}{48\widehat\gamma(z)^2}
\(3\widehat\delta(z)^2+\frac{3}{\widehat\delta(z)^2}-1\)
\\
\frac{-\i \widehat\gamma(z)^2}{48}
\(3\widehat\delta(z)^2+\frac{3}{\widehat\delta(z)^2}+1\)
&
\frac{-1}{16}\(\widehat\delta(z)^2-\frac{1}{\widehat\delta(z)^2}\)
\end{pmatrix}\frac{1}{n\eta(z;\mu)^{3/2}}
+
\mathcal{O}\(\frac{1}{n^2\eta(z;\mu)^3}\)
\]
uniformly for $z\in\partial U_a$, as $n\to\infty$ with $\mu\leq \mu_0$ and $n\rho(\mu)\sqrt{\mu}$ sufficiently large.
It follows from \eqref{JRapprox}
that
\[B_1(z;\mu)=
\frac{-\i\sqrt{\mu}}{48 \eta(z;\mu)^{3/2} \widehat\gamma(z)^2}
\(3\widehat\delta(z)^2+\frac{3}{\widehat\delta(z)^2}-1\).\]

Using \eqref{eq:FSP}, we obtain
{\begin{equation}\label{intermComp2}
\begin{split}
&
F_1(z;\mu)
=
\frac{z+\mu}{2\pi\i}\int\limits_{\partial U_a}
\frac{-\i\(3\widehat\delta(\xi)^2+3\widehat\delta(\xi)^{-2}-1\)}
{48\widehat\gamma(\xi)^2\,\eta(\xi; \mu)^{3/2}}
\cdot
\frac{\sqrt{\mu}\ \d\xi}{(\xi-z)(\xi+\mu)},
\\
&
\partial_zF_1(z;\mu)
=
\frac{1}{2\pi\i}\int\limits_{\partial U_a}
\frac{-\i\(3\widehat\delta(\xi)^2+3\widehat\delta(\xi)^{-2}-1\)}
{48\widehat\gamma(\xi)^2\,\eta(\xi; \mu)^{3/2}}
\cdot
\frac{\sqrt{\mu}\ \d\xi}{(\xi-z)^2}.
\end{split}
\end{equation}}
These integrals can be computed by computing the residues of the integrands. To perform the computations, we set $\xi = a(\mu) + (a(\mu)-\mu)v,$
and find, using
\eqref{phimuzeta}, \eqref{delta}, \eqref{deltahat}, \eqref{def:gammahat} that
\begin{equation}
\label{details_a}
\begin{split}
&\widehat\gamma(\xi)=\left(\frac{2\mu v}{(a(\mu)+\mu)(1+v)}\right)^{1/4},
\\
&\widehat\delta(\xi)^{\pm 2}
=
1 \mp 2\alpha\,\frac{\sqrt{a(\mu)}-\sqrt{\mu}}{\sqrt{a(\mu)}}\sqrt{v}
+
2 \alpha^2 \, \frac{(\sqrt{a(\mu)}-\sqrt{\mu})^2}{a(\mu)} v
+\mathcal{O}(v^{3/2}),\ v\to0,
\\
&
\eta(\xi; \mu)^{3/2}
=
\frac12 f(a(\mu); \mu)\,(a(\mu)-\mu)\sqrt{v}
\(1+\frac35 \(\frac{-1}{2}+\frac{\partial_zf(z;\mu)_{z=a(\mu)}(a(\mu)-\mu)}{f(a(\mu); \mu)}\)v+\mathcal{O}(v^{2})\), v\to 0,
\end{split}
\end{equation}
and
\begin{multline}\label{intermComp}
\frac{-\i\(3\widehat\delta(\xi)^2+3\widehat\delta(\xi)^{-2}-1\)}
{48\,\widehat\gamma(\xi)^2\,\eta(\xi;\mu)^{3/2}}
=
\frac{-5\i\sqrt{a(\mu)+\mu}}{24f(a(\mu);\mu)(a(\mu)-\mu)\,\sqrt{2\mu}}\cdot\frac{1}{v^2}
\\
\cdot
\(
1+\frac45v\(1+3\alpha^2
\(\sqrt{\frac{\mu}{a(\mu)}}-1\)^2 -\frac{3\partial_zf(z;\mu)_{z=a(\mu)}\,(a(\mu)-\mu)}{4f(a(\mu);\mu)}\)
+\mathcal{O}(v^2)
\)
\end{multline}
as $v\to0.$ Combining this with
\[
\frac{\d\xi}{(\xi-z)(\xi + \mu)}
=
\frac{(a(\mu)-\mu)\, \d v}{(a(\mu)-z)(a(\mu)+\mu)}
\(
1-
\(\frac{a(\mu)-\mu}{a(\mu)-z}+\frac{a(\mu)-\mu}{a(\mu)+\mu}\)v
+\mathcal{O}(v^2)
\),\quad v\to0,
\]
and substituting all this in \eqref{intermComp2},
we obtain after computing residues
an 
explicit expression for $F_1(z;\mu)$ and its $z-$derivative for $z$ outside of $U_a,$
\begin{multline*}
F_1(z;\mu) = 
\frac{-\i(z+\mu)\sqrt{a(\mu)+\mu}}{6\sqrt{2} f(a(\mu);\mu)(a(\mu)-z)(a(\mu)+\mu)}
\\
\cdot
\left[
1+
3\alpha^2\(\sqrt{\frac{\mu}{a(\mu)}}-1\)^2
-\frac{3\partial_zf(z;\mu)_{z=a(\mu)}\,(a(\mu)-\mu)}{4f(a(\mu); \mu)}
-\frac{5(a(\mu)-\mu)}{4(a(\mu)-z)}
-\frac{5(a(\mu)-\mu)}{4(a(\mu)+\mu)}
\right],
\end{multline*}
\begin{multline*}
\partial_zF_1(z;\mu)
=
\frac{-\i (z+\mu)\sqrt{a(\mu)+\mu}}
{6\sqrt{2}(a(\mu)-z)^2 f(a(\mu); \mu) }
\\
\cdot
\left[
1+
3\alpha^2\(\sqrt{\frac{\mu}{a(\mu)}}-1\)^2
-\frac{3\partial_zf(z;\mu)_{z=a(\mu)}\,(a(\mu)-\mu)}{4f(a(\mu); \mu)}
-\frac{5(a(\mu)-\mu)}{2(a(\mu)-z)}
\right].
\end{multline*}
Substituting $z=\mu$ in the latter, we obtain the statement of the lemma.
\end{proof}

\begin{remark}\label{rem_normalization}
{Though we do not need it in the present paper, it is natural to ask whether the above analysis allows one to obtain the asymptotics for the original function $Y$ from RH problem \ref{RHP_Y} itself, and not only for its logarithmic derivative $Y^{-1}\partial_z Y.$
So far we did not obtain the asymptotics of the function $S$ from the  RH problem of Section \ref{sect:RH:S}, but instead we studied the function $\widehat S$ from \eqref{def:hatS}, related to $S$ via the unknown quantity $S(-\mu).$
We can indeed obtain asymptotics for $S$, and thus for $Y$, using the expression $S(z)=\widehat S(\infty)^{-1} \widehat S(z),$ where the asymptotics for both factors on the right are known.}
\end{remark}

{

\begin{proposition}\label{prop:asRtilde}
As $n\to\infty$, uniformly for $\left(\frac{M}{\rho(\mu)n}\right)^2\leq \mu\leq \mu_0$, we have
\[\widetilde R(\mu)=I+\mathcal O(n^{-1}\rho(\mu)^{-1}\mu^{-1/2}),\qquad \widetilde R'(\mu)=
\mathcal O(n^{-1}\rho(\mu)^{-1}\mu^{-3/2}),\]
and
\begin{equation}\label{eq:R12prime}\widetilde R'_{12}(\mu)=
\frac{-i\sqrt{a(\mu)}}{8n\rho(\mu)\sqrt{2\mu}}\left(4\alpha^2 \mu^{-1}+{\frac{\partial_zf(z;\mu)_{z=\mu}}{\rho(\mu)}}+\mathcal O(\mu^{-1/2})\right).
\end{equation}
\end{proposition}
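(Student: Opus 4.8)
The plan is to run the Deift–Zhou small‑norm theory for the RH problem for $\widetilde R$ exactly as in Proposition~\ref{prop:asRhat}, and then to extract the precise leading behaviour of $\widetilde R'_{12}(\mu)$ from a residue computation on $\partial U_\mu$. First I would record the size of the jump: with $m$ chosen as above, \eqref{fcoeff}, the matching estimate \eqref{jump_Cs2}, the structure \eqref{JRrefined} of $J_{\widetilde R}$ on $\partial U_a$, the bound \eqref{eq:Festimatemu} for $F$ on $\partial U_\mu$, and the exponential smallness of the jumps on $\widetilde\Sigma_\pm$ and $(a(\mu)+\delta,+\infty)$ (which carries over from Lemma~\ref{lemma:JR2} because $F$ is uniformly bounded) together give, uniformly for $\left(\tfrac{M}{\rho(\mu)n}\right)^2\le\mu\le\mu_0$,
\[
\|J_{\widetilde R}-I\|_j=\mathcal O\!\left(\tfrac{1}{n\rho(\mu)\sqrt\mu}\right)\ (j=1,2,\infty),\qquad
\left\|\tfrac{J_{\widetilde R}(\xi)-I}{\xi-\mu}\right\|_1=\mathcal O\!\left(\tfrac{1}{n\rho(\mu)\sqrt\mu}\right).
\]
Applying the M\"obius map $w(z)=\tfrac{z-\mu}{z+\mu}$, which sends the normalisation point $-\mu$ to $\infty$, and arguing as in the proof of Proposition~\ref{prop:asRhat} (via \cite[Lemma~1]{DIKairy}), gives $\widetilde R(z)=I+\mathcal O\!\left(\tfrac{1}{n\rho(\mu)\sqrt\mu}\,(|w(z)|+1)^{-1}\right)$ off the jump contour; at $z=\mu$ (where $w=0$) this is the first claimed estimate. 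For the derivative one differentiates $\widetilde R(z)=I+\tfrac{z+\mu}{2\pi\i}\int_{\Sigma_{\widetilde R}}\tfrac{\widetilde R_-(\xi)(J_{\widetilde R}(\xi)-I)}{(\xi+\mu)(\xi-z)}\,d\xi$ to get $\widetilde R'(z)=\tfrac{1}{2\pi\i}\int_{\Sigma_{\widetilde R}}\tfrac{\widetilde R_-(\xi)(J_{\widetilde R}(\xi)-I)}{(\xi-z)^2}\,d\xi$; at $z=\mu$ the factor $(\xi-\mu)^{-2}$ is of size $\mu^{-2}$ on $\partial U_\mu$ (arclength $\mathcal O(\mu)$), which produces the bound $\mathcal O\!\left(\tfrac{1}{n\rho(\mu)\mu^{3/2}}\right)$, while the contributions of $\partial U_a$, $\widetilde\Sigma_\pm$, $(a(\mu)+\delta,+\infty)$, and the $\widetilde R_--I$ part are lower order — this is the second estimate.

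For the precise formula \eqref{eq:R12prime} I would isolate the $F$-contribution using $\widetilde R=\widehat R(I-F\sigma_+)$: this gives $\widetilde R'_{12}(\mu)=\widehat R'_{12}(\mu)-\partial_zF(\mu;\mu)\,\widehat R_{11}(\mu)-F(\mu;\mu)\,\widehat R'_{11}(\mu)$, and by \eqref{eq:Festimatemu}, Lemma~\ref{lem_F1} together with its analogues for $F_j$, $j\ge2$ (which show $\partial_zF(\mu;\mu)=\mathcal O(\tfrac1{n\sqrt\mu})$), and the estimates of the previous paragraph transported to $\widehat R$ (using that $F$ is small), the last two terms are $\mathcal O(\tfrac1{n\sqrt\mu})$; since $\rho(\mu)\sqrt\mu$ stays bounded on $(0,\mu_0]$, this is absorbed into the stated error $\tfrac{-\i\sqrt{a(\mu)}}{8n\rho(\mu)\sqrt{2\mu}}\mathcal O(\mu^{-1/2})$. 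It remains to compute $\widehat R'_{12}(\mu)=\tfrac{1}{2\pi\i}\int\tfrac{\widehat R_-(\xi)(J_{\widehat R}(\xi)-I)}{(\xi-\mu)^2}\,d\xi$; the $\partial U_a$-piece again contributes $\mathcal O(\tfrac1{n\sqrt\mu})$, absorbed, and the dominant piece comes from $\partial U_\mu$. There, by \eqref{JumpRhatExplicit} and \eqref{EB2},
\[
(J_{\widehat R}(z)-I)_{12}=-\frac{\i\,\widehat\gamma(z)^{-2}}{8n\,(-\zeta(z;\mu))^{1/2}}\bigl(\widehat\delta(z;\mu)^2+\widehat\delta(z;\mu)^{-2}+1\bigr)+\mathcal O\!\left(\tfrac{1}{n^2\mu\rho(\mu)^2}\right),
\]
and the local expansion \eqref{delta_s} of $\widehat\delta^{\pm2}$ together with one further term of the expansion \eqref{varphi_asymp_mu} of $\varphi$ near $\mu$ (obtained from \eqref{varphi}, which brings in $\partial_zf(z;\mu)_{z=\mu}$) show that, as a function of $z$ near $\mu$, the main term is analytic, of the form $-\tfrac{\i\sqrt{a(\mu)+\mu}}{8n\rho(\mu)(a(\mu)-\mu)\sqrt{2\mu}}\bigl(3+c(\mu)(z-\mu)+\cdots\bigr)$. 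Only the coefficient of $(z-\mu)$ survives the residue at $\xi=\mu$ against $(\xi-\mu)^{-2}$; it combines a term $4\alpha^2(\sqrt{a(\mu)/\mu}-1)^2$ coming from $\widehat\delta$ with a $\partial_zf(z;\mu)_{z=\mu}/\rho(\mu)$ term coming from $\varphi$, and after simplification for small $\mu$ (using Lemma~\ref{lemma:expaf} and $a(\mu)\to a_0$) one arrives at \eqref{eq:R12prime}.

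I expect the last step to be the main obstacle: organising the residue computation on $\partial U_\mu$ and, above all, checking that every discarded contribution — the higher-order terms of $\mathcal E_B$ in \eqref{EB}, the $\mathcal O(w^{3/2})$ remainder in \eqref{delta_s}, the corrections to $J_{\widetilde R}$ coming from $F$ on $\partial U_\mu$, the $\partial U_a$-pieces, and the terms involving $\widehat R_--I$ — remains within the error $\tfrac{-\i\sqrt{a(\mu)}}{8n\rho(\mu)\sqrt{2\mu}}\mathcal O(\mu^{-1/2})$ uniformly down to the lower endpoint $\mu\sim(M/(\rho(\mu)n))^2$, where many individual quantities are large. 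This uniform bookkeeping rests on the boundedness of $\rho(\mu)\sqrt\mu$ on $(0,\mu_0]$ together with the running assumption $n\rho(\mu)\sqrt\mu\ge M$.
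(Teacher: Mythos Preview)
Your proposal is correct and lands on the same residue computation as the paper: expanding $(J_{\widehat R}-I)_{12}$ on $\partial U_\mu$ via \eqref{JumpRhatExplicit}, \eqref{EB2}, \eqref{delta_s}, and the local expansion of $\zeta$ (equivalently of $\varphi$), and reading off the coefficient of $(z-\mu)$. The small-norm part (M\"obius map, integral representation, $L^1$ bound weighted by $(\xi-\mu)^{-1}$) is handled exactly as in Proposition~\ref{prop:asRhat}, just with $\rho(\mu)$ inserted.

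The only organisational difference is how you strip off the $F$-contribution. You pass from $\widetilde R$ back to $\widehat R$ via $\widetilde R=\widehat R(I-F\sigma_+)$, obtain $\widetilde R'_{12}(\mu)=\widehat R'_{12}(\mu)-\partial_zF(\mu)\widehat R_{11}(\mu)-F(\mu)\widehat R'_{11}(\mu)$, bound the last two terms, and then compute $\widehat R'_{12}(\mu)$ from the Cauchy representation of $\widehat R$. The paper instead stays with $\widetilde R$ throughout: it writes $\widetilde R'(\mu)=\tfrac{1}{2\pi\i}\int_{\partial U_\mu}\tfrac{J_{\widetilde R}(\xi)-I}{(\xi-\mu)^2}\,d\xi+\mathcal O(n^{-2}\rho(\mu)^{-2}\mu^{-3/2})$ (having split off the $(\widetilde R_--I)$ piece and discarded the contribution of $\Sigma_R\setminus\partial U_\mu$), and only then replaces $J_{\widetilde R}$ by $J_{\widehat R}$ on $\partial U_\mu$ using \eqref{eq:Festimatemu}. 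The paper's route is a bit more economical, since it never needs to control $\widehat R_-$ on $\partial U_a$ in the extended range (where $\widehat R_{-,12}$ can be as large as $\mathcal O(\tfrac{1}{n\sqrt\mu})$ via the $F$-term); your route still works because that large entry multiplies $(J_{\widehat R}-I)_{22}=\mathcal O(n^{-1})$, but you should make that cancellation explicit rather than relying on ``$F$ small''.
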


\begin{proof}
The estimates
\[\widetilde R(\mu)=I+\mathcal O(n^{-1}\rho(\mu)^{-1}\mu^{-1/2}),\qquad \widetilde R'(\mu)=\mathcal O(n^{-1}\rho(\mu)^{-1}\mu^{-3/2})\]
are proved similarly as in Proposition \ref{prop:asRhat}, and we omit the details.
Also similarly as in Proposition \ref{prop:asRhat}, one shows that
$R_-(z)-I=\mathcal O(n^{-1}\rho(\mu)^{-1}\mu^{-3/2}(z+\mu))$.
It remains to compute the leading order term in $\widetilde R'(\mu)$ explicitly.
For this, we use the expression
\begin{equation}\label{Rtildex}\widetilde R'(z) = \frac{1}{2\pi\i}\int\limits_{\Sigma_{\widetilde R}}\frac{\widetilde R_-(\xi)(J_{\widetilde R}(\xi)-I)\ \d\xi}{(\xi-z)^2},
\qquad
z\in\mathbb{C}\setminus\Sigma_R.
\end{equation}
For $z=\mu$, the contribution of $\Sigma_R\setminus\partial U_\mu$ to this integral is $\mathcal O(\frac{1}{n^2{\mu^{3/2}}\rho(\mu)^2})$, so we have
\begin{equation}\label{Rtildex2}\widetilde R'(\mu) = \frac{1}{2\pi\i}\int\limits_{\partial U_\mu}\frac{(\widetilde R_-(\xi)-I)(J_{\widetilde R}(\xi)-I)\ \d\xi}{(\xi-\mu)^2}+\frac{1}{2\pi\i}\int\limits_{\partial U_\mu}\frac{(J_{\widetilde R}(\xi)-I)\ \d\xi}{(\xi-\mu)^2}+\mathcal O\left(\frac{1}{n^2{\mu}^{3/2}\rho(\mu)^2}\right).
\end{equation}
Next, we use the above estimate for $R_-(z)-I$, $\frac{\xi+\mu}{\xi-\mu}=\mathcal O(1)$, and an estimate similar to \eqref{integral} for $J_{\widetilde R}-I$ to conclude that the first term in this expression is also 
$\mathcal O\left(n^{-2}\rho(\mu)^{-2}\mu^{-3/2}\right)$, so 
we have
\begin{multline}\label{Rtildex3}\widetilde R'(\mu) = \frac{1}{2\pi\i}\int\limits_{\partial U_\mu}\frac{(J_{\widetilde R}(\xi)-I)\ \d\xi}{(\xi-\mu)^2}+\mathcal O\left(\frac{1}{n^2{\mu}^{3/2}\rho(\mu)^2}\right)\\= \frac{1}{2\pi\i}\int\limits_{\partial U_\mu}\frac{(J_{\widehat R}(\xi)-I)\ \d\xi}{(\xi-\mu)^2}+\mathcal O\left(\frac{1}{n^2{\mu}^{3/2}\rho(\mu)^2}\right),\end{multline}
where we used \eqref{eq:Festimatemu} in the second step.
\medskip

We will now use a residue computation to evaluate the $(1,2)$-entry of the remaining integral at the right.
By \eqref{JumpRhatExplicit} and \eqref{EB2}, we have
\[J_{\widehat R}(z)-I=
\frac{1}{n\sqrt{-\zeta(z;\mu)}}\widehat\gamma(z)^{-\sigma_3}\sP\widehat\delta(z)^{-\sigma_3}\sP^{-1}
\begin{pmatrix} 0 &\frac{3\i}{8}\\\frac{\i}{8}&0 \end{pmatrix}
\sP\widehat\delta(z)^{\sigma_3}\sP^{-1} \widehat\gamma(z)^{\sigma_3}+\mathcal O(n^{-2}\zeta(z;\mu)^{-2})\]
for $z\in\partial U_\mu$, and
straightforward computations yield
\begin{equation}\label{A}
J_{\widehat R,12}(z) 
=
\frac{-i}
{8n\sqrt{-\zeta(z)}\widehat\gamma(z)^2}
\(\widehat\delta(z)^2+\frac{1}{\widehat\delta(z)^2}+1\)+\mathcal O(n^{-2}\zeta(z;\mu)^{-2}),
\end{equation}

Recall the change of variable $z = \mu-(a(\mu)-\mu)w$ from \eqref{eq:changevarw}.
By \eqref{varphi},
\begin{equation}\label{expansionzeta}
\sqrt{-\zeta(z;\mu)}
=\rho(\mu)\,(a(\mu)-\mu)\,\sqrt{w}
\(1+\(\frac16-\frac{\partial_zf(z;\mu)_{z=\mu}\,(a(\mu)-\mu)}{3\,f(\mu;\mu)}\)w+\mathcal{O}(w^2)\)\ \mbox{ as }
w\to0.\end{equation}
By \eqref{delta_s}, we obtain
\[
\widehat\delta(z)^2+\widehat\delta(z)^{-2} = 2 +4\alpha^2\(\frac{\sqrt{a(\mu)}}{\sqrt{\mu}}-1\)^2w+\mathcal{O}(w^{2}),\qquad w\to0.
\]
Substituting all this and \eqref{def:gammahat} in \eqref{A} and using Lemma \ref{lemma:eq}, we get in particular that
\[J_{\widetilde R, 12}(z)=
\frac{-i\sqrt{a(\mu)}}{8n\rho(\mu)\sqrt{2\mu}}\left(h_0(\mu)+h_1(\mu)w+\mathcal O(w^2)\right),\]
uniformly in $\mu,n$ as $w\to 0$, where
\[h_0(\mu)=3+\mathcal O(\mu),\qquad h_1(\mu)=4\alpha^2 \mu^{-1}+\frac{\partial_zf(z;\mu)_{z=\mu}}{f(\mu;\mu)}+\mathcal O(\mu^{-1/2}),\qquad \mbox{uniformly in $\mu, n$}.\]
Substituting this in
\eqref{Rtildex3}, we obtain the result by computing the residue of the integrand, taking into account the change of variable $z\mapsto w$.

\end{proof}

\section{Asymptotic form of the differential identity}\label{section:9proofs}

\color{black}
In this section we compute the asymptotics of $\partial_{\mu}\ln H_n(\mu).$ 
To make the exposition uniform regardless of whether $\beta>1$ or $0<\beta\leq 1,$ we take $m>\max(0, \frac{1}{\beta}-1)$ (and hence, $m\geq1$), in the construction of the function $F$ in Section \ref{section:8RH3} (even though for $\beta>1$ we could also proceed with $m=0$).

Our objective in this section will be to prove the following result.
\begin{proposition}\label{theor_main}
Let $\beta>0, $ $\alpha\in\mathbb{R}$ be fixed parameters.
For any $\mu_0>0$, there exists $M>0$ such that we have uniformly in $\left(\frac{M}{\rho(\mu)n}\right)^2\leq\mu\leq\mu_0$ as $n\to\infty$ that
\[
\begin{split}
\partial_\mu\ln \widetilde H_n(\mu)&=D_2(\mu) n^2 + D_1(\mu) n+D_0(\mu)+r(\mu;n),
\end{split}\]
where
\[
\begin{split}
&D_2(\mu) = -\frac14 (a(\mu)-\mu) \rho(\mu)^2,
\qquad 
D_1(\mu) = 
\frac{\alpha}{2}
\rho(\mu)\(\frac{\sqrt{a(\mu)}}{\sqrt{\mu}}-1\),
\end{split}
\]
\[
\begin{split}
&
{D_0(\mu)
=
-\frac{1}{8}
\(2\alpha^2\mu^{-1}+\frac{{\partial_x f(x;\mu)_{x=\mu}}}{2\rho(\mu)}
\)}
,
\end{split}
\]
and
\[
\color{black}r(\mu;n) = \mathcal{O}\(\frac{1}{n \mu^{3/2} \,\rho(\mu)}\)+\mathcal O(\mu^{-1/2}).
\color{black}
\]
\end{proposition}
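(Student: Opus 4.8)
The plan is to feed the Riemann--Hilbert asymptotics of Sections \ref{section:6RH1}--\ref{section:8RH3} into the differential identity of Proposition \ref{prop_dif_ident}, used in the form \eqref{DI2}. First I would unwind the chain of transformations $Y\mapsto T\mapsto S\mapsto\widehat S\mapsto\widehat R\mapsto\widetilde R$ (combining \eqref{def:T}, \eqref{def:S}, \eqref{def:hatS}, \eqref{def:hatR}, \eqref{def:Rtilde} with the relation $\widehat R=\sigma_P\widehat S(\widehat P^{(\mu)})^{-1}$ valid inside $U_\mu$, and the parametrix \eqref{def:hatPs}, \eqref{EhatB}) to write, for $z\in U_\mu$ just to the left of $\mu$ (where no lens is present, so $S=T$),
\[
Y(z)=\Lambda\,\widetilde R(z)\,(I-F(z;\mu)\sigma_+)^{-1}\,\widehat E^{(\mu)}(z)\,\Psi_B(n^2\zeta(z;\mu))\,e^{-n(-\zeta(z;\mu))^{1/2}\sigma_3}\,z^{-\frac{\alpha}{2}\sigma_3}\,e^{-n\ell_\mu\sigma_3}\,e^{ng_+(z;\mu)\sigma_3},
\]
where $\Lambda=\Lambda(n,\mu)$ does not depend on $z$, and where I used $\widehat\delta(z)^{\sigma_3}\delta(z)^{-\sigma_3}=z^{-\frac{\alpha}{2}\sigma_3}$ from \eqref{deltahat} and $\varphi_+(z;\mu)-\pi\i=(-\zeta(z;\mu))^{1/2}$ near $\mu$ from \eqref{phimuzeta}. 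Since $(Y^{-1}\partial_zY)_{21}$ is invariant under left multiplication by $z$-independent matrices, $\Lambda$ drops out, and \eqref{DI2} expresses $\partial_\mu\ln\widetilde H_n(\mu)$ through $\widetilde R(\mu)$, $\widetilde R'(\mu)$, $F(\mu;\mu)$, $\partial_zF(z;\mu)|_{z=\mu}$, the explicit Bessel block $\Psi_B$, and the functions $g,\varphi,\zeta,\widehat\delta$ at $z=\mu$. (The weight is non-singular at $\mu>0$, so the identity \eqref{DI2} holds for all real $\alpha$, not only $\alpha>-1$.)

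Next I would evaluate the resulting expression at $x=\mu$, using $\zeta(\mu;\mu)=0$ and $\partial_z\zeta(z;\mu)|_{z=\mu}=\rho(\mu)^2(a(\mu)-\mu)$ from \eqref{derzeta} (Lemma \ref{lemma:eq}), $\widehat\delta(\mu;\mu)=1$, the local expansions \eqref{delta_s} of $\widehat\delta$ and \eqref{expansionzeta} of $(-\zeta)^{1/2}$, the explicit form \eqref{Psi_B} of $\Psi_B$ in terms of $I_0,K_0$, and the asymptotics of Propositions \ref{prop:asRhat} and \ref{prop:asRtilde} --- namely $\widetilde R(\mu)=I+\mathcal O(n^{-1}\rho(\mu)^{-1}\mu^{-1/2})$, $\widetilde R'(\mu)=\mathcal O(n^{-1}\rho(\mu)^{-1}\mu^{-3/2})$, and the refined expansion of $\widetilde R'_{12}(\mu)$ in \eqref{eq:R12prime} --- while $F(\mu;\mu)$ and $\partial_zF(z;\mu)|_{z=\mu}$ are controlled via Lemma \ref{lem_F1} and the uniform boundedness of $F$. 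The factors $\Psi_B(n^2\zeta)$ and $e^{-n(-\zeta)^{1/2}\sigma_3}$ are individually singular as $z\to\mu$ ($K_0$ has a logarithmic singularity at $0$, and $\partial_z(-\zeta)^{1/2}\sim(-\zeta)^{-1/2}$), but $w_n(\mu)(Y^{-1}\partial_xY)_{21}|_{x=\mu}$ is finite by \eqref{DI1}; I would make the cancellation explicit using the Bessel Wronskian $I_0(t)K_0'(t)-I_0'(t)K_0(t)=-1/t$, which lets one reorganize the relevant matrix product into a manifestly regular object before letting $x\to\mu$.

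Organizing the outcome by powers of $n$ then produces the three coefficients. The $n^2$ contribution comes from differentiating $e^{-n(-\zeta)^{1/2}\sigma_3}$ against the leading behaviour of the Bessel block; it is proportional to $n^2\partial_z\zeta(z;\mu)|_{z=\mu}=n^2\rho(\mu)^2(a(\mu)-\mu)$ and yields $D_2(\mu)=-\tfrac14(a(\mu)-\mu)\rho(\mu)^2$. The $n^1$ contribution is generated by the $\sqrt{w}$-term $\alpha(\sqrt{a(\mu)}/\sqrt{\mu}-1)$ in the expansion \eqref{delta_s} of $\widehat\delta$, interacting with $\varphi$, and gives $D_1(\mu)=\tfrac{\alpha}{2}\rho(\mu)(\sqrt{a(\mu)}/\sqrt{\mu}-1)$. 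The $n^0$ contribution comes entirely from the leading part of $\widetilde R'_{12}(\mu)$ in \eqref{eq:R12prime}: the explicit prefactor multiplying $\widetilde R'_{12}(\mu)$ in $\partial_\mu\ln\widetilde H_n(\mu)$ is of order $n\rho(\mu)\sqrt{\mu}$, which cancels the $\bigl(n\rho(\mu)\sqrt{\mu}\bigr)^{-1}$ in \eqref{eq:R12prime} and turns the bracket there into $D_0(\mu)=-\tfrac18\bigl(2\alpha^2\mu^{-1}+\partial_x f(x;\mu)_{x=\mu}/(2\rho(\mu))\bigr)$. All remaining terms --- the $\mathcal O(\mu^{-1/2})$ tail inside \eqref{eq:R12prime}, the subleading terms of \eqref{delta_s} and \eqref{expansionzeta}, and the $\mathcal O(n^{-1}\rho(\mu)^{-1}\mu^{-3/2})$ errors from $\widetilde R(\mu),\widetilde R'(\mu)$ --- combine into $r(\mu;n)=\mathcal O\bigl(\tfrac{1}{n\mu^{3/2}\rho(\mu)}\bigr)+\mathcal O(\mu^{-1/2})$.

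The main obstacle is the combination of exact algebra and uniform error control. On the algebraic side one must carry through the $2\times2$ products involving $\sigma_P$, $\widehat\gamma$, $\widehat\delta^{\sigma_3}$, $(-n^2\zeta)^{\sigma_3/4}$ and the Bessel block, isolate the $(2,1)$-entry, verify the exact cancellation of the logarithmic and $(-\zeta)^{-1/2}$ singularities at $z=\mu$ (this is where \eqref{Psi_B} and the Wronskian identity are indispensable), and only then match the numerical constants against $D_2,D_1,D_0$. On the analytic side, every estimate must be propagated uniformly down to $\mu\gtrsim(M/(\rho(\mu)n))^2$; in particular the non-decaying term $\mathcal O(\mu^{-1/2})$ in $r(\mu;n)$ --- which originates from the subleading expansions near $z=\mu$ and from the last term of \eqref{eq:R12prime} --- is the genuine bottleneck of the whole argument and must be tracked consistently rather than absorbed prematurely.
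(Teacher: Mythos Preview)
Your proposal is correct and follows essentially the same route as the paper: unwind the chain $Y\mapsto T\mapsto S\mapsto\widehat S\mapsto\widehat R\mapsto\widetilde R$ inside $U_\mu$, plug into \eqref{DI2}, and read off $D_2,D_1,D_0$ from the explicit parametrix and from Proposition \ref{prop:asRtilde}. The paper organizes the computation as the four--term decomposition \eqref{sum_dec},
\[
\Psi^{-1}\Psi' \;+\; \Psi^{-1}E^{-1}E'\Psi \;+\; \Psi^{-1}E^{-1}Q^{-1}Q'E\Psi \;+\; \Psi^{-1}E^{-1}Q^{-1}\widetilde R^{-1}\widetilde R'\,QE\Psi,
\]
and handles the four pieces in Lemmas \ref{propA1}--\ref{lem_Rtilde}; these give precisely $D_2 n^2$, $D_1 n$, $\mathcal O(\rho(\mu))$, and $D_0+\text{error}$, respectively.

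There is one simplification you overlook which makes the singularity--cancellation worry disappear entirely. The diagonal block to the right of $\Psi_B$ in \eqref{def:hatPs}, namely $e^{-n(\varphi\mp\pi\i)\sigma_3}\widehat\delta^{\sigma_3}$, equals $I$ at $z=\mu$ (by \eqref{phippphim} and $\widehat\delta(\mu)=1$). Since it is diagonal, its logarithmic derivative contributes nothing to the $(2,1)$ entry, and conjugation by it at $z=\mu$ is trivial; the paper therefore drops it before differentiating (see the passage leading to \eqref{exprlogder}). Consequently no Bessel Wronskian or cancellation of $(-\zeta)^{-1/2}$--type singularities is needed: the $(2,1)$ entry of $\Psi_B^{-1}\Psi_B'$ at $\zeta=0$ involves only the \emph{first} column of $\Psi_B$ (the $I_0$--column, regular at $0$), and the $n^2$ factor arises from the chain rule $\partial_x\Psi_B(n^2\zeta)=n^2\zeta'\,\Psi_B'$, not from differentiating $e^{-n(-\zeta)^{1/2}\sigma_3}$. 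With this observation your computation becomes exactly the paper's, and the error bookkeeping you describe matches the one carried out there.
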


{
\begin{remark}\label{D_0:full}
With more effort, we can prove a slightly stronger version of the above result: if we replace $
D_0(\mu)$ by
\[\begin{split}
&
D_0^{\mathrm{full}}(\mu)
=
\frac{1}{8(a(\mu)-\mu)}
\(1-2\alpha^2\(\sqrt{\frac{a(\mu)}{\mu}}-1\)^2-\frac{(a(\mu)-\mu)\,{\partial_x}f(x;\mu)_{x=\mu}}{2f(\mu;\mu)}
\)
\\
&
\hskip1.3cm
-\frac18\frac{f(\mu;\mu)}{f(a(\mu) ;\mu)\,(a(\mu)-\mu)}
\(
1-2\alpha^2\(1-\sqrt\frac{\mu}{a(\mu)}\)^2+\frac{(a(\mu) -\mu)\,{\partial_x}f(x;\mu)_{x=a(\mu)}}{2f(a(\mu);\mu)}
\),
\end{split}
\]
then the error terms in $r(\mu;n)$ are improved to $\mathcal{O}\(\frac{1}{n\mu^{3/2}\rho(\mu)}\).$ This allows us to prove the estimate referred to in Remark \ref{remark:strongerestimate1}.
\end{remark}
}

Recall the expression \eqref{DI2} for $\partial_\mu  \ln H_n(\mu)$ in terms of the RH solution $Y$, and the transformations $Y\mapsto T\mapsto S\mapsto \widehat S\mapsto \widehat R\mapsto \widetilde R$ in the RH analysis, defined in \eqref{def:T}, \eqref{def:S}, \eqref{def:hatS}, \eqref{def:hatR}, and \eqref{def:Rtilde}. 
Inverting these transformations, we can express $Y$ in terms of $\widetilde R$ for $z\to\mu_-$ as follows:
\begin{equation}\label{Y_final}
Y(z) = \underbrace{\delta(\infty)^{\sigma_3}\e^{n\ell_{\mu}\sigma_3}S(-\mu;\mu)\sP^{-1}}_{\textrm{matrix independent of $z$}}
\cdot
\widetilde R(z)(1+F(z)\sigma_+)\widehat P^{(\mu)}(z)\e^{ng(z;\mu)\sigma_3}
\underbrace{\e^{-n\ell_{\mu}\sigma_3}\delta(z)^{-\sigma_3}}_{\textrm{diagonal matrix}}.
\end{equation}
Substituting this expression in \eqref{DI2} and 
 deducing from \eqref{deltainf}, \eqref{def:phi}, \eqref{phippphim}
 that $\delta(\mu_-) = \mu^{\alpha/2}$, $g(\mu_-\pm\i0)=\frac12\mu^{\beta/2}+\ell_{\mu}\pm\pi\i,$ 
 we obtain the identity

\begin{equation}\label{Hankel_R} \partial_\mu\ln \widetilde H_n(\mu)
=
\dfrac{-1}{2\pi\i}\(\( \widetilde R(x;\mu)Q(x)  \widehat P^{(\mu)}(x;\mu)\)^{-1}\cdot\partial_x\( \widetilde R(x;\mu)Q(x) \widehat P^{(\mu)}(x;\mu)\)\)_{21}\Big|_{x=\mu_-},
\end{equation}
where $\widehat P^{(\mu)}$ is given by \eqref{def:Rtilde} and \eqref{def:hatPs},
 and $Q$ is given by $Q(z)=I+F(z)\sigma_+$.

Next, by \eqref{def:hatPs} and \eqref{EhatB}, we have 
\[\begin{split} 
\widetilde R(x;\mu) Q(x;\mu)\widehat P^{(\mu)}(x;\mu) =&  \widetilde R(x;\mu)Q(x;\mu)\widehat E^{(\mu)}(x)\Psi_B(n^2\zeta(x;\mu)) \e^{-n(\varphi(x)\mp\pi\i)\sigma_3}\widehat\delta(x)^{\sigma_3}.
\end{split}
\]

By \eqref{phippphim} and \eqref{deltahat}, the diagonal matrix at the right of $\Psi_B$ is equal to $I$ at $z=\mu$. Substituting this into \eqref{Hankel_R}, we get
\begin{equation} 
\begin{split}
&\partial_\mu\ln \widetilde H_n(\mu)
=\\
&=
\dfrac{-1}{2\pi\i}\(\( \widetilde R(x;\mu)Q(x)\widehat E^{(\mu)}(x)\Psi_B(n^2\zeta(x;\mu))\)^{-1}\cdot\partial_x\( \widetilde R(x;\mu) Q(x)\widehat E^{(\mu)}(x)\Psi_B(n^2\zeta(x;\mu))\)\)_{21}\Big|_{x=\mu_-}.
\end{split}\label{exprlogder}
\end{equation}
%%%%%%%%%%
The expression between parentheses above decomposes into 
\begin{equation}\label{sum_dec}
\Psi^{-1}\partial_x\Psi \ +\ 
 \Psi^{-1} E^{-1}\partial_x E\cdot \Psi \ +\  
\Psi^{-1} E^{-1}  (Q^{-1}\partial_x  Q )  E \Psi \ +\ 
\Psi^{-1} E^{-1} Q^{-1}  \widetilde R^{-1}\partial_x  
\widetilde R \cdot  Q E \Psi,
\end{equation}
where we abbreviated
\[\Psi:=\Psi_B(n^2\zeta(x;\mu)),\qquad E:=\widehat E^{(\mu)}(x),\qquad Q:=Q(x;\mu).\]
We will compute the first two terms in the above expression in Lemma's \ref{propA1} and \ref{propB1}, and estimate the third term in Lemma \ref{propABQ}. The fourth term will be computed in Lemma \ref{lem_Rtilde}.

\begin{lemma}\label{propA1}
We have
\[\frac{-1}{2\pi\i}\left(\Psi^{-1}\partial_x\Psi
\right)_{21}\Big|_{x=\mu_-}
=D_2(\mu)n^2
\quad\mbox{
and }\quad
\Psi
\begin{pmatrix}1\\0\end{pmatrix}\Big|_{x=\mu_-} = \sqrt{\pi}
\begin{pmatrix}1\\0\end{pmatrix}.
\]
\end{lemma}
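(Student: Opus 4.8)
The plan is to reduce both assertions to the explicit Bessel matrix $\widehat\Psi$ introduced just above \eqref{Psi_B}, together with a one-term local expansion at $\zeta=0$. First I would fix the branch near $x=\mu_-$: for $x\in(\mu-\delta\mu,\mu)$, Lemma \ref{lemma:eq} together with \eqref{phimuzeta} and \eqref{varphi_asymp_mu} shows that $\zeta(x;\mu)$ is real and negative with $\zeta(\mu;\mu)=0$ and $\partial_x\zeta(\mu;\mu)=\rho(\mu)^2(a(\mu)-\mu)>0$ (this last value is \eqref{derzeta}). Hence $n^2\zeta(x;\mu)$ approaches $0$ along the negative real axis, which lies in the sector $\arg\zeta\in(\theta,\pi]$ where $\Psi_B=\widehat\Psi$ by \eqref{Psi_B}, so near $x=\mu_-$ we may replace $\Psi$ by $\widehat\Psi(n^2\zeta(x;\mu))$. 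A key observation is that the first column of $\widehat\Psi$, namely $\bigl(\sqrt\pi\,I_0(\sqrt{-\zeta}),\ -\i\sqrt\pi\,\sqrt{-\zeta}\,I_0'(\sqrt{-\zeta})\bigr)^{T}$, is an entire function of $\zeta$ near $0$, since $I_0(\sqrt{-\zeta})$ and $\sqrt{-\zeta}\,I_0'(\sqrt{-\zeta})=\sqrt{-\zeta}\,I_1(\sqrt{-\zeta})$ are both even in $\sqrt{-\zeta}$; only the $K_0$-entries of $\widehat\Psi$ carry the logarithmic singularity at the origin.

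The second identity is then immediate: $\Psi\begin{pmatrix}1\\0\end{pmatrix}\big|_{x=\mu_-}$ is the first column of $\widehat\Psi$ at $\zeta=0$, and using $I_0(0)=1$ and $\sqrt{-\zeta}\,I_0'(\sqrt{-\zeta})=-\frac{\zeta}{2}+\mathcal O(\zeta^2)\to0$, it equals $\sqrt\pi\begin{pmatrix}1\\0\end{pmatrix}$.

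For the first identity I would use $\det\Psi_B\equiv1$ (a standard consequence of the RH conditions for $\Psi_B$, equivalently the Bessel Wronskian $I_0(w)K_0'(w)-I_0'(w)K_0(w)=-1/w$) to write
\[
\bigl(\Psi^{-1}\partial_x\Psi\bigr)_{21}=-\Psi_{21}\,\partial_x\Psi_{11}+\Psi_{11}\,\partial_x\Psi_{21},
\]
which depends only on the (analytic) first column of $\Psi$, so the logarithmic singularity of $\widehat\Psi$ at $\zeta=0$ is harmless. By the chain rule this equals $n^2\,\partial_x\zeta(x;\mu)\bigl(-\widehat\Psi_{21}\,\partial_\zeta\widehat\Psi_{11}+\widehat\Psi_{11}\,\partial_\zeta\widehat\Psi_{21}\bigr)$ evaluated at $\zeta=n^2\zeta(x;\mu)$. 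Letting $x\to\mu_-$, so $\zeta\to0$, and inserting the expansions $\widehat\Psi_{11}=\sqrt\pi\bigl(1-\frac{\zeta}{4}+\mathcal O(\zeta^2)\bigr)$ and $\widehat\Psi_{21}=\frac{\i\sqrt\pi}{2}\zeta+\mathcal O(\zeta^2)$ — so that $\widehat\Psi_{11}(0)=\sqrt\pi$, $\widehat\Psi_{21}(0)=0$, $\partial_\zeta\widehat\Psi_{21}(0)=\frac{\i\sqrt\pi}{2}$ — the bracket at $\zeta=0$ is $\sqrt\pi\cdot\frac{\i\sqrt\pi}{2}=\frac{\i\pi}{2}$. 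Combined with $\partial_x\zeta(\mu;\mu)=\rho(\mu)^2(a(\mu)-\mu)$, this gives $\bigl(\Psi^{-1}\partial_x\Psi\bigr)_{21}\big|_{x=\mu_-}=\frac{\i\pi}{2}\rho(\mu)^2(a(\mu)-\mu)\,n^2$, and multiplying by $-\frac{1}{2\pi\i}$ produces $-\frac14(a(\mu)-\mu)\rho(\mu)^2\,n^2=D_2(\mu)n^2$.

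The computation is essentially routine; the only points that need care are checking that the approach $x\to\mu_-$ really lands in the sector where $\Psi_B=\widehat\Psi$ with $n^2\zeta(x;\mu)\to0$ there, and observing that, thanks to $\det\Psi_B\equiv1$, the $(2,1)$-entry of $\Psi^{-1}\partial_x\Psi$ only involves the analytic first column of $\Psi_B$, so the Bessel parametrix's logarithmic singularity at the origin does not obstruct the argument.
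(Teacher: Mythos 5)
Your proof is correct and follows essentially the same route as the paper's: both use $\det\Psi_B\equiv1$ to reduce the $(2,1)$-entry of $\Psi^{-1}\partial_x\Psi$ to the (entire) first column of $\widehat\Psi$, expand that column to first order at $\zeta=0$, and combine with $\partial_z\zeta(\mu;\mu)=\rho(\mu)^2(a(\mu)-\mu)$ via the chain rule. The only difference is that you make explicit the branch verification (that $x\to\mu_-$ gives $\zeta<0$, landing in the sector where $\Psi_B=\widehat\Psi$), which the paper leaves implicit.
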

%%%%%%%%%%%
\begin{proof}
We have 
\[\Psi^{-1}\partial_x\Psi = 
\Psi_B(n^2\zeta(x;\mu))^{-1}\partial_x\Psi_B(n^2\zeta(x;\mu))
=
n^2\partial_x\zeta(x;\mu)\Psi_B(n^2\zeta(x;\mu))^{-1}\Psi_B'(n^2\zeta(x;\mu)).
\]
Furthermore, since $\Psi$  has determinant $1$, only the first column contributes to the $(2,1)$ entry of $\Psi_B^{-1}\Psi'_B.$
Recall from \eqref{Psi_B} that the first column of $\Psi_B$ can be expanded as $\zeta\to 0_+$ in the following way,
\[\Psi_B(\zeta )\begin{pmatrix}1\\0\end{pmatrix} = 
\sqrt{\pi}
\begin{pmatrix}
I_0(\sqrt{-\zeta }) \\-\i\sqrt{-\zeta }I_0'(\sqrt{-\zeta })
\end{pmatrix}
=\sqrt{\pi}
\begin{pmatrix}
1-\frac{\zeta }{4}+\mathcal{O}(\zeta^2)
\\
\frac{\i}{2}\zeta +\mathcal{O}(\zeta^2)
\end{pmatrix},
\]
where $\mathcal{O}$-terms can be differentiated.
Hence 
\[
\left(
\Psi_B(0_+)^{-1}\Psi_B'(0_+)
\right)_{21}
=
\begin{pmatrix}0&1\end{pmatrix}\Psi_B(0_+)^{-1}\Psi_B'(0_+)\begin{pmatrix}1\\0\end{pmatrix}
= 
\pi\begin{pmatrix}
0&1
\end{pmatrix}
\begin{pmatrix}
-\frac{\i}{4}\\\frac{\i}{2}
\end{pmatrix}
=
\frac{\pi\i}{2}.
\]
Furthermore, recall from 
\eqref{phimuzeta} and \eqref{derzeta}
 that $$\zeta(x;\mu) = \rho(\mu)^2(a(\mu)-\mu)(x-\mu)(1+\mathcal{O}(x-\mu))\qquad \mbox{as $x\to\mu$},$$
 where the $\mathcal{O}$-term can be differentiated, and thus 
 $$\partial_z\zeta(x;\mu)_{x=\mu}= \rho(\mu)^2(a(\mu)-\mu),$$
By the chain rule, this easily yields the statement of the proposition.
\end{proof}
\noindent We proceed with the second term in \eqref{sum_dec}.
\begin{lemma}\label{propB1}
We have
\[\frac{-1}{2\pi\i}\left(\Psi^{-1} E^{-1}(\partial_x E) \Psi\right)_{21}\Big|_{x=\mu_-}
=D_1(\mu)n\]
and
\[
E\Psi\begin{pmatrix}1\\0\end{pmatrix}\Big|_{x=\mu_-}
=
\i\sqrt{\pi \rho(\mu)(a(\mu)-\mu)n}\,
\left(\frac{2\mu}{a(\mu)+\mu}\right)^{1/4}\begin{pmatrix}
0 \\ 1
\end{pmatrix}.
\]
\end{lemma}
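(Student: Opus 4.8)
The plan is to compute explicitly the matrix $E^{-1}\partial_x E$ at $x=\mu$ and to evaluate the product $E\Psi\binom{1}{0}$ there, using the explicit form of $\widehat E^{(\mu)}$ from \eqref{EhatB} together with the small-argument expansion of the Bessel model solution $\Psi_B$. First I would recall that, by \eqref{EhatB},
\[
E=\widehat E^{(\mu)}(z)=\widehat P^{(\infty)}(z)\widehat\delta(z;\mu)^{-\sigma_3}\sP\,(-n^2\zeta(z;\mu))^{\sigma_3/4},
\]
so that $E$ is a product of a $z$-independent-at-$\mu$ boundedly invertible matrix (modulo the $(-n^2\zeta)^{\sigma_3/4}$ factor which blows up) and explicitly known factors. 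The key point, already used in Lemma \ref{propA1}, is that near $x=\mu$ one has $-\zeta(x;\mu)=\rho(\mu)^2(a(\mu)-\mu)(\mu-x)(1+\mathcal O(\mu-x))$ with differentiable error, and $\widehat\delta(z;\mu)=1+\alpha(\sqrt{a(\mu)/\mu}-1)\sqrt{w}+\tfrac12\alpha^2(\sqrt{a(\mu)/\mu}-1)^2 w+\mathcal O(w^{3/2})$ by \eqref{delta_s}, where $w=(\mu-z)/(a(\mu)-\mu)$. I would substitute these expansions and extract the $\sqrt{w}$-coefficient governing the leading behaviour of $E^{-1}\partial_x E$.

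The second displayed formula is the more concrete half and I would do it first. Using Lemma \ref{propA1}, $\Psi\binom{1}{0}\big|_{x=\mu_-}=\sqrt\pi\binom{1}{0}$. Then $E\Psi\binom{1}{0}=\sqrt\pi\,\widehat P^{(\infty)}(\mu)\widehat\delta(\mu;\mu)^{-\sigma_3}\sP\,(-n^2\zeta)^{\sigma_3/4}\binom{1}{0}$ formally; but $\zeta(\mu;\mu)=0$, so one must be careful and keep $x$ slightly to the left of $\mu$, compute $(-n^2\zeta(x;\mu))^{\sigma_3/4}\Psi_B(n^2\zeta(x;\mu))\binom{1}{0}$ using the first column expansion $\Psi_B(\zeta)\binom{1}{0}=\sqrt\pi\binom{1-\zeta/4+\mathcal O(\zeta^2)}{\tfrac{i}{2}\zeta+\mathcal O(\zeta^2)}$, and take the limit. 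Since $\zeta\to 0$ and $(-\zeta)^{1/4}\to 0$, the top entry of $(-n^2\zeta)^{\sigma_3/4}\Psi_B\binom{1}{0}$ behaves like $\sqrt\pi\,(-n^2\zeta)^{1/4}\to 0$, while the bottom entry behaves like $\sqrt\pi\,(-n^2\zeta)^{-1/4}\cdot\tfrac{i}{2}\zeta n^2\sim \tfrac{i\sqrt\pi}{2}n^{3/2}(-\zeta)^{3/4}\cdot(\text{wait: recheck powers})$ — the exponents must be tracked carefully: $(-n^2\zeta)^{-1/4}\cdot n^2\zeta=-n^{3/2}(-\zeta)^{3/4}$, and with $-\zeta\sim\rho(\mu)(a(\mu)-\mu)(\mu-x)$... actually the surviving limit comes out proportional to $n^{1/2}\sqrt{\rho(\mu)(a(\mu)-\mu)}$ after the dust settles, matching the claimed $\sqrt{\pi\rho(\mu)(a(\mu)-\mu)n}$. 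Multiplying on the left by $\widehat P^{(\infty)}(\mu)\widehat\delta(\mu;\mu)^{-\sigma_3}\sP$, using $\widehat\delta(\mu;\mu)=1$, $\widehat P^{(\infty)}(\mu)=\widehat\gamma(\mu)^{-\sigma_3}\sP$ with $\widehat\gamma(\mu)=(2\mu/(a(\mu)+\mu))^{1/4}$ times a branch factor, and $\sP\sigma_P=\bigl(\begin{smallmatrix}0&i\\ i&0\end{smallmatrix}\bigr)$ up to scalar (so $\binom{0}{1}\mapsto$ multiple of $\binom{1}{0}$, but here we start from $\binom{0}{1}$ by the above), I would land on the claimed vector $i\sqrt{\pi\rho(\mu)(a(\mu)-\mu)n}\,(2\mu/(a(\mu)+\mu))^{1/4}\binom{0}{1}$. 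The branch-of-square-root bookkeeping of $(-\zeta)^{1/4}$ and the $e^{\pm\pi i\sigma_3}$ ambiguity in $\varphi$ is where the factor of $i$ and the sign are pinned down.

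For the $(2,1)$-entry identity, I would write $E^{-1}\partial_x E=(-n^2\zeta)^{-\sigma_3/4}\sP^{-1}\widehat\delta^{\sigma_3}\bigl(\widehat P^{(\infty)}\bigr)^{-1}\partial_x\bigl(\widehat P^{(\infty)}\widehat\delta^{-\sigma_3}\sP\bigr)(-n^2\zeta)^{\sigma_3/4}+(-n^2\zeta)^{-\sigma_3/4}\partial_x(-n^2\zeta)^{\sigma_3/4}$. The last term is diagonal and hence does not contribute to the $(2,1)$-entry after conjugating $\Psi^{-1}(\cdot)\Psi$ only through the off-diagonal pieces — more precisely, since $\Psi$ has determinant $1$, $\bigl(\Psi^{-1}X\Psi\bigr)_{21}$ picks out a bilinear combination in the first column of $\Psi$ and the first row of $\Psi^{-1}$, and using the expansions of $\Psi_B$ near $0$ together with the conjugation by $(-n^2\zeta)^{\mp\sigma_3/4}$ one sees which matrix entries of $\widehat\delta^{\sigma_3}(\widehat P^{(\infty)})^{-1}\partial_x(\widehat P^{(\infty)}\widehat\delta^{-\sigma_3}\sP)$ survive the $n\to\infty$ limit. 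The off-diagonal entries of $\partial_x\widehat\delta^{-\sigma_3}$, which carry the factor $\partial_x\sqrt{w}\sim -\tfrac{1}{2\sqrt{w}(a(\mu)-\mu)}$ times $\alpha(\sqrt{a(\mu)/\mu}-1)$, are the ones that, after the $(-n^2\zeta)^{\pm\sigma_3/4}$-conjugation (which multiplies the $(2,1)$ entry by $(-n^2\zeta)^{1/2}\sim n\sqrt{\rho(\mu)(a(\mu)-\mu)}\sqrt{w}\sqrt{a(\mu)-\mu}$, cancelling the $1/\sqrt{w}$), produce the order-$n$ term; the scalar prefactor $-1/(2\pi i)$, the $\pi i/2$-type constant from $\Psi_B^{-1}\Psi_B'$ analogous to Lemma \ref{propA1}, and the $\rho(\mu)$ from $\partial_z\zeta$, all combine to give exactly $D_1(\mu)n=\tfrac{\alpha}{2}\rho(\mu)(\sqrt{a(\mu)/\mu}-1)\,n$.

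The main obstacle I expect is the careful tracking of branch cuts and the $\pm$ in $\varphi(z;\mu)\mp\pi i$ from \eqref{phimuzeta} across the two half-planes, combined with the fact that the natural small parameter $\sqrt{w}$ is conjugated by a diverging diagonal matrix $(-n^2\zeta)^{\sigma_3/4}$, so that one must expand $\Psi_B$ to the exactly right order (not one term too few) to capture the finite limit and not spuriously kill or over-count it; this is the same delicate balance that underlies the whole ``first refinement'' and is precisely why the normalization point was moved to $-\mu$. Once the leading coefficients of the relevant expansions ($\sqrt{w}$-coefficient of $\widehat\delta$, linear coefficient of $\zeta$, the $\Psi_B$ expansion at $0$) are in hand, the computation is a finite residue/limit calculation and both displayed identities follow.
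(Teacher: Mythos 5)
Your plan and the paper's proof both unpack $E=\widehat P^{(\infty)}\widehat\delta^{-\sigma_3}\sP(-n^2\zeta)^{\sigma_3/4}$ and rely on the expansions of $\widehat\delta$, $\widehat\gamma$, $\zeta$ at $z=\mu$ together with the second part of Lemma~\ref{propA1}. However, there are two concrete problems in your execution that would lead to a wrong answer.

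First, for the first identity, you split $E^{-1}\partial_x E = B^{-1}A^{-1}(\partial_x A)B + B^{-1}\partial_x B$ with $A=\widehat P^{(\infty)}\widehat\delta^{-\sigma_3}\sP$, $B=(-n^2\zeta)^{\sigma_3/4}$, and assert that the diagonal piece $B^{-1}\partial_x B$ ``does not contribute to the $(2,1)$-entry after conjugating $\Psi^{-1}(\cdot)\Psi$.'' This is false here. Writing $B^{-1}\partial_x B = \frac{\sigma_3}{4}\frac{\partial_x\zeta}{\zeta}$, one finds
\[
\left(\Psi^{-1}\frac{\sigma_3}{4}\frac{\partial_x\zeta}{\zeta}\Psi\right)_{21}=-\frac{\partial_x\zeta}{2\zeta}\,\Psi_{11}\Psi_{21},
\]
and near $x=\mu$ the factor $\Psi_{21}\sim \tfrac{i\sqrt\pi}{2}\,n^2\zeta$ vanishes at exactly the rate at which $\partial_x\zeta/\zeta$ diverges, so the limit is $-\tfrac{i\pi}{4}n^2\,\partial_x\zeta(\mu;\mu)\neq 0$: the ``diagonal'' term contributes at order $n^2$, which is \emph{larger} than the claimed $D_1(\mu)\,n$. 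This spurious $\mathcal O(n^2)$ piece must be cancelled by an equal and opposite $\mathcal O(n^2)$ piece hidden in $B^{-1}A^{-1}(\partial_x A)B$ (the $1/w$-singularity of $(A^{-1}\partial_x A)_{11}$ coming from $\partial_x\widehat\gamma^{-\sigma_3}$). Your argument, as written, would not see this cancellation and so overcounts by one power of $n$. The paper sidesteps this entirely: since $\det E = 1$, $(E^{-1}\partial_x E)_{21}=-E_{21}\partial_x E_{11}+E_{11}\partial_x E_{21}$ depends only on the \emph{first column} of $E$, which (after explicit computation, formula \eqref{insidepropB1}) is finite and regular at $z=\mu$ — no cancellation of divergences is needed, and $(\Psi^{-1}E^{-1}(\partial_x E)\Psi)_{21}|_{x=\mu}=\pi (E^{-1}\partial_x E)_{21}|_{x=\mu}$ then follows cleanly from Lemma~\ref{propA1}.

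Second, for the second identity, you treat $\widehat\gamma(\mu)=(2\mu/(a(\mu)+\mu))^{1/4}$ as a finite number and separately note that $(-n^2\zeta)^{\sigma_3/4}\Psi_B\binom{1}{0}\to\binom{0}{0}$; naively multiplying these limits would give zero, not the claimed nonzero vector. In fact $\widehat\gamma(z)\sim(2\mu/(a(\mu)+\mu))^{1/4}\,w^{-1/4}$ diverges as $w=(\mu-z)/(a(\mu)-\mu)\to 0$, and the finite, nonzero limit of $E\Psi\binom{1}{0}$ comes precisely from the product $\widehat\gamma(z)\,(-n^2\zeta(z;\mu))^{1/4}$ in which the $w^{\mp 1/4}$ factors cancel. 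The paper keeps all four factors of $E$ together and expands the first column jointly, so that only the already-cancelled, finite combination $\sqrt{n\rho(\mu)(a(\mu)-\mu)}\,(2\mu/(a(\mu)+\mu))^{1/4}$ appears; you would need to do the same. (As a side remark, the reason the normalization point was moved to $-\mu$ is to balance the jump-matrix sizes near $\mu$ and $a(\mu)$ in the small-norm RH problem for $\widehat R$, not to tame the local-parametrix expansion at $\mu$.)
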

%%%%%%%%%%%%%%
\begin{proof}
By the second part of Proposition \ref{propA1}, we obtain
\[\left(\Psi^{-1}E^{-1}(\partial_x E) \Psi\right)_{21}\Big|_{x=\mu_-}
=
\begin{pmatrix}0&1
\end{pmatrix}
\Psi^{-1}E^{-1}(\partial_x E) \Psi\begin{pmatrix}1\\0
\end{pmatrix}
\Big|_{x=\mu_-}
=
\pi\begin{pmatrix}0&1
\end{pmatrix}
E^{-1}\partial_x E\begin{pmatrix}1\\0
\end{pmatrix}\Big|_{x=\mu_-},
\]
and 
\[
E\Psi\begin{pmatrix}1\\0\end{pmatrix}\Big|_{x=\mu_-}
=\sqrt{\pi}E\begin{pmatrix}1\\0\end{pmatrix}\Big|_{x=\mu_-},
\]
and it remains to compute  $E$. Since $\det E=1,$ the $(2,1)$ entry of $E^{-1}\partial_x E$ involves only the first column of $E.$
By \eqref{EhatB}, we have 
\[
E
=
\widehat\gamma(z)^{-\sigma_3}\sP\widehat\delta(z)^{-\sigma_3}\sP
(-n^2{\zeta(z;\mu)})^{\sigma_3/4},
\]
and careful computations lead us to an expression for the first column, 
\begin{equation}\label{insidepropB1}
E\begin{pmatrix}1 \\ 0\end{pmatrix}
=
\frac{(-n^2{\zeta(z;\mu)})^{1/4}}{2}
\begin{pmatrix}
-
\frac{1}{\widehat\gamma(z)}(\widehat\delta(z)-\frac{1}{\widehat\delta(z)})
\\
\i\widehat\gamma(z)(\widehat\delta(z)+\frac{1}{\widehat\delta(z)})
\end{pmatrix}.
\end{equation}
To compute the limits of the expression above and its derivative when $z$ approaches $\mu_-$, we 
use \eqref{delta_s}, \eqref{def:gammahat}, and \eqref{expansionzeta} (with
$w=\frac{\mu-z}{a(\mu)-\mu}$). Substituting all these ingredients in \eqref{insidepropB1}, we obtain
\[
E\begin{pmatrix}1\\0\end{pmatrix}
=
\sqrt{nf(\mu;\mu)(a(\mu)-\mu)}
\begin{pmatrix}
-\left(\frac{a(\mu)+\mu}{2\mu}\right)^{1/4}\alpha\(\sqrt{\frac{a(\mu)}{\mu}}-1\) w + \mathcal{O}(w^2)
\\
\i\left(\frac{2\mu}{a(\mu)+\mu}\right)^{1/4}\(1
\hskip-1mm
+
\hskip-1mm
\(
\hskip-1mm
\frac{7}{24}-\frac{\partial_zf(z;\mu)_{z=\mu}\,(a(\mu)-\mu)}{12f(\mu;\mu)}
\hskip-1mm
+
\hskip-1mm
\frac{\alpha^2}{2}\frac{(\sqrt{a(\mu)}-\sqrt{\mu})^2}{\mu}\)w \right. 
\\
\hskip8.8cm
+ \mathcal{O}(w^2)
\Big)
\end{pmatrix},
\]
where the $\mathcal{O}$-terms can be differentiated.
Thus 
\[
\begin{split}
&
E\begin{pmatrix}1\\0
\end{pmatrix}\Big|_{w=0_+}
=
\i\sqrt{n f(\mu;\mu)(a(\mu)-\mu)}\left(\frac{2\mu}{a(\mu)+\mu}\right)^{1/4}\begin{pmatrix}0\\1
\end{pmatrix},
\\
&
\partial_wE\begin{pmatrix}1\\0
\end{pmatrix}\Big|_{w=+0}
=\sqrt{n f(\mu;\mu)(a(\mu)-\mu)}
\begin{pmatrix}
-\left(\frac{a(\mu)+\mu}{2\mu}\right)^{1/4}\alpha\(\sqrt{\frac{a(\mu)}{\mu}}-1\)
\\
\i\left(\frac{2\mu}{a(\mu)+\mu}\right)^{\hskip-0.5mm1/4}
\hskip-1.5mm \(\frac{7}{24}-\frac{\partial_zf(z;\mu)_{z=\mu}\,(a(\mu)-\mu)}{12f(\mu;\mu)}
+\frac{\alpha^2}{2}\frac{(\sqrt{a(\mu)}-\sqrt{\mu})^2}{\mu}\)
\end{pmatrix}\hskip-0.5mm,
\end{split}
\]
and hence 
\[\begin{split}
&\left(E^{-1}\partial_wE
\right)_{21}\Big|_{w=0_+}
=
\begin{pmatrix}0&1\end{pmatrix}
E^{-1}\partial_wE
\begin{pmatrix}1\\0\end{pmatrix}\Big|_{w=+0}
=\i n f(\mu;\mu)(a(\mu)-\mu)\left(\frac{2\mu}{a(\mu)+\mu}\right)^{1/4}
\\
&
\qquad\hskip3cm\cdot
\begin{pmatrix}-1&0\end{pmatrix}
\begin{pmatrix}
-\left(\frac{a(\mu)+\mu}{2\mu}\right)^{1/4}\alpha\(\sqrt{\frac{a(\mu)}{\mu}}-1\)
\\
\i\left(\frac{2\mu}{a(\mu)+\mu}\right)^{1/4} \(\frac{7}{24}-\frac{\partial_zf(z;\mu)_{z=\mu}\,(a(\mu)-\mu)}{12f(\mu;\mu)}
+\frac{\alpha^2}{2}\frac{(\sqrt{a(\mu)}-\sqrt{\mu})^2}{\mu}\)
\end{pmatrix}
\\
&=
\i n  f(\mu;\mu)(a(\mu)-\mu)\alpha\(\sqrt{\frac{a(\mu)}{\mu}}-1\).
\end{split}
\]
To finish the proof it is sufficient to observe that 
$\partial_z = \frac{-1}{a(\mu)-\mu}\partial_w.$
\end{proof}
Next, we deal with the third term in \eqref{sum_dec}.
\begin{lemma}\label{propABQ}
As $n\to\infty$, we have uniformly for $\left(\frac{M}{\rho(\mu)n}\right)^2\leq \mu\leq \mu_0$ that 
\[-\frac{1}{2\pi\i}\left(\Psi^{-1}E^{-1}Q^{-1}(\partial_xQ) E
\Psi\right)_{21}\Big|_{x=\mu_-}
=\mathcal O(\rho(\mu)).
\]
{Moreover, 
\[QE\Psi\begin{pmatrix}1\\0\end{pmatrix}\Big|_{x=\mu_-} = i\sqrt{\pi\rho(\mu)(a(\mu)-\mu)n}\sqrt[4]{\frac{2\mu}{a(\mu)+\mu}}\begin{pmatrix}F(\mu)\\1\end{pmatrix}.\]}
\end{lemma}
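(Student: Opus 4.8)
The plan is to prove Lemma \ref{propABQ} by exploiting the structure established in Lemmas \ref{propA1} and \ref{propB1} together with the estimate $F(z;\mu)=\mathcal O(\sqrt\mu/n)$ on $\partial U_\mu$ from \eqref{eq:Festimatemu} and the fact that $Q(z)=I+F(z)\sigma_+$ with $\sigma_+$ nilpotent. First I would note that $Q^{-1}=I-F\sigma_+$ and $Q^{-1}\partial_xQ=(\partial_xF)\sigma_+$, so that $\Psi^{-1}E^{-1}Q^{-1}(\partial_xQ)E\Psi=(\partial_xF)\,\Psi^{-1}E^{-1}\sigma_+E\Psi$ evaluated at $x=\mu_-$. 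Taking the $(2,1)$ entry and using $\det(E\Psi)=1$, only the first column of $E\Psi$ and the second row of $(E\Psi)^{-1}$ enter; more precisely $(\Psi^{-1}E^{-1}\sigma_+E\Psi)_{21}=-\big((E\Psi)^{-1}\big)_{21}\big((E\Psi)_{11}\big)$ up to reorganizing, but it is cleaner to write $\sigma_+=e_1e_2^{T}$ where $e_1,e_2$ are the standard basis column vectors, so that $(\Psi^{-1}E^{-1}\sigma_+E\Psi)_{21}=\big(e_2^{T}\Psi^{-1}E^{-1}e_1\big)\big(e_2^{T}E\Psi e_1\big)$. The second factor is exactly the quantity computed at the end of Lemma \ref{propB1}: $E\Psi\begin{pmatrix}1\\0\end{pmatrix}\big|_{x=\mu_-}=\i\sqrt{\pi\rho(\mu)(a(\mu)-\mu)n}\left(\tfrac{2\mu}{a(\mu)+\mu}\right)^{1/4}\begin{pmatrix}0\\1\end{pmatrix}$, so $e_2^{T}E\Psi e_1=\i\sqrt{\pi\rho(\mu)(a(\mu)-\mu)n}\left(\tfrac{2\mu}{a(\mu)+\mu}\right)^{1/4}$, which is $\mathcal O(\sqrt{n\rho(\mu)}\,\mu^{1/4})$.

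Next I would estimate the first factor $e_2^{T}(E\Psi)^{-1}e_1$ and the scalar $\partial_xF(x;\mu)\big|_{x=\mu}$. For the latter, differentiate \eqref{def:F}: $\partial_xF=\sum_{j=1}^m (n^j\sqrt\mu)^{-1}\partial_xF_j$, and since each $F_j$ is analytic in $\mathbb C\setminus\partial U_a$ with bounded derivative uniformly in $n$ and small $\mu$ (as recorded just after \eqref{JRrefined}), we get $\partial_xF(\mu;\mu)=\mathcal O(1/(n\sqrt\mu))$; in fact the explicit $\partial_zF_1(z;\mu)|_{z=\mu}$ from Lemma \ref{lem_F1} shows it is $\mathcal O(\mu^{-1/2}/n)$, but the crude bound suffices here. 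For the first factor, I would invert the relation in \eqref{insidepropB1} together with the $\Psi_B$ expansion used in Lemma \ref{propA1}: since $(E\Psi)^{-1}=\Psi^{-1}E^{-1}$ and $E^{-1}=(-n^2\zeta)^{-\sigma_3/4}\sP^{-1}\widehat\delta^{\sigma_3}\sP^{-1}\widehat\gamma^{\sigma_3}$, the relevant entry is controlled by $(-n^2\zeta(z;\mu))^{1/4}$ and by $\widehat\gamma(\mu),\widehat\delta(\mu)$, all of which by Lemma \ref{lemma:eq} and \eqref{delta_s}, \eqref{def:gammahat} are $\mathcal O\big((n^2\mu\rho(\mu)^2)^{1/4}\big)$ times bounded factors, i.e.\ $\mathcal O(\sqrt{n\rho(\mu)}\,\mu^{1/4})$, while the $\Psi_B$ contribution is $\mathcal O(1)$ as $\zeta\to0$. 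Multiplying the three pieces: $\partial_xF\cdot(e_2^{T}(E\Psi)^{-1}e_1)\cdot(e_2^{T}E\Psi e_1)=\mathcal O\!\big(\tfrac{1}{n\sqrt\mu}\big)\cdot\mathcal O(\sqrt{n\rho(\mu)}\,\mu^{1/4})\cdot\mathcal O(\sqrt{n\rho(\mu)}\,\mu^{1/4})=\mathcal O(\rho(\mu))$, which gives the first claim after dividing by $2\pi\i$.

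Finally, for the second assertion I would simply use $Q(z)\begin{pmatrix}0\\1\end{pmatrix}=\begin{pmatrix}F(z)\\1\end{pmatrix}$ (since $Q=I+F\sigma_+$ acts on $e_2$ by adding $F e_1$), apply $Q$ to the already-computed vector $E\Psi\begin{pmatrix}1\\0\end{pmatrix}\big|_{x=\mu_-}=\i\sqrt{\pi\rho(\mu)(a(\mu)-\mu)n}\left(\tfrac{2\mu}{a(\mu)+\mu}\right)^{1/4}\begin{pmatrix}0\\1\end{pmatrix}$ from Lemma \ref{propB1}, and obtain $QE\Psi\begin{pmatrix}1\\0\end{pmatrix}\big|_{x=\mu_-}=\i\sqrt{\pi\rho(\mu)(a(\mu)-\mu)n}\sqrt[4]{\tfrac{2\mu}{a(\mu)+\mu}}\begin{pmatrix}F(\mu)\\1\end{pmatrix}$, as stated. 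The main obstacle I anticipate is bookkeeping the powers of $n$ and $\mu$ carefully through the $2\times2$ conjugations by $\widehat\gamma^{\sigma_3}$, $\widehat\delta^{\sigma_3}$, $(-n^2\zeta)^{\sigma_3/4}$ and $\sP$ so that the blow-up $\mathcal O(\sqrt{n\rho(\mu)}\,\mu^{1/4})$ of both $E\Psi$ factors and the decay $\mathcal O(1/(n\sqrt\mu))$ of $\partial_xF$ combine exactly to $\mathcal O(\rho(\mu))$ — one has to be sure no factor of $\mu^{-1/2}$ or $\rho(\mu)^{1/2}$ is mishandled, since the final answer is uniform only in the regime $\big(\tfrac{M}{\rho(\mu)n}\big)^2\le\mu\le\mu_0$ where $n\rho(\mu)\sqrt\mu\ge M$ is large.
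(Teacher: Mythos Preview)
Your proposal is correct and follows essentially the same route as the paper: reduce $Q^{-1}\partial_xQ$ to $(\partial_xF)\sigma_+$, then use Lemma \ref{propB1} to identify $E\Psi\,e_1\big|_{x=\mu_-}=c\,e_2$ with $c=\i\sqrt{\pi\rho(\mu)(a(\mu)-\mu)n}\,(2\mu/(a(\mu)+\mu))^{1/4}$, and combine with $\partial_xF=\mathcal O(1/(n\sqrt\mu))$. The one place where the paper is slightly cleaner is your estimate of $e_2^{T}(E\Psi)^{-1}e_1$: you rebuild $E^{-1}$ from its factors and bound each piece, whereas since $\det(E\Psi)=1$ and the first column of $E\Psi$ is $(0,c)^{T}$, one has immediately $((E\Psi)^{-1})_{21}=-c$ exactly, which collapses the computation to $-c^{2}\,\partial_xF$ and removes the bookkeeping worry you flag at the end.
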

\begin{proof}
By Proposition \ref{propB1}, we have
\[\begin{split}
-\frac{1}{2\pi\i}\left(\Psi^{-1}E^{-1} Q^{-1}(\partial_x Q) E\Psi\right)_{21}\Big|_{x=\mu_-}
\hskip-1mm
&=
-\frac{1}{2\pi\i}
\begin{pmatrix}
0 & 1
\end{pmatrix}\Psi^{-1}E^{-1} Q^{-1}(\partial_x Q) E\Psi
\begin{pmatrix}
1 \\ 0
\end{pmatrix}\Big|_{x=\mu_-}
\\
&=
\frac{\i \rho(\mu)(a(\mu)-\mu)n\sqrt{\frac{2\mu}{a(\mu)+\mu}}}{2}
\left(Q^{-1}\partial_x Q
\right)_{12}\Big|_{x=\mu_-},
\end{split}
\]
which allows us to concentrate just on the factor involving $Q.$
Note that the $(2,1)$ element of the matrix on the left is expressed in terms of $(1,2)$ element of the matrix on the right. Since $Q$ is upper-triangular, we have
\[
\left(
Q^{-1}\partial_xQ\right)_{12}\Big|_{x=\mu_-}
=\partial_z F(z;\mu)\Big|_{z=\mu_-}.\]
Hence,
\[-\frac{1}{2\pi\i}\left(\Psi^{-1}E^{-1}Q^{-1}
(\partial_xQ) E\Psi\right)_{21}\Big|_{x=\mu_-}
=
\frac{\i}{2} \rho(\mu)(a(\mu)-\mu)n\sqrt{\frac{2\mu}{a(\mu)+\mu}}\, \partial_z F(z;\mu)_{z=\mu}.\]
By \eqref{def:F} and the fact that the derivatives of $F_j$ are bounded, we have
$\partial_z F(z;\mu)_{z=\mu}=\mathcal O\left(\frac{1}{n\sqrt{\mu}}\right),$
{and the first result follows. The second statement is obtained similarly.}
\end{proof}

\color{black}

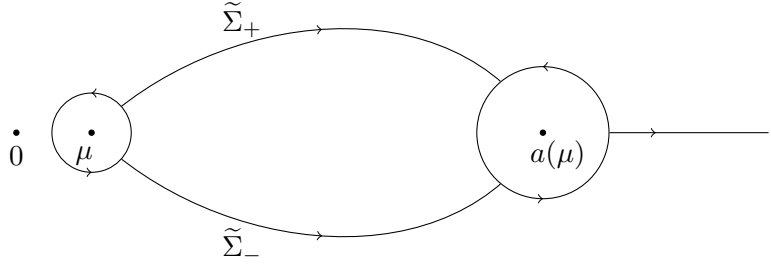
\begin{figure}[ht!]
\center{\center
\begin{tikzpicture}

\draw [decoration={markings, mark = at position 0.5 with {\arrow{>}}}, postaction=decorate](1,0) to [out=45, in =120] (7,0);
\draw [decoration={markings, mark = at position 0.5 with {\arrow{>}}}, postaction=decorate](1,0) to [out=-45, in =-120] (7,0);
\draw [decoration = {markings, mark = at position 0.5 with {\arrow{>}}}, postaction=decorate](7,0) -- (10,0);

\node at (3,1.5) {$\widetilde\Sigma_+$};
\node at (3,-1.5) {$\widetilde\Sigma_-$};

\filldraw[white] (7,0) circle (25pt);
\draw[decoration = {markings, mark = at position 0.25 with {\arrow{>}}}, decoration = {markings, mark = at position 0.75 with {\arrow{>}}}, 
postaction={decorate}] (7,0) circle (25pt);

\filldraw[white] (1,0) circle (15pt);
\draw[decoration = {markings, mark = at position 0.25 with {\arrow{>}}}, decoration = {markings, mark = at position 0.75 with {\arrow{>}}}, 
postaction={decorate}] (1,0) circle (15pt);

\node at (0,-0.3) {$0$};
\node at (0.9,-0.3) {$\mu$};
\node at (7.2,-0.3) {$a(\mu)$};

\filldraw (0,0) circle (1pt);
\filldraw (7,0) circle (1pt);
\filldraw (1,0) circle (1pt);

\end{tikzpicture}}
\caption{Contour for the RH problem for the function $\widetilde R(z).$}
\label{fig:R}
\end{figure}

Here we treat the fourth term in \eqref{sum_dec}.
\begin{lemma}\label{lem_Rtilde}
As $n\to\infty$, we have uniformly for $\left(\frac{M}{\rho(\mu)n}\right)^2\leq \mu\leq \mu_0$ that 
\[-\frac{1}{2\pi\i}
\left(\Psi^{-1}E^{-1}Q^{-1}
\widetilde R^{-1}
(\partial_x\widetilde R)
QE\Psi\right)_{21}\Big|_{x=\mu_-}
=D_0(\mu)
+\mathcal O(\rho(\mu))+\mathcal O(\mu^{-1/2})+\mathcal O\left(\frac{1}{n\mu^{3/2}\rho(\mu)}\right). 
\]
\end{lemma}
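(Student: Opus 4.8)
The plan is to extract the leading behaviour of the fourth term in \eqref{sum_dec} by reducing it, using the explicit column formulas already established in Lemmas \ref{propA1}, \ref{propB1}, \ref{propABQ}, to a bilinear expression in a single row and column of $\widetilde R^{-1}\partial_x\widetilde R$ evaluated at $x=\mu_-$. First I would write
\[
\left(\Psi^{-1}E^{-1}Q^{-1}\widetilde R^{-1}(\partial_x\widetilde R)QE\Psi\right)_{21}\Big|_{x=\mu_-}
=\begin{pmatrix}0&1\end{pmatrix}\Psi^{-1}E^{-1}Q^{-1}\,\widetilde R^{-1}(\partial_x\widetilde R)\,QE\Psi\begin{pmatrix}1\\0\end{pmatrix}\Big|_{x=\mu_-}.
\]
By Lemma \ref{propABQ}, $QE\Psi\begin{pmatrix}1\\0\end{pmatrix}\big|_{x=\mu_-}$ is, up to the scalar prefactor $i\sqrt{\pi\rho(\mu)(a(\mu)-\mu)n}\,(2\mu/(a(\mu)+\mu))^{1/4}$, the column vector $(F(\mu),1)^t$; and the corresponding left row vector $\begin{pmatrix}0&1\end{pmatrix}\Psi^{-1}E^{-1}Q^{-1}\big|_{x=\mu_-}$ is computed the same way (using $\det$ of each factor equals $1$, so the inverse of the column formula gives the row). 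The product of the row-prefactor and column-prefactor is $-\pi\rho(\mu)(a(\mu)-\mu)n\sqrt{2\mu/(a(\mu)+\mu)}$ exactly as in the earlier lemmas, and after dividing by $-2\pi i$ we are left with $\frac{i}{2}\rho(\mu)(a(\mu)-\mu)n\sqrt{2\mu/(a(\mu)+\mu)}$ times a specific entry of $\widetilde R^{-1}\partial_x\widetilde R$ at $x=\mu$, plus a correction coming from the $F(\mu)$ component of the column.

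Second, I would compute that entry using Proposition \ref{prop:asRtilde}. Since $\widetilde R(\mu)=I+\mathcal O(n^{-1}\rho(\mu)^{-1}\mu^{-1/2})$, we have $\widetilde R^{-1}(\partial_x\widetilde R)\big|_{x=\mu}=\widetilde R'(\mu)+\mathcal O((\widetilde R(\mu)-I)\widetilde R'(\mu))$, and the relevant entry is $\widetilde R'_{12}(\mu)$ up to errors of order $\widetilde R(\mu)-I$ times $\widetilde R'(\mu)$, i.e.\ $\mathcal O(n^{-2}\rho(\mu)^{-2}\mu^{-2})$. The explicit value \eqref{eq:R12prime},
\[
\widetilde R'_{12}(\mu)=\frac{-i\sqrt{a(\mu)}}{8n\rho(\mu)\sqrt{2\mu}}\left(4\alpha^2\mu^{-1}+\frac{\partial_zf(z;\mu)_{z=\mu}}{\rho(\mu)}+\mathcal O(\mu^{-1/2})\right),
\]
then gets multiplied by $\frac{i}{2}\rho(\mu)(a(\mu)-\mu)n\sqrt{2\mu/(a(\mu)+\mu)}$; the factors of $n$, $\rho(\mu)$, $\sqrt{\mu}$ and the $\sqrt{2}$'s cancel in a way that produces a term of the form
\[
-\frac{1}{16}\frac{\sqrt{a(\mu)}(a(\mu)-\mu)}{\sqrt{a(\mu)+\mu}}\left(4\alpha^2\mu^{-1}+\frac{\partial_zf(z;\mu)_{z=\mu}}{\rho(\mu)}\right)(1+o(1)).
\]
I would then invoke the small-$\mu$ expansions of Lemma \ref{lemma:expaf}, namely $a(\mu)\to a_0$, $a(\mu)-\mu\to a_0$, $\sqrt{a(\mu)+\mu}\to\sqrt{a_0}$, so the geometric prefactor $\frac{\sqrt{a(\mu)}(a(\mu)-\mu)}{\sqrt{a(\mu)+\mu}}\to a_0$; but in fact one must be careful here: the stated $D_0(\mu)=-\tfrac18(2\alpha^2\mu^{-1}+\tfrac{\partial_xf(x;\mu)_{x=\mu}}{2\rho(\mu)})$ has \emph{no} such geometric prefactor, which means the prefactor must actually be $1$ identically, not merely in the limit — so the cancellation of $\sqrt{a(\mu)}(a(\mu)-\mu)/\sqrt{a(\mu)+\mu}$ against the $\sqrt{a(\mu)}$ and $1/\sqrt{a(\mu)+\mu}$ and $(a(\mu)-\mu)$ factors sitting in the prefactors from Lemma \ref{propABQ} and the chain-rule factor $\partial_z=-\partial_w/(a(\mu)-\mu)$ needs to be tracked exactly. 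I would carry the algebra symbolically keeping $a(\mu),\mu$ exact until the final simplification, and check that everything collapses to $D_0(\mu)$ plus the contribution from $F(\mu)$.

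Third, I need to handle the $F(\mu)$ term and the error terms. The column $(F(\mu),1)^t$ contributes an extra piece proportional to $F(\mu)$ times the $(\cdot,1)$ entry of $\widetilde R^{-1}\partial_x\widetilde R$; since $F(\mu)=\mathcal O(\sqrt{\mu}/n)$ by \eqref{eq:Festimatemu} and the relevant entry of $\widetilde R^{-1}\partial_x\widetilde R$ is $\mathcal O(n^{-1}\rho(\mu)^{-1}\mu^{-3/2})$ (the $(1,1)$ or $(2,1)$ derivative entry from Proposition \ref{prop:asRtilde}), after multiplication by the $\mathcal O(\rho(\mu)\mu^{1/2}n)$ prefactor this is $\mathcal O(\rho(\mu))$ — consistent with the claimed remainder, and so can be absorbed. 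The remaining errors: the $\mathcal O(\mu^{-1/2})$ inside \eqref{eq:R12prime} yields, after the prefactor, $\mathcal O(\mu^{-1/2})$ as claimed; the $\mathcal O(n^{-2}\rho(\mu)^{-2}\mu^{-2})$ error from replacing $\widetilde R^{-1}\partial_x\widetilde R$ by $\widetilde R'$ yields, after the prefactor $\mathcal O(\rho(\mu)\mu^{1/2}n)$, an error $\mathcal O(n^{-1}\rho(\mu)^{-1}\mu^{-3/2})$, again matching $r(\mu;n)$. Putting the four lemmas together via \eqref{sum_dec} and \eqref{exprlogder} then gives $\partial_\mu\ln\widetilde H_n(\mu)=D_2(\mu)n^2+D_1(\mu)n+D_0(\mu)+r(\mu;n)$ with the stated $r$. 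The main obstacle I anticipate is precisely the exact bookkeeping of all the algebraic prefactors (the powers of $a(\mu)-\mu$, $\sqrt{a(\mu)+\mu}$, $\sqrt{a(\mu)}$, $\sqrt{2}$, and the chain-rule Jacobian $\partial_z=-(a(\mu)-\mu)^{-1}\partial_w$) so that they collapse \emph{identically} to the compact form of $D_0(\mu)$, together with verifying that the subleading $\mathcal O(v)$ and $\mathcal O(w)$ corrections in the expansions \eqref{details_a}, \eqref{expansionzeta}, \eqref{delta_s} do not leak into the constant term but only into $r(\mu;n)$.
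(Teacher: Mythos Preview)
Your approach is exactly the paper's: reduce via Lemma \ref{propABQ} to the bilinear form
\[
-\frac{1}{2\pi\i}\,(\ldots)_{21}\Big|_{x=\mu_-}
=\frac{-\i}{2}\rho(\mu)(a(\mu)-\mu)n\sqrt{\tfrac{2\mu}{a(\mu)+\mu}}\,
\begin{pmatrix}1&-F(\mu)\end{pmatrix}\widetilde R^{-1}\partial_x\widetilde R\begin{pmatrix}F(\mu)\\1\end{pmatrix}\Big|_{x=\mu_-},
\]
drop the $F(\mu)$ cross-terms using \eqref{eq:Festimatemu} and Proposition \ref{prop:asRtilde}, replace $\widetilde R^{-1}\partial_x\widetilde R$ by $\widetilde R'$, and substitute \eqref{eq:R12prime}.

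Your worry about the geometric prefactor is well placed but resolves cleanly once you track the residue computation behind \eqref{eq:R12prime} symbolically. The key is that $\widetilde R'_{12}(\mu)$, when computed from \eqref{Rtildex3} with the exact expansion of $J_{\widehat R,12}$ in the variable $w=(\mu-z)/(a(\mu)-\mu)$, carries an additional factor of $(a(\mu)-\mu)^{-1}$ from the Jacobian $\partial_z=-(a(\mu)-\mu)^{-1}\partial_w$ in the residue at $w=0$. When this is combined with the prefactor $\frac{\rho(\mu)(a(\mu)-\mu)n}{2}\sqrt{2\mu/(a(\mu)+\mu)}$ from the conjugation, the $(a(\mu)-\mu)$, $\sqrt{a(\mu)+\mu}$, $\sqrt{2\mu}$, $n$, and $\rho(\mu)$ factors cancel \emph{identically}, leaving
\[
-\frac{1}{16(a(\mu)-\mu)}\left(4\alpha^2\Bigl(\tfrac{\sqrt{a(\mu)}}{\sqrt{\mu}}-1\Bigr)^2-2+\frac{(a(\mu)-\mu)\partial_zf(z;\mu)_{z=\mu}}{\rho(\mu)}\right),
\]
which equals $D_0(\mu)+\mathcal O(\mu^{-1/2})$ after the elementary expansion $\frac{(\sqrt{a}-\sqrt{\mu})^2}{\mu(a-\mu)}=\mu^{-1}+\mathcal O(\mu^{-1/2})$. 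So no appeal to Lemma \ref{lemma:expaf} is needed; the cancellation is exact. Your error bookkeeping for the $F(\mu)$ term and the $(\widetilde R-I)\widetilde R'$ correction is correct.
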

%%%
\begin{proof}
Using {Proposition \ref{propABQ}}, we obtain
\begin{multline}\label{Rtildeaux}
-\frac{1}{2\pi\i}\left(\Psi^{-1}E^{-1}Q^{-1} \widetilde R^{-1}\partial_x \widetilde R \cdot QE\Psi\right)_{21}\Big|_{x=\mu_-}
\\=
\frac{-\i}{2} \rho(\mu)(a(\mu)-\mu)n\sqrt{\frac{2\mu}{a(\mu)+\mu}}
\begin{pmatrix}
1&-F(\mu)
\end{pmatrix}
\widetilde R^{-1}\partial_x \widetilde R
\begin{pmatrix}
F(\mu) \\ 1
\end{pmatrix}
\Big|_{x=\mu_-},
\end{multline}
and
moreover by \eqref{eq:Festimatemu} and Proposition \ref{prop:asRtilde},
\begin{multline}\label{Rtildeaux2}
-\frac{1}{2\pi\i}\left(\Psi^{-1}E^{-1}Q^{-1} \widetilde R^{-1}(\partial_x \widetilde R) QE\Psi\right)_{21}\Big|_{x=\mu_-}
\\
=
\frac{-\i}{\sqrt{2}} \rho(\mu)\sqrt{a(\mu)}n\sqrt{\mu}
\partial_x \widetilde R_{12}(\mu)
+\mathcal O\left(\frac{1}{n\rho{\mu}^{3/2}}\right)
+\mathcal O\left(\frac{1}{\sqrt{\mu}}\right)+\mathcal O\left(1\right)
.
\end{multline}
We now substitute \eqref{eq:R12prime}
to obtain the result.
\end{proof}

In order to prove Proposition \ref{theor_main}, it is now sufficient to substitute the results from Lemma's \ref{propA1}--\ref{lem_Rtilde} in \eqref{exprlogder}--\eqref{sum_dec} and to observe that $\rho(\mu)=\mathcal O(\mu^{-1/2})$.

\section{Integration of the differential identity. Proof of Proposition \ref{prop:Hankel}}\label{sec:integration}

{In this section, we first derive asymptotics  
for $\widetilde H_n(\mu)$ as $n\to\infty$ for a fixed $\mu>0$; for this, we use results from \cite{CG19}. Then, we integrate the asymptotic form of the differential relation obtained in the previous section. 
In order to do that, we need some integral identities, which we collect first.}

\subsection{Some integrals}
\begin{lemma}\label{lemma:I}
Let
\be\label{defI}
I_{-j}(\mu)=\int_{\mu}^{a(\mu)} t^{\bt/2-j}\frac{dt}{\sqrt{(t-\mu)(a(\mu)-t)}},\qquad J(\mu)= \int_{\mu}^{a(\mu)} t^{\bt/2-2}\sqrt{\frac{t-\mu}{a(\mu)-t}}dt, 
\ee
where $a(\mu)$ is the unique solution of \eqref{defa}. (In particular, $\rho(\mu)=\frac{\bt}{2\pi}I_{-1}$, where $\rho$ is defined in \eqref{def_rho}).
Then
\begin{align}
&\mathrm{(a)}\qquad
I_0(\mu)=\mu I_{-1}(\mu)+ \frac{4\pi}{\bt},\label{lemma:a}\\
&\mathrm{(b)}\qquad
I_1(\mu)=\frac{4\pi(\beta+1)(a(\mu)+\mu)}{\beta(\beta+2)}
+
\frac{2\pi\(a(\mu)+\mu+\beta\mu\)\mu\rho(\mu)}{\beta(\beta+2)}\label{lemma:newb}\\
&\mathrm{(c)}\qquad
I_{-2}(\mu)=\frac{1}{\mu}(I_{-1}(\mu)-J(\mu)),\label{lemma:b}\\
&\mathrm{(d)}\qquad
J(\mu)=\frac{1}{(\bt/2-1) a(\mu)}\left( 2\pi-\frac{a(\mu)-\mu}{2} I_{-1}(\mu)\right),\label{lemma:c}\\
&\mathrm{(e)}\qquad
I_{-1}(\mu)=\frac{4\pi a'(\mu)}{a(\mu)-\mu a'(\mu)},\label{lemma:d}\\
&\mathrm{(f)}\qquad
f(a(\mu);\mu)=\frac{1}{a'(\mu)}\rho(\mu).\label{lemma:e}
\end{align}
\end{lemma}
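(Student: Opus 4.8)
\textbf{Proof plan for Lemma \ref{lemma:I}.}

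The plan is to derive all six identities from contour-integral manipulations of the basic integrand $t^{\beta/2-j}\big((t-\mu)(a(\mu)-t)\big)^{-1/2}$, together with the defining relation \eqref{defa} for $a(\mu)$. The unifying device is the contour integral representation: for a branch of $R(t)=\big((t-\mu)(t-a(\mu))\big)^{1/2}$ analytic off $[\mu,a(\mu)]$ and $\sim t$ at infinity, one has $\int_{\mu}^{a(\mu)} t^{\beta/2-j}\,dt/\sqrt{(t-\mu)(a(\mu)-t)} = \frac{1}{2i}\oint_{\Gamma} t^{\beta/2-j}\,dt/R(t)$ (and similarly for the $J$-type integrals, with $\sqrt{(t-\mu)/(a(\mu)-t)}$ replaced by the appropriate ratio $R(t)/(t-a(\mu))$ or $(t-\mu)/R(t)$), where $\Gamma$ encircles $[\mu,a(\mu)]$ counterclockwise and avoids $(-\infty,0]$. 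Deforming $\Gamma$ and collecting residues at $t=0$, at $t=\infty$, and along the cut $(-\infty,0]$ will produce the stated closed forms.

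For the individual parts: (a) follows by writing $t^{\beta/2} = (t-\mu)t^{\beta/2-1} + \mu t^{\beta/2-1}$ inside $I_0$, recognizing the first piece as $\int_{\mu}^{a(\mu)} t^{\beta/2-1}\sqrt{(t-\mu)/(a(\mu)-t)}\,dt \cdot$(constant) and using \eqref{defa}, which contributes the $4\pi/\beta$; indeed \eqref{defa} reads $\frac{\beta}{4\pi}\int_{\mu}^{a(\mu)} t^{\beta/2-1}\sqrt{(t-\mu)/(a(\mu)-t)}\,dt = 1$. Part (b) is similar but one degree higher: expand $t^{\beta/2+1}$ or rather manipulate $I_1$ via the algebraic identity relating $t\cdot t^{\beta/2-1}$-type moments; the two-term structure with $(a(\mu)+\mu)$ and $(a(\mu)+\mu+\beta\mu)$ suggests using both \eqref{defa} and part (a), plus the first moment identity $\int t^{\beta/2-1}(t-\mu)\cdots = $ something explicit, so (b) reduces to (a) and \eqref{defa} by linear algebra on moments. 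Part (c) is the purely algebraic identity $t^{\beta/2-2} = \frac{1}{\mu}t^{\beta/2-1} - \frac{1}{\mu}\cdot\frac{(t-\mu)t^{\beta/2-1}}{t}$ combined with $\frac{t-\mu}{t\sqrt{(t-\mu)(a(\mu)-t)}} = \frac{1}{t}\sqrt{\frac{t-\mu}{a(\mu)-t}}$ — wait, more carefully: write $\frac{t^{\beta/2-2}}{\sqrt{(t-\mu)(a(\mu)-t)}} = \frac{1}{\mu}\frac{t^{\beta/2-1}}{\sqrt{(t-\mu)(a(\mu)-t)}} - \frac{1}{\mu}\frac{t^{\beta/2-2}(t-\mu)}{\sqrt{(t-\mu)(a(\mu)-t)}}$, and the last factor is $\frac{t^{\beta/2-2}\sqrt{t-\mu}}{\sqrt{a(\mu)-t}}$, giving $J(\mu)$; integrating yields (c). Part (d) comes from integration by parts in $J(\mu)$, differentiating $t^{\beta/2-1}$ and $\sqrt{(t-\mu)(a(\mu)-t)}$ appropriately, with the boundary terms vanishing and the $2\pi$ arising again from \eqref{defa}; one needs to track the $(\beta/2-1)$ factor carefully. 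Part (e) is obtained by differentiating the defining equation \eqref{defa} with respect to $\mu$ (using $\partial_\mu$ applied to the integral with the $a(\mu)$-dependence both in the limit and the integrand), which produces a relation between $I_{-1}(\mu)$, $a'(\mu)$, $a(\mu)$, and $\mu$; solving for $I_{-1}$ gives (e). Finally (f) follows by evaluating the contour representation \eqref{f_def} of $f(z;\mu)$ at $z=a(\mu)$ — the square root $\big((t-\mu)/(t-a(\mu))\big)^{1/2}$ has a simple pole structure there — or more directly by comparing the small-$v$ expansion near $a(\mu)$ used in \eqref{varphi_asymp_a} with \eqref{f_rho} and part (e); the cleanest route is to note $f(a(\mu);\mu) = \frac{\beta}{2\pi}\int_{\mu}^{a(\mu)} \frac{(t-\mu)t^{\beta/2-1}}{(t-a(\mu))\sqrt{(t-\mu)(a(\mu)-t)}}dt$ after the $z\to a(\mu)$ limit cancels the singular term, which equals $-\frac{\beta}{4\pi}I_{-1}$-type combination, then invoke (e).

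The main obstacle I anticipate is the bookkeeping in parts (b), (d), and (e): getting the coefficients exactly right — in particular the $(\beta+1)/(\beta+2)$ factors in (b) and the sign and $(\beta/2-1)$ normalization in (d) — requires careful handling of integration-by-parts boundary contributions (which vanish because of the $\sqrt{(t-\mu)(a(\mu)-t)}$ weight) and of the differentiation-under-the-integral-sign in (e), where the endpoint $t=a(\mu)$ contributes a term that must be shown to vanish or combine correctly since the integrand $\sqrt{(t-\mu)/(a(\mu)-t)}$ is integrable but its naive $\mu$-derivative is not. I would organize the proof so that (e) is proved first (it is the only genuinely differential identity), then (a), then (c) and (d) as algebraic/IBP consequences, then (b), and (f) last, citing \eqref{f_def}, \eqref{f_rho}, and (e). Each step is elementary once the right algebraic decomposition of the integrand is written down; no single step is deep, but the cumulative care needed is the real cost.
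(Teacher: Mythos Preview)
Your plan is correct and essentially matches the paper's proof: (a), (c), (d) are handled exactly as you describe, and (e), (f) are both obtained by differentiating the defining relation \eqref{defa} with respect to $\mu$. Two tactical simplifications from the paper are worth adopting: for (e), first change variables $t=a(\mu)u$ in \eqref{defa} so that the $\mu$-dependence sits only in the lower limit $\mu/a(\mu)$ and the prefactor $a(\mu)^{\beta/2}$, which sidesteps your anticipated endpoint-singularity issue; for (b), instead of moment algebra, integrate the exact derivative $\frac{d}{dx}\big(x^{\beta/2}(x-a(\mu))^{1/2}(x-\mu)^{1/2}\big)$ around a closed contour $\Gamma$ encircling $[\mu,a(\mu)]$ (the integral vanishes), then expand the derivative and read off the linear relation among $I_1$, $I_0$, $I_{-1}$; and for (f), rewrite \eqref{defa} as a contour integral and differentiate in $\mu$, which directly produces $a'(\mu)$ multiplying the contour integral that equals $f(a(\mu);\mu)$.
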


\begin{proof}
(a) By the definition \eqref{defa} of $a(\mu)$, we have
\[
1=\frac{\bt}{4\pi}\int_{\mu}^{a(\mu)} t^{\bt/2-1}\frac{t-\mu}{\sqrt{(t-\mu)(a(\mu)-t)}}dt=\frac{\bt}{4\pi}(I_0(\mu)-\mu I_{-1}(\mu))
\]
and \eqref{lemma:a} follows.
\medskip

\noindent
(b) follows by integrating $\frac{d}{d x}\(x^{\beta/2}(x-a(\mu))^{\frac12}(x-\mu)^{\frac12}\)$
over a contour $\Gamma$ going around the interval $[\mu,a(\mu)]$ and not intersecting $(-\infty,0]$.

\noindent
(c) follows from the observation
\[
I_{-2}(\mu)-\frac{1}{\mu}I_{-1}(\mu)=\int_{\mu}^{a(\mu)} t^{\bt/2-1}\frac{dt}{\sqrt{(t-\mu)(a(\mu)-t)}}\left(\frac{1}{t}-\frac{1}{\mu}\right)=-\frac{J(\mu)}{\mu}.
\]
\medskip

\noindent
(d) First note that by \eqref{defa},
\begin{multline}\label{intint}
\frac{\beta}{4\pi}J(\mu)-\frac{1}{a(\mu)}=
\frac{\beta}{4\pi}\int_{\mu}^{a(\mu)} t^{\bt/2-1}\sqrt{\frac{t-\mu}{a(\mu)-t}}\left(\frac{1}{t}-\frac{1}{a(\mu)}\right)dt\\=
\frac{\beta}{4\pi a(\mu)}\int_{\mu}^{a(\mu)} t^{\bt/2-2}\sqrt{(t-\mu)(a(\mu)-t)}dt.
\end{multline}
Integrating by parts and observing that by (a)
\[
\int_{\mu}^{a(\mu)} t^{\bt/2-1}\sqrt{\frac{a(\mu)-t}{t-\mu}}dt=a(\mu) I_{-1}(\mu)-I_0(\mu)=(a(\mu)-\mu)I_{-1}(\mu)-\frac{4\pi}{\bt},
\]
we can write the r.h.s. of \eqref{intint} as
\[
-\frac{\bt (a(\mu)-\mu)}{8\pi(\beta/2-1)a(\mu)}I_{-1}(\mu)+\frac{1}{(\beta/2-1)a(\mu)},
\]
and \eqref{lemma:c} follows.
\medskip

\noindent
(e) In \eqref{defa} make the change of the integration variable $t=a(\mu) u$. Then
\[
1=\frac{\bt}{4\pi} a(\mu)^{\bt/2} \int_{\mu/a(\mu)}^1 u^{\bt/2-1} \sqrt{\frac{u-\mu/a(\mu)}{1-u}}du.
\]
In this form, the expression is easy to differentiate w.r.t. $\mu$, and we obtain \eqref{lemma:d}.
\medskip

\noindent
(f) We now write \eqref{defa} in the form
\[
{
1=\frac{\bt}{8\pi\i}\int_{\Gamma} t^{\bt/2-1}\left(\frac{t-\mu}{t-a(\mu)}\right)^{1/2}dt,
}
\]
where $\Gamma$ is a counter-clock-wise oriented closed loop around the interval $[\mu,a(\mu)]$ and not intersecting $(-\infty,0]$.
Differentiating this identity with respect to $\mu$, we obtain
\[
{0=-\frac{1}{2} 2I_{-1}(\mu)+\frac{1}{2\i}a'(\mu) \int_{\Gamma} t^{\bt/2-1}\left(\frac{t-\mu}{t-a(\mu)}\right)^{1/2}\frac{dt}{t-a(\mu)},
}
\]
which gives \eqref{lemma:e} since
\[
\rho(\mu)=\frac{\bt}{2\pi}I_{-1}(\mu),\qquad {
f(a(\mu);\mu)=\frac{\bt}{4\pi\i}\int_{\Gamma} t^{\bt/2-1}\left(\frac{t-\mu}{t-a(\mu)}\right)^{1/2}\frac{dt}{t-a(\mu)}.
}
\]
\end{proof}

\subsection{Large $n$ asympotics for $\widetilde H_n(\mu)$ for fixed $\mu>0$}\label{sect_fixed_mu}
A particular case of Theorem 1.2 from \cite{CG19} provides the large $n$ asymptotics for a Hankel determinant with a weight $e^{W(\xi)}e^{-nV(\xi)}$, which is such that the corresponding equilibrium measure is supported on the interval $\xi\in[-1,1]$ and has the form $d\mu_{CG}=\sqrt{\frac{1-\xi}{1+\xi}}f_{CG}(\xi),$ where $f_{CG}(\xi)>0$ for $\xi\in[-1,1].$
Applying this result in the case
\[ V(\xi)=\(\frac{a(\mu)-\mu}{2}\xi+\frac{a(\mu)+\mu}{2}\)^{\beta/2},
\qquad W(\xi)=\alpha\ln\(\frac{a(\mu)-\mu}{2}\xi+\frac{a(\mu)+\mu}{2}\),\]
after the  rescaling 
$x=\frac{a(\mu)-\mu}{2}\xi+\frac{a(\mu)+\mu}{2}$, 
yields
\begin{equation}\label{eq:CG}
\begin{split}
\ln\widetilde{H}_n(\mu)
=
\widehat{C}_2(\mu) n^2 + \widehat{C}_1(\mu) n - \frac16 \ln n + \widehat{C}_0(\mu) 
+ \mathcal{O}\(\frac{\ln n}{n}\),
\end{split}
\end{equation}
where the error term is uniform in $\varepsilon\leq \mu\leq \mu_0$ as $n\to\infty$, for any $\varepsilon,\mu_0>0$, and
\[
\begin{split}
&\widehat C_2(\mu) = \ln\frac{a(\mu)-\mu}{4}-\frac32-\frac{1}{\pi(a(\mu)-\mu)}
\int\limits_{\mu}^{a(\mu)}\(x^{\frac{\beta}{2}}-\frac{4(x-\mu)}{a(\mu)-\mu}\)\sqrt{\frac{a(\mu)-x}{x-\mu}}dx
\\
&\hskip4.5cm -\frac{1}{4\pi}\int\limits_{\mu}^{a(\mu)}\(x^{\frac{\beta}{2}}-\frac{4(x-\mu)}{a(\mu)-\mu}\)\sqrt{\frac{a(\mu)-x}{x-\mu}}\ f(x;\mu)dx,
\end{split}
\]
\[\begin{split}
&
\widehat C_1(\mu) = \ln(2\pi)+\frac{\alpha}{2\pi}\int\limits_{\mu}^{a(\mu)}\ln x \sqrt{\frac{a(\mu)-x}{x-\mu}}f(x;\mu)dx,
\end{split}\]
\[\begin{split}
& \widehat C_0(\mu) = 2\zeta'(-1)-\frac18\ln\(\frac{a(\mu)-\mu}{4} \rho(\mu) \) - \frac{1}{24}\ln\(\frac{a(\mu)-\mu}{4}f(a(\mu);\mu)\)
+\\
&\qquad\qquad\quad+
\frac{\alpha^2}{4\pi^2}
\int\limits_{\mu}^{a(\mu)}\frac{\ln x}{\sqrt{(x-\mu)(a(\mu)-x)}}\
\dashint_{\mu}^{a(\mu)}\frac{\sqrt{(y-\mu)(a(\mu)-y)}\, dy}{y(x-y)}dx,
\end{split}
\]
where $\dashint$ is the principal value integral.
Next we will prove that $\widehat C_j(\mu), j=0,1,2,$ coincide with $C_j(\mu)$ given in \eqref{C1234}.

\subsubsection*{$\bullet\ \widehat C_2(\mu) = C_2(\mu).$}
Using \eqref{f_def} and a contour deformation argument, we can rewrite $\int\limits_{\mu}^{a(\mu)}x^{\frac{\beta}{2}}\sqrt{\frac{a(\mu)-x}{x-\mu}}\, f(x;\mu)dx$ as a double complex contour integral, over {$\Gamma_1\times\Gamma_2$ in the integration variables $x$ and $t$, where the contours $\Gamma_1$ and $\Gamma_2$ encircle the segment $[\mu, a(\mu)]$ in the counter-clockwise direction and do not intersect each other and the half-line $(-\infty,0].$}
By exchanging the role of $x$ and $t$ in this representation, we obtain a second representation. Taking the average of these two representations, we can factor out $t-x$ in the numerator of the integrand, and this factor cancels out with the denominator. The double integral then decouples into products of single integrals, and in this way we obtain the identity
\begin{multline*}
-\frac{1}{4\pi}\int\limits_{\mu}^{a(\mu)}x^{\frac{\beta}{2}}\sqrt{\frac{a(\mu)-x}{x-\mu}}\, f(x;\mu)dx
=
\frac{\beta}{16\pi^2}\(-I_0^2(\mu)+2\mu I_0(\mu) I_1(\mu)-\mu a(\mu) I_{-1}^2(\mu)\)
\\=
\frac{-1}{\beta}-\frac{1}{4\beta}\mu(a(\mu)-\mu)\rho(\mu)^2.
\end{multline*}

Next, to compute 
\begin{multline*}
\frac{1}{\pi(a(\mu)-\mu)}\int\limits_{\mu}^{a(\mu)}\sqrt{(x-\mu)(a(\mu)-x)}\ f(x;\mu)dx\\
=
\frac{\beta}{2\pi(a(\mu)-\mu)}
\(
I_1(\mu)-\frac{a(\mu)+3\mu}{2}I_0(\mu)+\frac{\mu(a(\mu)+\mu)}{2}I_{-1}(\mu)
\)
=
\frac{\beta(a(\mu)-\mu)-4\mu}{(\beta+2)(a(\mu)-\mu)}+\frac{\mu\rho(\mu)}{\beta+2},
\end{multline*}
we again write the integral as a double contour integral, and then change the order of integration.

Furthermore, 
\[
\frac{-1}{\pi(a(\mu)-\mu)}\int\limits_{\mu}^{a(\mu)}x^{\beta/2}\sqrt{\frac{a(\mu)-x}{x-\mu}}dx
=
\frac{1}{\pi}\(I_1(\mu)-a(\mu)I_0(\mu)\)
=
\frac{-4(a(\mu)-\mu)+4\beta\mu}{(a(\mu)-\mu)\beta(\beta+2)}-\frac{2(\beta+1)\mu\rho(\mu)}{\beta(\beta+2)},
\]
and combining all the above pieces we obtain $\widehat C_2(\mu)=C_2(\mu).$

\subsubsection*{$\bullet\ \widehat C_1(\mu) = C_1(\mu).$}
We substitute \eqref{f_def} in the definition of $\widehat C_1(\mu)$, write the latter as a double integral, and change the order of integration.
Next, by deforming the contour $\Gamma$ to the circle of a large radius $R$ plus the interval $[-R,0]$, and sending $R$ to infinity, we obtain that
for $t$ inside $\Gamma,$
\begin{multline*}
\frac{1}{2\pi i}\int\limits_{\Gamma}\(\frac{x-a(\mu)}{x-\mu}\)^{\frac12}\frac{\ln x \ dx}{x-t}
\\=
2\ln\frac{\sqrt{\mu}+\sqrt{a(\mu)}}{2}
+2\(\frac{t-a(\mu)}{t-\mu}\)^{\frac12}
\ln\frac{t^{\frac12}\((t-a(\mu))^{\frac12}+(t-\mu)^{\frac12}\)}
{\sqrt{a(\mu)}(t-\mu)^{\frac12}+\sqrt{\mu}(t-a(\mu))^{\frac12}}.
\end{multline*}
Long but straighforward computations then lead to $\widehat C_1(\mu) = C_1(\mu).$

\subsubsection*{$\bullet\ \widehat C_0(\mu) = C_0(\mu).$}

To simplify the double integral in the definition of $\widehat C_0(\mu)$, we use the relations
\[
\dashint_{\mu}^{a(\mu)}\frac{\sqrt{(y-\mu)(a(\mu)-y)}\, dy}{\pi\, y(x-y)}=1-\frac{\sqrt{\mu\, a(\mu)}}{x},
\quad
\int\limits_{\mu}^{a(\mu)}\frac{\(x-\sqrt{\mu a(\mu)}\)\ln x\, dx}{\pi\,x\sqrt{(x-\mu)(a(\mu)-x)}}
=
2\ln\frac{\(\sqrt{\mu}+\sqrt{a(\mu)}\)^2}{4\sqrt{\mu a(\mu)}},
\]
and then the property follows from relation \eqref{lemma:e} of Lemma \ref{lemma:I}.

\medskip

The expansion \eqref{eq:CG} and the above identities together yield Proposition \ref{prop:Hankel} for $\varepsilon<\mu\leq \mu_0$, for any $\varepsilon,\mu_0>0$. It now remains to prove Proposition \ref{prop:Hankel} as $\mu\to 0$ and $n\to\infty$ simultaneously.

\subsection{Integration of differential identity.}

We can now integrate the result of Proposition \ref{theor_main}. Let $\mu_0>0$ be a fixed number. For sufficiently large  $M>0$, we take $\mu_1$ such that $\left(\frac{M}{\rho(\mu) n}\right)^2\leq\mu\leq \mu_1\leq \mu_0$.
Then, we integrate the result from Proposition \ref{theor_main} between $\mu_1$ and $\mu$, where the crucial observation is that the  integral of the error term is small enough: by the asymptotics \eqref{fcoeff} for $\rho(\mu)$, we have as $n\to\infty$, uniformly in $\mu$ and $\mu_1$,
\[\int_{\mu_1}^{\mu}r(\tilde\mu;n)d\tilde\mu=\begin{cases}\mathcal O\left(\int_{\mu_1}^{\mu}\frac{1}{n\tilde\mu^{3/2}}d\tilde\mu\right)+\mathcal O(\mu_1^{1/2})=\mathcal O\left(\frac{1}{n\sqrt{\mu}}\right)+\mathcal O\left(\mu_1^{1/2}\right),&\mbox{if $\beta>1,$}\\
\mathcal O\left(\int_{\mu_1}^{\mu}\frac{1}{n\tilde\mu^{3/2}|\(1+|\log\tilde\mu|\)}d\tilde\mu\right)+\mathcal O(\mu_1^{1/2})=\mathcal O\left(\frac{1}{n\sqrt{\mu}\(1+|\log\mu|\)}\right)+\mathcal O\left(\mu_1^{1/2}\right),&\mbox{if $\beta=1,$}\\
\mathcal O\left(\int_{\mu_1}^{\mu}\frac{1}{n\tilde\mu^{\beta/2-1}}d\tilde\mu\right)+\mathcal O(\mu_1^{1/2})=\mathcal O\left(\frac{1}{n{\mu}^{\beta/2}}\right)+\mathcal O\left(\mu_1^{1/2}\right),&\mbox{if $\beta<1,$}
\end{cases}
\]
which is $\mathcal O\left(\frac{1}{n\sqrt{\mu}\rho(\mu)}\right)+\mathcal O\left(\mu_1^{1/2}\right)$ in either of the three cases. Observe that the proportionality constants in the $\mathcal O$ terms do not depend on the choice of $\mu_1$.

As a consequence, we proved that there exists $M>0$ such that  as $n\to\infty$, 
we have uniformly for $\mu, \mu_1$ such that {$\left(\frac{M}{\rho(\mu) n}\right)^2\leq\mu\leq \mu_1\leq \mu_0$ } that 
\[
\ln \widetilde H_n(\mu) = \ln \widetilde H_n(\mu_1)
+n^2\int\limits_{\mu_1}^{\mu} D_2(\tilde \mu)\d\tilde\mu
+n\int\limits_{\mu_1}^{\mu} D_1(\tilde \mu)\d\tilde\mu
+\int\limits_{\mu_1}^{\mu} D_0(\tilde \mu)\d\tilde\mu
+\mathcal{O}\(\frac{1}{n\sqrt{\mu}\,\rho(\mu)}\)+\mathcal O\left(\mu_1^{1/2}\right),
\]
as $n\to\infty$.

For $\mu_1>0$ fixed, we have according to Section \ref{sect_fixed_mu},
\be\nonumber
\log \wt H_n(\mu_1)=C_2(\mu_1) n^2 +C_1(\mu_1) n -\frac{1}{6}\log n + C_0(\mu_1) + \widetilde r(\mu_1;n),\qquad n\to\infty,
\ee
where $|\widetilde r(\mu_1;n)|\leq c(\mu_1)\frac{\log n}{n}$ for some constant $c(\mu_1)$ which may depend on $\mu_1$.
Here $C_2(\mu_1), $ $C_1(\mu_1),$ $C_0(\mu_1)$ are as in Proposition \ref{prop:Hankel}.
Substituting this, we obtain
\begin{multline}\label{eq:logHnas}
\ln \widetilde H_n(\mu) = n^2\left( C_2(\mu_1)+\int\limits_{\mu_1}^{\mu} D_2(\tilde \mu)\d\tilde\mu\right)
+n\left(C_1(\mu_1)+\int\limits_{\mu_1}^{\mu} D_1(\tilde \mu)\d\tilde\mu\right)\\
-\frac{1}{6}\log n
+\left(C_0(\mu_1)+\int\limits_{\mu_1}^{\mu} D_0(\tilde \mu)\d\tilde\mu\right)
+\widetilde r(\mu_1;n)+\mathcal{O}\(\frac{1}{n\sqrt{\mu}\,\rho(\mu)}\)+\mathcal O\left(\mu_1^{1/2}\right).
\end{multline}

\medskip

If we take in the above $0<\mu<\mu_1$ fixed, then the coefficients in the large $n$ expansion need to be independent of $\mu_1$, and this implies without any computations that
\begin{equation}\label{identities:C1C2}\ C_2'(\mu)=D_2(\mu),\qquad \ C_1'(\mu)=D_1(\mu).\end{equation}
As a consistency check, we choose however to prove these identities also in a direct way, together with the analogous identity for the $\mathcal O(1)$ term, {$C_0'(\mu)=D_0^{\mathrm{full}}(\mu)$, where $D_0^{\mathrm{full}}(\mu)$ is given in Remark \ref{D_0:full}. We will also see that $D_0(\mu)-D_0^{\mathrm{full}}(\mu)=\mathcal{O}(\mu^{-1/2})$ as $\mu\to0,$ and thus Proposition \ref{theor_main} and Remark \ref{D_0:full} are consistent.}

\medskip
\subsubsection*{$\bullet\ C_2'(\mu)=D_2(\mu)$.}
In the definition of $I_{-1}(\mu)$ in \eqref{defI}, change the integration variable $t=\mu+(a(\mu)-\mu)u$. Then we obtain
\[
I_{-1}'(\mu)=(\bt/2-1)\left( I_{-2}(\mu)+\frac{a'(\mu)-1}{a(\mu)-\mu} J(\mu) \right).
\]
Applying here the identitities (b) and (c) of Lemma \ref{lemma:I}, we have
\[
I_{-1}'(\mu)=\frac{2\pi}{\mu a(\mu)}\frac{\mu a'(\mu)-a(\mu)}{a(\mu)-\mu}+\frac{1}{\mu}\left(\frac{\bt}{2}-1+\frac{-\mu a'(\mu)+a(\mu)}{2a(\mu)}
\right)I_{-1}(\mu).
\]
Substituting here for the last fraction (using the identity (d) of Lemma \ref{lemma:I})
\[
\frac{-\mu a'(\mu)+a(\mu)}{2a(\mu)}=\frac{2\pi a'(\mu)}{a(\mu) I_{-1}(\mu)},
\]
we obtain
\be\label{dfdmu}
\rho'(\mu)=\frac{\bt}{2\pi}  I_{-1}'(\mu)=\frac{\bt/2-1}{\mu}\rho(\mu)+ \frac{\bt(a'(\mu)-1)}{\mu(a(\mu)-\mu)}.
\ee
Using this identity, it is straightforward to compute the derivative of $C_2$ in \eqref{C1234}, and we obtain
$
C_2'(\mu)=D_2(\mu).
$

\subsubsection*{$\bullet\ C_1'(\mu)=D_1(\mu)$.}To compute the derivative of $C_1(\mu)$ in \eqref{C1234}, we use \eqref{dfdmu} and \eqref{lemma:d}. This gives
$
C_1'(\mu)=D_1(\mu).
$

\subsubsection*{$\bullet\ C_0'(\mu)=D_0^{\mathrm{full}}(\mu)$.}
The proof is based on the following identities:
\begin{proposition}\label{prop:logf} We have
\begin{align}
\frac{ \partial_x f (x;\mu) _{x=\mu}  }{\rho(\mu)}&=
2\frac{\rho'(\mu)}{\rho(\mu)}+\frac{a'(\mu)-1}{a(\mu)-\mu},\label{C0-1}
\\
\frac{\partial_x f(x;\mu)_{x=a(\mu)}}{f(a(\mu);\mu)}a'(\mu)&=
\frac{2}{3}\frac{\frac{d}{d\mu}f(a(\mu);\mu)}{f(a(\mu);\mu)}-\frac{1}{3}\frac{a'(\mu)-1}{a(\mu)-\mu}.
\label{C0-2}
\end{align}
\end{proposition}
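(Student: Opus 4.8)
The plan is to establish the two identities \eqref{C0-1} and \eqref{C0-2} of Proposition \ref{prop:logf}, and then to deduce $C_0'(\mu)=D_0^{\mathrm{full}}(\mu)$ from them together with the integral relations already collected in Lemma \ref{lemma:I} and the expansions \eqref{varphi_asymp_mu}, \eqref{varphi_asymp_a} of $\varphi$. For the proof of \eqref{C0-1}, I would start from the representation \eqref{f_def} of $f(z;\mu)$ as a contour integral around $[\mu,a(\mu)]$, and compute both $\partial_z f(z;\mu)|_{z=\mu}$ and $\rho(\mu)=f(\mu;\mu)$ by residue/contour manipulation. A cleaner route is to use the differential expansion \eqref{varphi_asymp_mu}: writing $\varphi(z;\mu)\mp\pi i = \rho(\mu)\sqrt{a(\mu)-\mu}\,(\mu-z)^{1/2}(1+b(\mu)(\mu-z)+\dots)$ and recalling from \eqref{varphi} that $\varphi'(z;\mu)=-\tfrac12\big(\tfrac{z-a(\mu)}{z-\mu}\big)^{1/2}f(z;\mu)$, one matches the subleading coefficient of the expansion of $f$ near $z=\mu$ (which is governed by $\partial_z f(x;\mu)|_{x=\mu}$) against the $\mu$-derivative of the leading coefficient $\rho(\mu)\sqrt{a(\mu)-\mu}$; the identity \eqref{dfdmu} for $\rho'(\mu)$ and \eqref{lemma:d} for $I_{-1}$ then convert this into the stated form. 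Equivalently, and probably most robustly, I would differentiate the defining relation $f(\mu;\mu)=\rho(\mu)$ with respect to $\mu$: since $f$ is a function of two arguments, $\frac{d}{d\mu}f(\mu;\mu)=\partial_x f(x;\mu)|_{x=\mu}+\partial_\mu f(x;\mu)|_{x=\mu}$, so I need a formula for $\partial_\mu f$ at $x=\mu$, which I obtain by differentiating the contour-integral representation \eqref{f_def} under the integral sign (the $\mu$-dependence sits in the square-root factor and in $a(\mu)$), and simplify using \eqref{lemma:a}--\eqref{lemma:d}.

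For \eqref{C0-2}, the analogous strategy uses the expansion \eqref{varphi_asymp_a} near $z=a(\mu)$: there $\varphi(z;\mu)=-\frac{f(a(\mu);\mu)}{3\sqrt{a(\mu)-\mu}}(z-a(\mu))^{3/2}(1+e(\mu)(z-a(\mu))+\dots)$, and the coefficient $e(\mu)$ is expressible through $\partial_z f(z;\mu)|_{z=a(\mu)}$ by matching with $\varphi'(z;\mu)=-\tfrac12\big(\tfrac{z-a(\mu)}{z-\mu}\big)^{1/2}f(z;\mu)$ expanded near $a(\mu)$. Then I differentiate this expansion with respect to $\mu$ and compare with the direct $\mu$-derivative of the right-hand side; here the key input is relation \eqref{lemma:e}, $f(a(\mu);\mu)=\rho(\mu)/a'(\mu)$, which ties $\frac{d}{d\mu}f(a(\mu);\mu)$ to $\rho'(\mu)$ and $a''(\mu)$, and the factor $a'(\mu)$ on the left of \eqref{C0-2} arises precisely from the chain rule $\frac{d}{d\mu}\big(g(a(\mu))\big)=g'(a(\mu))a'(\mu)$. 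The coefficients $\tfrac23$ and $-\tfrac13$ should drop out of the Taylor matching of the $3/2$-power expansion, since $(z-a(\mu))^{3/2}$ differentiated in $\mu$ (via $a(\mu)$) produces a $(z-a(\mu))^{1/2}$ term whose coefficient is $-\tfrac32 a'(\mu)$ times the leading amplitude, and dividing back produces the rational prefactors.

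Once both identities are in hand, proving $C_0'(\mu)=D_0^{\mathrm{full}}(\mu)$ is a direct computation: differentiate the explicit expression for $C_0(\mu)$ in \eqref{C1234}, namely $C_0(\mu)=2\zeta'(-1)-\tfrac16\log\!\big(\tfrac{a(\mu)-\mu}{4}\rho(\mu)\big)+\tfrac1{24}\log a'(\mu)+\tfrac{\alpha^2}{2}\log\tfrac{(\sqrt{a(\mu)}+\sqrt{\mu})^2}{4\sqrt{\mu a(\mu)}}$, term by term. The $\log$ terms produce $\tfrac{a'(\mu)-1}{a(\mu)-\mu}$, $\tfrac{\rho'(\mu)}{\rho(\mu)}$, $\tfrac{a''(\mu)}{a'(\mu)}$, and rational functions of $a(\mu),\mu$; using \eqref{dfdmu} to eliminate $\rho'(\mu)/\rho(\mu)$, \eqref{lemma:d} and \eqref{lemma:e} to handle $a'(\mu)$ and $f(a(\mu);\mu)$, and then \eqref{C0-1}, \eqref{C0-2} to rewrite everything in terms of $\partial_x f(x;\mu)|_{x=\mu}$ and $\partial_x f(x;\mu)|_{x=a(\mu)}$, one matches $D_0^{\mathrm{full}}(\mu)$ as written in Remark \ref{D_0:full}. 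The term $\tfrac1{24}\log a'(\mu)$ is the source of the $a''(\mu)$ contribution, which must be re-expressed — I expect to need to differentiate \eqref{lemma:d} once more, or equivalently differentiate \eqref{lemma:e}, to trade $a''(\mu)$ for $\frac{d}{d\mu}f(a(\mu);\mu)$ and hence (via \eqref{C0-2}) for $\partial_x f(x;\mu)|_{x=a(\mu)}$. Finally, one checks that $D_0(\mu)-D_0^{\mathrm{full}}(\mu)=\mathcal O(\mu^{-1/2})$ as $\mu\to 0$ by inserting the small-$\mu$ expansions from Lemma \ref{lemma:expaf}, Proposition \ref{prop_fderf}, and Lemma \ref{lemma:f}, which confirms consistency between Proposition \ref{theor_main} and Remark \ref{D_0:full}.

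The main obstacle will be the bookkeeping in \eqref{C0-2}: extracting the second-order coefficient of the $3/2$-power expansion of $\varphi$ near $a(\mu)$ in terms of $\partial_z f(z;\mu)|_{z=a(\mu)}$ and then correctly differentiating it in $\mu$, keeping track of which $\mu$-dependences are through $a(\mu)$ and which are explicit, is error-prone; the analogous issue near $\mu$ in \eqref{C0-1} is milder because the power is $1/2$ rather than $3/2$ and the regularity estimates of Proposition \ref{prop_fderf} are already available. A secondary subtlety is justifying differentiation under the contour integral in \eqref{f_def} uniformly, but since $a(\mu)$ is smooth in $\mu$ on $(0,\mu_0]$ (by \eqref{lemma:d}, $a'(\mu)>0$) and the contour $\Gamma$ can be chosen fixed and bounded away from $[\mu,a(\mu)]\cup(-\infty,0]$ for $\mu$ in a compact subinterval, this is routine.
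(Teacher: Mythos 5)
Your preferred route for \eqref{C0-1}---differentiate $f(\mu;\mu)=\rho(\mu)$ by the chain rule, $\rho'(\mu)=\partial_x f(x;\mu)|_{x=\mu}+\partial_\mu f(x;\mu)|_{x=\mu}$, compute $\partial_\mu f$ at $x=\mu$ by differentiating the contour representation \eqref{f_def} under the integral, and simplify with the identities of Lemma~\ref{lemma:I}---is precisely the paper's proof. (After differentiating under the integral and noting that the piece of $\partial_\mu f$ coming from the factor $(t-\mu)^{1/2}$ contributes $-\tfrac12\partial_x f|_{x=\mu}$, one gets $\rho'(\mu)=\tfrac12\partial_x f|_{x=\mu}+a'(\mu)\cdot(\cdots)$, and the remaining contour integral is reduced to $I_{-2}$ and $J$ by integration by parts and Lemma~\ref{lemma:I}(b),(c),(d), which is exactly the sequence you sketch.)

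For \eqref{C0-2}, however, your proposal is not the paper's route and, as written, does not close. You would extract the coefficient $e(\mu)$ of $(z-a(\mu))$ in the bracket of \eqref{varphi_asymp_a} by expanding $\varphi'(z;\mu)=-\tfrac12\bigl(\tfrac{z-a(\mu)}{z-\mu}\bigr)^{1/2}f(z;\mu)$ near $z=a(\mu)$; this indeed gives $e(\mu)$ in terms of $\partial_z f|_{z=a(\mu)}$, $f(a(\mu);\mu)$ and $(a(\mu)-\mu)^{-1}$. But you then say you would ``differentiate this expansion with respect to $\mu$ and compare with the direct $\mu$-derivative of the right-hand side''---both sides of the Taylor identity are the same function of $(z,\mu)$, so differentiating in $\mu$ is vacuous unless you supply an independent handle on $\partial_\mu\varphi$ (or on the $\mu$-derivatives of the Taylor coefficients), and you never say where this would come from. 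In the paper, \eqref{C0-2} is obtained ``in a similar way'', i.e.\ one repeats the chain-rule-on-the-contour-integral argument at $x=a(\mu)$: write $\frac{d}{d\mu}f(a(\mu);\mu)=a'(\mu)\,\partial_x f|_{x=a(\mu)}+\partial_\mu f|_{x=a(\mu)}$, evaluate $\partial_\mu f|_{x=a(\mu)}$ from \eqref{f_def}, and simplify with Lemma~\ref{lemma:I}(d),(e). I would strongly recommend you take that route rather than the expansion-matching. A minor further remark: a substantial fraction of your write-up discusses the downstream check $C_0'(\mu)=D_0^{\mathrm{full}}(\mu)$ and the estimate $D_0^{\mathrm{full}}-D_0=\mathcal O(\mu^{-1/2})$, but these are consequences established after Proposition~\ref{prop:logf} in the paper, not part of the proposition's proof.
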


\begin{proof}
First, integrating by parts we have
\[
\rho'(\mu)=
\clu{
\frac{\bt}{4\pi\i}\int_{\Gamma} t^{\bt/2-1}\frac{dt}{(t-a)^{1/2}(t-\mu)^{3/2}}
}
=
\frac{\beta}{\pi}\left(\frac{\bt}{2}-1\right)I_{-2}(\mu)
\clu{ 
-\frac{\bt}{4\pi\i}\int_{\Gamma} \frac{t^{\bt/2-1}\ dt}{(t-\mu)^{1/2}(t-a(\mu))^{3/2}}
}.
\]
Using this and \eqref{lemma:b}, we write
\[\begin{aligned}
\rho'(\mu)&=\frac{d}{d\mu} 
f(\mu;\mu)=\frac{1}{2} \partial_x f(x;\mu)_{x=\mu}
\clu{
+
a'(\mu) \frac{\bt}{8\pi\i}\int_{\Gamma} t^{\bt/2-1}\frac{dt}{(t-\mu)^{1/2}(t-a(\mu))^{3/2}}
}\\
&=
\frac{1}{2} \partial_x f(x;\mu)_{x=\mu}-\frac{a'(\mu)}{2}\left(\partial_x f(x;\mu)_{x=\mu}-
\frac{\beta}{\pi}\left(\frac{\bt}{2}-1\right)I_{-2}(\mu)\right)\\ &=
\frac{1}{2} \partial_x f(x;\mu)_{x=\mu}(1-a'(\mu))+
\frac{a'(\mu)}{\mu}\left(\frac{\bt}{2}-1\right)\left[ \rho(\mu) -\frac{\bt}{2\pi}J(\mu)\right].
\end{aligned}
\]
Hence, using also \eqref{lemma:c}, we have, collecting the first term to match \eqref{C0-1},
\[
\frac{\partial_x f(x;\mu)_{x=\mu}}{\rho(\mu)}=2\frac{\rho'(\mu)}{\rho(\mu)}+\frac{2a'(\mu)}{1-a'(\mu)}
\left(
\frac{\rho'(\mu)}{\rho(\mu)}-\frac{1}{\mu}\left(\frac{\bt}{2}-1\right)+\frac{1}{\mu a(\mu)}\left[\frac{\bt}{\rho(\mu)}-\frac{a(\mu)-\mu}{2}\right]
\right).
\]
Using \eqref{dfdmu} and 
the fact that by \eqref{lemma:d} $\frac{\bt}{\rho(\mu)}=\frac{a(\mu)-\mu a'(\mu)}{2a'(\mu)}$, to reduce the second term to a combination
of $a(\mu)$, $a'(\mu)$, and $\mu$,
we complete the derivation of
\eqref{C0-1}. The equation \eqref{C0-2} is obtained in a similar way.
\end{proof}

By \eqref{lemma:e}, we have
 $a'(\mu)=\frac{\rho(\mu)}{f(a(\mu);\mu)}$, hence we can rewrite the expression for {$D_0^{\mathrm{full}}(\mu)$ in Remark \ref{D_0:full}}
 in the form
 \be\nonumber
 {D_0^{\mathrm{full}}(\mu)}=-\frac{1}{8}\frac{a'(\mu)-1}{a(\mu)-\mu}-\frac{\al^2}{4}\frac{(\sqrt{a(\mu)}-\sqrt{\mu})^2}{a(\mu)-\mu}
 \frac{a(\mu)-a'(\mu)\mu}{a(\mu)\mu}-
 \frac{1}{16}\frac{\partial_x f(x;\mu)_{x=\mu}}{\rho(\mu)}-
 \frac{1}{16}\frac{\partial_x f(x;\mu)_{x=a(\mu)}}{f(a(\mu);\mu)}a'(\mu).
 \ee
 The proposition just proved allows to rewrite this expression as follows,
 \be\nonumber
 {D_0^{\mathrm{full}}(\mu)}=-\frac{1}{6}\frac{a'(\mu)-1}{a(\mu)-\mu}-\frac{\al^2}{4}\frac{(\sqrt{a(\mu)}-\sqrt{\mu})^2}{a(\mu)-\mu}\
 \frac{a(\mu)-a'(\mu)\mu}{a(\mu)\mu}-
 \frac{1}{8}\frac{\rho'(\mu)}{\rho(\mu)}
 -\frac{1}{24} \frac{\frac{d}{d\mu}f(a(\mu);\mu)}{f(a(\mu);\mu)}.
 \ee
This expression has the advantage of being easily integrable w.r.t. $\mu$. Integrating and using the matching at infinity as before,
we conclude that {$C_0'(\mu)=D_0^{\mathrm{full}}(\mu).$
Besides, we see from formula \eqref{dfdmu}, Proposition \ref{prop:logf}, and Lemma \ref{lemma:expaf} that the difference $D_0^{\mathrm{full}}(\mu)-D_0(\mu)$ in the limit $\mu\to0$ is of the order $\mathcal{O}(1)$ for $\beta>1,$ of the order $\mathcal{O}(|\ln \mu|)$ for $\beta=1$, and is of the order $\mathcal{O}\(\mu^{(\beta-1)/2}\)$ for $0<\beta<1$, and thus of the order $\mathcal{O}(\mu^{-1/2})$ for all $\beta>0,$ and hence Proposition \ref{theor_main} and Remark \ref{D_0:full} are consistent.}

\medskip

Combining \eqref{eq:logHnas} with the just obtained identities $
C_j'(\mu)=D_j(\mu), j=1,2$, {$
C_0'(\mu)=D_0^{\mathrm{full}}(\mu),$ and the relation $D_0^{\mathrm{full}}(\mu)-D_0(\mu)=\mathcal{O}(\mu^{-1/2})$ as $\mu\to0,$}
we obtain
\[
\ln \widetilde H_n(\mu) = C_2(\mu)n^2+C_1(\mu)n
{-\frac16\ln n}+C_0(\mu)
+\widetilde r(\mu_1;n)+\mathcal{O}\(\frac{1}{n\sqrt{\mu}\,\rho(\mu)}\)+\mathcal O\left(\mu_1^{1/2}\right),
\]
where we recall that $|\widetilde r(\mu_1;n)|\leq c(\mu_1)\frac{\log n}{n}$, and that the error terms are
uniform in $\mu,\mu_1$ for $\left(\frac{M}{\rho(\mu)n}\right)^2\leq \mu\leq \mu_1\leq \mu_0$ as $n\to\infty$.
It follows that there exists a constant $C>0$ independent of $\mu, \mu_1$ such that
such that
\[
\left|\ln \widetilde H_n(\mu) - C_2(\mu)n^2-C_1(\mu)n{+\frac16\ln n}-C_0(\mu)
\right|\leq \frac{C}{n\sqrt{\mu}\,\rho(\mu)}+C\mu_1^{1/2}+c(\mu_1)\frac{\log n}{n}.
\]
Let $\mu=\mu(n)$ be such that $\lim_{n\to\infty}\mu(n)=0$ and such that $\left(\frac{M}{\rho(\mu(n))n}\right)^2\leq \mu(n)\leq \mu_1\leq \mu_0$ for $n$ sufficiently large. 
For any choice of $\varepsilon>0$, we can take $\mu_1>0$ such that $C\mu_1^{1/2}<\varepsilon/2$, and also such that $\mu_1\geq \mu(n)$ for $n$ sufficiently large. Given such a value $\mu_1$, we can take $n$ sufficiently large such that  
$c(\mu_1)\frac{\log n}{n}<\varepsilon/2$.
Hence, for any $\varepsilon>0$, there exists $n_0$ such that for all $n\geq n_0$,
\[
\left|\ln \widetilde H_n(\mu) - C_2(\mu)n^2-C_1(\mu)n{+\frac16\ln n}-C_0(\mu)
\right|\leq \frac{C}{n\sqrt{\mu}\,\rho(\mu)}+\varepsilon.
\]
In other words,
\[
\ln \widetilde H_n(\mu) - C_2(\mu)n^2-C_1(\mu)n{+\frac16\ln n}-C_0(\mu)
=\mathcal O\(\frac{1}{n\sqrt{\mu}\,\rho(\mu)}\)+o(1),\]
as $n\to\infty$, 
and this proves
Proposition \ref{prop:Hankel} in the case where $\mu\to 0$ and $n\to\infty$ at the same time.

\appendix

\section{Derivation of \eqref{HH}}

In this appendix, we prove the identity \eqref{HH}.
Changing the integration variable $x=t^2 2^{2/\bt}$ in the orthonormality condition 
\[
1=\int_{\mu}^{\infty} p_k^{(\mu,\al)}(x)^2 e^{-n x^{\bt/2}}x^{\al}dx
\]
we obtain that
\be\label{polconn}
\widehat P^{(\lb)}_{2k}(t)=p_k^{(\mu,\al=-1/2)}(t^2 2^{2/\bt}) 2^{1/(2\bt)},\qquad 
\widehat P^{(\lb)}_{2k+1}(t)=t p_k^{(\mu,\al=1/2)}(t^2 2^{2/\bt}) 2^{3/(2\bt)},
\ee
where $\widehat P^{(\lb)}_n(t)$ are the orthonormal polynomials satisfying
\[
\int_{\mathbb R\setminus (-\lb,\lb)} \widehat P^{(\lb)}_k(t)\widehat P^{(\lb)}_j(t) e^{-2n |t|^{\beta}}dt,\qquad \mu=2^{2/\bt}\lb^2,
\]
with leading coefficients $\varkappa^{(\lb)}_k$. We now obtain \eqref{HH} using the identity
\[
H_{2n}(\lb)=\prod_{j=0}^{2n-1}(\varkappa^{(\lb)}_j)^{-2}=\prod_{k=0}^{n-1}(\varkappa^{(\lb)}_{2k})^{-2}
(\varkappa^{(\lb)}_{2k+1})^{-2}
\]
and the
connection between the leading coefficients 
of the polynomials $\widehat P^{(\lb)}_j(t)$ and $p^{(\mu,\al)}_j(t)$ provided by \eqref{polconn}:
\[
\varkappa^{(\lb)}_{2k}=\kappa^{(\mu,\alpha=-1/2)}_k   2^{2k/\bt+1/(2\bt)},\qquad
\varkappa^{(\lb)}_{2k+1}=\kappa^{(\mu,\alpha=1/2)}_k 2^{2k/\bt+3/(2\bt)},
\]
where
\[
\prod_{k=0}^{n-1}(\kappa^{(\mu,\alpha)}_k)^{-2}=\wt H_n^{(\alpha)}(\mu).
\]

\section{Single integral representation for $\varphi.$}\label{AppSIR}
Here, we provide yet another representation for $\varphi,$ as a single integral, as opposed to the double integral representations given in \eqref{def:phi}, \eqref{varphi}.
\begin{lemma}\label{lem_varphi}
The function $\varphi$ defined in \eqref{varphi} has the following single integral representation:
\be\label{varphi_single}
\varphi(z;\mu)
=
2\ln\frac{(z-\mu)^{\frac12}+(z-a(\mu))^{\frac12}}
{\sqrt{a(\mu)-\mu}}
-
\frac{(z-a(\mu))^{\frac12}(z-\mu)^{\frac12}}{4\pi\i}
\int\limits_{\Gamma}\frac{x^{\beta/2}\,\d x}{(x-a(\mu))^{\frac12}(x-\mu)^{\frac12}(x-z)},
\ee
where $(.)^{1/2}$ and $(.)^{\beta/2}$ denote the principal branches of the roots, and $\sqrt{.}$ the positive square root. Here $\Gamma$ is a counter-clock-wise oriented loop, which encircles the segment $[\mu, a(\mu)]$ and the point $z$, but does not intersect the negative half-axis $(-\infty,0].$
\end{lemma}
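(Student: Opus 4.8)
\textbf{Proof plan for Lemma \ref{lem_varphi}.}
The plan is to verify \eqref{varphi_single} by showing that the right-hand side, call it $\Phi(z;\mu)$, satisfies the same defining properties as $\varphi$: namely the ODE/integral representation \eqref{varphi} together with the normalization at $z=a(\mu)$. Concretely, since \eqref{varphi} gives
\[
\varphi(z;\mu) = -\frac12\int_{a(\mu)}^z \left(\frac{\xi-a(\mu)}{\xi-\mu}\right)^{1/2} f(\xi;\mu)\,\d\xi, \qquad \varphi(a(\mu);\mu)=0,
\]
it suffices to check two things: first, that $\Phi(a(\mu);\mu)=0$; second, that $\partial_z\Phi(z;\mu) = -\frac12\left(\frac{z-a(\mu)}{z-\mu}\right)^{1/2} f(z;\mu)$, with $f$ as in \eqref{f_def}. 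Both sides are analytic in $z\in\mathbb C\setminus(-\infty,a(\mu)]$ (the logarithm term is analytic off $(-\infty,a(\mu)]$ once one checks its argument never vanishes there, and the contour integral defines an analytic function of $z$ inside $\Gamma$), so equality of derivatives plus one matching value gives equality.

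The first step is the evaluation at $z=a(\mu)$: the logarithm becomes $2\ln\frac{(a(\mu)-\mu)^{1/2}}{\sqrt{a(\mu)-\mu}}=0$, and the prefactor $(z-a(\mu))^{1/2}$ kills the contour-integral term (the integral itself stays finite as $z\to a(\mu)$ since $\Gamma$ can be taken to still encircle $a(\mu)$), so $\Phi(a(\mu);\mu)=0$. The second and main step is differentiating $\Phi$. Differentiating the logarithmic term is elementary and yields $\left(\frac{z-a(\mu)}{z-\mu}\right)^{-1/2}\cdot\frac{1}{?}$ after simplification — one uses the identity $\partial_z\bigl((z-\mu)^{1/2}+(z-a(\mu))^{1/2}\bigr) = \tfrac12\bigl((z-\mu)^{-1/2}+(z-a(\mu))^{-1/2}\bigr)$ and factors. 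Differentiating the second term produces, by the product rule, a piece where the prefactor $(z-a(\mu))^{1/2}(z-\mu)^{1/2}$ is differentiated and a piece where $\partial_z\frac{1}{x-z}=\frac{1}{(x-z)^2}$ appears under the integral; in the latter one integrates by parts in $x$ along $\Gamma$ (no boundary terms, $\Gamma$ closed) to turn $\frac{1}{(x-z)^2}$ back into $\frac{1}{x-z}$ at the cost of differentiating $\frac{x^{\beta/2}}{(x-a(\mu))^{1/2}(x-\mu)^{1/2}}$. Collecting everything, the goal is to recognize the result as $-\frac12\left(\frac{z-a(\mu)}{z-\mu}\right)^{1/2}$ times the contour-integral expression \eqref{f_def} for $f(z;\mu)$.

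The bookkeeping obstacle — and the step I expect to be the most delicate — is the careful tracking of branches of the various square roots and of the orientation/deformation of $\Gamma$, together with confirming that the "extra" rational terms generated by the two differentiations and the integration by parts cancel exactly, leaving precisely \eqref{f_def}. A clean way to organize this is to deform $\Gamma$ to the two sides of the cut $[\mu,a(\mu)]$ (plus possibly a small circle around $z$ if $z$ is enclosed), compute the jump of $(x-a(\mu))^{1/2}(x-\mu)^{1/2}$ and of $x^{\beta/2}$ across $[\mu,a(\mu)]$, and reduce both $\Phi'$ and $-\frac12\left(\frac{z-a(\mu)}{z-\mu}\right)^{1/2}f(z;\mu)$ to ordinary real integrals over $[\mu,a(\mu)]$ of the form appearing in \eqref{freg}; then the identity becomes a finite elementary computation. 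Alternatively, one can differentiate \eqref{varphi} directly using \eqref{freg} and independently massage \eqref{varphi_single}'s derivative into the same shape; the two routes meet at the same elementary cancellation. As a sanity check I would also verify the large-$z$ behaviour: both sides must satisfy $\varphi(z;\mu) = \tfrac12 z^{\beta/2} + \mathcal O(\ln z)$ as prescribed by \eqref{def:phi} and \eqref{def:glog}, which pins down the constant and confirms no stray additive term has been introduced.
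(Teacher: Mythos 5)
Your proposal is correct but takes a genuinely different route from the paper's proof. The paper works \emph{forward}: it substitutes the contour-integral representation \eqref{f_def} of $f$ into the definition \eqref{varphi} of $\varphi$, interchanges the order of integration, computes the resulting inner $\xi$-integral in closed form via the substitution $v=\bigl(\tfrac{\xi-a(\mu)}{\xi-\mu}\bigr)^{1/2}$ (which produces two logarithmic terms), and then simplifies the two outer $t$-integrals, one by appealing to the normalisation \eqref{defa} and the other by integrating by parts along $\Gamma$. You instead work \emph{backward}: you take the claimed right-hand side $\Phi$, check that it vanishes at $z=a(\mu)$, and then verify that $\partial_z\Phi = -\tfrac12\bigl(\tfrac{z-a(\mu)}{z-\mu}\bigr)^{1/2}f(z;\mu)$ by the product rule plus an integration by parts on $\Gamma$ that trades $1/(x-z)^2$ for a derivative of $x^{\beta/2}/[(x-a(\mu))^{1/2}(x-\mu)^{1/2}]$, after which the rational ``extra'' terms must cancel against the derivative of the logarithmic term and the $Q'(z)$ piece. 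Both routes require comparable branch-and-orientation bookkeeping. What the paper's derivation buys you is the ability to \emph{discover} the formula without knowing it in advance (it also needs a nontrivial explicit antiderivative); what your verification buys you is the avoidance of that antiderivative, at the cost of having to confirm a finite cancellation among several rational-coefficient contour integrals. Your sanity check via the large-$z$ behaviour from \eqref{def:phi}--\eqref{def:glog} is a good consistency test, though strictly speaking equality of derivatives plus the single matching value $\Phi(a(\mu);\mu)=\varphi(a(\mu);\mu)=0$ already determines the function on the connected domain; just remember, when deforming $\Gamma$ onto the cut $[\mu,a(\mu)]$, to pick up the residue at $x=z$, which is precisely what reproduces the $-\tfrac12 z^{\beta/2}$ piece hidden in $\varphi$.
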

\begin{proof}
Substitute the expression \eqref{f_def} into \eqref{varphi}, and change the order of integration.
We obtain (see Figure \ref{FigureGamma} for a sketch of $\Gamma$ and the positions of $x, z, \xi$)
\be\label{varphi2}
\varphi(z;\mu) = \frac{\i\beta}{8\pi}
\int\limits_{\Gamma}
t^{\beta/2-1}\left(\frac{t-\mu}{t-a(\mu)}\right)^{1/2}          \left(\int\limits_{a(\mu)}^{z}
\(\frac{\xi-a(\mu)}{\xi-\mu}\)^{\frac12}\frac{\d\xi}{t-\xi}\right) dt.
\ee
%%%%
%%%%
%%%%
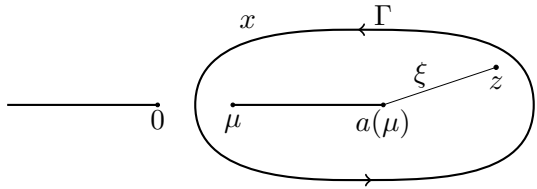
\begin{figure}[ht!]
\begin{tikzpicture}
\draw[thick] (-2,0) to (0,0);
\draw[thick] (1,0) to (3,0);
\draw [thick, decoration = {markings, mark = at position 0.25 with {\arrow{<}}}, 
decoration = {markings, mark = at position 0.75 with {\arrow{<}}}, postaction={decorate}] (0.5,0) to [out=90, in=180]  (2.5,1) to [out=0, in=90] (5,0) to [out=-90, in=0] (2.5,-1) to [out=180, in =-90] (0.5,0);
\filldraw (0,0) circle (0.7pt); \filldraw (1,0) circle (0.7pt); \filldraw (3,0) circle (0.7pt);
\node at (-0,-0.2) {$0$};
\node at (1,-0.25) {$\mu$};
\node at (3,-0.25) {$a(\mu)$};
\node at (3, 1.2){$\Gamma$};
\node at (1.2, 1.1){$x$};
\node at  (3.5,0.4){$\xi$};
\node at  (4.5,0.3){$z$};
\filldraw (4.5, 0.5) circle (0.8pt);
\draw[very thin] (3,0) to (4.5,0.5);
\end{tikzpicture}
\caption{In \eqref{varphi2}, $\Gamma$ is a positively oriented contour enclosing $[\mu, a(\mu)]$ and $z$, $\xi$ lies on the segment connecting $a(\mu)$ and $z\in\mathbb{C}\setminus(-\infty,a(\mu)],$ which lies inside $\Gamma,$ and $x$ is a point on $\Gamma.$}
\label{FigureGamma}
\end{figure}
%%%%
%%%%
The inner integral can be computed by elementary means, by substituting $v = \(\frac{\xi-a(\mu)}{\xi-\mu}\)^{\frac12}.$
Straightforward but long computations lead to
\[
\int\limits_{a(\mu)}^{z}\(\frac{\xi-a(\mu)}{\xi-\mu}\)^{\frac12}\frac{\d\xi}{t-\xi}
=\ln\frac{(z-\mu)^{\frac12}-(z-a(\mu))^{\frac12}}{(z-\mu)^{\frac12}+(z-a(\mu))^{\frac12}}
+\(\frac{t-a(\mu)}{t-\mu}\)^{\frac12}
\ln\frac
{\(\frac{t-a(\mu)}{t-\mu}\)^{1/2}
+
\(\frac{z-a(\mu)}{z-\mu}\)^{1/2}}
{\(\frac{t-a(\mu)}{t-\mu}\)^{1/2}
-
\(\frac{z-a(\mu)}{z-\mu}\)^{1/2}},
\]
and substituting that in \eqref{varphi2}, we obtain two integrals. We evaluate the first one explicitly using \eqref{defa}, and we integrate the second one by parts, using $(x^{\beta/2})' = \frac{\beta}{2}x^{\beta/2-1}.$
After simplifying the resulting expression, we obtain \eqref{varphi_single}.
\end{proof}
\subsubsection*{Acknowledgements}
IK and TC are grateful to Vladimir Kravtsov for bringing this model to their attention and for useful discussions.
TC and OM were supported by the Fonds de la Recherche Scientifique-FNRS under EOS project O013018F. IK was supported by the Leverhulme Trust research project grant RPG-2018-26.

\end{document}